
\documentclass[aos]{imsart}

\RequirePackage{amsthm,amsmath,amsfonts,amssymb}
\RequirePackage[authoryear]{natbib}
\RequirePackage[colorlinks,citecolor=blue,urlcolor=blue]{hyperref}
\RequirePackage{graphicx}

\RequirePackage{algorithm}
\RequirePackage[noend]{algpseudocode}
\RequirePackage{cancel}
\RequirePackage{mathtools}

\startlocaldefs
\theoremstyle{plain}
\newtheorem{mythm}{Theorem}
\newtheorem{myprop}{Proposition}[section]

\newtheorem{mylemma}[myprop]{Lemma}
\theoremstyle{remark}
\newtheorem{mydef}[myprop]{Definition}
\newtheorem{myexample}[myprop]{Example}
\newtheorem*{myremark}{Remark}
\def\ci{\perp\!\!\!\perp}
\newcommand{\G}{\mathcal{G}}
\newcommand{\V}{\mathbf{V}}
\newcommand{\Ovar}{\mathbf{O}}
\newcommand{\La}{\mathbf{L}}
\newcommand{\E}{\mathbf{E}}
\newcommand{\D}{\mathcal{D}}
\newcommand{\M}{\mathcal{M}}
\newcommand{\PAG}{\mathcal{P}}
\newcommand{\Mtaumax}{\mathcal{M}^{\taumax}}

\newcommand{\stat}{stat}
\newcommand{\statsubscript}{st}
\newcommand{\Mtaumaxstat}{\mathcal{M}_{\statsubscript}^{\taumax}}
\newcommand{\PAGtaumax}{\mathcal{P}^{\taumax}}

\newcommand{\Dc}{\D_c}
\newcommand{\pa}{pa}
\newcommand{\an}{an}

\newcommand{\Iindex}{\mathbf{I}}
\newcommand{\IindexO}{\Iindex_{\Ovar}}
\newcommand{\IindexL}{\Iindex_{\La}}
\newcommand{\Tindex}{\mathbf{T}}
\newcommand{\TindexO}{\Tindex_{\Ovar}}
\newcommand{\tailhead}{{\rightarrow}}
\newcommand{\headtail}{{\leftarrow}}
\newcommand{\headhead}{{\leftrightarrow}}
\newcommand{\ohead}{{\circ\!{\rightarrow}}}

\newcommand{\oo}{{\circ\!{\--}\!\circ}}
\newcommand{\asthead}{{\ast\!\!{\rightarrow}}}
\newcommand{\headast}{{{\leftarrow}\!\ast}}
\newcommand{\astast}{{\ast\!{\--}\!\ast}}
\newcommand{\oast}{{\circ\!{\--}}\!\!\!\ast}
\newcommand{\porder}{p_{ts}}
\newcommand{\taumax}{p}

\newcommand{\taumaxtilde}{\tilde{\taumax}}

\newcommand{\BR}{\mathcal{A}}
\newcommand{\BRstat}{\BR^{\stat}}
\newcommand{\BRtoro}{\BR_{to}}
\newcommand{\BRtora}{\BR_{ta}}
\newcommand{\BRtsDAG}{\BR_{\D}}
\newcommand{\BRtsDAGstat}{\BRstat_{\D}}

\endlocaldefs

\begin{document}

\begin{frontmatter}
\title{Characterization of causal ancestral graphs for time series with latent confounders}
\runtitle{Causal ancestral graphs for time series}

\begin{aug}
\author[A]{\fnms{Andreas}~\snm{Gerhardus}\ead[label=e1]{andreas.gerhardus@dlr.de}}
\address[A]{German Aerospace Center, Institute of Data Science, \printead{e1}}
\end{aug}

\begin{abstract}
In this paper, we introduce a novel class of graphical models for representing time lag specific causal relationships and independencies of multivariate time series with unobserved confounders. We completely characterize these graphs and show that they constitute proper subsets of the currently employed model classes. As we show, from the novel graphs one can thus draw stronger causal inferences---without additional assumptions. We further introduce a graphical representation of Markov equivalence classes of the novel graphs. This graphical representation contains more causal knowledge than what current state-of-the-art causal discovery algorithms learn.
\end{abstract}

\begin{keyword}[class=MSC]
\kwd[Primary ]{62A09}
\kwd{62D20}
\kwd{62M10}
\kwd[; secondary ]{68T30}
\kwd{68T37}
\end{keyword}

\begin{keyword}
\kwd{Causal graph}
\kwd{ancestral graph}
\kwd{time series}
\kwd{latent variable}
\kwd{causal inference}
\kwd{causal discovery}
\end{keyword}

\end{frontmatter}


\section{Introduction}\label{sec:introduction}
In recent decades causal graphical models have become a standard tool for reasoning about causal relationships, e.g.~\citet{Pearl2009}, \citet{Spirtes2000}, \citet{koller2009probabilistic}. The most basic and popular class of models are directed acyclic graphs (DAGs). In their interpretation as causal Bayesian networks these graphs specify interventional distributions and causal effects in terms of the observational distribution, e.g.~\citet{Spirtes1993}, \citet{pearl1995causal}, \citet{Pearl2000}. DAGs can only model acyclic causal relationships among variables that are not subject to latent confounding, i.e., such that there are no unobserved common causes of observed variables. The latter assumption is known as \emph{causal sufficiency} and intuitively means that all variables relevant for describing the system's causal relationships are modeled explicitly. If causal sufficiency cannot be asserted, as is often the case, then one approach is to instead work with maximal ancestral graphs (MAGs), see \citet{richardson2002}, \citet{zhang2008causal}. This larger class of graphs retains a well-defined causal interpretation in presence of latent confounding.

MAGs can even represent selection variables, i.e., unobserved variables that determine which sample points belong to the observed population. In this paper, we rule out selection variables by assumption. It is then sufficient to work with a subclass of MAGs that, following \citet{FCI_cyclic}, are called \emph{directed maximal ancestral graphs (DMAGs)}. Assuming the absence of selection variables is common both in the literature on causal effect estimation and causal discovery, e.g.~\citet{zhang2006causal}, \citet{perkovic2018complete} and \citet{Entner2010}, \citet{malinsky2018causal}, \citet{LPCMCI}. As an advantage, DMAGs convey significantly stronger inferences about the presence of causal ancestral relationships than MAGs. Moreover, for time series there is exactly one sample point per time step and hence potential selection bias would at least not go unnoticed.

To use any of these model classes for causal reasoning one needs to already know the system's causal structure in form of the respective graph. If this knowledge is not available and experiments are infeasible, then one must rely on observational causal discovery, e.g.~\citet{Spirtes2000}, \citet{Peters2018}, which refers to learning causal relationships from observational data under suitable enabling assumptions. So-called independence-based methods, also called constraint-based methods, attempt to learn the causal graph from independencies in the observed probability distribution. In general, learning the graph from independencies is an under-determined problem since distinct graphs may describe the same set of independencies. This non-uniqueness is known as \emph{Markov equivalence}. Without more assumptions it is then only possible to learn those features of the causal graph that it shares with all its Markov equivalent graphs. These shared features can in turn be represented by certain graphs, which for the case of MAGs are \emph{partial ancestral graphs (PAGs)}, see \citet{ali2009markov}, \citet{Zhang2008}. There are sound and complete causal discovery algorithms for learning PAGs, e.g.~the FCI algorithm, see \citet{Spirtes1995}, \citet{Spirtes2000}, \citet{Zhang2008}. Here, \emph{sound} refers to correctness of the method and \emph{complete} to it learning all shared features. The refinement of PAGs obtained by restricting from MAGs to DMAGs are called \emph{directed partial ancestral graphs (DPAGs)} in \citet{FCI_cyclic}.

The causal graphical model framework outlined above does not inherently rely on temporal information, and the non-temporal setting so far is its major domain of application. However, dynamical systems and time series data are ubiquitous and of great interest to science and beyond. In this setting, Granger causality (see \citet{Granger1969}) is a widely-used framework for causal analyses. This framework employs a predictive notion of causality, according to which a time series $X$ has a causal influence on time series $Y$ if the past of $X$ helps in predicting the present of $Y$ given that the pasts of all time series other than $X$ are already known. Granger causality has two central limitations: First, it requires the absence of latent confounders, i.e., unobserved time series that are a common cause of two observed time series. Second, it cannot in general deal with contemporaneous causal influences, i.e., causal influences on time scales below the sampling interval. For an in-depth discussion of these limitations see, e.g., \citet[chapter 10]{Peters2018}.

Since the causal graphical model framework is not subject to these two limitations, in recent years there has been a growing interest in adapting it to the time series setting. Generally, there are three ways to do this. The \underline{first} approach, e.g.~\citet{eichler2007causal}, \citet{eichler2010graphical}, \citet{eichler2010granger} and \citet{didelez2008graphical}, \citet{mogensen2020markov}, uses a graph in which there is one vertex per component time series. The edges then summarize the causal influences at all time lags, thus giving a conveniently compressed graphical representation of the causal relationships. However, the information about time lags of individual cause-and-effect relationships is lost. The \underline{second} approach uses graphs with one vertex per component time series and time step, thus resolving the time lags. There are various causal discovery methods that implement this approach, e.g.~\citet{chu2008search}, \citet{hyvarinen2010estimation}, \citet{Entner2010}, \citet{malinsky2018causal}, \citet{Runge2020a}, \citet{pamfil2020Dynotears}, \citet{LPCMCI}, and application works from diverse domains, e.g.~\citet{kretschmer2016using}, \citet{huckins2020causal}, \citet{saetia2021constructing}. By resolving time lags it becomes possible to obtain a data-driven process understanding and to study the effect of interventions on particular time steps of variables. However, learning a time-resolved graph is statistically more challenging than learning a time-collapsed graph and one might need to compromise on the number of resolved time steps. \citet{assaad2022discovery} proposes a \underline{third}, intermediate approach with two vertices per component time series (one for the present time step and one for the entire past).

We follow the second approach. In this case, the temporal information inherent to time series restricts the connectivity pattern (i.e., absence and presence of edges, edge orientations) of the resulting time-resolved graphs. Namely, since we here consider graphical models in which directed edges signify causal influences (DAGs, DMAGs and DPAGs), the directed edges must not point backwards in time. In addition, we assume time invariant causal relationships. This invariance, known as \emph{causal stationarity}, implies that the graph's edges are repetitive in time. For DAGs that represent time series \emph{without} latent confounders, which we call \emph{time series DAGs (ts-DAGs)}, these are the only restrictions on the connectivity pattern.

For DMAGs that represent time series \emph{with} latent confounders, the corresponding restrictions on the connectivity pattern have, however, not yet been worked out. Although there are works on independence-based time series causal discovery with latent confounding, see \citet{Entner2010}, \citet{malinsky2018causal}, \citet{LPCMCI}, no characterization of the associated class of graphical models has been given. This is the conceptual gap that we close in the present work, i.e., we completely characterize which DMAGs are obtained by marginalizing ts-DAGs and hence can serve as causal graphical model for causally stationary time series with latent confounders. We call the novel graphs defined by this characterization \emph{time series DMAGs (ts-DMAGs)} and show that these novel graphs constitute a strictly smaller model class than the previously considered model classes. We further show that, without imposing additional assumptions, one can draw stronger causal inferences from ts-DMAGs than from the previously considered graphs. We also introduce \emph{time series DPAGs (ts-DPAGs)} as representations of Markov equivalence classes of ts-DMAGs. Time series DPAGs are more informative than the graphs learned by current latent time series causal discovery algorithms. As a remark, since contemporaneous causal interactions are allowed without restrictions other than acyclicity, the time series case considered here formally subsumes and hence is more general than the (acyclic) non-temporal case. 

The structure of this paper is as follows: In \textbf{Sec.~\ref{sec:notation}} we summarize basic graphical concepts and introduce our notation. In \textbf{Sec.~\ref{sec:class_graphical_models}} we first specify the considered type of causally stationary time series processes. We then introduce ts-DMAGs, a class of causal graphical models for representing the causal relationships and independencies among only the observed variables of such processes at finitely many regularly spaced observed time steps. In \textbf{Sec.~\ref{sec:characterization_full_section}} we analyze ts-DMAGs and first derive several properties that they necessarily have. With Theorems~\ref{thm:complete_characterization} and \ref{thm:complete_characterization_2} we then completely characterize ts-DMAGs by a single necessary and sufficient condition. We further show that ts-DMAGs are a strict subset of the classes of graphical models that have previously been considered in the literature (see Sec.~\ref{sec:previous_model_classes}). For this reason, and as we demonstrate with examples, one can draw stronger causal inferences from ts-DMAGs than from the previously considered graphs. We further introduce the concept of \emph{stationarification} in order to illuminate various discussions. In \textbf{Sec.~\ref{sec:equivalence_classes_and_causal_discovery}} we put these developments to use in the context of causal discovery by defining ts-DPAGs as representations of the Markov equivalence classes of ts-DMAGs. We show that these graphs contain more causal information than the output of current causal discovery algorithms. Moreover, we point out an incorrect claim in the literature that, as we argue, has misguided recent developments (see the discussion below Theorem~\ref{thm:causal_discovery_non_stat_is_better}). We also present an algorithm that learns ts-DPAGs from data. We give further theoretical results and all proofs in the \textbf{Supplementary Material} \citep{gerhardus2022_supplement}.

\section{Basic graphical concepts and notation}\label{sec:notation}
Our notation and terminology is a mixture of those used in \citet{maathuis2015generalized}, \citet{Complete_characterization_adjustment} and \citet{FCI_cyclic} as well as some idiosyncratic notation.

A \textbf{graph} $\G = (\V, \E)$ consists of a set of vertices $\V$ together with a set of edges $\E \subseteq \V \times \V$. The vertices $i,\, j \in \V$ are \emph{adjacent} if $(i, j) \in \E$ or $(j, i) \in \E$. We then say that there is an \emph{edge between $i$ and $j$} and that \emph{$i$ is an adjacency of $j$}, and similiary for $i$ and $j$ interchanged.

Throughout this paper we only consider \textbf{directed partial mixed graphs}. These are graphs that satisfy three conditions: First, there is at most one edge between any pair of vertices. Second, no vertex is adjacent to itself. Third, there are at most four types of edges: \emph{directed edges} ($\tailhead$), \emph{bidirected edges} ($\headhead$), \emph{partially directed edges} ($\ohead$), and \emph{non-directed edges} ($\oo$). The third condition is formalized by a decomposition of $\E$ as $\E = \E_{\tailhead} \,\dot{\cup}\, \E_{\headhead} \,\dot{\cup}\, \E_{\ohead} \,\dot{\cup}\, \E_{\oo}$ that specifies the \emph{edge types} (also called \emph{edge orientations}). This decomposition is considered part of the specification of a concrete graph. A \textbf{directed mixed graph} is a partial mixed graph without partially directed and non-directed edges, and a \textbf{directed graph} is a directed mixed graph without bidirected edges. The \textbf{skeleton} of a graph is the object obtained when disregarding the information about the decomposition of $\E$ into $\E_{\tailhead} \,\dot{\cup}\, \E_{\headhead} \,\dot{\cup}\,\E_{\ohead} \,\dot{\cup}\,\E_{\oo}$.

Given directed partial mixed graphs $\G = (\V, \E)$ and $\G^\prime = (\V^\prime, \E^\prime)$, we say that \emph{$\G^\prime$ is a \textbf{subgraph} of $\G$} and that \emph{$\G$ is a \textbf{supergraph} of $\G^\prime$}, denoted as $\G^\prime \subseteq \G$ or $\G \supseteq \G^\prime$, if $\V^\prime \subseteq \V$ and $(i, j) \in \E^\prime_{\bullet}$ with $\bullet \in \{\tailhead, \, \headhead, \, \ohead, \, \oo\}$ implies $(i, j) \in \E_{\bullet}$. Given a directed partial mixed graph $\G = (\V, \E)$, its \emph{\textbf{induced subgraph} on $\V^\prime \subseteq \V$} is the graph $\G^\prime = (\V^\prime, \E^\prime)$ such that $(i, j) \in \E^\prime_{\bullet}$ with $\bullet \in \{\tailhead, \, \headhead, \, \ohead, \, \oo\}$ if and only if $i, j \in \V^\prime$ and $(i, j) \in \E_{\bullet}$.

We denote a directed edge $(i, j) \in \E_{\tailhead}$ as $i \tailhead j$ or $j \headtail i$ and say \emph{$i \tailhead j$ ($j \headtail i$) is in $\G$} if $(i, j) \in \E_{\tailhead}$; similarly for the other edge types. We view edges as composite objects of the symbols at their ends---the \textbf{edge marks}---which are \emph{tails}, \emph{heads}, or \emph{circles}. For example, $i \ohead j$ has a circle-mark at $i$ and a head mark at $j$, and $i \tailhead j$ has a tail mark at $i$. Tails and heads are \emph{non-circle marks} and \emph{unambiguous} orientations. Circle marks are \emph{ambiguous} orientations. The symbol `$\ast$' is a wildcard for all three marks. For example, $\asthead$ may be $\tailhead$, $\headhead$, or $\ohead$.

A \textbf{walk} in $\G$ is an ordered sequence $\pi = (i_1, i_2, \ldots, i_n)$ of vertices such that $i_k$ and $i_{k+1}$ are adjacent in $\G$ for all $k = 1, \dots, n-1$. The integer $n \geq 1$ is the \emph{length of $\pi$} and a vertex in this sequence is said to \emph{be on $\pi$}. A \textbf{path} is a walk on which every vertex occurs at most once. For a path $\pi = (i_1, i_2, \dots, i_n)$ the vertices $i_1$ and $i_n$ are the \emph{end-point vertices of $\pi$}, all other vertices on $\pi$ are the \emph{non end-point vertices of $\pi$}. We refer to $\pi$ as a \emph{path between $i_1$ and $i_n$} and graphically represent it by $i_1 \astast i_2 \astast \dots \astast i_n$ where $i_{k} \astast i_{k+1}$ is the unique edge between $i_k$ and $i_{k+1}$. Such a graphical representation can also specify a path. We say that \emph{$\pi$ is out of $i_1$} if $i_1 \tailhead i_2$ in $\G$ and that \emph{$\pi$ is into $i_1$} if $i_1 \headast i_2$ in $\G$; similarly for the other end-point vertex. For $1 \leq a < b \leq n$ we write $\pi(i_a, i_b)$ for the path $(i_a, i_{a+1}, \dots i_b)$ and $\pi(i_b, i_a)$ for the path $(i_b, i_{b-1}, \dots i_a)$. Both of these are \emph{subpaths of $\pi$}. Given walks $\pi_1 = (i_1, i_2, \dots, i_n)$ and $\pi_2 = (j_1, j_2, \dots, j_m)$ with $i_n = j_1$ we write $\pi_1 \oplus \pi_2$ for the walk $(i_1, i_2, \dots, i_n, j_2, \dots, j_m)$. A vertex $i_k$ on path $\pi$ is a \emph{collider on $\pi$} if it is a non end-point vertex of $\pi$ and $\pi(i_{k-1}, i_{k+1})$ is $i_{k-1} \asthead i_k \headast i_{k+1}$, else it is a \emph{non-collider on $\pi$}. If the vertices $i$ and $k$ are non-adjacent, then the path $i \astast j \astast k$ is an \emph{unshielded triple} and the path $i \asthead j \headast k$ an \emph{unshielded collider}. A \emph{path} of length $n=1$ is called \emph{trivial}. The path $\pi = (i_1, i_2, \dots, i_n)$ is a \textbf{directed path} if $i_k \tailhead i_{k+1}$ in $\G$ for all $1 \leq k \leq n-1$ or $i_k \headtail i_{k+1}$ in $\G$ for all $1 \leq k \leq n-1$. In the former case we speak of a \emph{directed path from $i_1$ to $i_n$}, in the latter case of a \emph{directed path from $i_n$ to $i_1$}.

If the edge $i \tailhead j$ is in $\G$, then $i$ is a \textbf{parent} of $j$ and $j$ is a \textbf{child} of $i$. The vertex $i$ is an \textbf{ancestor} of $j$ and $j$ is a \textbf{descendant} of $i$ if $i = j$ or if there is a directed path from $i$ to $j$. The set of parents and ancestors of a vertex $i$ in $\G$ are respectively denoted as $\pa(i, \G)$ and $\an(i, \G)$. We say \emph{vertex $i$ is an ancestor of a set $\mathbf{S}$ of vertices} and \emph{$\mathbf{S}$ is a descendant of $i$} if at least one element of $\mathbf{S}$ is a descendant of $i$. Similary, \emph{vertex $i$ is a descendant of a set $\mathbf{S}$ of vertices} and \emph{$\mathbf{S}$ is an ancestor of $i$} if at least one element of $\mathbf{S}$ is an ancestor of $i$.

A directed partial mixed graph $\G$ has a \emph{directed cycle} if there are distinct vertices $i$ and $j$ with $i \in \an(j, \G)$ and $j \in \an(i, \G)$. A \textbf{directed acyclic graph (DAG)} $\D$ is a directed graph without directed cycles. A directed partial mixed graph $\G$ has an \emph{almost directed cycle} if the edge $i \headhead j$ is in $\G$ and $i \in \an(j, \G)$. A \emph{directed ancestral graph} is a directed mixed graph without directed cycles and almost directed cycles. An \emph{inducing path between $i$ and $j$} is a path $\pi$ between $i$ and $j$ such that all non end-point vertices of $\pi$ are colliders on $\pi$ and ancestors of $i$ or $j$. A \textbf{directed maximal ancestral graph (DMAG)} $\M$ is a directed ancestral graph that has no inducing paths between non-adjacent vertices. Every DAG is a DMAG. \textbf{Directed partial ancestral graphs (DPAGs)} $\PAG$ are directed partial mixed graphs that represent Markov equivalence classes of DMAGs, see Def.~\ref{def:pags_background_knowledge} in Sec.~\ref{sec:background} for a formal definition.

\section{A class of causal graphical models for time series with latent confounders}\label{sec:class_graphical_models}
In this section we first formally specify the considered type of time series processes, see Sec.~\ref{sec:process}. We then explain how, if there are no unobserved variables, certain DAGs with an infinite number of vertices (Def.~\ref{def:tsDAG}) can model these processes as causal Bayesian networks, see Secs.~\ref{sec:ts-DAGs} and \ref{sec:ts-DAG_causal}. Importantly, Def.~\ref{def:ts_dmag_implied} in Sec.~\ref{sec:implied_DMAGs} introduces so-called \emph{time series DMAGs (ts-DMAGs)}. These graphs are projections of the infinite DAGs and represent the causal relationships and independencies among only a subset of observed variables at a finite number of regularly sampled or regularly subsampled observed time steps. Time series DMAGs constitute the novel class of causal graphical models which is the central topic of this paper.

\subsection{Structural vector autoregressive processes}\label{sec:process}
We consider multivariate time series $\{\mathbf{V}_t\}_{t \in \mathbb{Z}}$, where $\mathbf{V}_t = (V^1_t, \dots, V^{n_V}_t)$ with the component time series $V^i = \{V^i_t\}_{t \in \mathbb{Z}}$ for $1 \leq i \leq n_V$, that are generated by an \emph{acyclic structural vector autoregressive process with contemporaneous influences}, e.g.~\citet{malinsky2018causal}. That is to say, for all $t \in \mathbb{Z}$ (time index) and $1 \leq i \leq n_V$ (variable index) the value of $V^i_t$ is determined as 
\begin{equation}\label{eq:process}
V^i_t \coloneqq f^i(P\!A^i_t, \epsilon^i_t) \, ,
\end{equation}
where $f^i$ is a measurable function that depends on all its arguments, the random variables $\epsilon^i_t$ (so-called ``noise'' variables) are jointly independent (with respect to both indices) and have a distribution that may depend on $i$ but not on $t$, and $P\!A^{i}_t \subseteq \{V^k_{t-\tau} ~|~ 1 \leq k \leq n_V, \, 0 \leq \tau \leq \porder\} \setminus \{V^i_t\}$. Here, the \emph{order} $\porder$ of the process is the smallest integer for which the set inclusion in the previous sentence holds (for all $i$ and $t$). We demand that $0 \leq \porder < \infty$.

We allow contemporaneous causal influences (i.e., $V^k_{t-\tau} \in P\!A^{i}_t $ with $\tau = 0$). Further, for all $\Delta t \in \mathbb Z$ we assume the sets $P\!A^i_{t}$ and $P\!A^i_{t-\Delta t}$ to be consistent in the sense that $V^k_{t-\tau} \in P\!A^i_{t}$ if and only if $V^k_{t-\tau-\Delta t} \in P\!A^i_{t-\Delta t}$. Acyclicity means the system of equations is recursive. The attribute $\emph{structural}$ asserts that eq.~\eqref{eq:process} is a \emph{structural causal model (SCM)}, e.g.~\citet{bollen1989structural}, \citet{Pearl2009}, \citet{Peters2018}, which we indicate by the `$\coloneqq$' symbol. Because of this causal interpretation we refer to the variables $P\!A^i_t$ as \emph{causal parents} of $V^i_t$ and to the consistency of $P\!A^i_{t}$ and $P\!A^i_{t-\Delta t}$ as \emph{causal stationarity}. The restriction of $P\!A^i_t$ to variables $V^k_{t-\tau}$ with $\tau \geq 0$ ensures that there is no causal influence backward in time.

\subsection{Time series DAGs}\label{sec:ts-DAGs}
The causal parentships specified by an SCM are graphically represented by the SCM's \emph{causal graph}, e.g.~\citet{Spirtes2000}, \citet{Pearl2009}, \citet{Peters2018}. The causal graph is a directed graph with one vertex per variable, typically excluding the noise variables, and directed edges from each variable to all variables of which it is a causal parent. The same construction applies to structural processes as in eq.~\eqref{eq:process}. However, as we capture by the below three notions, the resulting ``temporal causal graphs'' carry more structure than their non-temporal counterparts.

First, the random variable $V^i_t$ corresponds to a particular time step $t$ of a particular component time series $V^i$. This correspondence is captured by the following notion.

\begin{mydef}[Time series structure]\label{def:time_structure}
A graph $\G = (\V, \E)$ has a \emph{time series structure} if $\V = \Iindex \times \Tindex$, where $\Iindex = \{1, 2, \dots, n\}$ with $n \geq 1$ is the \emph{variable index set} and $\Tindex = \{t \in \mathbb{Z} ~|~ t_s \leq t \leq t_e\}$ with $t_s \in \mathbb{Z} \cup \{-\infty\}$ and $t_e \in \mathbb{Z} \cup \{+\infty\}$ and $t_s \leq t_e$ is the \emph{time index set}.
\end{mydef}

We say that a vertex $(i, t) \in \V$ is \emph{at time $t$} and, if $t_a \leq t \leq t_b$, to be \emph{in the time window $[t_a, t_b]$}. We further say $(i, t_i) \in \V$ is \emph{before} $(j, t_j) \in \V$ and $(j, t_j) \in \V$ is \emph{after} $(i, t_i) \in \V$ if $t_i < t_j$. An edge $((i, t_i), (j, t_j)) \in \E$ has \emph{length} or \emph{lag} $|t_i - t_j|$. We call edges of length zero \emph{contemporaneous} and call all other edges \emph{lagged}.

Second, below eq.~\eqref{eq:process} we explicitly restricted the causal parents $P\!A^{i}_t$ to only contain vertices that are before or at time $t$. This restriction is captured by the following notion.

\begin{mydef}[Time order]\label{def:time_order}
A directed partial mixed graph $\G =  (\V, \E)$ with time series structure is \emph{time ordered} if $((i, t_i), (j, t_j)) \in \E_{\tailhead}$ implies $t_i \leq t_j$.
\end{mydef}

In a time ordered graph $\G$ the ancestral relationship $(i, t_i) \in \an((j, t_j), \G)$ implies $t_i \leq t_j$. This fact shows that also indirect causal influences are correctly restricted to not go backwards in time as soon as this restriction is imposed on direct causal influences.

Third, the property of causal stationarity (see Sec.~\ref{sec:process}) restricts the edges to be repetitive in time. This restriction is captured by the following notion.

\begin{mydef}[Repeating edges]\label{def:repeating_edges}
A directed partial mixed graph $\G =  (\V, \E)$ with time series structure has \emph{repeating edges} if the following holds: If $((i, t_i), (j, t_j)) \in \E_{\bullet}$ with $\bullet \in \{\tailhead, \headhead, \ohead, \oo\}$ and $(i, t_i + \Delta t), (j, t_j + \Delta t) \in \V$, then $((i, t_i + \Delta t), (j, t_j + \Delta t)) \in \E_{\bullet}$.
\end{mydef}

\begin{myremark}[on Def.~\ref{def:repeating_edges}]
Section~\ref{sec:class_graphical_models} is concerned with DAGs and DMAGs only. In these graphs there are by definition no edges of the types $\ohead$ or $\oo$. However, in Sec.~\ref{sec:equivalence_classes_and_causal_discovery} we will apply the concept of repeating edges also to DPAGs. Since these graphs (DPAGs) can contain edges $\ohead$ or $\oo$, we already here formulate Def.~\ref{def:repeating_edges} in sufficient generality.
\end{myremark}

By combining the three notions introduced in Defs.~\ref{def:time_structure}, \ref{def:time_order} and \ref{def:repeating_edges} we define the following class of graphical models, which plays an important role throughout the paper.

\begin{mydef}[Time series DAG]\label{def:tsDAG}
A \emph{time series DAG} (\emph{ts-DAG}) is a DAG $\D = (\V, \E)$ with time series structure $\V = \Iindex \times \Tindex$ with $\Tindex = \mathbb{Z}$ that is time ordered and has repeating edges.
\end{mydef}

Due to time order and repeating edges, a ts-DAG $\D$ is fully specified by its variable index set together with its edges that point to a vertex at time $t$. Hence, if the longest edge of $\D$ is of finite length $p_{\D} < \infty$ then one unambiguously specifies $\D$ by drawing all vertices within the time window $[t-p_{\D}, t]$ and the edges between these; see Fig.~\ref{fig:example_tsDAG} for an example. In slight abuse of notation we sometimes denote vertices by the random variable that they represent.

\begin{figure}[tb]
\centering
\includegraphics[width=0.7\linewidth, page = 1]{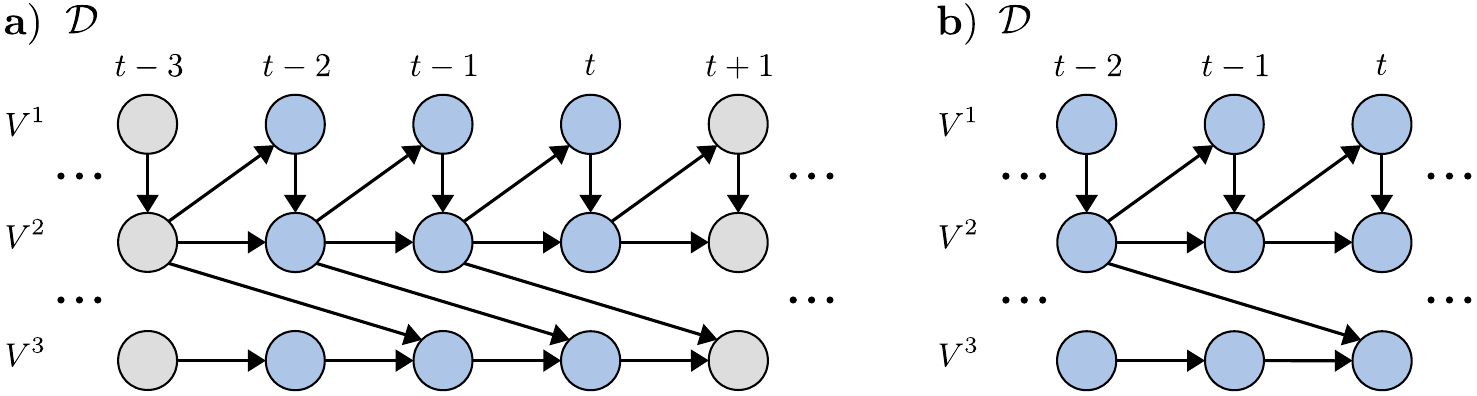}
\caption{
Two illustrations of the same ts-DAG $\D$ that represents a structural process as in eq.~\eqref{eq:process} of order $p_{\D} = \porder = 2$ with three component time series $V^1$, $V^2$, and $V^3$. Given the implicit assertion that there is no edge of length larger than those depicted, the ts-DAG is uniquely specified by showing a segment of $\porder +1$ successive time steps. The horizontal dots indicate that the structure is repeated into the infinite past and infinite future.}
\label{fig:example_tsDAG}
\end{figure}

\subsection{Time series DAGs as causal graphs for structural vector autoregressive processes}\label{sec:ts-DAG_causal}
Since the structural process in eq.~\eqref{eq:process} is acyclic by assumption, i.e., since the system of equations is recursive, its causal graph is acyclic (hence the terminology). In combination with the discussions in the previous subsection we thus get the following result.

\begin{mylemma}\label{lemma:why_ts_DAG}
The causal graph of an acyclic and causally stationary structural vector autoregressive process as in eq.~\eqref{eq:process} is a ts-DAG.
\end{mylemma}

This observation has been made before, for example in \citet{runge2012escaping} and \citet{Peters2013} where ts-DAGs have respectively been called \emph{time series graphs} and \emph{full time graphs}. We note that, because of time order, the assumption of acyclicity restricts only the ts-DAG's contemporaneous edges.

In the non-time series setting, an acyclic SCM defines a unique distribution over the SCM's variables (pushforward of the noise distribution by the structural assignments). According to the \emph{causal Markov condition}, see \citet{Spirtes2000}, the SCM's causal graph is a Bayesian network for this so-called \emph{entailed distribution} \citep{Pearl2009}, which in turn implies that $d$-separations (see \citet{Pearl1988}, denoted by `$\ci$') in the causal graph imply the corresponding independencies in the distribution \citep{VERMA199069, geiger1990identifying}. The \emph{causal faithfulness condition}, see \citet{Spirtes2000}, assumes the reverse implication, i.e., that all independencies imply the corresponding $d$-separations. Then, $d$-separations and independencies are in one-to-one correspondence.

Although acyclic, the time series setting specified by eq.~\eqref{eq:process} is more complicated: Since time is indexed by $t \in \mathbb Z$ (as opposed to, e.g., $t \in \mathbb N$), there is no initial ``starting'' distribution that can be pushforwarded to explicitly define a unique entailed distribution. Instead, we need to ask whether eq.~\eqref{eq:process} \emph{implicitly} defines a distribution; and if yes, how many. Following the terminology in \citet{bongers2018causal}, this question asks for \emph{solutions} to eq.~\eqref{eq:process}, that is, for stochastic processes which satisfy eq.~\eqref{eq:process} almost surely. The existence of such solutions as well as their uniqueness (up to almost sure equality) and properties are non-trivial and not considered here. Rather, for the purpose of this paper we \emph{assume} that eq.~\eqref{eq:process} is solved by a (not necessarily unique) strictly stationary stochastic process whose finite-dimensional distributions satisfy the causal Markov and causal faithfulness condition with respect to its ts-DAG. This assumption is common in the literature, cf.~\citet{Entner2010}, \citet{malinsky2018causal}, \citet{LPCMCI}, and is here only needed for the connection to causality. The results of the present paper are, technically, about marginalizing the independence (i.e., $d$-separation) models of ts-DAGs and remain valid also without that additional assumption. The issue of solving eq.~\eqref{eq:process} is an important aspect to consider in future work.

\subsection{Time series DMAGs}\label{sec:implied_DMAGs}
In most real-world scenarios, unobserved common causes cannot be excluded. As mentioned in Sec.~\ref{sec:introduction} for the non-time series setting, directed maximal ancestral graphs (DMAGs) are often used for causal modeling in the presence of unobserved variables. This use of DMAGs as causal graphical models was pioneered in \citet{richardson2002}, which defines a \emph{marginalization / projection} procedure that from a DAG $\G$ over vertices $\V$, of which only a subset $\Ovar \subseteq \V$ is observed, constructs a DMAG $\M_{\Ovar}(\D)$ over the observed variables $\Ovar$ only (see also \citet{zhang2008causal}). The projection of $\D$ to $\M_{\Ovar}(\D)$ has two properties: First, both graphs have the same ancestral relationships among vertices in $\Ovar$. Second, $d$-separations in $\D$ among vertices in $\Ovar$ are in one-to-one correspondence to the similar concept of $m$-separation in $\M_{\Ovar}(\D)$ (also denoted by `$\ci$'). These two properties ensure that if $\D$ is a causal graph then also $\M_{\Ovar}(\D)$ carries causal meaning and can be used for causal reasoning as explained in \citet{zhang2008causal}.

Below, we generalize the construction of such ``causal'' DMAGs to the time series setting. To begin, we first note that for time series there are two types of unobserved variables:
\begin{itemize}
\item \textbf{Unobservable variables}:
Some component time series $L^1, \dots, L^{n_L}$ with $0 \leq n_L < n_V$ may be unobserved entirely by the experimental setup. We call these $L^i$ \emph{unobservable} and call the other component time series $O^1, \dots, O^{n_O}$ with $n_O = n_V - n_l \leq \infty$ \emph{observable}. The variable index set $\Iindex$ of the ts-DAG $\D$ accordingly decomposes as $\Iindex = \IindexO \,\dot\cup\, \IindexL$. This first type of unobserved variables is similar to the case of unobserved variables in the non-time series setting.
\item \textbf{Temporally unobserved variables}:
In addition, throughout the paper we will treat only a finite number of time steps $\TindexO$ as observed. This construction is specific to the time series setting and means that at times $\Tindex \setminus \TindexO$ also the observable time series are treated as unobserved. The rational for doing so is that in practice only finitely many observations are available and hence one can only reason about DMAGs of finite temporal extension.
\end{itemize}
Throughout the paper we restrict the set $\TindexO$ of observed time steps to take one of the following two forms:
\begin{itemize}
\item \textbf{Regular sampling}: All time steps within a time interval $[t-\taumax, t]$ for some non-negative integer $\taumax < \infty$ and a reference time step $t$ are observed, i.e., $\TindexO = \{t-\tau ~|~ 0 \leq \tau \leq \taumax\}$.
\item \textbf{Regular subsampling}: Every $n$-th time step, for $n \geq 2$ an integer, within $[t-\taumax, t]$ with $\taumax < \infty$ is observed, i.e., $\TindexO = \{t-\tau ~|~ 0 \leq \tau \leq \taumax , \, \tau \!\! \mod n = 0\}$.
\end{itemize}
The time window length $\taumax$ is not restricted relative to the order $\porder$ of the data-generating process, i.e., we allow all of $\taumax < \porder$ and $\taumax = \porder$ and $\taumax > \porder$. The reference time step $t$ is arbitrary since the ts-DAG $\D$ has repeating edges. We are led to the following definition.

\begin{mydef}[Time series DMAG]\label{def:ts_dmag_implied}
Let $\D = (\V, \E)$ be a ts-DAG with variable index set $\Iindex$, let $\IindexO \subseteq\Iindex$, and let $\TindexO \subsetneq \mathbb{Z}$ be regularly sampled or regularly subsampled. The \emph{time series DMAG implied by $\D$ over $\Ovar = \IindexO \times \TindexO$}, denoted as $\M_{\Ovar}(\D)$ or $\M_{\IindexO \times \TindexO}(\D)$ and also referred to as a \emph{ts-DMAG}, is the DMAG on the vertex set $\Ovar$ that is obtained by applying the MAG latent projection defined in \citet[pp.~1442--3]{zhang2008causal} to $\D$ with $\La = \V \setminus \Ovar$ being the set of latent vertices.
\end{mydef}

Figure \ref{fig:example_tsDMAG} illustrates the construction of ts-DMAGs as projections of ts-DAGs. We stress that \emph{all} vertices prior to the observed time window (i.e., before time $t-\taumax$) are treated as unobserved, even if they are observable and hence would be observed for a larger value of $\taumax$.

\begin{figure}[tb]
\centering
\includegraphics[width=0.8\linewidth, page = 1]{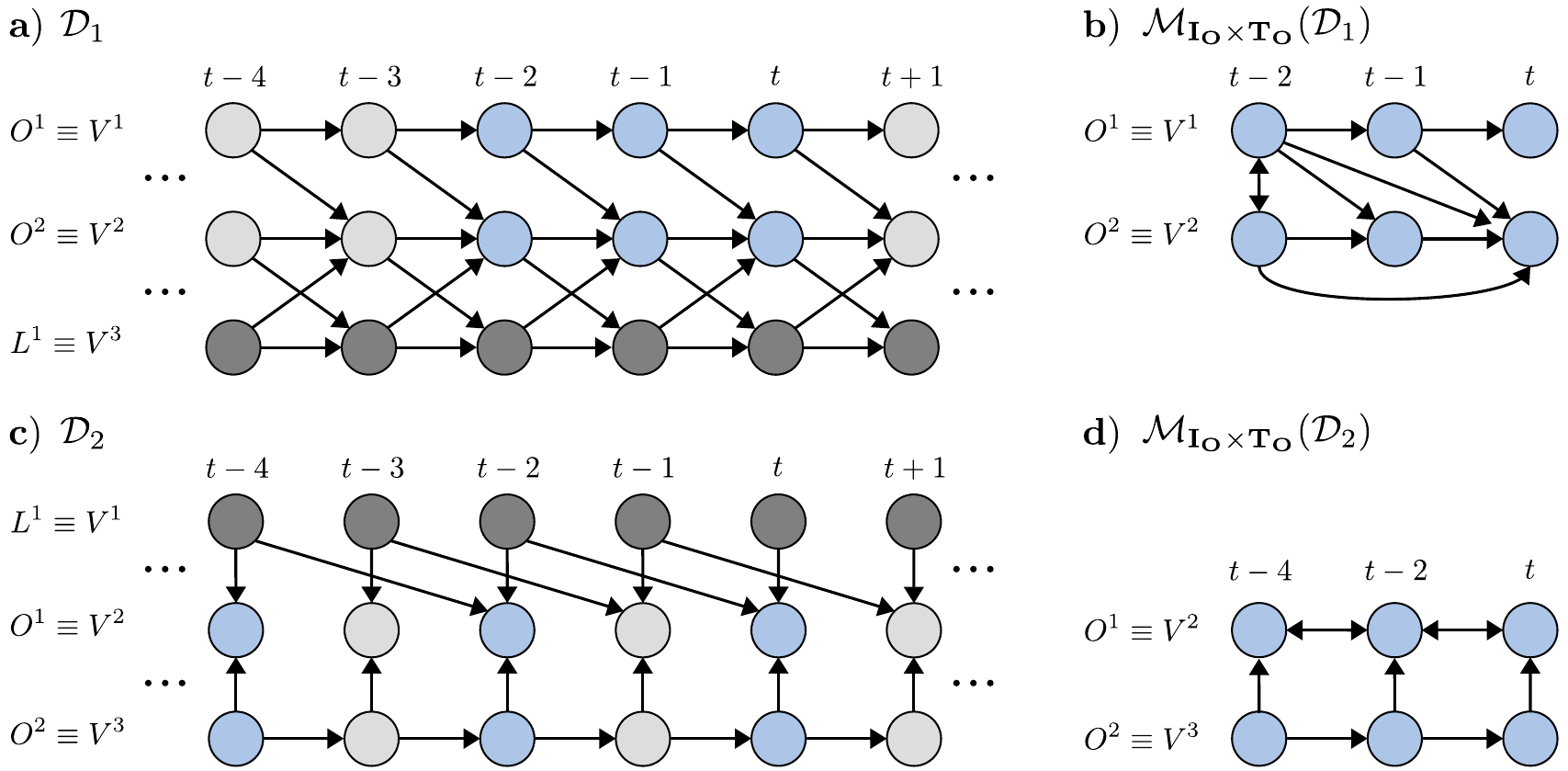}
\caption{The ts-DAG $\D_1$ in part a) implies the ts-DMAG $\M_{\IindexO \times \TindexO}(\D_1)$ in part b) for $\IindexO = \{1, 2\}$ and $\TindexO = \{t-2, t-1, t\}$ (regular sampling). The ts-DAG $\D_2$ in part c) implies the ts-DMAG $\M_{\IindexO \times \TindexO}(\D_2)$ in part d) for $\IindexO = \{2, 3\}$ and $\TindexO = \{t-4, t-2, t\}$ (regular subsampling). Color coding: Observed vertices are light blue, unobservable vertices are dark gray, temporally unobserved observable vertices are light gray.}
\label{fig:example_tsDMAG}
\end{figure}

\begin{myremark}[on Def.~\ref{def:ts_dmag_implied}]
The time series DMAG $\M_{\Ovar}(\D)$ is defined as the MAG latent projection of an \emph{infinite} object, namely of the ts-DAG $\D$. An implementation of this projection in a procedure that always terminates in finite time is possible but non-trivial. Such a procedure will be discussed in \citet{gerhardus_OracleCI}. For the present paper, however, this procedure is not needed because all theoretical results and examples either do not require the explicit construction of ts-DMAGs or one can carry out the required projections by hand.
\end{myremark}

Time series DMAGs are the central objects of interest in this paper and a significant part of the paper deals with deriving their properties. We will see that the repeating edges property of ts-DAGs $\D$ plays an essential role in this regard. As a first step, the following lemma notes which of the defining properties of ts-DAGs carry over to ts-DMAGs.

\begin{mylemma}\label{lemma:DMAG_has_time_series_structure_and_time_order}
Let $\M_{\Ovar}(\D)$ be a ts-DMAG. Then:
\begin{enumerate}
  \item $\M_{\Ovar}(\D)$ has a time series structure.
  \item $\M_{\Ovar}(\D)$ is time ordered.
  \item There are cases in which $\M_{\Ovar}(\D)$ does not have repeating edges.
\end{enumerate}
\end{mylemma}

While according to part 1 of Lemma~\ref{lemma:DMAG_has_time_series_structure_and_time_order} every ts-DMAG is a DMAG with time series structure, part 2~implies that the reverse is not true. Namely, DMAGs with time series structure that are not time ordered cannot be ts-DMAGs. We thus see that ts-DMAGs are a proper subclass of DMAGs with time series structure. The following example shows that ts-DMAGs $\M_{\Ovar}(\D)$ do not in general have repeating edges.

\begin{myexample}
The ts-DMAG in part b) of Fig.~\ref{fig:example_tsDMAG} does not have repeating edges because there is the edge $O^1_{t-2} \headhead O^2_{t-2}$ although $O^1_{t-1}$ and $O^2_{t-1}$ (and $O^1_{t}$ and $O^2_{t}$) are non-adjacent.
\end{myexample}

Despite this fact, the repeating edges property of the ts-DAG $\D$ strongly restricts the connectivity pattern of the ts-DMAG $\M_{\Ovar}(\D)$. We will work out these restrictions in Sec.~\ref{sec:characterization_full_section}.

\section{Characterization of ts-DMAGs}\label{sec:characterization_full_section}
The main goal of this section is to characterize the space of ts-DMAGs, i.e., to find conditions that specify exactly which DMAGs with time series structure are ts-DMAGs. Theorem~\ref{thm:complete_characterization} in Sec.~\ref{sec:complete_characterization_subsection} achieves this goal by providing a single condition that is both necessary and sufficient. The theorem uses the notion of \emph{canonical ts-DAGs}, see Def.~\ref{def:canonical_DAG} in Sec.~\ref{sec:canonical_ts-DAGs}. In Sec.~\ref{sec:stationarification} we introduce \emph{stationarified ts-DMAGs} and, more generally, the concept of \emph{stationarification}. This concept simplifies the definition of canonical ts-DAGs and is useful to describe the output of two recent time series causal discovery algorithms (see Sec.~\ref{sec:existing_algorithms}). In Sec.~\ref{sec:previous_model_classes} we show that ts-DMAGs constitute a strict subset of the classes of graphical models that have so far been used in the literature for describing time lag specific causal relationships and independencies in time series with latent confounders. Section~\ref{sec:properties_ts_DMAGs} discusses several properties that ts-DMAGs necessarily have, but which can also be obeyed by DMAGs that are not ts-DMAGs. These properties are useful for the discussions in Secs.~\ref{sec:previous_model_classes} and \ref{sec:equivalence_classes_and_causal_discovery}. In Sec.~\ref{sec:both_sampling_equivalent} we show that regular sampling and regular subsampling are equivalent from a graphical point of view. Section~\ref{sec:characterization_stat_DMAG} gives a characterization of the space of stationarified ts-DMAGs. At first, however, we spell out the motivation for the analysis.

\subsection{Motivation}\label{sec:motivation_for_characterization}
When using a class of graphs to represent causal knowledge, it is desirable to know which graphs belong to this class and which do not. Otherwise, it is impossible to fully characterize which causal claims a given graph of that class conveys. Another, a posteriori motivation has been mentioned in the previous paragraph: In Sec.~\ref{sec:previous_model_classes} we will see that ts-DMAGs are a strict subset of the previously employed model classes. Thus, when using ts-DMAGs as targets of inference in causal discovery or to reason about causal effects, it is, respectively, possible to learn more qualitative causal relationships (see Secs.~\ref{sec:tsDPAGs} and \ref{sec:existing_algorithms} for an in-depth discussion) and to identify more causal effects (see Example~\ref{ex:tsDAG_better_than_tora}) from data without having imposed any additional assumption or restriction.

\subsection{Equivalence of regular subsampling and regular sampling}\label{sec:both_sampling_equivalent}
In Sec.~\ref{sec:implied_DMAGs} we restricted the set of observed time steps $\TindexO$ to regular sampling or regular subsampling. While different at first sight, these two cases are equivalent in the following sense.

\begin{mylemma}\label{lemma:both_sampling_equivalent}
Let $\D$ be a ts-DAG and $1\leq n_{\text{steps}} \in \mathbb{N}$. For $1 \leq n \in \mathbb Z$ define the set $\TindexO^{n} = \{t - m \cdot n ~|~ 0 \leq m \leq n_{\text{steps}}- 1 \}$. Then, with equality up to relabeling vertices:
\begin{enumerate}
  \item For every $n > 1$ there is a ts-DAG $\D^{\prime}$ such that $\M_{\IindexO \times \TindexO^{n}}(\D) = \M_{\IindexO \times \TindexO^{1}}(\D^\prime)$.
  \item For every $n > 1$ there is a ts-DAG $\D^{\prime}$ such that $\M_{\IindexO \times \TindexO^{1}}(\D) = \M_{\IindexO \times \TindexO^{n}}(\D^\prime)$.
\end{enumerate}
\end{mylemma}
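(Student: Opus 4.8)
The plan is to prove both directions by exhibiting explicit constructions of $\D'$ and then appealing to the fact that the marginalization procedure of \cite{richardson2002} commutes with isomorphisms of directed graphs. Since $d$-separation, ancestral relations, and hence the entire construction of $\M_{\cdot}(\cdot)$ depend only on the abstract directed graph together with the designated set of observed vertices, it suffices in each case to produce a bijection of the relevant vertex sets that is an isomorphism of directed graphs and that maps observed vertices to observed vertices. Any such isomorphism carries one ts-DMAG onto the other, and the induced map on time indices is exactly the relabeling permitted by $\simeq$.

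For part~1 (subsampling realized by sampling) I would \emph{compress} time by a factor $n$. Concretely, I define a ts-DAG $\D'$ with variable index set $\Iindex' = \Iindex \times \{0, \dots, n-1\}$ and identify its vertex $\bigl((i,r), s\bigr)$ with the vertex $(i, ns+r)$ of $\D$, placing an edge in $\D'$ exactly when the corresponding edge is present in $\D$. This identification is a bijection of vertex sets, so $\D'$ inherits acyclicity from $\D$, and it is time ordered since a forward-pointing $\D'$-edge $\bigl((i,r),s\bigr) \to \bigl((j,r'),s'\bigr)$ corresponds to a $\D$-edge with $ns+r \le ns'+r'$, forcing $s \le s'$. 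The unit-lag repeating-edges property of $\D$ becomes the unit-lag repeating-edges property of $\D'$, because shifting $\D$ by $n$ time steps corresponds to shifting $\D'$ by one. Declaring $\IindexO' = \{(i,0) : i \in \IindexO\}$ observable and taking the reference time to be a multiple of $n$ (permitted because, by Def.~\ref{def:repeating_edges}, the ts-DMAG depends on the reference time only up to a relabeling of time indices), regular sampling of $\D'$ over $\TindexO^{1}$ corresponds under the identification to subsampling $\D$ over $\TindexO^{n}$, which is the claim.

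For part~2 (sampling realized by subsampling) I would instead \emph{dilate} time by a factor $n$. I define $\D'$ on the same variable index set $\Iindex' = \Iindex$ by placing a lag-$\ell'$ edge from $i$ to $j$ precisely when $n \mid \ell'$ and $\D$ contains the corresponding lag-$(\ell'/n)$ edge from $i$ to $j$. Because every edge of $\D'$ then has a lag divisible by $n$, the graph splits into $n$ pairwise disconnected residue classes modulo $n$, each isomorphic to $\D$ via $(i,\tau) \mapsto (i, n\tau + r)$; in particular $\D'$ is again a finite, time-ordered, acyclic ts-DAG with unit-lag repeating edges. Keeping $\IindexO' = \IindexO$ observable and subsampling $\D'$ at rate $n$ with a reference time in the residue class $0$, all observed vertices lie in a single copy of $\D$, while the remaining $n-1$ copies are disconnected from every observed vertex and are therefore marginalized away without effect. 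What survives is exactly the marginalization of one copy of $\D$ over $\IindexO \times \TindexO^{1}$, giving $\M_{\IindexO \times \TindexO^{n}}(\D') \simeq \M_{\IindexO \times \TindexO^{1}}(\D)$.

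The routine parts are the verifications that each $\D'$ satisfies all axioms of a ts-DAG: finiteness of the variable index set and of the order, time order, and especially the unit-lag repeating-edges property. The main obstacle is conceptual rather than computational: I must state carefully that the marginalization is invariant under directed-graph isomorphisms respecting the observed set, and that the freedom to relocate the reference time (a consequence of repeating edges) is precisely what aligns the residues in both constructions. Lemma~\ref{lemma:future_irrelevant} may be invoked to restrict attention to vertices at or before the reference time, but is not essential here.
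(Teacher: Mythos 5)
Your proposal is correct and takes essentially the same approach as the paper's own proof: part 1 is proved there by exactly your time-compression construction (the paper's map $\phi$ folds the intermediate time steps into additional unobservable component series indexed by $\Iindex \times \{1,\dots,n-1\}$), and part 2 by exactly your time-dilation construction, which yields $n$ mutually disconnected copies of $\D$ of which only one contains observed vertices. The only cosmetic difference is that the paper keeps the original variable indices for the residue-$0$ series, so that the observable index set remains literally $\IindexO$ and $\simeq$ need only relabel time, whereas your labeling $\Iindex \times \{0,\dots,n-1\}$ additionally renames $i$ to $(i,0)$ --- a trivial relabeling.
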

Lemma~\ref{lemma:both_sampling_equivalent} implies: Every property that ts-DMAGs necessarily have in case of regular sampling is also necessarily obeyed in case of regular subsampling (part 1) and vice versa (part 2). Moreover, every set of additional properties that, when imposed on a DMAG $\M$ with time series structure, is sufficient for $\M$ to be a ts-DMAG in case of regular sampling is also sufficient in case of regular subsampling (part 2) and vice versa (part 1).

Due to this equivalence we from here on restrict to regular sampling, without losing generality, and write $\Mtaumax(\D)$ for $\M_{\Ovar}(\D)$ where $\Ovar = \IindexO \times \TindexO$ and $\TindexO = \{t - \tau ~|~ 0 \leq \tau \leq \taumax \}$.

\subsection{Properties of ts-DMAGs}\label{sec:properties_ts_DMAGs}
In this subsection, we discuss several properties that ts-DMAGs $\Mtaumax(\D)$ necessarily have. These properties are such that a certain graphical property persists when the involved vertices are shifted in time. We use the following definitions.

\begin{mydef}[Time-shift persistent graphical properties]\label{def:time_shift_persistent}
A partial mixed graph $\G = (\V, \E) $ with time series structure has$\ldots$
\begin{enumerate}
  \item[1.] $\ldots$ \emph{repeating adjacencies} if the following holds: If $((i, t_i), (j, t_j)) \in \E$ and $(i, t_i + \Delta t), (j, t_j + \Delta t) \in \V$ then $((i, t_i + \Delta t), (j, t_j + \Delta t)) \in \E$.
  \item[2.] $\ldots$ \emph{past-repeating adjacencies} if the following holds: If $((i, t_i), (j, t_j)) \in \E$ and $(i, t_i + \Delta t), (j, t_j + \Delta t)\in \V$ with $\Delta t < 0$ then $((i, t_i + \Delta t), (j, t_j + \Delta t)) \in \E$.
  \item[3.] $\ldots$ \emph{repeating orientations} if the following holds: If $((i, t_i), (j, t_j)) \in \E_{\bullet}$ with $\bullet \in \{\tailhead, \headhead, \ohead,$ $\oo\}$ and $((i, t_i + \Delta t), (j, t_j + \Delta t)) \in \E$ then $((i, t_i + \Delta t), (j, t_j + \Delta t)) \in \E_{\bullet}$.
\end{enumerate}
A DMAG $\M = (\V, \E) $ with time series structure has
\begin{enumerate}
  \item[4.] $\ldots$ \emph{repeating ancestral relationships} if the following holds: If $(i, t_i) \in \an((j, t_j), \M)$ and $(i, t_i + \Delta t), (j, t_j + \Delta t) \in \V$ then $(i, t_i + \Delta t) \in \an((j, t_j + \Delta t), \M)$.
  \item[5.] $\ldots$ \emph{repeating separating sets} if the following holds: If $(i, t_i) \ci (j, t_j) ~|~ \mathbf{S}$ and $\{(i, t_i + \Delta t), (j, t_j + \Delta t)\} \cup \mathbf{S}_{\Delta t} \subseteq \V$, where $\mathbf{S}_{\Delta t}$ is obtained by shifting every vertex in $\mathbf{S}$ by $\Delta t$ time steps, then $(i, t_i+ \Delta t) \ci (j, t_j + \Delta t) ~|~ \mathbf{S}_{\Delta t}$.
\end{enumerate}
\end{mydef}

\begin{myremark}[on Def.~\ref{def:time_shift_persistent}]
Section~\ref{sec:characterization_full_section} is concerned with DAGs and DMAGs only. However, in Sec.~\ref{sec:equivalence_classes_and_causal_discovery} we will apply the concepts of repeating adjacencies, past-repeating adjacencies and repeating orientations also to DPAGs (which are a special case of partial mixed graphs). Hence, we already here formulate the definition in sufficient generality.
\end{myremark}

\begin{figure}[tb]
\centering
\includegraphics[width=0.94\linewidth, page = 1]{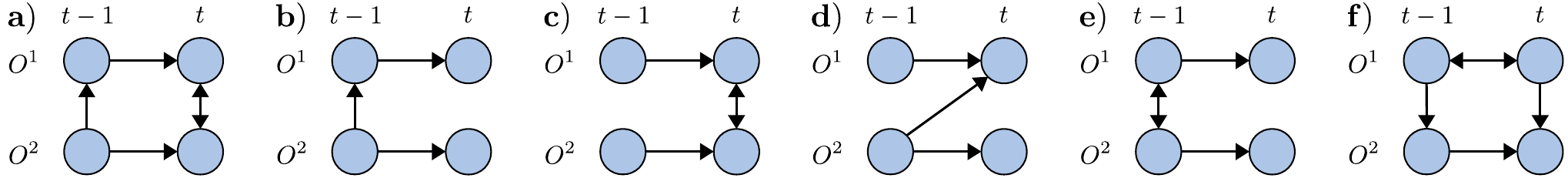}
\caption{Examples of time ordered DMAGs with time series structure for illustrating the properties from Def.~\ref{def:time_shift_persistent} and the repeating edges property from Def.~\ref{def:repeating_edges}. In each case we state which of these properties apply.
\textbf{a)} Repeating adjacencies, repeating separating sets, past-repeating adjacencies.
\textbf{b)} Repeating orientations, repeating separating sets, past-repeating adjacencies.
\textbf{c)} Repeating orientations, repeating ancestral relationships.
\textbf{d)} All but repeating separating sets.
\textbf{e)} All but repeating repeating edges and repeating adjacencies.
\textbf{f)} All.}
\label{fig:repeating_properties}
\end{figure}

Figure~\ref{fig:repeating_properties} illustrates the five properties introduced by Def.~\ref{def:time_shift_persistent} as well as their distinctions. Below we will make frequent use of the implications expressed by the following lemma.

\begin{mylemma}\label{lemma:interrelation_repeating_properties}
\begin{enumerate}
  \item Repeating edges is equivalent to the combination of repeating adjacencies and repeating orientations.
  \item Repeating adjacencies implies past-repeating adjacencies.
  \item Repeating ancestral relationships implies repeating orientations.
  \item In graphs with time index set $\Tindex = \mathbb{Z}$, repeating edges implies repeating ancestral relationships and repeating separating sets.
\end{enumerate}
\end{mylemma}

These implications further show that the combination of repeating adjacencies and repeating ancestral relationships implies repeating edges. Importantly, repeating orientations does \emph{not} imply repeating ancestral relationships, see part b) of Fig.~\ref{fig:repeating_properties} for an example.

Since ts-DAGs have repeating edges, according to Lemma~\ref{lemma:interrelation_repeating_properties} they in fact also have all five properties given in Def.~\ref{def:time_shift_persistent}. How about ts-DMAGs? While these in general \emph{do not} inherit repeating edges from the underlying ts-DAG, see part 3~of Lemma~\ref{lemma:DMAG_has_time_series_structure_and_time_order}, the following lemma shows that ts-DMAGs do feature some of the weaker time-shift persistent properties.

\begin{mylemma}\label{lemma:necessary_properites_time_series_DMAG}
\begin{enumerate}
  \item Time series DMAGs $\Mtaumax(\D)$ have repeating ancestral relationships.
  \item Time series DMAGs $\Mtaumax(\D)$ have repeating orientations.
  \item Time series DMAGs $\Mtaumax(\D)$ have repeating separating sets.
  \item Time series DMAGs $\Mtaumax(\D)$ have past-repeating adjacencies.
  \item There are cases in which a ts-DMAG $\Mtaumax(\D)$ does not have repeating adjacencies.
\end{enumerate}
\end{mylemma}

The ts-DMAGs in parts b) and d) of Fig.~\ref{fig:example_tsDMAG} indeed satisfy the properties asserted by parts 1 through 4 of Lemma~\ref{lemma:necessary_properites_time_series_DMAG}. Moreover, part 5 of Lemma~\ref{lemma:necessary_properites_time_series_DMAG} clarifies why ts-DMAGs may fail to have repeating edges: They do not necessarily have repeating adjacencies but only the weaker property of past-repeating adjacencies. The following example illustrates this fact.

\begin{myexample}\label{example:motivation-for-stationarification}
Consider the ts-DAG $\D_1$ in part a) of Fig.~\ref{fig:example_tsDMAG}. In this graph the $d$-separation $O^1_{t + \Delta t} \ci O^2_{t + \Delta t} ~|~ O^1_{t + \Delta t- 1}$ holds for all $\Delta t \in \mathbb Z$. Hence, the vertices $O^1_t$ and $O^2_t$ (and, similarly, $O^1_{t-1}$ and $O^2_{t-1}$) are non-adjacent in the ts-DMAG $\M^{2}(\D_1)$ in part b) of the figure. However, since $O^1_{t-3}$ is temporally unobserved and the $d$-separation $O^1_{t + \Delta t} \ci O^2_{t + \Delta t} ~|~ \mathbf{S}$ requires that $O^1_{t + \Delta t - 1} \in \mathbf{S}$, the vertices $O^1_{t-2}$ and $O^1_{t-2}$ are adjacent in $\M^{2}(\D_1)$.
\end{myexample}

That ts-DMAGs have repeating orientations and repeating separating sets has already been found and used in \citet{Entner2010}.

\subsection{Stationarified ts-DMAGs}\label{sec:stationarification}
Example~\ref{example:motivation-for-stationarification} shows that in a ts-DMAG $\Mtaumax(\D)$ there may be an edge $(i, t_i + \Delta t) \astast (j, t_j + \Delta t)$ with $\Delta t < 0$ even if the vertices $(i, t_i)$ and $(j, t_j)$ are non-adjacent in $\Mtaumax(\D)$. This is the case even though one then knows that $(i, t_i + \Delta t)$ and $(j, t_j + \Delta t)$ \emph{can} be $d$-separated in underlying ts-DAG $\D$, just not by a set of vertices that is within the observed time window. One might thus view such an edge $(i, t_i + \Delta t) \astast (j, t_j + \Delta t)$ in $\Mtaumax(\D)$ as an artifact of the chosen time window and hence prefer to manually remove the edge by subjecting the ts-DMAG to the following operation.

\begin{mydef}[Stationarification]\label{def:stationarification}
Let $\G = (\V, \E)$ be a directed partial mixed graph with time series structure. The \emph{stationarification} of $\G$, denoted as $\stat(\G)$, is the graph $\stat(\G) = (\V^\prime, \E^\prime)$ defined as follows:
\begin{enumerate}
	\item It has the same set of vertices as $\G$, i.e., $\V^\prime = \V$.
	\item There is an edge $((i, t_i), (j, t_j)) \in \E^\prime_{\bullet}$ with $\bullet \in \{\tailhead, \, \headhead, \, \ohead, \, \oo\}$ if and only if $((i, t_i + \Delta t), (j, t_j + \Delta t)) \in \E_{\bullet}$ in $\G$ for all $\Delta t$ with $(i, t_i + \Delta t), (j, t_j + \Delta t) \in \V$.
\end{enumerate}
\end{mydef}

\begin{myremark}[on Def.~\ref{def:stationarification}]
Section~\ref{sec:characterization_full_section} is concerned with DAGs and DMAGs only. In these graphs there are by definition no edges of the types $\ohead$ or $\oo$. However, in Sec.~\ref{sec:equivalence_classes_and_causal_discovery} we will apply the concept of sationarification also to DPAGs. Since these graphs (DPAGs) can contain edges $\ohead$ or $\oo$, we already here formulate the definition in sufficient generality.
\end{myremark}

To see that the process of stationarification indeed achieves what it is supposed to do, consider the ts-DMAG $\mathcal{M}_1$ in part a) of Fig.~\ref{fig:stationarifications}. In this graph there is the edge $O^1_{t-2} \headhead O^2_{t-2} \in \E_{\headhead}$ while the vertices $O^1_{t-1}$ and $O^2_{t-1}$ (and, similarly, $O^1_{t}$ and $O^2_t$) are non-adjacent. According to part 2 of Def.~\ref{def:stationarification} (note the ``for all $\Delta t$''), the vertices $O^1_{t-2}$ and $O^2_{t-2}$ are therefore non-adjacent in the stationarification $\stat(\mathcal{M}_1)$ of $\mathcal{M}_1$ as shown in part b) of Fig.~\ref{fig:stationarifications}.

\begin{figure}[tb]
\centering
\includegraphics[width=0.8\linewidth, page = 1]{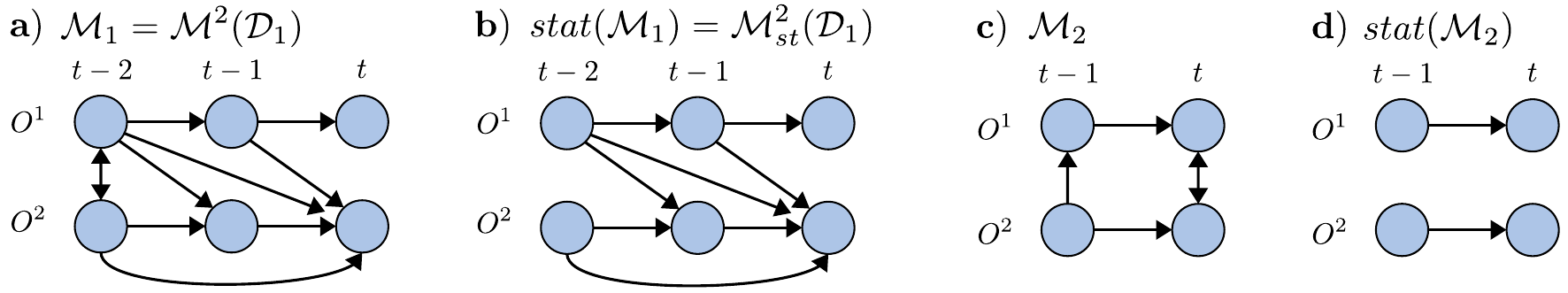}
\caption{A ts-DMAG $\M_1 = \M^2(\D_1)$ (the same as in part b) of Fig.~\ref{fig:example_tsDMAG}) and a DMAG with time series structure $\M_2$ (the same as in part a) of Fig.~\ref{fig:repeating_properties}) together with their stationarifications. Note that although $\M_2$ has repeating adjacencies its contemporaneous edges are not in $\stat(\M_2)$ because these edges do not have the same orientation.}
\label{fig:stationarifications}
\end{figure}

Stationarification removes an edge $(i, t_i + \Delta t) \astast (j, t_j + \Delta t)$ also if $(i, t_i)$ and $(j, t_j)$ are adjacent but if the edges $(i, t_i) \astast (j, t_j)$ and $(i, t_i + \Delta t) \astast (j, t_j + \Delta t)$ have different orientations (note the ``$\bullet$'' subscripts on $\E^\prime_{\bullet}$ and $\E_{\bullet}$ in part 2 of Def.~\ref{def:stationarification}). This effect, illustrated by parts c) and d) of Fig.~\ref{fig:stationarifications}, ensures that $\stat(\G)$ is the unique largest subgraph of $\G$ with repeating edges. For graphs with repeating orientations (as, e.g., ts-DMAGs) this effect does not occur and stationarification only concerns adjacencies (as, e.g., in parts a) and b) of Fig.~\ref{fig:stationarifications}).

Since ts-DMAGs $\Mtaumax(\D)$ have repeating orientation and past-repeating adjacencies, their stationarifications $\stat(\Mtaumax(\D))$ can be characterized with the following simpler condition.

\begin{mylemma}\label{lemma:stat_of_implied_DMAG}
The stationarification $\stat(\Mtaumax(\D))$ of a ts-DMAG $\Mtaumax(\D)$ is the unique subgraph of $\Mtaumax(\D)$ in which the vertices $(i, t_j - \tau)$ and $(j, t_j)$ with $\tau \geq 0$ are adjacent if and only if the vertices $(i, t - \tau)$ and $(j, t)$ are adjacent in $\Mtaumax(\D)$.
\end{mylemma}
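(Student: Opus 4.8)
The plan is to deduce the lemma from the characterization of $\stat$ in Prop.~\ref{prop:properties_stationarification} together with the partial repetition properties of $\Mtaumax(\D)$ established in Prop.~\ref{prop:necessary_properites_time_series_DMAG}. By Prop.~\ref{prop:properties_stationarification}, $\stat(\Mtaumax(\D))$ is the largest subgraph of $\Mtaumax(\D)$ with repeating edges, and by part~1 of Lemma~\ref{lemma:interrelation_repeating_properties} the repeating edges property splits into repeating adjacencies and repeating orientations. Since $\Mtaumax(\D)$ has repeating ancestral relationships (part~1 of Prop.~\ref{prop:necessary_properites_time_series_DMAG}) and these imply repeating orientations (part~3 of Lemma~\ref{lemma:interrelation_repeating_properties}), the orientation part is already in place; hence passing to $\stat$ only deletes edges that violate repeating adjacencies, and the task reduces to describing the resulting skeleton. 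Because a subgraph of $\Mtaumax(\D)$ on the same vertex set is determined by its skeleton (retained edges keep their marks), once I verify that the stated biconditional holds for $\stat(\Mtaumax(\D))$ the uniqueness claim follows immediately.

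It thus remains to prove the biconditional, which I would do by fixing $\tau \ge 0$ and shifting to the canonical representative at the latest time $t$. For the forward direction, suppose $(i, t_j - \tau) \astast (j, t_j)$ lies in $\stat(\Mtaumax(\D))$. First I would note that any edge of lag $\tau$ present in $\Mtaumax(\D)$ forces $\tau \le \taumax$, since both endpoints must lie in the window $[t - \taumax, t]$; consequently the forward shift by $\Delta t = t - t_j \ge 0$ keeps both endpoints in $\V$. By Def.~\ref{def:stationarification} membership in $\stat$ means the edge persists under all valid shifts, so this particular shift yields that $(i, t - \tau) \astast (j, t)$ is present in $\Mtaumax(\D)$, giving the required adjacency.

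For the converse I would use past-repeating adjacencies (part~3 of Prop.~\ref{prop:necessary_properites_time_series_DMAG}), and this is where the essential and slightly subtle point lies: the reference time $t$ is the \emph{latest} observed time step, so every position of a lag-$\tau$ edge inside the window is obtained from the canonical position $(i, t - \tau),(j, t)$ by a \emph{backward} shift $\Delta t \le 0$. Hence, assuming $(i, t - \tau)$ and $(j, t)$ adjacent, past-repeating adjacencies propagates this single adjacency to every admissible earlier position, and repeating orientations then forces all these edges to carry the same mark as the canonical one. By Def.~\ref{def:stationarification} this is exactly the condition for $(i, t_j - \tau) \astast (j, t_j)$ to belong to $\stat(\Mtaumax(\D))$. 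The main obstacle is recognizing that here the strictly weaker past-repeating adjacencies suffices in place of full repeating adjacencies precisely because $t$ is the frontmost time step and no forward shifts of the canonical edge remain inside the window; everything else is routine bookkeeping of which time shifts keep both endpoints in $\V$.
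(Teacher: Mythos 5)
Your proof is correct and follows essentially the same route as the paper: the paper proves an auxiliary lemma for arbitrary directed partial mixed graphs with time series structure having repeating orientations and past-repeating adjacencies, and then instantiates it with $\G = \Mtaumax(\D)$ using exactly the properties you invoke from Prop.~\ref{prop:necessary_properites_time_series_DMAG}. In particular, your two key observations---that the forward direction reduces to shifting an edge in $\stat(\Mtaumax(\D))$ to the latest time step, and that in the converse the strictly weaker past-repeating adjacencies suffices (combined with repeating orientations) because $t$ is the frontmost time step---are precisely the arguments in the paper's proof, as is the uniqueness argument that the biconditional prescribes exactly which edges of $\Mtaumax(\D)$ are deleted.
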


Because the stationarification $\stat(\G)$ is a subgraph of $\G$, a time series structure and time order naturally carry over from $\G$ to $\stat(\G)$. Moreover, we can prove the following.

\begin{mylemma}\label{lemma:stat_of_implied_DMAG_is_DMAG}
The stationarification $\stat(\Mtaumax(\D))$ of a ts-DMAG $\Mtaumax(\D)$ is a DMAG.
\end{mylemma}

We thus refer to $\stat(\Mtaumax(\D))$ as a \emph{stationarified ts-DMAG} and abbreviate $\stat(\Mtaumax(\D))$ as $\Mtaumaxstat(\D)$. However, as the following example shows, a stationarified ts-DMAG $\Mtaumaxstat(\D)$ may not be the MAG latent projection of any ts-DAG, i.e., may not be a ts-DMAG.

\begin{myexample}
The stationarified ts-DMAG $\M^2_{\statsubscript}(\D_1)$ in part b) of Fig.~\ref{fig:stationarifications} implies the $d$-separation $O^1_{t-2} \ci O^2_{t-2}$ and the $d$-connections $O^1_{t-1} {\cancel{\,\,\ci}\,} O^2_{t-1}$ and $O^1_{t} {\cancel{\,\,\ci}\,} O^2_{t}$. The graph $\M^2_{\statsubscript}(\D_1)$ does thus not have repeating separating sets and can, by means of Lemma~\ref{lemma:necessary_properites_time_series_DMAG}, not be a ts-DMAG. Also note that in the underlying ts-DAG $\D_1$, shown in part a) of Fig.~\ref{fig:example_tsDMAG}, the $d$-connection $O^1_{t-2} {\cancel{\,\,\ci}\,} O^2_{t-2}$ holds. From this observation we learn that $(i, t_i) \ci (j, t_j) ~|~ \mathbf{S}$ in $\Mtaumaxstat(\D)$ does not necessarily imply $(i, t_i) \ci (j, t_j) ~|~ \mathbf{S}$ in $\D$.
\end{myexample}

The vertices $(i, t - \tau_i)$ and $(j, t - \tau_j)$ with $0 \leq \tau_j \leq \tau_i \leq \taumax$ are adjacent in a stationarified ts-DMAG $\Mtaumaxstat(\D)$ if and only if they can not be $d$-separated by any set of observable vertices within $[t - \taumax - \tau_j, t]$ in the underlying ts-DAG $\D$ (instead of $[t - \taumax, t]$, which is what a ts-DMAG would assert). The orientation of edges, however, retains the standard meaning: Tail and head marks respectively convey (non-)ancestorship according to the ts-DAG $\D$. The following lemma says that stationarification does not change ancestral relationships.

\begin{mylemma}\label{lemma:stat_DMAG_same_anc}
The ts-DMAG $\Mtaumax(\D)$ and its stationarification $\Mtaumaxstat(\D)$ agree on ancestral relationships, i.e., $(i, t_i) \in \an((j, t_j), \Mtaumax(\D))$ if and only if $(i, t_i) \in \an((j, t_j), \Mtaumaxstat(\D))$.
\end{mylemma}

Since $\Mtaumax(\D)$ and $\D$ by construction of the MAG latent projection agree on ancestral relationships, Lemma~\ref{lemma:stat_DMAG_same_anc} implies that also the stationarified ts-DMAG $\Mtaumaxstat(\D)$ agrees with the ancestral relationships of $\D$. Thus, $\Mtaumaxstat(\D)$ has repeating ancestral relationships.

In summary, edges in the ts-DMAG $\Mtaumax(\D)$ that are not also in $\Mtaumaxstat(\D)$ are due to marginalizing over observable vertices before $t-\taumax$. Such edges disappear when $\taumax$ is sufficiently increased, see also \citet[Sec.~B.8]{gerhardus2022_supplement}. However, as we will show in Sec.~\ref{sec:why_DPAG_of_tsDPAG}, these additional edges play a useful role in causal discovery. In Sec.~\ref{sec:existing_algorithms} we will further use the concept of stationarification to describe the SVAR-FCI causal discovery algorithm from \citet{malinsky2018causal} and the LPCMCI causal discovery algorithm from \citet{LPCMCI}.

\subsection{Canonical ts-DAGs}\label{sec:canonical_ts-DAGs}
In the current subsection we return to the goal of characterizing the space of ts-DMAGs. To this end, we first recall the concept of \emph{canonical DAGs}.

\begin{mydef}[Canonical DAG. From Sec.~6.1 of \citet{richardson2002}, specialized to the case of \emph{directed} ancestral graphs]\label{def:canonical_DAG}
Let $\G = (\V, \E)$ be a directed ancestral graph. The canonical DAG $\Dc(\G)$ of $\G$ is the graph $\Dc(\G) = (\V^{ca}, \E^{ca})$ defined as follows:
\begin{enumerate}
	\item Its vertex set is $\V^{ca} = \V \cup \La$, where $\La = \{l_{ij} ~|~ (i, j) \in \E_{\headhead}\}$.
	\item Its edge set $\E^{ca} = \E^{ca}_{\rightarrow}$ consists of the edges
	\begin{itemize}
		\item $i \tailhead j$ for all $(i, j) \in \E_{\tailhead}$ and
		\item $l_{ij} \tailhead i$ for all $l_{ij} \in \La$ and
		\item $l_{ij} \tailhead j$ for all $l_{ij} \in \La$.
	\end{itemize}
\end{enumerate}
\end{mydef}

Intuitively, the canonical DAG $\Dc(\G)$ of a directed ancestral graph $\G$ is obtained by replacing each bidirected edge $i \headhead j$ in $\G$ with $i \headtail l_{ij} \tailhead j$ where $l_{ij}$ is an additionally inserted, unobserved vertex. The canonical DAG $\Dc(\G)$ is a DAG and has the convenient property that there are no edges pointing into unobserved vertices and hence that there are also no edges between two unobserved vertices. Despite this simple structure of unobserved vertices, the following result shows that canonical DAGs are expressive enough to generate all DMAGs. 

\begin{mylemma}[Theorem~6.4 in \citet{richardson2002}, specialized to \emph{directed} ancestral graphs]\label{lemma:theorem_richardson}
If $\M$ is a DMAG over vertex set $\Ovar$, then the MAG latent projection $\M_{\Ovar}(\Dc(\M))$ of the canonical DAG $\Dc(\M)$ of $\M$ equals $\M$, i.e., $\M = \M_{\Ovar}(\Dc(\M))$.
\end{mylemma}

Lemma~\ref{lemma:theorem_richardson} means that every DMAG is the MAG latent projection of some DAG. Moreoever, the condition $\M = \M_{\Ovar}(\Dc(\M))$ yields a characterization of DMAGs in the sense that a directed ancestral graph $\G$ is a DMAG if and only if it meets the condition $\G = \M_{\Ovar}(\Dc(\G))$. Because DMAGs are already characterized by definition,\footnote{As directed ancestral graphs without inducing paths between non-adjacent vertices, see Sec.~\ref{sec:notation}.} the alternative characterization by the condition $\G = \M_{\Ovar}(\Dc(\G))$ is of limited use in this case.

For ts-DMAGs, however, there is no definitional characterization. In addition, because not every DMAG with time series structure is a ts-DMAG (see the explanation below Lemma~\ref{lemma:DMAG_has_time_series_structure_and_time_order}), characterizing ts-DMAGs is a non-trivial task. In the remaining parts of the current subsection and Sec.~\ref{sec:complete_characterization_subsection}, we show that ts-DMAGs can be characterized by an appropriate generalization of the condition $\G = \M_{\Ovar}(\Dc(\G))$. The first step of such a generalization is to find an appropriate generalization of canonical DAGs.

The generalization of canonical DAGs to the time series setting is non-trivial for the following reason. Consider an edge $(i, t_i) \astast (j, t_j)$ in a DMAG $\M$ with time series structure that is not in the DMAG's stationarification $\stat(\M)$. If, depending on the orientation of the edge $(i, t_i) \astast (j, t_j)$ in $\M$, either $(i, t_i) \tailhead (j, t_j)$ or $(i, t_i) \headtail (j, t_j)$ or $(i, t_i) \headtail (l_{ij}, t_{ij}) \tailhead (j, t_j)$ with $(l_{ij}, t_{ij})$ unobserved were included in a ``canonical ts-DAG'' $\Dc(\M)$, then the repeating edges property of ts-DAGs would require the same structure to be present at all other time steps too. Hence, in $\Dc(\M)$ there would be $(i, t_i + \Delta t) \tailhead (j, t_j+ \Delta t)$ or $(i, t_i+ \Delta t) \headtail (j, t_j+ \Delta t)$ or $(i, t_i+ \Delta t) \headtail (l_{ij}, t_{ij}+ \Delta t) \tailhead (j, t_j+ \Delta t)$ for all $\Delta t \in \mathbb Z$. Consequently, in the MAG latent projection $\M_{\Ovar}(\Dc(\M))$ of $\Dc(\M)$ there would be an edge $(i, t_i + \Delta t) \astast (j, t_j + \Delta t)$ of the same type for all $\Delta t$. But then also in the stationarification $\stat(\M_{\Ovar}(\Dc(\M)))$ of $\M_{\Ovar}(\Dc(\M))$ there would be the edge $(i, t_i + \Delta t) \astast (j, t_j + \Delta t)$ for all $\Delta t$. Hence, $\M$ could not equal $\M_{\Ovar}(\Dc(\M))$.

Given these considerations, the canonical ts-DAG $\Dc(\M)$ of a ts-DMAG $\M$ should instead only take into account the edges in the stationarification $\stat(\M)$ of $\M$. We are thus lead to the following definition, which for use further below is not restricted to ts-DMAGs but more generally applies to acyclic directed mixed graphs.

\begin{mydef}[Canonical ts-DAG]\label{def:canonical_ts-DAG}
Let $\G$ be an acyclic directed mixed graph with time series structure and let $\V = \Iindex \times \Tindex$ with $\Tindex = \{t - \tau ~|~ 0 \leq \tau \leq \taumax\}$ be its set of vertices. Denote with $\E^{\stat}$ the set of edges of $\stat(\G)$. The canonical ts-DAG associated to $\G$, denoted as $\Dc(\G)$, is the graph $\Dc(\G) = (\V^{ca}, \E^{ca})$ defined as follows:
\begin{enumerate}
	\item Its set of vertices is $\V^{ca} = \left(\Iindex \cup \mathbf{J}\right) \times \mathbb{Z}$, where $\mathbf{J} = \{(i, j, \tau) ~|~ ((i, t-\tau), (j, t)) \in \E^{\stat}_{\headhead}\}$. The variable index set is $\Iindex \cup \mathbf{J}$ and the time index set is $\mathbb{Z}$.
	\item Its set of edges $\E^{ca} = \E^{ca}_{\rightarrow}$ are for all $\Delta t \in \mathbb{Z}$
	\begin{itemize}
		\item $(i, t-\tau + \Delta t) \tailhead (j, t + \Delta t)$ for all $((i, t-\tau) , (j, t)) \in \E^{\stat}_{\tailhead}$ and
		\item $((i, j, \tau), t + \Delta t) \tailhead (i, t + \Delta t)$ for all $(i, j, \tau) \in \mathbf{J}$ and
		\item $((i, j, \tau), t -\tau + \Delta t) \tailhead (j, t + \Delta t)$ for all $(i, j, \tau) \in \mathbf{J}$.
	\end{itemize}
\end{enumerate}
\end{mydef}

\begin{figure}[tb]
\centering
\includegraphics[width=0.65\linewidth, page = 1]{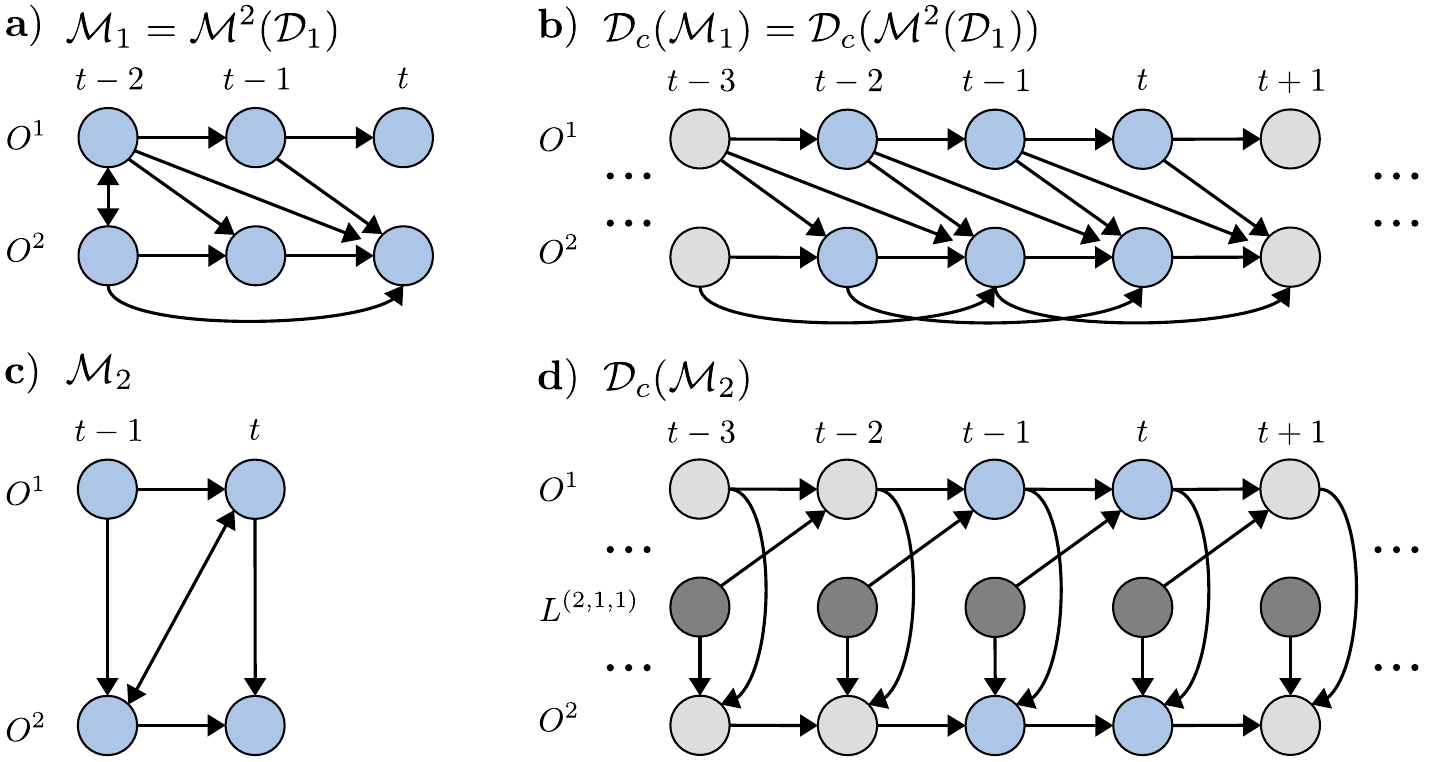}
\caption{%
A ts-DMAG $\M_1 = \M^2(\D_1)$ (the same as in part b) of Fig.~\ref{fig:example_tsDMAG} and part a) of Fig.~\ref{fig:stationarifications}) and a DMAG with time series structure $\M_2$ together with their canonical ts-DAGs. In $\Dc(\M_1)$ there is no unobservable time series because in $\M_1$ there is no bidirected edge that is repetitive in time and hence there is no bidirected edge in $\stat(\M_1)$. The unobservable time series $L^{(2, 1, 1)}$ in $\Dc(\M_2)$ in the notation of Def.~\ref{def:canonical_ts-DAG} corresponds to $(2, 1, 1) \in \mathbf{J}$ and results from the edge $O^2_{t-1} \headhead O^1_{t}$ in $\stat(\M_2) = \M_2$.
}
\label{fig:canonical_ts-DAG}
\end{figure}

Figure \ref{fig:canonical_ts-DAG} illustrates canonical ts-DAGs. Intuitively, the canonical ts-DAG $\Dc(\G)$ of $\G$ is obtained in three steps: First, replace $\G$ by its stationarification $\stat(\G)$. Second, in $\stat(\G)$ replace every bidirected edge $(i, t_j - \tau) \headhead (j, t_j)$ with $(i, t_j -\tau) \headtail ((i, j, \tau), t_j - \tau) \tailhead (j, t_j)$ where $((i, j, \tau), t_j - \tau)$ is an additionally inserted, unobserved vertex. Third, repeat this structure into the infinite past and future according to the repeating edges property. This intuition identifies the vertices $((i, j, \tau), s)$ with $(i, j, \tau) \in \mathbf{J}$ and $s \in \mathbb{Z}$ as analogs of the unobserved vertices $l_{ij} \in \mathbf{L}$ in standard canonical DAGs (see Def.~\ref{def:canonical_DAG} above) and, in addition, means that the time series indexed by $\mathbf{J}$ are treated as unobservable. The key difference between standard canonical DAGs and canonical ts-DAGs is the first of the three steps, i.e., the application of stationarification. A similarity is that also in canonical ts-DAGs there are no edges into unobservable vertices and hence no edges between two unobservable vertices.

Canonical ts-DAGs are indeed ts-DAGs and, by means of the following result, yield the desired generalization of Lemma~\ref{lemma:theorem_richardson}.

\begin{mylemma}\label{lemma:mtaumax_of_gc_is_id}
Let $\D$ be a ts-DAG with variable index set $\Iindex$. Let $\IindexO \subseteq \Iindex$ and $\TindexO = \{t - \tau ~|~ 0 \leq \tau \leq \taumax \}$ with $\taumax \geq 0$. Then, $\M_{\Ovar}(\D) = \M_{\Ovar}(\Dc(\M_{\Ovar}(\D)))$ with $\Ovar = \IindexO \times \TindexO$.
\end{mylemma}

\begin{myremark}[on Lemma~\ref{lemma:mtaumax_of_gc_is_id}]
The lemma involves two different MAG latent projections: First, the projection of the ts-DAG $\D$ to the ts-DMAG $\M_{\Ovar}(\D)$. Second, the projection of the canonical ts-DAG $\Dc(\M_{\Ovar}(\D))$ of $\M_{\Ovar}(\D)$ to $\M_{\Ovar}(\Dc(\M_{\Ovar}(\D)))$. In the first projection, the time series indexed by $\Iindex \setminus \IindexO$ are unobservable. In the second projection, the time series indexed by the set $\mathbf{J}$ are unobservable. In both projections, all vertices before $t - \taumax$ and after $t$ are temporally unobserved. However, since the set of observed variables is the same in both projections (namely $\Ovar$), no confusion arises when writing $\Mtaumax(\D) = \Mtaumax(\Dc(\Mtaumax(\D)))$ instead of $\M_{\Ovar}(\D) = \M_{\Ovar}(\Dc(\M_{\Ovar}(\D)))$. From here on we adopt this notation.
\end{myremark}

\begin{figure}[tb]
\centering
\includegraphics[width=0.5\linewidth, page = 1]{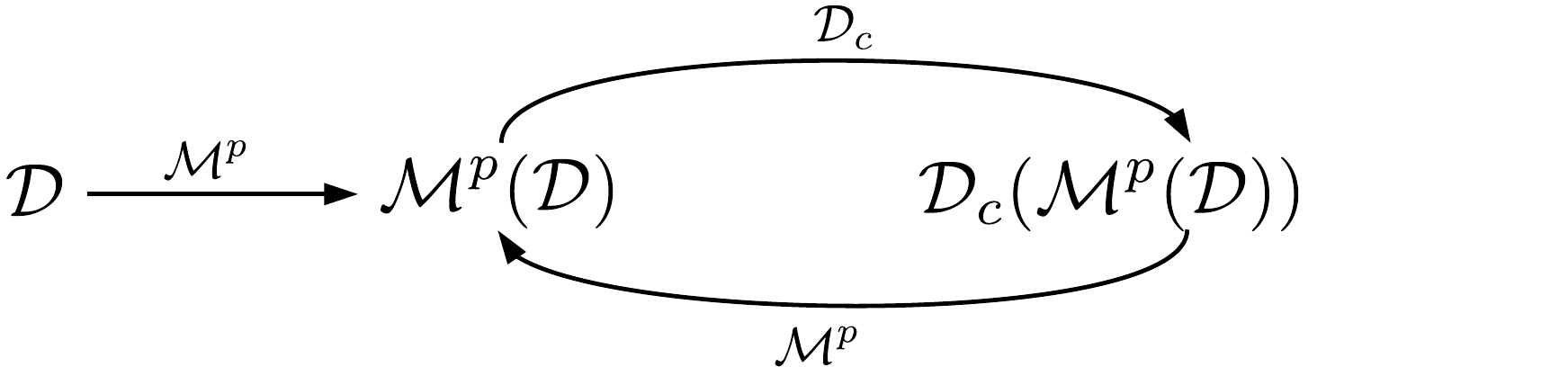}
\caption{%
Conceptual illustration of Lemma~\ref{lemma:mtaumax_of_gc_is_id}.
}
\label{fig:illustration_central_proposition_conceptual}
\end{figure}

Lemma~\ref{lemma:mtaumax_of_gc_is_id} says that the composition of creating the canonical ts-DAG and then projecting back to the original vertices is the identity operation on the space of ts-DMAGs, see Fig.~\ref{fig:illustration_central_proposition_conceptual}. This result is far from obvious for two reasons: First, if an edge $(i, t_i) \astast (j, t_j)$ in a ts-DMAG $\Mtaumax(\D)$ is not in the stationarified ts-DMAG $\Mtaumaxstat(\D)$ then in the canonical ts-DAG $\Dc(\Mtaumax(\D))$ there is neither $(i, t_i) \tailhead (j, t_j)$ nor $(i, t_i) \headtail (j, t_j)$ nor $(i, t_i) \headtail (l_{ij}, t_{ij}) \tailhead (j, t_j)$ with $(l_{ij}, t_{ij})$ unobservable. Hence, the edge $(i, t_i) \astast (j, t_j)$ needs to appear in the MAG latent projection $\Mtaumax(\Dc(\Mtaumax(\D)))$ of $\Dc(\Mtaumax(\D))$ in a non-trivial way, namely because of marginalizing over the temporally unobserved vertices. Second, this marginalization over the vertices before $t-\taumax$ must not create superfluous edges.

\begin{myexample}
The example in Fig.~\ref{fig:illustration_central_proposition} illustrates Lemma~\ref{lemma:mtaumax_of_gc_is_id}. This example also shows that the original ts-DAG $\D$ and the canonical ts-DAG $\Dc(\Mtaumax(\D))$ need not be equal.
\end{myexample}

\begin{figure}[tb]
\centering
\includegraphics[width=0.85\linewidth, page = 1]{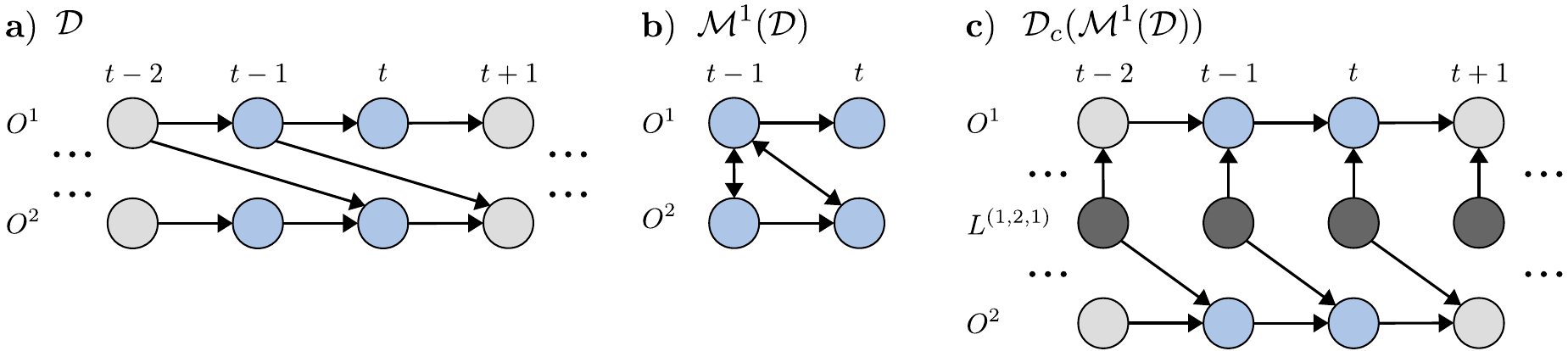}
\caption{%
A ts-DAG $\D$ together with the ts-DMAG $\M^1(\D)$ and the canonical ts-DAG $\Dc(\M^1(\D))$ of the ts-DMAG. Marginalizing $\Dc(\M^1(\D))$ to the observed vertices gives back $\M^1(\D)$. Same color coding as in Fig.~\ref{fig:example_tsDMAG}.
}
\label{fig:illustration_central_proposition}
\end{figure}

We stress that Lemma~\ref{lemma:mtaumax_of_gc_is_id} holds for \emph{arbitrary} ts-DAGs $\D$. In particular, in $\D$ there may be what in \citet{malinsky2018causal} is referred to as ``auto-lag confounders'', namely unobservable autocorrelated component time series $L$, i.e., $L_{t-\tau} \tailhead L_{t}$ with $L$ unobservable.

\subsection{A necessary and sufficient condition that characterizes ts-DMAGs}\label{sec:complete_characterization_subsection}
Lemma~\ref{lemma:mtaumax_of_gc_is_id} readily implies the following characterization of ts-DMAGs as a subclass of DMAGs with time series structure by a single necessary and sufficient condition.

\begin{mythm}\label{thm:complete_characterization}
Let $\M$ be a DMAG with time series structure and time index set $\Tindex = \{t - \tau ~|~ 0 \leq t \leq \taumax\}$. Then $\M$ is a ts-DMAG, i.e., there is a ts-DAG $\D$ such that $\M = \Mtaumax(\D)$ if and only if the MAG latent projection $\Mtaumax(\Dc(\M))$ of the canonical ts-DAG $\Dc(\M)$ of $\M$ equals $\M$, i.e., if and only if $\M = \Mtaumax(\Dc(\M))$.
\end{mythm}

Theorem~\ref{thm:complete_characterization} is one of the central results of this paper. The following four examples are included for its illustration.

\begin{myexample}
The DMAG $\M_2$ in part c) of Fig.~\ref{fig:canonical_ts-DAG} is a ts-DMAG. This conclusion follows because the canonical ts-DAG $\Dc(\M_2)$ in part d) of the figure projects to $\M_2$.
\end{myexample}

\begin{myexample}
One may use Theorem~\ref{thm:complete_characterization} to confirm that none of the four DMAGs in parts a) - d) of Fig.~\ref{fig:repeating_properties} is a ts-DMAG. In these cases this conclusion also follows because each of these four graphs violates at least one of the necessary conditions in Lemmas~\ref{lemma:DMAG_has_time_series_structure_and_time_order} and \ref{lemma:necessary_properites_time_series_DMAG}.
\end{myexample}

\begin{myexample}\label{ex:not_tsDMAG_1}
The DMAG $\M_1$ in part a) of Fig.~\ref{DMAG_with_time_series_structure_not_tsDMAG} is not a ts-DMAG because its canonical ts-DAG $\Dc(\M_1)$ in part b) projects to the ts-DMAG $\M^1(\Dc(\M_1))$ in part c), which is a proper subgraph of $\M_1$. This example also demonstrates that the equality $\stat(\M) = \Mtaumaxstat(\Dc(\M))$ is not sufficient for $\M$ to be a ts-DMAG.
\end{myexample}

\begin{myexample}\label{ex:not_tsDMAG_2}
The DMAG $\M_2$ in part d) of Fig.~\ref{DMAG_with_time_series_structure_not_tsDMAG} is not a ts-DMAG since its canonical ts-DAG $\Dc(\M_2)$ in part e) projects to the ts-DMAG $\M^1(\Dc(\M_2))$ in part f), which is a proper supergraph of $\M_2$. The edge $O^2_{t-1} \headhead O^1_t$ in $\M^1(\Dc(\M_2))$ is due to the green colored inducing path $O^2_{t-1} \headtail O^2_{t-2} \headtail O^1_{t-2} \headtail L^{(1,1,1)}_{t-2} \tailhead O^1_{t-1} \headtail L^{(1,1,1)}_{t-1} \tailhead O^1_t$ in $\Dc(\M_2)$.
\end{myexample}

\begin{figure}[tb]
\centering
\includegraphics[width=0.72\linewidth, page = 1]{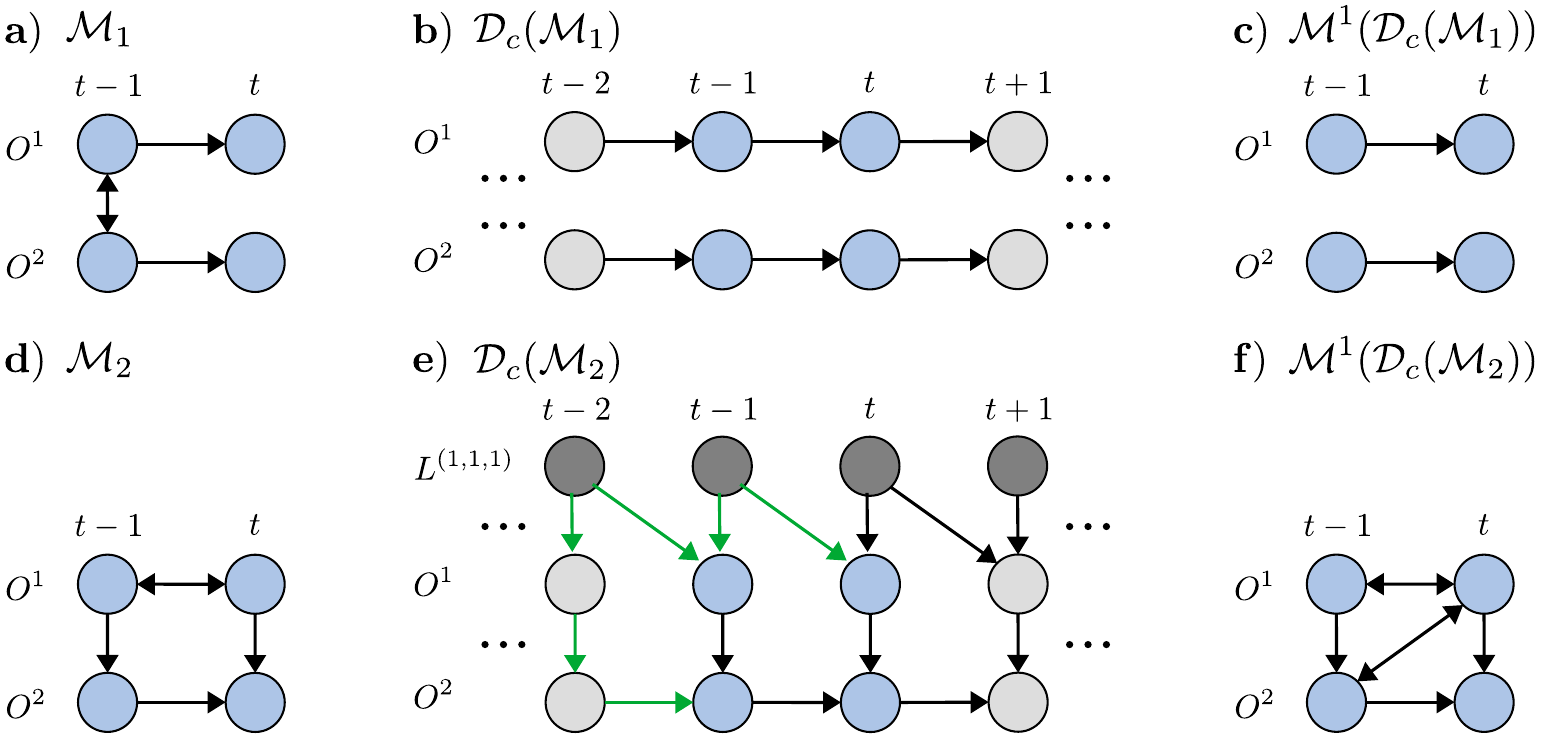}
\caption{%
Two examples of DMAGs with time series structure, $\M_1$ (the same as in part part e) of Fig.~\ref{fig:repeating_properties}) and $\M_2$ (the same as in part f) of Fig.~\ref{fig:repeating_properties}), that are not ts-DMAGs although they obey all necessary conditions in Lemmas~\ref{lemma:DMAG_has_time_series_structure_and_time_order} and \ref{lemma:necessary_properites_time_series_DMAG}. See also the discussions in Examples~\ref{ex:not_tsDMAG_1} and \ref{ex:not_tsDMAG_2}.
}
\label{DMAG_with_time_series_structure_not_tsDMAG}
\end{figure}

Importantly, the DMAGs considered in Examples~\ref{ex:not_tsDMAG_1} and \ref{ex:not_tsDMAG_2} obey all necessary conditions given in Lemmas~\ref{lemma:DMAG_has_time_series_structure_and_time_order} and \ref{lemma:necessary_properites_time_series_DMAG}. The condition $\M = \Mtaumax(\Dc(\M))$ is thus strictly stronger than even the combination of all these necessary conditions. This observation clearly demonstrates the significance and non-triviality of Theorem~\ref{thm:complete_characterization}.

As an alternative to Theorem~\ref{thm:complete_characterization}, we also characterize ts-DMAGs as a subclass of directed mixed graphs with time series structure.

\begin{mythm}\label{thm:complete_characterization_2}
Let $\G$ be a directed mixed graph with time series structure and time index set $\Tindex = \{t - \tau ~|~ 0 \leq t \leq \taumax\}$. Then $\G$ is a ts-DMAG, i.e., there is a ts-DAG $\D$ such that $\G = \Mtaumax(\D)$ if and only if $\G$ is acyclic and $\G = \Mtaumax(\Dc(\G))$.
\end{mythm}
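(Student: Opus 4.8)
The plan is to prove both directions of the biconditional by directly combining Prop.~\ref{prop:canonical_ts-DAG} and Prop.~\ref{prop:mtaumax_of_gc_is_id}, closely mirroring the argument for Theorem~\ref{thm:complete_characterization} but now tracking acyclicity explicitly, since $\G$ is no longer assumed to be a DMAG. The overall shape is: the forward direction harvests acyclicity and the fixed-point equality from the definition of a ts-DMAG together with Prop.~\ref{prop:mtaumax_of_gc_is_id}, while the backward direction reconstructs a witnessing ts-DAG as $\Dc(\G)$.

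For the forward direction I would suppose $\G$ is a ts-DMAG, so $\G = \Mtaumax(\D)$ for some ts-DAG $\D$. First I would note that, by Def.~\ref{def:ts_dmag_implied}, every ts-DMAG is a DMAG and hence ancestral, so $\G$ contains no directed cycle and is in particular acyclic. Then, applying Prop.~\ref{prop:mtaumax_of_gc_is_id} to the ts-DAG $\D$ and substituting $\G = \Mtaumax(\D)$ into the identity $\Mtaumax(\D) = \Mtaumax(\Dc(\Mtaumax(\D)))$ immediately yields $\G = \Mtaumax(\Dc(\G))$. Both asserted conditions therefore hold. For the backward direction I would suppose $\G$ is acyclic and $\G = \Mtaumax(\Dc(\G))$. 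Acyclicity is exactly what is needed for the canonical ts-DAG $\Dc(\G)$ to be defined in the first place (Def.~\ref{def:canonical_ts-DAG}) and, by Prop.~\ref{prop:canonical_ts-DAG}, to be a genuine ts-DAG. Consequently $\Mtaumax(\Dc(\G))$ is, by Def.~\ref{def:ts_dmag_implied}, the ts-DMAG implied by the ts-DAG $\Dc(\G)$. Since $\G = \Mtaumax(\Dc(\G))$ by hypothesis, $\G$ is a ts-DMAG, with $\D = \Dc(\G)$ serving as a witnessing ts-DAG.

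I expect the only genuinely delicate point---and the sole place where the argument departs from that of Theorem~\ref{thm:complete_characterization}---to be the bookkeeping around acyclicity. Because $\G$ is now only assumed to be a directed mixed graph rather than a DMAG, acyclicity can no longer be taken for granted and must be carried as a separate hypothesis; indeed it is needed even to make sense of the expression $\Dc(\G)$ that appears inside the fixed-point condition, since Def.~\ref{def:canonical_ts-DAG} applies only to acyclic directed mixed graphs. In writing up the backward direction I would emphasize that, once the fixed-point equality $\G = \Mtaumax(\Dc(\G))$ holds, DMAG-ness of $\G$ comes for free: the right-hand side is a ts-DMAG and hence a DMAG, so no separate ancestrality or maximality condition needs to be imposed. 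This observation is what makes Theorem~\ref{thm:complete_characterization_2} the natural strengthening of Theorem~\ref{thm:complete_characterization} from DMAGs with time series structure to arbitrary directed mixed graphs with time series structure, and it is the conceptual content I would foreground rather than any lengthy computation.
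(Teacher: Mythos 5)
Your proposal is correct and follows essentially the same route as the paper's proof: the backward direction invokes Prop.~\ref{prop:canonical_ts-DAG} to certify $\Dc(\G)$ as a ts-DAG and takes $\D = \Dc(\G)$ as the witness, while the forward direction chains $\G = \Mtaumax(\D) = \Mtaumax(\Dc(\Mtaumax(\D))) = \Mtaumax(\Dc(\G))$ via Prop.~\ref{prop:mtaumax_of_gc_is_id} and obtains acyclicity from the fact that DMAGs are acyclic. Your added emphasis on why acyclicity must be carried as an explicit hypothesis matches the paper's own remark following the theorem.
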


Theorem~\ref{thm:complete_characterization_2} is even stronger than Theorem~\ref{thm:complete_characterization} because Theorem~\ref{thm:complete_characterization_2} does not require the graph $\G$ to be ancestral and/or maximal. Acyclicity, however, is needed because the definition of canonical ts-DAGs $\Dc(\G)$ requires $\G$ to be acyclic, as does the notion of $d$-separation.

\subsection{Implications for stationarified ts-DMAGs}\label{sec:characterization_stat_DMAG}
A ts-DMAG $\Mtaumax(\D)$ by definition uniquely determines its stationarification $\Mtaumaxstat(\D)$. How about the opposite? That is, can a ts-DMAG $\Mtaumax(\D)$ be uniquely determined from its stationarification $\Mtaumaxstat(\D)$? At first it seems perfectly conceivable that different ts-DMAGs have the same stationarification, which would make it impossible to uniquely determine $\Mtaumaxstat(\D)$ from $\Mtaumax(\D)$. However, as a corollary to the observation $\Dc(\G) = \Dc(\stat(\G))$ and Lemma~\ref{lemma:mtaumax_of_gc_is_id} we get the following result.

\begin{mylemma}\label{lemma:mtaumax_from_mtaumaxstat}
Let $\D$ be a ts-DAG. Then, the ts-DMAG $\Mtaumax(\D)$ equals the MAG latent projection $\Mtaumax(\Dc(\Mtaumaxstat(\D)))$ of the canonical ts-DAG $\Dc(\Mtaumaxstat(\D))$ of the stationarification $\Mtaumaxstat(\D) = \stat(\Mtaumax(\D))$ of $\Mtaumax(\D)$, i.e., $\Mtaumax(\D) = \Mtaumax(\Dc(\Mtaumaxstat(\D)))$.
\end{mylemma}

According to Lemma~\ref{lemma:mtaumax_from_mtaumaxstat} one can always uniquely determine $\Mtaumax(\D)$ from $\Mtaumaxstat(\D)$. A ts-DMAG $\Mtaumax(\D)$ and its stationarification $\Mtaumaxstat(\D)$ thus carry the exact same information about the underlying ts-DAG $\D$. In this sense $\Mtaumax(\D)$ and $\Mtaumaxstat(\D)$ are, if interpreted in the correct way, equivalent descriptions.

Lastly, we also arrive at two characterizations of stationarified ts-DMAGs.

\begin{mylemma}\label{lemma:complete_characterization_stat}
Let $\M$ be a DMAG with time series structure and time index set $\Tindex = \{t - \tau ~|~ 0 \leq t \leq \taumax\}$. Then, $\M$ is a stationarified ts-DMAG, i.e., there is a ts-DAG $\D$ such that $\M = \Mtaumaxstat(\D)$ if and only if $\M = \Mtaumaxstat(\Dc(\M))$.
\end{mylemma}

\begin{mylemma}\label{lemma:complete_characterization_stat_2}
Let $\G$ be a directed mixed graph with time series structure and time index set $\Tindex = \{t - \tau ~|~ 0 \leq t \leq \taumax\}$. Then, $\G$ is a stationarified ts-DMAG, i.e., there is a ts-DAG $\D$ such that $\G = \Mtaumaxstat(\D)$ if and only if $\G$ is acyclic and $\G = \Mtaumaxstat(\Dc(\G))$.
\end{mylemma}

\subsection{Comparison with previously considered model classes}\label{sec:previous_model_classes}
The author is aware of two distinct classes of graphical models based on DMAGs that have so far been used to represent time-lag specific causal relationships in time series with latent confounders. Here, we show that both these model classes are strictly larger than the class of ts-DMAGs.

The first previously used model class, employed by the tsFCI algorithm from \citet{Entner2010}, are DMAGs with time series structure that are time ordered and have repeating orientations as well as past-repeating adjacencies. Lemmas~\ref{lemma:DMAG_has_time_series_structure_and_time_order} and \ref{lemma:necessary_properites_time_series_DMAG} show that ts-DMAGs fall into this model class. The reverse, however, is not true: The graphs in part b) of Fig.~\ref{fig:repeating_properties} and parts a) and d) of Fig.~\ref{DMAG_with_time_series_structure_not_tsDMAG} fall into the model class used by tsFCI but are not ts-DMAGs.

The second previously used model class, employed by the SVAR-FCI algorithm from \citet{malinsky2018causal} and LPCMCI from \citet{LPCMCI}, are DMAGs with time series structure that are time ordered and have repeating edges. From Lemma~\ref{lemma:DMAG_has_time_series_structure_and_time_order} and Def.~\ref{def:stationarification} we see that each ts-DMAG $\Mtaumax(\D)$ is associated to a graph in this model, namely to the stationarified ts-DMAG $\Mtaumaxstat(\D) = \stat(\Mtaumax(\D))$. Lemma~\ref{lemma:mtaumax_from_mtaumaxstat} further implies that the mapping $\iota: \Mtaumax(\D) \mapsto \Mtaumaxstat(\D)$ is injective. Conversely, not all graphs in the model class used by SVAR-FCI and LPCMCI are ts-DMAGs: The graph in part d) of Fig.~\ref{DMAG_with_time_series_structure_not_tsDMAG} is an example.

\section{Markov equivalence classes of ts-DMAGs and causal discovery}\label{sec:equivalence_classes_and_causal_discovery}
This section discusses the implications of the concepts and results of Sec.~\ref{sec:characterization_full_section} for causal discovery. To this end, Def.~\ref{def:tsDPAGs} in Sec.~\ref{sec:tsDPAGs} introduces \emph{time series DPAGs (ts-DPAGs)} as graphs that represent Markov equivalence classes of ts-DMAGs. Time series DPAGs are refinements of DPAGs obtained by incorporating our background knowledge about the data generating process---namely that the data are generated by a process as in eq.~\eqref{eq:process} and that the observed time steps are regularly (sub-)sampled. We further introduce several alternative refinements of DPAGs, see Secs.~\ref{sec:background} and \ref{sec:types_background_knowledge}, concretely DPAGs which represent Markov equivalence classes of stationarified ts-DMAGs and DPAGs which incorporate only some of the necessary properties of ts-DMAGs as background knowledge. As we show, these alternative DPAGs carry less information about the underlying ts-DAG than ts-DPAGs do. Using the introduced terminology, in Sec.~\ref{sec:existing_algorithms} we discuss and compare three algorithms for independence-based causal discovery in time series with latent confounders and show that none of them learns ts-DPAGs. That is, all of these algorithms are conceptually suboptimal as they fail to learn causal properties of the underlying ts-DAG that in principle can be learned. As opposed to that, Algorithm~\ref{algo:complete_algo} in Sec.~\ref{sec:complete_algo_weaker} does learn ts-DPAGs and in this sense is \emph{complete}. Another important result is Theorem~\ref{thm:causal_discovery_non_stat_is_better} in Sec.~\ref{sec:why_DPAG_of_tsDPAG}, according to which DPAGs based on stationarified DMAGs carry less causal information than DPAGs based on non-stationarified DMAGs. Theorem~\ref{thm:causal_discovery_non_stat_is_better} corrects an erroneous claim that has appeared in the literature, see the explanation below Theorem~\ref{thm:causal_discovery_non_stat_is_better} in Sec.~\ref{sec:why_DPAG_of_tsDPAG} and the discussion of the SVAR-FCI algorithm in Sec.~\ref{sec:existing_algorithms} for more details.

\subsection{Background knowledge and DPAGs}\label{sec:background}
Markov equivalent DMAGs by definition have the same $m$-separations and thus cannot be distinguished by statistical independencies. They might, however, be distinguished if additional assumptions are made. One type of such assumptions is \emph{background knowledge}, i.e., the assertion that DMAGs with certain properties can be excluded as these are in conflict with a priori knowledge about the system of study.

\begin{mydef}[Background knowledge, cf.~\citet{FCI_cyclic}]\label{def:background_knowledge}
A \emph{background knowledge} $\BR$ is a Boolean function on the set of all DMAGs. If $\BR(\M) = 1$, then $\M$ is said to be \emph{consistent} with $\BR$, else it is said to be \emph{inconsistent} with $\BR$.
\end{mydef}

Combining Definition 2 in \citet{FCI_cyclic} with the definition of PAGs in \citet{andrews2020completeness}, we refine DPAGs by background knowledge as follows.

\begin{mydef}[DPAGs refined by background knowledge]\label{def:pags_background_knowledge}
Let $\M$ be a DMAG, let $[\M]$ be its Markov equivalence class, and for a background knowledge $\BR$ let $[\M]_{\BR}$ be the subset of $[\M]$ that is consistent with $\BR$, i.e., $[\M]_{\BR} = \{\M \in [\M] ~|~ \BR(\M) = 1\}$. Then:
\begin{enumerate}
\item A directed partial mixed graph $\PAG$ is a \emph{DPAG for $\M$} if
\begin{itemize}
	\item $\PAG$ has the same skeleton (i.e., the same set of adjacencies) as $\M$ and
	\item every non-circle mark in $\PAG$ is also in $\M$.
\end{itemize}
\item A DPAG $\PAG$ for $\M$ is called \emph{maximally informative (m.i.) with respect to $[\M]^\prime \subseteq [\M]$} if
\begin{itemize}
	\item every non-circle mark in $\PAG$ is in every element of $[\M]^\prime$ and
	\item for every circle mark in $\PAG$ there are $\M_1, \M_2 \in [\M]^\prime$ such that in $\M_1$ there is a tail mark and in $\M_2 \in [\M]^\prime$ there is a head mark instead of the circle mark.
\end{itemize}
\item The \emph{maximally informative (m.i.) DPAG with respect to $\BR$}, denoted as $\PAG(\M, \BR)$, is the m.i.~DPAG of $\M$ with respect to $[\M]_{\BR}$.
\item The \emph{conventional m.i.~DPAG for $\M$} is the m.i.~DPAG $\PAG(\M) = \PAG(\M, \BR_{\emptyset})$, where $\BR_{\emptyset}$ is the ``empty'' background knowledge for which $\BR_{\emptyset} = 1$ constant.
\end{enumerate}
\end{mydef}

To compare different background knowledges and the accordingly refined DPAGs, we employ the following terminology.

\begin{mydef}[Stronger/weaker background knowledge, more/less informative DPAG]\label{def:stronger_weaker_background_knowledge}
Let $\BR_1$ and $\BR_2$ be background knowledges, and let $\PAG_1$ and $\PAG_2$ be DPAGs for $\M$. We say$\ldots$
\begin{itemize}
\item $\ldots$ $\BR_1$ is \emph{stronger} than $\BR_2$ and $\BR_2$ is \emph{weaker} than $\BR_1$ if $\BR_1(\M) = 1$ implies $\BR_2(\M) = 1$.
\item $\ldots$ $\PAG_1$ is \emph{more informative} than $\PAG_2$ and $\PAG_2$ is \emph{less informative} than $\PAG_1$ if every circle mark in $\PAG_1$ is also in $\PAG_2$.
\end{itemize}
\end{mydef}

It follows that $\PAG(\M, \BR_1)$ is more informative than $\PAG(\M, \BR_2)$ if $\BR_1$ is stronger than $\BR_2$. By construction $\PAG(\M, \BR)$ is the most informative DPAG for $\M$ that can be learned from statistical independencies together with the background knowledge $\BR$. 

\subsection{Considered background knowledges}\label{sec:types_background_knowledge}
In the below discussions we are interested in the following background knowledges.

\begin{mydef}[Specific background knowledges]\label{def:specific_background_knowledges} The \emph{background knowledge of$\ldots$}
\begin{itemize}
\item $\ldots$ \emph{an underlying ts-DAG} is the background knowledge $\BRtsDAG$ for which $\BRtsDAG(\M) = 1$ if and only if $\M$ is a ts-DMAG, i.e., $\BRtsDAG(\M) = 1$ if and only if there is a ts-DAG $\D$ with $\M = \Mtaumax(\D)$.
\item $\ldots$ \emph{an underlying ts-DAG for stationarifications} is the background knowledge $\BRtsDAGstat$ for which $\BRtsDAGstat(\M) = 1$ if and only if $\M$ is a stationarified ts-DMAG, i.e., $\BRtsDAGstat(\M) = 1$ if and only if there is a ts-DAG $\D$ with $\M = \Mtaumaxstat(\D)$.
\item $\ldots$ \emph{time order and repeating ancestral relationships} is the background knowledge $\BRtora$ for which $\BRtora(\M) = 1$ if and only if $\M$ is time ordered and has repeating ancestral relationships.
\item $\ldots$ \emph{time order and repeating orientations} is the background knowledge $\BRtoro$ for which $\BRtoro(\M) = 1$ if and only if $\M$ is time ordered and has repeating orientations.
\end{itemize}
\end{mydef}

The first background knowledge $\BRtsDAG$ is as much background knowledge as is available in the time series setting defined in Sec.~\ref{sec:process}. In Sec.~\ref{sec:tsDPAGs} we will use $\BRtsDAG$ to define ts-DPAGs. The second background knowledge $\BRtsDAGstat$ is the equivalent background knowledge when working with stationarified ts-DMAGs $\Mtaumaxstat(\D)$ instead of ts-DMAGs $\Mtaumax(\D)$. We will use $\BRtsDAGstat$ to compare causal discovery based on $\Mtaumax(\D)$ with causal discovery based on $\Mtaumaxstat(\D)$. Given that a ts-DMAG $\Mtaumax(\D)$ and its stationarification $\Mtaumaxstat(\D)$ are in one-to-one correspondence, see Sec.~\ref{sec:characterization_stat_DMAG}, one might also expect the corresponding DPAGs to carry the same information. Interestingly, as we will show in Sec.~\ref{sec:why_DPAG_of_tsDPAG}, this expectation is incorrect. The third and fourth background knowledges $\BRtora$ and $\BRtoro$ equally apply to both standard and stationarified ts-DMAGs. They are included for comparison with existing causal discovery algorithms.

The four specified background knowledges compare as follows: Since both ts-DMAGs $\Mtaumax(\D)$ and stationarified ts-DMAGs $\Mtaumaxstat(\D)$ are time ordered and have repeating ancestral relationships, both $\BRtsDAG$ and $\BRtsDAGstat$ are stronger than $\BRtora$. Since repeating ancestral relationships imply repeating orientations, $\BRtora$ is stronger than $\BRtoro$. For stationarified ts-DMAGs $\Mtaumaxstat(\D)$, however, $\BRtora$ and $\BRtoro$ are equivalent (as follows from Lemma~\ref{lemma:interrelation_repeating_properties}). In our notation this equivalence is expressed as $\PAG(\Mtaumaxstat(\D), \BRtora) = \PAG(\Mtaumaxstat(\D), \BRtoro)$.

\subsection{DPAGs of ts-DMAGs $\Mtaumax(\D)$ carry more information than DPAGs of stationarified ts-DMAGs $\Mtaumaxstat(\D)$}\label{sec:why_DPAG_of_tsDPAG}
In this subsection we show that, when working with the background knowledges specified in Def.~\ref{def:specific_background_knowledges}, DPAGs of ts-DMAGs can never carry less but may carry more information about the underlying ts-DAG than DPAGs of stationarified ts-DMAGs. This is so despite the fact that, as explained in Sec.~\ref{sec:characterization_stat_DMAG}, a ts-DMAG and its stationarification are in one-to-one correspondence. Towards proving the claim we first note the following.

\begin{mylemma}\label{lemma:stat_of_DPAG}
Let $\D$ be a ts-DAG and let $\BR \in \{\BRtsDAG, \BRtora, \BRtoro\}$. Then, the graph $\stat(\PAG(\Mtaumax(\D), \BR))$ is a DPAG for $\Mtaumaxstat(\D)$.
\end{mylemma}

In particular, both DPAGs $\PAG(\Mtaumaxstat(\D), \BRstat)$ and $\stat(\PAG(\Mtaumax(\D), \BR))$ have the same adjacencies. Moreover, it is well-defined to ask whether one of the two DPAGs is more informative than the other. The following result answers this question.

\begin{mythm}\label{thm:causal_discovery_non_stat_is_better}
Let $\D$ be a ts-DAG and let $(\BR, \BRstat)$ either be $(\BRtoro, \BRtoro)$ or $(\BRtora, \BRtora)$ or $(\BRtsDAG, \BRtsDAGstat)$. Then:
\begin{enumerate}
  \item Every non-circle mark (head or tail) in $\PAG(\Mtaumaxstat(\D), \BRstat)$ is also in $\stat(\PAG(\Mtaumax(\D), \BR))$.
  \item Every non-circle mark in $\PAG(\Mtaumaxstat(\D), \BRstat)$ is also in $\PAG(\Mtaumax(\D), \BR)$.
  \item There are cases in which a non-circle mark that is in $\stat(\PAG(\Mtaumax(\D), \BR))$ is not also in $\PAG(\Mtaumaxstat(\D), \BRstat)$.
  \item There are cases in which a non-circle mark that is in $\PAG(\Mtaumax(\D), \BR)$ is not also in $\PAG(\Mtaumaxstat(\D), \BRstat)$, even regarding adjacencies that are shared by both graphs.
\end{enumerate}
\end{mythm}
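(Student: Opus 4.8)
The plan is to leverage the key structural fact, established in Prop.~\ref{prop:stat_of_DPAG} and Prop.~\ref{prop:causal_discovery_non_stat_is_better}, that the stationarification $\stat(\PAG(\Mtaumax(\D), \BR))$ is a DPAG for $\Mtaumaxstat(\D)$ whose non-circle marks refine those of $\PAG(\Mtaumaxstat(\D), \BRstat)$. Since $\stat$ only removes edges and preserves the orientation marks on the edges it keeps, every non-circle mark in $\stat(\PAG(\Mtaumax(\D), \BR))$ is inherited unchanged from $\PAG(\Mtaumax(\D), \BR)$. Thus for part~1 I would chain the two inclusions: a non-circle mark in $\PAG(\Mtaumaxstat(\D), \BRstat)$ is, by part~1 of Prop.~\ref{prop:causal_discovery_non_stat_is_better}, present in $\stat(\PAG(\Mtaumax(\D), \BR))$, and hence present on the corresponding edge of $\PAG(\Mtaumax(\D), \BR)$ before stationarification removed any edges. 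The only subtlety is bookkeeping: one must verify that the edge carrying this mark survives into $\stat(\PAG(\Mtaumax(\D), \BR))$, which it does precisely because a non-circle mark in the stationarified PAG sits on an edge of that graph, and every edge of $\stat(\PAG)$ is by definition an edge of $\PAG$.

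For part~2, the task is to exhibit a concrete ts-DAG $\D$ (and a choice of $\BR \in \{\BRtoro, \BRtora, \BRtsDAG\}$) for which $\PAG(\Mtaumax(\D), \BR)$ carries a head or tail mark on some adjacency that is also present in $\PAG(\Mtaumaxstat(\D), \BRstat)$ but there bears only a circle. The natural source of such an example is exactly the phenomenon discussed around Prop.~\ref{prop:mtaumax_from_mtaumaxstat} and in Sec.~\ref{sec:why_DPAG_of_tsDPAG}: the ``extra'' edges in $\Mtaumax(\D)$ that are absent from $\Mtaumaxstat(\D)$---the edges arising from marginalizing observable vertices before $t-\taumax$---constrain the Markov equivalence class of $\Mtaumax(\D)$ more tightly than the corresponding class of $\Mtaumaxstat(\D)$. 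I would search among the small examples already drawn in the paper (the ts-DAGs $\D_1$, $\D_2$ of Fig.~\ref{fig:example_tsDMAG} and the example of Fig.~\ref{fig:illustration_central_proposition}) for one in which an edge shared by $\Mtaumax(\D)$ and $\Mtaumaxstat(\D)$ receives a definite orientation in $[\Mtaumax(\D)]$ that is not forced in $[\Mtaumaxstat(\D)]$. The emphasis in the statement on adjacencies \emph{shared by both graphs} is the crucial demand: it rules out the trivial gain coming merely from the presence of extra edges, and forces the example to demonstrate a genuine orientation gain.

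The main obstacle I expect is constructing the witness for part~2 and verifying it rigorously. Establishing that a mark is circle in $\PAG(\Mtaumaxstat(\D), \BRstat)$ requires producing two members of the relevant background-knowledge-restricted Markov equivalence class that disagree on that mark (one tail, one head), while establishing that the mark is non-circle in $\PAG(\Mtaumax(\D), \BR)$ requires showing the mark is invariant across the entire restricted class $[\Mtaumax(\D)]_{\BR}$. The former means exhibiting two distinct stationarified-ts-DMAG-consistent (or time-ordered, repeating-orientations-consistent) DMAGs; the latter means an orientation-invariance argument over a class. Here I would exploit the one-to-one correspondence between a ts-DMAG and its stationarification (Prop.~\ref{prop:mtaumax_from_mtaumaxstat}) together with Lemma~\ref{lemma:stat_of_markov_equivalent_are_markov_equivalent}: because stationarification sends the class $[\Mtaumax(\D)]$ into $[\Mtaumaxstat(\D)]$ but is generally \emph{not} injective on it, two members of $[\Mtaumax(\D)]_{\BR}$ that differ on the extra edges can collapse to the same stationarification, so that an orientation pinned down in the richer class is averaged out to a circle in the coarser one. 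Making this collapse concrete on a specific small graph, and checking the orientation-invariance claim by appeal to the characterization theorems, is the part that demands the most care.
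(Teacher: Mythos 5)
Your part~1 is correct and is exactly the paper's argument: combine part~1 of Prop.~\ref{prop:causal_discovery_non_stat_is_better} with the fact that $\stat(\PAG(\Mtaumax(\D), \BR))$ is a subgraph of $\PAG(\Mtaumax(\D), \BR)$ whose edges carry the same marks, so any non-circle mark of $\PAG(\Mtaumaxstat(\D), \BRstat)$ that lands in the stationarification also sits on the corresponding edge of $\PAG(\Mtaumax(\D), \BR)$ itself. Your bookkeeping remark---that the edge carrying the mark is guaranteed to be present in $\stat(\PAG(\Mtaumax(\D), \BR))$ because that is precisely where Prop.~\ref{prop:causal_discovery_non_stat_is_better} places the mark---is the right, and only, point to check.

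Part~2 is where your proposal has a genuine gap: the statement is an existence claim, so its proof \emph{is} a verified witness, and you never produce one. The paper's proof consists of exactly such a witness, namely the purpose-built example of Fig.~\ref{fig:causal_discovery_non_stat_is_better} worked through in Example~\ref{example:causal_discovery_non_stat_is_better} (formally it routes through part~2 of Prop.~\ref{prop:causal_discovery_non_stat_is_better}, whose proof is that same example). There, the extra adjacency $O^1_{t-1} \astast O^2_{t-1}$ of $\Mtaumax(\D)$---deleted by stationarification because $O^1_t$ and $O^2_t$ are non-adjacent---lets $\BRtora$ resolve every circle mark of $\PAG(\Mtaumax(\D), \BRtora)$, via the argument that $O^1_t$ has no possibly directed path to $O^2_t$ and hence $O^1_{t-1}$ cannot be an ancestor of $O^2_{t-1}$; by contrast $\PAG(\Mtaumaxstat(\D), \BRtora)$ retains circle marks, and since all of its adjacencies are shared adjacencies, any one of them furnishes the witness. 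Your plan to ``search among the small examples already drawn'' in Fig.~\ref{fig:example_tsDMAG} and Fig.~\ref{fig:illustration_central_proposition} is not a proof and is not guaranteed to succeed, since those examples were not designed to exhibit this phenomenon; the verification burden you yourself identify (two members of $[\Mtaumaxstat(\D)]_{\BRstat}$ disagreeing on the mark, versus invariance of the mark over all of $[\Mtaumax(\D)]_{\BR}$) is the entire mathematical content of part~2, not a residual detail. Your structural intuition---that stationarification maps $[\Mtaumax(\D)]_{\BR}$ into $[\Mtaumaxstat(\D)]_{\BRstat}$ by Lemma~\ref{lemma:stat_of_markov_equivalent_are_markov_equivalent} without being injective, so orientations pinned down in the richer class can dissolve into circles---is sound and is indeed the abstract reason such examples exist, but it does not by itself substitute for exhibiting one.
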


Theorem~\ref{thm:causal_discovery_non_stat_is_better} contradicts the opposite claim in \citet{malinsky2018causal} according to which more unambiguous edge orientations (heads or tails) may be inferred if, as licensed by the assumption of causal stationarity, the property of repeating adjacencies is enforced in causal discovery; see Sec.~\ref{sec:existing_algorithms} for more details. The following example illustrates Theorem~\ref{thm:causal_discovery_non_stat_is_better}.

\begin{myexample}\label{example:causal_discovery_non_stat_is_better}
Part a) and b) of Fig.~\ref{fig:causal_discovery_non_stat_is_better} respectively show a ts-DMAG $\Mtaumax(\D)$ and its conventional m.i.~DPAG $\PAG(\Mtaumax(\D))$. To derive $\PAG(\Mtaumax(\D))$ one may, for example, apply the FCI orientation rules, see \citet{Zhang2008}, to the skeleton of $\Mtaumax(\D)$. Part c) of the same figure shows $\PAG(\Mtaumax(\D), \BRtoro)$, where the head at $O^2_t$ on $O^2_{t-1} \headhead O^2_{t}$ follows by time order. Repeating orientations does not help in orienting the last remaining circle mark on $O^1_{t-1} \ohead O^2_{t-1}$ because $O^1_{t}$ and $O^2_{t}$ are non-adjacent. The stronger background knowledge $\BRtora$ is, however, sufficient to do so: Vertex $O^1_{t-1}$ cannot be an ancestor of $O^2_{t-1}$ because $O^1_{t}$ is not an ancestor of $O^2_{t}$, which in turn follows because there is no possibly directed path from $O^1_{t}$ to $O^2_{t}$, see \citet[pp.~81f]{zhang2006causal}. We hence get the DPAG $\PAG(\Mtaumax(\D), \BRtora)$ shown in part d). Since there are no circle marks left, $\PAG(\Mtaumax(\D), \BRtora)$ here equals the DPAG $\PAG(\Mtaumax(\D), \BRtsDAG)$ in part e).\footnote{In general, the DPAGs $\PAG(\Mtaumax(\D), \BRtsDAG)$ and $\PAG(\Mtaumax(\D), \BRtora)$ are not equal and can contain circle marks.}
The graphs $\stat(\PAG(\Mtaumax(\D), \BRtoro))$, $\stat(\PAG(\Mtaumax(\D), \BRtora))$ and $\stat(\PAG(\Mtaumax(\D), \BRtsDAG))$ are respectively obtained by removing the edge between $O^1_{t-1}$ and $O^2_{t-2}$ from the graphs in parts c), d) and e).
The stationarified ts-DMAG $\Mtaumaxstat(\D)$ and its conventional m.i.~DPAG $\PAG(\Mtaumaxstat(\D))$ are shown in part f) and g). Part h) shows $\PAG(\Mtaumaxstat(\D), \BRtoro)$, where there is a head mark at $O^2_t$ on $O^2_{t-1} \tailhead O^2_t$ due to time order. As explained in Sec.~\ref{sec:types_background_knowledge}, the equality $\PAG(\Mtaumaxstat(\D), \BRtoro) = \PAG(\Mtaumaxstat(\D), \BRtora)$ always holds. With the characterization of stationarified ts-DMAGs in Lemma~\ref{lemma:complete_characterization_stat} (or Lemma~\ref{lemma:complete_characterization_stat_2}) we can further show that in this example the graph in h) equals $\PAG(\Mtaumaxstat(\D), \BRtsDAG)$ in part i). Note that the ts-DMAG $\M^1(\D)$ in part a) is indeed a ts-DMAG. For example, its canonical ts-DAG $\Dc(\M^1(\D))$ projects to $\M^1(\D)$.
\end{myexample}

\begin{figure}[tb]
\centering
\includegraphics[width=0.85\linewidth, page = 1]{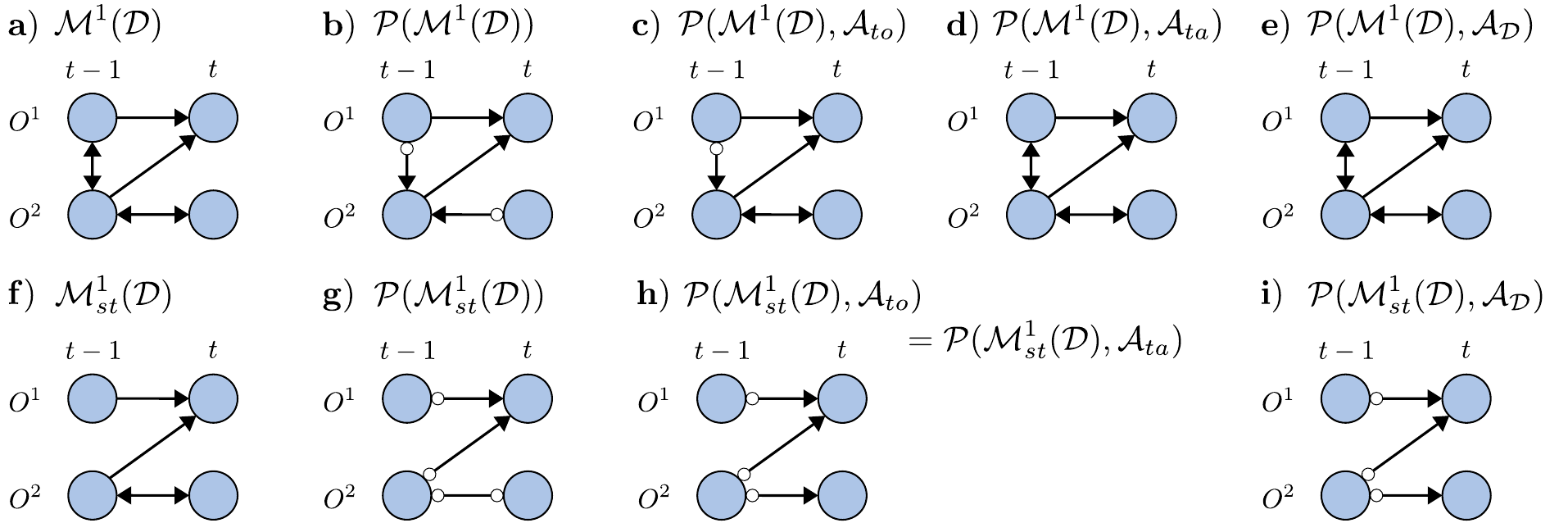}
\caption{An example for illustrating Theorem~\ref{thm:causal_discovery_non_stat_is_better}, see also the discussion in Example~\ref{example:causal_discovery_non_stat_is_better}.}
\label{fig:causal_discovery_non_stat_is_better}
\end{figure}

Theorem~\ref{thm:causal_discovery_non_stat_is_better} and Example~\ref{example:causal_discovery_non_stat_is_better} show that $\PAG(\Mtaumax(\D), \BR)$ and $\stat(\PAG(\Mtaumax(\D), \BR))$ have more unambiguous edge marks than $\PAG(\Mtaumaxstat(\D), \BRstat)$. It thus is \emph{conceptually} advantageous to work with DPAGs $\PAG(\Mtaumax(\D), \BR)$ of ts-DMAGs---or with their stationarifications $\stat(\PAG(\Mtaumax(\D), \BR))$, if one prefers graphs with repeating edges---rather than with DPAGs $\PAG(\Mtaumaxstat(\D), \BRstat)$ of stationarified ts-DMAGs. One might argue, though, that the additional ambiguous orientations (i.e., circle marks) which $\PAG(\Mtaumaxstat(\D), \BRstat)$ has as compared to $\PAG(\Mtaumax(\D), \BR)$ might turn into unambiguous orientations (i.e., head or tail marks) in $\PAG(\M^{\taumaxtilde}_{\statsubscript}(\D), \BRstat)$ for an increased length $\taumaxtilde > \taumax$ of the observed time window.\footnote{There are examples with this property, but it is unknown to the author whether this property is a general fact.} However, increasing $\taumax$ to $\taumaxtilde$ also increases the number of observed vertices and thus yields a higher-dimensional causal discovery problem. Having more observed vertices typically hurts finite-sample performance of causal discovery, see, e.g., the simulation studies in \citet{LPCMCI}. On the other hand, algorithms that work with stationarified ts-DMAGs rather than ts-DMAGs may scale more favorably with the length $\taumax$ of the observed time window because they remove the edges $O^i_{t-\Delta t-\tau} \astast O^j_{t-\Delta t}$ for all $\Delta t$ as soon as the edge $O^i_{t-\tau} \astast O^j_{t}$ is removed and therefore typically make fewer independence tests. From a \emph{practical} perspective there thus is a trade-off between working with ts-DMAGs vs.~working with stationarified ts-DMAGs, which calls for empirical evaluation in future work.

\subsection{Time series DPAGs}\label{sec:tsDPAGs}
In Sec.~\ref{sec:why_DPAG_of_tsDPAG} we showed that DPAGs of ts-DMAGs always carry more information about the underlying ts-DAG than DPAGs of stationarified ts-DMAGs. Because of this fact we choose to define \emph{time series DPAGs} as the former type of DPAGs. 

\begin{mydef}[Time series DPAG]\label{def:tsDPAGs}
Let $\D$ be a ts-DAG with variable index set $\Iindex$, let $\IindexO \subseteq \Iindex$, and let $\TindexO \subsetneq \mathbb{Z}$ be regularly sampled or regularly subsampled. The \emph{time series DPAG implied by $\D$ over $\Ovar = \IindexO \times \TindexO$}, denoted as $\PAG_{\Ovar}(\D)$ or $\PAG_{\IindexO \times \TindexO}(\D)$ and also referred to as a \emph{ts-DPAG}, is the m.i.~DPAG $\PAG(\M_{\Ovar}(\D), \BRtsDAG)$.
\end{mydef}

\begin{myremark}[on Def.~\ref{def:tsDPAGs}]
The equivalence of regular sampling and regular subsampling, see Sec.~\ref{sec:both_sampling_equivalent}, carries over to ts-DPAGs. We hence restrict to regular sampling without loss of generality and write $\PAGtaumax(\D)$ for $\PAG_{\Ovar}(\D)$ with $\Ovar = \IindexO \times \TindexO$ and $\TindexO = \{t - \tau ~|~ 0 \leq \tau \leq \taumax \}$.
\end{myremark}

The following example discusses a case in which the use of the strongest background knowledge $\BRtsDAG$ leads to strictly more unambiguous edge orientations than $\BRtora$. We thus cannot replace $\BRtsDAG$ with $\BRtora$ in the definition of ts-DPAGs without losing information.

\begin{myexample}\label{ex:tsDAG_better_than_tora}
The ts-DMAG $\M^1(\D)$ in part a) of Fig.~\ref{fig:tsDAG_better_than_tora} gives rise to $\PAG(\M^1(\D), \BRtora)$ in part b) with a circle mark at $O^1_t$ on $O^1_{t-1} \ohead O^1_t$. According to the stronger background knowledge $\BRtsDAG$ one can orient this edge as $O^1_{t-1} \tailhead O^1_t$ because the opposite hypothesis gives the graph in part d) in Fig.~\ref{DMAG_with_time_series_structure_not_tsDMAG}, which by means of Theorem~\ref{thm:complete_characterization} was shown to not be a ts-DMAG, see Example~\ref{ex:not_tsDMAG_2}. Thus, from the ts-DPAG $\PAG^1(\D)$ we can conclude that $O^1_{t-1}$ has a causal influence on $O^1_t$ whereas from $\PAG(\M^1(\D), \BRtora)$ we can only conclude that this causal influence might but also might not exist.

Furthermore, see \citet[Lemma~B.8]{gerhardus2022_supplement}, in the ts-DMAG $\M^1(\D)$ the pair $(O^1_{t-1}, O^1_{t})$ cannot suffer from latent confounding.\footnote{Interestingly, we can draw this conclusion although $O^1_{t-1} \tailhead O^1_{t}$ is not \emph{visible}, thereby suggesting that the notion of visibility from \citet{zhang2008causal} needs refinement for ts-DMAGs and ts-DPAGs.} Thus, the causal effect of $O^1_{t-1}$ on $O^1_t$ is identifiable and can be estimated from observations by adjusting for the empty set. Importantly, if we interpret $\M^1(\D)$ not as a ts-DMAG but as a ``standard'' DMAG, then the causal effect of $O^1_{t-1}$ on $O^1_t$ would be unidentifiable as follows from Lemma~10 in \citet{zhang2008causal}.
\end{myexample}

\begin{figure}[tb]
\centering
\includegraphics[width=0.50\linewidth, page = 1]{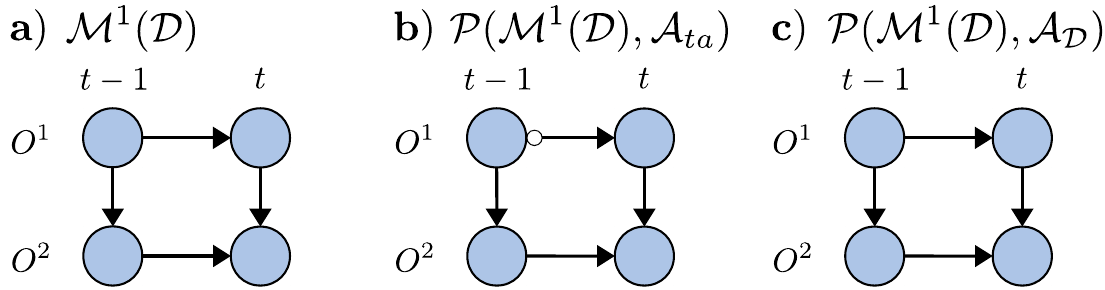}
\caption{A case in which $\PAGtaumax(\D)$ has a non-circle mark that is not in $\PAG(\Mtaumax(\D), \BRtora)$.}
\label{fig:tsDAG_better_than_tora}
\end{figure}

Example~\ref{ex:tsDAG_better_than_tora} clearly demonstrates the importance of our characterization of ts-DMAGs for the tasks of causal discovery and causal inference. Moreover, the following result shows that ts-DPAGs are complete with respect to ancestral relationships.
\begin{mylemma}\label{lemma:why_tsDPAGs}
If in a ts-DPAG $\PAGtaumax(\D)$ there is an edge $(i, t_i) \oast (j, t_j)$, then there are ts-DAGs $\D_1$ and $\D_2$ such that both ts-DMAGs $\Mtaumax(\D_1)$ and $\Mtaumax(\D_2)$ are Markov equivalent to the ts-DMAG $\Mtaumax(\D)$ and $(i, t_i) \in \an((j, t_j), \D_1)$ and $(i, t_i) \notin \an((j, t_j), \D_2)$. 
\end{mylemma}

\subsection{Existing causal discovery algorithms do not learn ts-DPAGs}\label{sec:existing_algorithms}
To the best of the author's knowledge, so far there is no causal discovery algorithm that learns ts-DPAGs $\PAGtaumax(\D)$. Hence, all existing causal discovery algorithms fail to learn some causal relationships that can be learned. This failure also applies to the independence-based algorithms tsFCI from \citet{Entner2010}, SVAR-FCI from \citet{malinsky2018causal} and LPCMCI from \citet{LPCMCI}.\footnote{The same is true for the score-based and hybrid algorithms from \citet{gao2010latent} and \citet{malinsky2018causal}.} Below, we discuss and compare these three algorithms \emph{conceptually} (but also note the \emph{practical} considerations discussed at the end of Sec.~\ref{sec:why_DPAG_of_tsDPAG}).

The \textbf{tsFCI} algorithm from \citet{Entner2010} refines the well-known FCI algorithm, see \citet{Spirtes1995}, \citet{Spirtes2000}, \citet{Zhang2008}, to structural processes as in eq.~\eqref{eq:process}. To this end, see the blue colored instructions in parts 2(a) and 2(b) of Algorithm 1~in \citet{Entner2010}, tsFCI imposes time order from the start and enforces repeating orientations at all steps. In addition, see the blue colored instructions in parts 1(b) and 1(c) of Algorithm 1~in \citet{Entner2010}, tsFCI excludes future vertices from conditioning sets and uses repeating separating sets to avoid unnecessary independence tests (these latter two modifications are, however, only relevant computationally and statistically but not conceptually). Importantly, \citet{Entner2010} introduces two variants of the algorithm. The first variant, which we call \textbf{tsFCI$^{l}$} (with `$l$' for `lagged'), assumes that in the data-generating ts-DAG there are no contemporaneous edges and hence orients all contemporaneous edges in the DPAG as bidirected. This first variant is as specified by Algorithm 1~in \citet{Entner2010}. However, in section~6 of \citet{Entner2010} (see, in particular, their footnote~3) the authors explain the minor modifications that have to be done when not making the additional assumption of no contemporaneous causation. Moreover, there they also show an application of the resulting more general variant. We refer to this second variant as \textbf{tsFCI$^{l+c}$} (with `$l+c$' for `lagged plus contemporaneous'). To summarize, in our terminology tsFCI$^{l+c}$ attempts to learn the DPAG $\PAG(\Mtaumax(\D), \BRtoro)$ of the ts-DMAG $\Mtaumax(\D)$. Since $\PAG(\Mtaumax(\D), \BRtoro)$ may contain circle marks that are not in the ts-DPAG $\PAGtaumax(\D) = \PAG(\Mtaumax(\D), \BRtsDAG)$, see Examples~\ref{example:causal_discovery_non_stat_is_better} and \ref{ex:tsDAG_better_than_tora}, tsFCI$^{l+c}$ does not learn all ancestral relationships that can be learned when using the available background knowledge $\BRtsDAG$. From Example~\ref{example:causal_discovery_non_stat_is_better} we even conclude that tsFCI$^{l+c}$ learns fewer orientations as can be learned with the weaker background knowledge $\BRtora$.

As compared to tsFCI$^{l+c}$, the more recent \textbf{SVAR-FCI} algorithm from \citet{malinsky2018causal} enforces repeating adjacencies by removing the edges $O^i_{t-\Delta t-\tau} \astast O^j_{t-\Delta t}$ for all $\Delta t$ as soon as the edge $O^i_{t-\tau} \astast O^j_{t}$ is removed---even in cases where there is no associated separating set in the observed time window.\footnote{To clarify: \citet{malinsky2018causal} does not mention tsFCI$^{l+c}$ but refer to tsFCI$^l$ when writing `tsFCI'.} This modification is achieved by the respective second lines in the ``then''-clauses in steps 5 and 11 of Algorithm 3.1~in \citet{malinsky2018causal}. Consequently, SVAR-FCI finds a skeleton which has repeating adjacencies, i.e., SVAR-FCI finds the skeleton of the stationarified ts-DMAG $\Mtaumaxstat(\D)$ rather than the skeleton of the ts-DMAG $\Mtaumax(\D)$. On the skeleton of $\Mtaumaxstat(\D)$ the algorithm then applies the FCI orientation rules, augmented with the background knowledge of time order and repeating orientations. In our terminology SVAR-FCI hence attempts to learn the DPAG $\PAG(\Mtaumaxstat(\D), \BRtoro)$ of the stationarified ts-DMAG $\Mtaumaxstat(\D)$. Now recall Theorem~\ref{thm:causal_discovery_non_stat_is_better}, which says that all unambiguous edge orientations in this DPAG $\PAG(\Mtaumaxstat(\D), \BRtoro)$ are also in $\PAG(\Mtaumax(\D), \BRtoro)$---the one learned by tsFCI$^{l+c}$---while there are cases in which the opposite is not true. Thus, if ground-truth knowledge of (conditional) independencies is given, SVAR-FCI can never learn more unambiguous edge orientations than tsFCI$^{l+c}$ while there are cases in which it learns strictly fewer. The additional edge removals thus actually have the opposite effect of what was intended in \citet{malinsky2018causal}. Moreover, also SVAR-FCI fails to learn all identifiable ancestral relationships of the underlying ts-DAG.

\begin{myexample}\label{ex:svar_fci_worse}
Assume ground-truth knowledge about (conditional) independencies. When applied to the ts-DMAG in part a) of Fig.~\ref{fig:causal_discovery_non_stat_is_better}, tsFCI$^{l+c}$ returns the graph in part c) whereas SVAR-FCI returns the graph in part h) with strictly fewer unambiguous edge marks. This difference is relevant: From tsFCI$^{l+c}$'s output we can conclude that $O^2_{t-1}$ generically has a causal effect on $O^1_t$ whereas from SVAR-FCI's output we can only conclude that $O^2_{t-1}$ might but also might not generically have a causal effect on $O^1_t$. As another difference, tsFCI$^{l+c}$ gives the edge $O^1_{t-1} \ohead O^2_{t-1}$ whereas SVAR-FCI gives that $O^1_{t-1}$ and $O^2_{t-1}$ are non-adjacent. At first one might think that SVAR-FCI is at the advantage in this regard, because the absence of an edge between $O^1_{t-1}$ and $O^2_{t-1}$  correctly conveys that the pair $(O^1_{t-1}, O^2_{t-1})$ is not confounded by unobser\underline{vable} variables. Note, however, that tsFCI$^{l+c}$ conveys the same conclusion by means of having learned that $O^1_{t}$ and $O^2_{t}$ are non-adjacent (cf.~last paragraph in Sec.~\ref{sec:stationarification}). In fact, see Theorem~\ref{thm:causal_discovery_non_stat_is_better}, one can always post-process the output of tsFCI$^{l+c}$ by stationarification $\stat(\cdot)$ to obtain a graph that, compared to the graph learned by SVAR-FCI, has the same adjacencies and the same or more unambiguous edge orientations. In the current example, this post-processing step amounts to removing the edge $O^1_{t-1} \ohead O^2_{t-1}$ from the graph in part c) of Fig.~\ref{fig:causal_discovery_non_stat_is_better}.
\end{myexample} 

The \textbf{LPCMCI} algorithm from \citet{LPCMCI} applies several modifications to SVAR-FCI that significantly improve the finite-sample performance. The infinite sample properties are unchanged, however. Thus also LPCMCI in general learns fewer orientations than tsFCI$^{l+c}$ and fails to learn all ancestral relationships that can be learned.

\subsection{ts-DPAGs can be learned from data}\label{sec:complete_algo_weaker}
In this subsection we show that ts-DPAGs can, at least in principle, be learned from data. In fact, using the characterization of ts-DMAGs by Theorem~\ref{thm:complete_characterization}, we can immediately write down Algorithm~\ref{algo:complete_algo} for this purpose.

Practically, however, finding the set of candidate DMAGs $\mathbf{M}^{\ast}$ in step 2 is expected to become computationally infeasible for large graphs $\PAG^{\ast}$. This expectation is based on the empirical finding in \citet{LVIDA} according to which the Zhang MAG listing algorithm (there used not for causal discovery but for causal effect estimation and in a non-temporal setting) becomes too slow for graphs with about $15$ to $20$ vertices. On the contrary, when using tsFCI$^{l+c}$ in step 1, the DPAG $\PAG^{\ast}$ already incorporates the background knowledge $\BRtoro$ of time order and repeating orientations. Hence, $\PAG^{\ast}$ will tend to have fewer circle marks than a typical DPAG in the non-temporal setting. Therefore, Algorithm~\ref{algo:complete_algo} might be feasible for yet larger graphs. Nevertheless, it would be desirable to instead derive orientation rules that impose the background knowledge $\BRtsDAG$ directly on $\PAG^{\ast}$ and thus entirely circumvent the need to determine the set $\mathbf{M}^{\ast}$. Moreover, recalling from the remark on Def.~\ref{def:ts_dmag_implied}, an implementation of the projection procedure required for step 3 is possible but non-trivial and will be left to future work, see \citet{gerhardus_OracleCI}. The following example illustrates Algorithm~\ref{algo:complete_algo}.

\begin{algorithm*}[tb]
\caption{Learning ts-DPAGs}
\begin{algorithmic}[1]
\State Apply any causal discovery algorithm on the time window $[t-\taumax, t]$ that determines a DPAG $\PAG^{\ast}$ for $\Mtaumax(\D)$ which is at least as informative as the conventional m.i.~DPAG $\PAG(\Mtaumax(\D))$. The tsFCI$^{l+c}$ algorithm meets this requirement, whereas SVAR-FCI and LPCMCI do not meet this requirement.
\State Let $\mathbf{M}^{\ast}$ be the set of all DMAGs that are represented by and are Markov equivalent to the DPAG $\PAG^{\ast}$. This step can, e.g., be done with the Zhang MAG listing algorithm described in \citet{LVIDA}.
\State Let $\mathbf{M}$ be the set of all DMAGs $\M \in \mathbf{M}^{\ast}$ for which $\M = \Mtaumax(\Dc(\M))$. This step can be executed as follows: For each $\M \in \mathbf{M}^{\ast}$, first, construct the canonical ts-DAG $\Dc(\M)$ by applying Def.~\ref{def:canonical_ts-DAG} and, second, apply the MAG latent projection to determine the ts-DMAG $\Mtaumax(\Dc(\M))$ and, third, check for equality of the graphs $\M$ and $\Mtaumax(\Dc(\M))$.
\State Let $\PAG$ be the m.i.~DPAG with respect to the set $\mathbf{M}$. Note that parts 1 and 2 of Def.~\ref{def:pags_background_knowledge} specify how $\PAG$ is determined from $\mathbf{M}$.
\State \Return ts-DPAG $\PAG = \PAGtaumax(\D)$
\end{algorithmic} \label{algo:complete_algo}
\end{algorithm*}

\begin{myexample}\label{myexample:complete_algo}
Consider the graph $\PAG(\M^1(\D), \BRtora)$ in part b) of Fig.~\ref{fig:tsDAG_better_than_tora}, which in this example equals $\PAG(\M^1(\D), \BRtoro)$. Given ground-truth knowledge of (conditional) independencies, this graph is the output of tsFCI$^{l+c}$ (and of SVAR-FCI and LPCMCI) on any ts-DAG $\D$ that projects to $\M^1(\D)$ in part a) of the figure. Such $\D$ exists, e.g.~the canonical ts-DAG $\Dc(\M^1(\D))$ of $\M^1(\D)$. There is exactly one circle mark in $\PAG^{\ast} = \PAG(\M^1(\D), \BRtora)$, namely on $O^1_{t-1} \ohead O^1_{t}$. This circle mark can be oriented either as a tail ($O^1_{t-1} \tailhead O^1_{t}$), giving rise to a DMAG $\M_1$, or as a head ($O^1_{t-1} \headhead O^1_{t}$), giving rise to a DMAG $\M_2$. Both of these candidates are represented by and Markov equivalent to $\PAG^{\ast}$, hence $\mathbf{M}^{\ast} = \{\M_1, \M_2\}$. Moving to step 3, the first candidate $\M_1$ passes the check $\M_1= \M^1(\Dc(\M_1))$, whereas (see Example~\ref{ex:tsDAG_better_than_tora}) $\M_2 \neq \M^1(\Dc(\M_2))$ for the second candidate $\M_2$. Thus $\mathbf{M} = \{\M_1\}$. Since there is only a single element $\M_1$ in $\mathbf{M}$, this DMAG $\M_1$ according to parts 1 and 2 of Def.~\ref{def:pags_background_knowledge} equals the ts-DPAG $\PAG^1(\D)$. Noting that $\PAG^1(\D)$ (learned by Algorithm~\ref{algo:complete_algo}) has an additional unambiguous edge mark as compared to $\PAG(\M^1(\D), \BRtoro)$ (learned by tsFCI$^{l+c}$, SVAR-FCI and LPCMCI), we see that Algorithm~\ref{algo:complete_algo} is indeed more informative than the existing algorithms.
\end{myexample}

\begin{myremark}[on Example~\ref{myexample:complete_algo}]
The example has two non-generic properties. First, see \citet[Fig.~C and Example~B.15]{gerhardus2022_supplement}, in general there can be circle marks in the ts-DPAG $\PAGtaumax(\D)$. Second, the ts-DMAG $\Mtaumax(\D) = \M^1(\D)$ here has repeating edges and thus equals $\Mtaumaxstat(\D) = \M^1_{\statsubscript}(\D)$. Only if the equality $\Mtaumax(\D) = \Mtaumaxstat(\D)$ holds, then also SVAR-FCI and LPCMCI learn the DPAG $\PAG^{\ast} = \PAG(\Mtaumax(\D), \BRtoro)$, which then also necessarily equals $\PAG(\Mtaumax(\D), \BRtora)$. In general, however, $\Mtaumax(\D)$ and $\Mtaumaxstat(\D)$ are not equal and neither SVAR-FCI nor LPCMCI may be used for step 1 of Algorithm~\ref{algo:complete_algo}, see Fig.~\ref{fig:causal_discovery_non_stat_is_better} and Example~\ref{example:causal_discovery_non_stat_is_better}.
\end{myremark}

\section{Discussion}
In this paper, we developed and analyzed ts-DMAGs, a class of graphical models for representing time-lag specific causal relationships and independencies among finitely many regularly (sub-)sampled time steps of causally stationary multivariate time series with unobserved components. As a central result, Theorems~\ref{thm:complete_characterization} and \ref{thm:complete_characterization_2} completely characterize ts-DMAGs. Examples demonstrated that ts-DMAGs constitute a strictly smaller class of graphical models than the graphs that have previously been employed in the literature, see Sec.~\ref{sec:previous_model_classes} for details. At the same time, using ts-DMAGs does not require additional assumptions or restrictions on the data-generating process. From ts-DMAGs one can thus draw stronger causal inferences than from the previously employed model classes, both in causal discovery and causal effect estimation. In addition, we defined ts-DPAGs as representations of Markov equivalence classes of ts-DMAGs. Time series DPAGs contain as much information about the ancestral relationships as can in principle be learned from observational data under the standard assumptions of independence-based causal discovery. We then showed that current time series causal discovery algorithms do not learn ts-DPAGs, i.e., they fail to learn some causal relationships that can be learned. As opposed to that, Algorithm~\ref{algo:complete_algo} does learn ts-DPAGs. With Theorem~\ref{thm:causal_discovery_non_stat_is_better} we corrected the incorrect claim from the literature that causal discovery on stationarified DMAGs gives more unambiguous edge orientations than causal discovery on non-stationarified DMAGs---in fact, the opposite is true. We envision that these results will be used to improve time series causal inference methods that resolve time lags, which in turn can have applications in diverse scientific and technical domains.

The results presented here point to various directions of future research. First, it would be valuable to consider the causal discovery problem in more detail. In particular, it is desirable to develop orientation rules that impose the background knowledge of an underlying ts-DAG $\BRtsDAG$ without the need for listing all DMAGs consistent with $\BRtsDAG$. Second, it remains open to characterize the causal inferences that can be drawn from ts-DMAGs and ts-DPAGs. As shown by Example~\ref{ex:tsDAG_better_than_tora}, deriving such a characterization is a non-trivial task that goes beyond the corresponding task in the non-temporal setting. Third, one may analyze the additional restrictions and causal inferences that follow when, as opposed to this work, assumptions on the connectivity pattern of the underlying ts-DAG are imposed. Lastly, it is desirable to generalize our results to cases with cyclic causal relationships and selection variables.

%
%

\begin{acks}[Acknowledgments]
I thank Jakob Runge for helpful discussions and suggestions. I thank Tom Hochsprung and Wiebke Günther for careful proofreading and suggestions on how to make the paper more accessible. I thank two anonymous reviewers and two anonymous Associate Editors for suggestions and questions that helped me to improve the paper.
\end{acks}
%

\begin{supplement}
\stitle{Supplement to “Characterization of causal ancestral graphs for time series with latent confounders”}
\sdescription{This Supplementary Material contains: First, a glossary of abbreviations and frequently used symbols. Second, theoretical results that were omitted from the main text due to space constraints. Third, proofs of all theoretical results presented in the main text together with various auxiliary results that are used in these proofs.}
\end{supplement}


\bibliographystyle{imsart-nameyear} 
\bibliography{library_annals}       


\end{document}


\begin{frontmatter}
\title{Supplement to “Characterization of causal ancestral graphs for time series with latent confounders”}

\begin{aug}
\author[A]{\fnms{Andreas}~\snm{Gerhardus}\ead[label=e1]{andreas.gerhardus@dlr.de}}
\address[A]{German Aerospace Center, Institute of Data Science, \printead{e1}}
\end{aug}

\begin{abstract}
This Supplementary Material contains: First, a glossary of abbreviations and frequently used symbols. Second, theoretical results that were omitted from the main text because of space constraints. Third, proofs of all theoretical results presented in the main text together with various auxiliary results that are used in these proofs.
\end{abstract}



\end{frontmatter}


\section{Glossary of abbreviations and frequently used symbols}
The following glossary might be helpful for reading the main paper and this Supplementary Material.

\begin{table}[h]
\begin{tabular}{l|l|l}
\textbf{Term / symbol} & \textbf{Meaning} & \textbf{Comment}\\
\hline
DAG & directed acyclic graph & \\
MAG / DMAG & (directed) maximal ancestral graph & \\
PAG / DPAG & (directed) partial ancestral graph & \\
ts-DAG  & time series DAG  & see Def.~3.4 \\
ts-DMAG / ts-DPAG & time series DMAG / DPAG & see Defs.~3.6 and 5.7\\
$\D$ & DAG or ts-DAG & \\
$\M$ & DMAG or ts-DMAG & \\
$\PAG$ & partial ancestral graph or DPAG or ts-DPAG \\
$\Ovar$ & set of all observed vertices in a graph & \\
$\M_{\Ovar}(\D)$ & MAG latent projection of the DAG or ts-DAG $\D$ to & see pp.~1442-3 in\\
& the vertices $\Ovar$ & \citet{zhang2008causal} \\
$\IindexO$ & variable indices of observed component time series &  \\
$\TindexO$ & time indices of observed time steps &  \\
$\M_{\IindexO \times \TindexO}(\D)$ & ts-DMAG of the ts-DAG $\D$ with observed vertices & see Def.~3.6\\
& $\Ovar = \IindexO \times \TindexO$ & \\
$\M_{\Ovar}$ & synonymous to $\M_{\IindexO \times \TindexO}(\D)$ with $\Ovar = \IindexO \times \TindexO$ & see Def.~3.6\\
$\taumax$ & length of the observed time window, for regular & \\
& sampling related to $\TindexO$ by $\TindexO = \{t-\tau ~|~ 0 \leq \tau \leq \taumax\}$ & \\
$\Mtaumax(\D)$ & synonymous to $\M_{\IindexO \times \TindexO}(\D)$ with & see Sec.~4.2\\
& $\TindexO = \{t-\tau ~|~ 0 \leq \tau \leq \taumax\}$ & \\
$\stat(\cdot)$ & stationarification, can be applied to graphs with time & see Def.~4.6 \\
& series structure &\\
$\Mtaumaxstat(\D)$ & stationarified ts-DMAG, synonymous to $\stat(\Mtaumax(\D))$ \\
$\Dc(\cdot)$ & canonical DAG or canonical ts-DAG & see Defs.~4.11 and 4.13 \\
$\PAG(\cdot,\, \BR)$ & m.i. DPAG wrt. to background knowledge $\BR$ & see Def.~5.2 \\
$\BRtsDAG$ & background knowledge of an underlying ts-DAG & see Def.~5.4 \\
$\BRtsDAGstat$ & background knowledge of an underlying ts-DAG & see Def.~5.4 \\
& for stationarifications & \\
$\BRtora$ & background knowledge of time order and & see Def.~5.4 \\
& repeating ancestral relationships & \\
$\BRtoro$ & background knowledge of time order and & see Def.~5.4 \\
& repeating  orientations& \\
$\PAGtaumax(\D)$ & time series DPAG, synonymous to $\PAGtaumax(\D,\, \BRtsDAG)$ & see Def.~5.7
\end{tabular}
\caption{Glossary of abbreviations and frequently used symbols.}
\end{table}

\section{Omitted results}
This section presents several theoretical results that were omitted from the main text due to space constraints.

\subsection{ts-DMAGs are a generalization of DMAGs}\label{secapp:tsDMAGs_more_general_than_DMAGs}
Consider an arbitrary DMAG $\M_{nt}$ with vertex set $\Ovar_{nt}$ and without time series structure (the subscript ``\emph{nt}'' stands for ``\textit{n}on-\textit{t}emporal''). As proven in \citet{richardson2002}, there is DAG $\D_{nt}$ over some vertex set $\V_{nt} \supset \Ovar_{nt}$ such that $\M_{nt} = \M_{\Ovar_{nt}}(\D_{nt})$. Now define $\D$ as the ts-DAG that consists of disconnected copies of $\D_{nt}$ at every time step $s \in \mathbb{Z}$, i.e., there are no lagged edges in $\D$ and for all $s \in \mathbb{Z}$ its induced subgraph on $\V_{nt} \times \{s\}$ is $\D_{nt}$. It then immediately follows that the ts-DMAG $\M_{\IindexO \times \TindexO}$ with $\IindexO = \V_{nt}$ and $\TindexO = \{t\}$ equals $\M_{nt}$.

This consideration identifies ts-DMAGs as a proper generalization of DMAGs and thereby shows that all statements about ts-DMAGs also to apply to DMAGs as a special case.

\subsection{Future vertices are not relevant for determining ts-DMAGs}\label{sec:future_irrelevant}
In the MAG latent projection of $\D$ to $\M_{\IindexO \times \TindexO}(\D)$ the vertices $\La = \V \setminus (\IindexO \times \TindexO)$ are unobserved, see Def.~4.6 in the main text. Since $t$ is the upper bound of the set of observed time steps $\TindexO$, this form of $\La$ means that in particular all vertices after $t$, i.e., in $[t+1, +\infty)$ are unobserved. However, for determining $\M_{\IindexO \times \TindexO}(\D)$ these vertices are irrelevant:

\begin{mylemma}\label{lemmaapp:future_irrelevant}
Let $\D$ be a ts-DAG with vertex set $\V = \Iindex \times \mathbb Z$. Denote with $\D_{\leq t}$ the subgraph of $\D$ induced on $\V_{\leq t} = \Iindex \times \Tindex_{\leq t}$ with $\Tindex_{\leq t} = \{s \in \mathbb Z ~|~ s \leq t\}$, i.e., the graph obtained by removing all vertices after $t$ and all edges involving these vertices from $\D$. Then, $\M_{\IindexO \times \TindexO}(\D) = \M_{\IindexO \times \TindexO}(\D_{\leq t})$, i.e., before applying the MAG latent projection one may simply ignore the part of $\D$ that is after $t$.
\end{mylemma}

\begin{proof}[Proof of Lemma~\ref{lemmaapp:future_irrelevant}]
Let $(i, t_i)$ and $(j, t_j)$ with $t_i, t_j \leq t$ be distinct vertices in $\D$. Then, $(i, t_i)$ and $(j, t_j)$ are non-adjacent in $\D$ if and only if $(i, t_i) \ci (j, t_j) ~|~ \mathbf{S}$ in $\D$ with $\mathbf{S} = \pa((i, t_i), \D) \cup \pa((j, t_j), \D) \setminus \{(i, t_i), \, (j, t_j)\}$, see \citet[Lemma~1]{verma1990equivalence}. Moreover, all vertices in $\mathbf{S}$ are before or at $t$ due to time order of $\D$, i.e., $\mathbf{S}$ is a subset of $\V_{\leq t} = \Iindex \times \Tindex_{\leq t}$. Consequently, $(i, t_i)$ and $(j, t_j)$ are non-adjacent in $\D$ if and only if there is a subset $\mathbf{S}^\prime \subseteq \V_{\leq t}$ such that $(i, t_i) \ci (j, t_j) ~|~ \mathbf{S}^\prime$ in $\D$. This observation implies that the graphs $\M_{\V_{\leq t}}(\D)$ and $\D_{\leq t}$ have the same skeleton, and the equality $\M_{\V_{\leq t}}(\D) = \D_{\leq t}$ follows because both $\M_{\V_{\leq t}}(\D)$ and $\D_{\leq t}$ have the same ancestral relationships among vertices in $\V_{\leq t}$ as $\D$. From $\M_{\V_{\leq t}}(\D) = \D_{\leq t}$ the statement follows with the commutativity of the marginalization process as stated by Theorem 4.20 in \citet{richardson2002}.
\end{proof}

This result, which follows from time order and $d$-separation, has the intuitive interpretation that the future need not be known in order to reason about the past and present.

\subsection{Temporal confounding}\label{secapp:no_unobservable}
As explained in Sec.~3.4 of the main text, in the construction of $\M_{\IindexO \times \TindexO}(\D)$ all vertices before $t-\taumax$, i.e., in $(-\infty, t-\taumax-1]$ are on purpose treated as unobserved---even if they are observable and hence become observed for some $\taumaxtilde > \taumax$. As the following example shows, such temporally unobserved observable vertices before $t-\taumax$ can act as latent confounders of observed vertices.

\begin{myexample}
In the ts-DAG $\D_1$ shown in part a) of Fig.~2 of the main text, the temporally unobserved vertex $O^1_{t-3}$ confounds the observed vertices $O^1_{t-2}$ and $O^2_{t-2}$ through the path $O^1_{t-2} \leftarrow O^1_{t-3} \rightarrow O^2_{t-2}$. This argument remains valid even without the unobservable time series $L^1$.
\end{myexample}

In the time series setting one thus effectively always deals with the case of latent confounding, even if component time series are observable. This observation further demonstrates the importance of conceptually understanding the latent variable setting as approached in this paper.

We also note that it is precisely this type of confounding that gives rise to edges which are in $\Mtaumax(\D)$ but not in $\Mtaumaxstat(\D)$.

\subsection{An axiomatic characterization of stationarifications}
As we have already noted below Def.~4.6 in the main text, the definition of stationarification implies the following result.

\begin{mylemma}\label{lemmaapp:properties_stationarification_2}
$\stat(\G)$ is the unique largest subgraph of $\G$ that has repeating edges.
\end{mylemma}

\begin{proof}[Proof of Lemma~\ref{lemmaapp:properties_stationarification_2}]
Combine both parts of Lemma~\ref{lemmaapp:properties_stationarification}.
\end{proof}

Indeed, we could alternatively have defined stationarifications by this property and then derived that stationarifications fulfill the properties as given in Def.~4.6.

\begin{mylemma}\label{lemmaapp:properties_stationarification}
\begin{enumerate}
	\item $\stat(\G)$ has repeating edges.
	\item If $\G^\prime$ is a subgraph of $\G$ and has repeating edges, then $\G^\prime$ is a subgraph of $\stat(\G)$.
\end{enumerate}
\end{mylemma}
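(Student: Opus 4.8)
The plan is to prove both parts by unwinding Definition~\ref{def:stationarification} and Definition~\ref{def:repeating_edges} and carefully bookkeeping the universal quantifiers over time shifts. Throughout I work with spanning subgraphs, i.e.\ those sharing the vertex set $\V$ of $\G$, as $\stat(\G)$ does by construction; this is the setting relevant to Prop.~\ref{prop:properties_stationarification}, and it is what makes ``largest subgraph'' meaningful. This restriction is genuinely needed: a subgraph living on a strictly smaller time index set can retain an edge that $\stat(\G)$ drops, since its repeating-edges condition then quantifies over fewer shifts.

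For part~1 I would fix an arbitrary edge $((i, t_i), (j, t_j)) \in \E^{\stat}_{\bullet}$ and a shift $\Delta t$ with $\{(i, t_i + \Delta t), (j, t_j + \Delta t)\} \subset \V$, and show the shifted pair is again an edge of $\stat(\G)$ carrying the same mark $\bullet$. Unfolding the definition of $\stat(\G)$, the hypothesis states that $((i, t_i + s), (j, t_j + s)) \in \E_{\bullet}$ in $\G$ for every shift $s$ with both endpoints in $\V$. To place the shifted edge in $\stat(\G)$ I must check, for every $s^\prime$ with $\{(i, t_i + \Delta t + s^\prime), (j, t_j + \Delta t + s^\prime)\} \subset \V$, that $((i, t_i + \Delta t + s^\prime), (j, t_j + \Delta t + s^\prime)) \in \E_{\bullet}$. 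Substituting $s = \Delta t + s^\prime$ turns this into precisely the hypothesis, because the constraint ``both endpoints in $\V$'' imposed on $s^\prime$ coincides with the constraint on $s = \Delta t + s^\prime$. Thus the required condition holds for all admissible $s^\prime$, so the shifted pair lies in $\E^{\stat}_{\bullet}$. The point deserving care is that the family of admissible shifts is itself translation invariant, which is exactly what lets the inner quantifier for the shifted edge reindex onto the one already supplied by the hypothesis.

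For part~2 I would take any edge $((i, t_i), (j, t_j)) \in \E^\prime_{\bullet}$ of $\G^\prime$. Since $\G^\prime$ has repeating edges and shares the vertex set $\V$, every shift $\Delta t$ with $\{(i, t_i + \Delta t), (j, t_j + \Delta t)\} \subset \V$ yields $((i, t_i + \Delta t), (j, t_j + \Delta t)) \in \E^\prime_{\bullet}$, and because $\G^\prime$ is a subgraph of $\G$ this edge also lies in $\E_{\bullet}$. But the statement ``$((i, t_i + \Delta t), (j, t_j + \Delta t)) \in \E_{\bullet}$ for all such $\Delta t$'' is verbatim the defining condition for $((i, t_i), (j, t_j)) \in \E^{\stat}_{\bullet}$. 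As the vertex sets agree, this gives $\G^\prime \subset \stat(\G)$, and this direction is immediate once the ranges of the quantifiers are aligned.

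The main (and essentially only) obstacle is this alignment of quantifier ranges rather than any structural difficulty: I must ensure that the admissible time shifts are the same whether measured against $\V^\prime$, against $\V$, or against the shifted vertices, which holds precisely because all graphs in play share the time index set. Once that is secured, part~1 reduces to a reindexing and part~2 to a single unfolding of the two definitions; no appeal to $m$-separation, ancestral structure, or the DMAG axioms is required.
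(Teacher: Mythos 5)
Your proof is correct and takes essentially the same route as the paper's: part~1 is the paper's double application of Def.~\ref{def:stationarification} (a reindexing of the shift quantifier, valid because the admissible shifts are translation invariant), and part~2 is the paper's argument stated directly rather than wrapped in a contradiction. Your explicit restriction to spanning subgraphs is actually a point of extra care relative to the paper, whose proof of part~2 quantifies the shifts of $\G^\prime$'s edges against the vertex set of $\G$ rather than of $\G^\prime$ --- a step that is only legitimate when the two vertex sets coincide, exactly as your counterexample-style remark about smaller time index sets observes.
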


\begin{proof}[Proof of Lemma~\ref{lemmaapp:properties_stationarification}]
\textbf{1.}
Consider an edge $((i, t_i), (j, t_j)) \in E_{\bullet}$ in $\stat(\G)$ and let $\Delta t$ be such that $(i, t_i + \Delta t), (j, t_j + \Delta t) \in \V$. Using the second point in Def.~4.6 twice, we first get $((i, t_i + \Delta t^\prime), (j, t_j + \Delta t^\prime)) \in E_{\bullet}$ in $\G$ for all $\Delta t^\prime$ for which $(i, t_i + \Delta t^\prime), (j, t_j + \Delta t^\prime) \in \V$ and thus $((i, t_i + \Delta t), (j, t_j + \Delta t)) \in E_{\bullet}$ in $\stat(\G)$.

\textbf{2.}
Let $\G^\prime \subseteq \G$ have repeating edges and assume $\G^\prime \cancel \subseteq \stat(\G)$. Since both $\G^\prime$ and $\stat(\G)$ are subgraphs of $\G$, adjacencies that are shared by $\G^\prime$ and $\stat(\G)$ correspond to edges of the same type. Thus, $\G^\prime \cancel \subseteq \stat(\G)$ implies that there is an adjacency in $\G^\prime$ which is not in $\stat(\G)$, i.e., $((i, t_i), (j, t_j)) \in \E$ in $\G^\prime$ and $((i, t_i), (j, t_j)) \notin \E$ in $\stat(\G)$. Since $\G^\prime$ has repeating edges by assumption, $((i, t_i), (j, t_j)) \in \E$ in $\G^\prime$ implies that $((i, t_i + \Delta t), (j, t_j + \Delta t)) \in \E$ in $\G$ for all $\Delta t$ for which $(i, t_i + \Delta t), (j, t_j + \Delta t) \in \V$. But then the second point in Def.~4.6 gives $((i, t_i), (j, t_j)) \in \E$ in $\stat(\G)$. Contradiction.
\end{proof}

\subsection{Why the case of no unobservable vertices remains special}\label{sec:no_unobservable_2}
As discussed in Sec.~\ref{secapp:no_unobservable}, even if there are no unobservable time series one in general still is in the setting of latent confounding. It is worth noting, though, that the case of no unobservable vertices remains special:

\begin{mylemma}\label{lemmaapp:no_unobservables}
Let $\D$ be a ts-DAG with variable index set $\Iindex$, let $\IindexO = \Iindex$, and let $\TindexO = \{t-\tau ~|~ 0 \leq \tau \leq \taumax \}$ where $\taumax \geq \porder$ with $\porder$ the largest lag in $\D$. Then, $\stat(\M_{\IindexO \times \TindexO}(\D))$ equals the subgraph of $\D$ induced on $\IindexO \times \TindexO$.
\end{mylemma}

\begin{myremark}[on Lemma~\ref{lemmaapp:no_unobservables}]
The proof is given in Sec.~\ref{secapp:contains_proof_for_why_no_unobservable_special} below.
\end{myremark}

In other words: If all component time series are observable ($\IindexO = \Iindex$) and there are enough regularly sampled time steps to capture all direct causal influences (choice of $\TindexO$ and $\taumax \geq \porder$), the stationarified ts-DMAG $\stat(\M_{\IindexO \times \TindexO}(\D))$ equals the segment of $\D$ on $\TindexO$. For ts-DMAGs $\M_{\IindexO \times \TindexO}(\D)$ the same is not necessarily true.

\subsection{Different DMAGs with the same stationarification cannot both be ts-DMAGs}
The following result is an immediate consequence of the one-to-one correspondence between a ts-DMAG and its stationarification (see Sec.~4.7 in the main text).

\begin{mylemma}\label{lemmaapp:different_DMAGs_not_both_implied}
Let $\M_1$ and $\M_2$ be DMAGs with time series structure such that $\M_1 \neq \M_2$ and $\stat(\M_1) = \stat(\M_1)$. Then, at least one of $\M_1$ and $\M_2$ is not a ts-DMAG.
\end{mylemma}

\begin{myremark}[on Lemma~\ref{lemmaapp:different_DMAGs_not_both_implied}]
The proof is given in Sec.~\ref{secapp:contains_proof_not_same_stationarification} below.
\end{myremark}

\begin{myexample}
The two DMAGs with time series structure in parts a) and b) of Fig.~\ref{figapp:same_stationarification} have the same stationarification (namely the graph in part c) of the figure). Therefore, at most one of them can be a ts-DMAG. Indeed, using Theorem~1 (or Theorem~2) from the main text we confirm that $\M_2$ is not a ts-DMAG and that $\M_1$ is a ts-DMAG.
\end{myexample}

\renewcommand{\thefigure}{A}
\begin{figure}[tb]
\centering
\includegraphics[width=0.78\linewidth, page = 1]{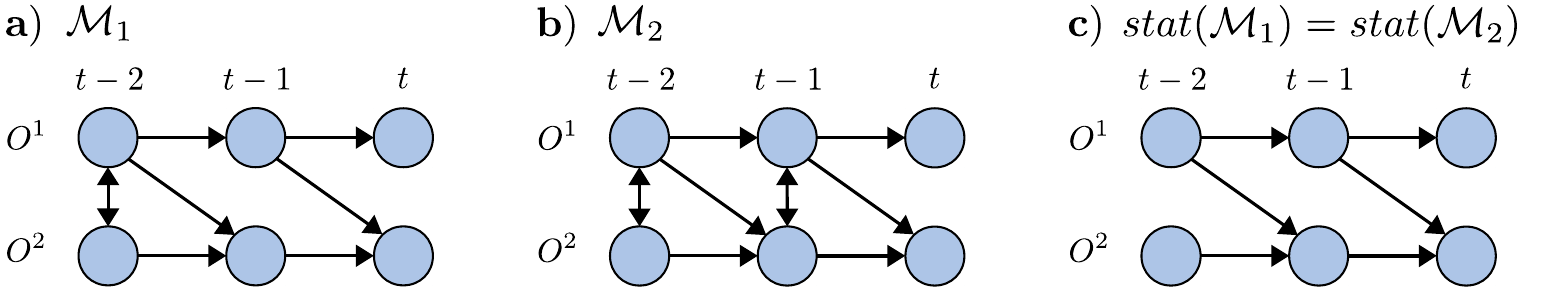}
\caption{Two different DMAGs with time series structure that have the same stationarification.
}
\label{figapp:same_stationarification}
\end{figure}

\subsection{Additional results on Example~5.8}
Here, we formalize and prove the following claim made in Example~5.8 in the main text.

\begin{mylemma}\label{lemmaapp:additional-result-more-identifiability}
Let $\D^\prime$ be a ts-DAG such that its ts-DMAG $\M^1(\D^\prime)$ equals the ts-DMAG $\M^1(\D)$ in part a) of Fig.~10 in the main text. Then, in $\D^\prime$ the pair of observed vertices $(O^1_{t-1}, O^1_t)$ is not subject to unobserved confounding, that is, in $\D^\prime$ there is no inducing path (relative to the set of observed vertices) between $O^1_{t-1}$ and $O^1_t$ that is into $O^1_{t-1}$. Consequently, from $\M^1(\D)$ we can conclude that the causal effect of $O^1_{t-1}$ on $O^1_t$ is identifiable and can be estimated from observations by adjusting for the empty set. Moreover, since the ts-DPAG $\PAG^1(\D)$ in part c) of Fig.~10 is equal to the ts-DMAG $\M^1(\D)$, we can draw this conclusion not only from $\M^1(\D)$ but also from $\PAG^1(\D)$.
\end{mylemma}

\begin{proof}[Proof of Lemma~\ref{lemmaapp:additional-result-more-identifiability}]
\underline{First}, we prove that the pair $(O^1_{t-1}, O^1_t)$ is not subject to unobserved confounding. To this end, we begin by deriving the existence of certain paths in $\D^\prime$:
\begin{enumerate}
\item Since $O^1_{t-1}$ is an ancestor of $O^1_t$ in $\D^\prime$ according to the edge $O^1_{t-1} \tailhead O^1_t$ in $\M^1(\D^\prime)$, in $\D^\prime$ there is a directed path $\pi_1$ from $O^1_{t-1}$ to $O^1_t$. This path cannot intersect $O^2_t$ because else $O^2_t$ would be an ancestor of $O^1_t$ by means of the subpath $\pi_1(O^2_t, O^1_t)$, which together with the fact that $O^1_t$ is an ancestor of $O^2_t$ according to the edge $O^1_t \tailhead O^2_t$ in $\M^1(\D^\prime)$ contradicts acyclicity. The path $\pi_1$ can also not intersect $O^2_{t-1}$ because the subpath $\pi_1(O^2_{t-1}, O^1_t)$ would then be a directed path from $O^2_{t-1}$ to $O^1_t$ such that all its non end-point vertices, if any, are unobserved. Consequently, there would need to be the edge $O^2_{t-1} \tailhead O^1_t$ in $\M^1(\D^\prime)$. Moreover, due to time order, $\pi_1$ can also not contain any vertex $O^1_{s}$ or $O^2_s$ with $s \leq t-2$. We conclude that all non end-point vertices of $\pi_1$, if any, are unobservable.
\item Since $O^1_{t-1}$ is an ancestor of $O^2_{t-1}$ in $\D^\prime$ according to the edge $O^1_{t-1} \tailhead O^2_{t-1}$ in $\M^1(\D^\prime)$, in $\D^\prime$ there is a directed path $\pi_2$ from $O^1_{t-1}$ to $O^2_{t-1}$. This path can, due to time order, neither intersect $O^1_t$ nor $O^2_t$. Moreover, also due to time order, $\pi_2$ cannot contain any vertex $O^1_{s}$ or $O^2_s$ with $s \leq t-2$. We conclude that all non end-point vertices of $\pi_2$, if any, are unobservable.
\item Since $O^2_{t-1}$ is an ancestor of $O^2_t$ in $\D^\prime$ according to the edge $O^2_{t-1} \tailhead O^2_t$ in $\M^1(\D^\prime)$, in $\D^\prime$ there is a directed path $\pi_3$ from $O^2_{t-1}$ to $O^2_t$. This path cannot intersect $O^1_{t-1}$ because else $O^2_{t-1}$ would be an ancestor of $O^1_{t-1}$ by means of the subpath $\pi_3(O^2_{t-1}, O^1_{t-1})$, which together with the fact that $O^1_{t-1}$ is an ancestor of $O^2_{t-1}$ according to the edge $O^1_{t-1} \tailhead O^2_{t-1}$ in $\M^1(\D^\prime)$ contradicts acyclicity. The path $\pi_3$ can also not intersect $O^1_{t}$ because the subpath $\pi_3(O^2_{t-1}, O^1_t)$ would then be a directed path from $O^2_{t-1}$ to $O^1_t$ such that all its non end-point vertices, if any, are unobserved. Consequently, there would need to be the edge $O^2_{t-1} \tailhead O^1_t$ in $\M^1(\D^\prime)$. Moreover, due to time, order $\pi_3$ can also not contain any vertex $O^1_{s}$ or $O^2_s$ with $s \leq t-2$. We conclude that all non end-point vertices of $\pi_3$, if any, are unobservable.
\end{enumerate}
\end{proof}

For $i = 1, 2, 3$ let $\pi_i^\prime$ be a copy of $\pi_i$ that is shifted backwards in time by one time step. These paths $\pi_i^\prime$ exist due to the repeating edges property of $\D^\prime$. Then, the concatenation $\rho_1 = \pi_1^\prime(O^1_{t-1}, O^1_{t-2}) \oplus \pi_2^\prime \oplus \pi_3^\prime$ is a collider-free path between $O^1_{t-1}$ and $O^2_{t-1}$ that is into both $O^1_{t-1}$ and $O^2_{t-1}$ such that all its non end-point vertices are unobserved. In particular, $\rho$ is an inducing path.

Now suppose that, contrary to the claim to be proven, in $\D^\prime$ there is an inducing path $\rho_2$ between $O^1_{t-1}$ and $O^1_t$ that is into $O^1_{t-1}$. Then, according to Lemma 32 in \citet{zhang2008causal} the concatenation $\rho_1 \oplus \rho_2$ has a subsequence $\rho$ which is an inducing path between $O^2_{t-1}$ and $O^1_t$ in $\D^\prime$. However, then there would need to be an edge between $O^2_{t-1}$ and $O^1_t$ in $\M^1(\D^\prime)$ (according to the ancestral relationships, this edge would be $O^2_{t-1} \headhead O^1_t)$, which is a contradiction. We conclude that there is no inducing path between $O^1_{t-1}$ and $O^1_t$ which is into $O^1_{t-1}$, that is, the pair $(O^1_{t-1}, O^1_t)$ is not subject to unobserved confounding.

\underline{Second}, we prove that the causal effect of $O^1_{t-1}$ and $O^1_t$ is identifiable and can be estimated by adjusting for the empty set. To this end, assume that $O^1_{t-1}$ and $O^1_t$ are \emph{not} $d$-separated in the graph $\G$ that is obtained by removing from $\D^\prime$ all edges out of $O^1_{t-1}$. Then, there is at least one path $\pi$ between $O^1_{t-1}$ and $O^1_t$ in $\G$ that is active given the empty set. This path
\begin{enumerate}
\item is into $O^1_{t-1}$ because in $\G$ there are no edges out of $O^1_{t-1}$,
\item is collider-free because $\pi$ is active given the empty set,
\item is also a path in $\D^\prime$ because $\G$ is a subgraph of $\D^\prime$,
\item needs to intersect at least one of $O^2_{t-1}$ and $O^2_t$ because else it would be an inducing path between $O^1_{t-1}$ and $O^1_t$ that is $O^1_{t-1}$ in $\D^\prime$,
\item is into $O^1_t$ because else it would need to be directed from $O^1_t$ to $O^1_{t-1}$, which contradicts time order,
\item does not intersect $O^2_t$ because else $O^2_t$ would need to be an ancestor of $O^1_t$ (which is not possible to due to the edge $O^1_t \tailhead O^2_t$ in $\M^1(\D^\prime)$) or of $O^1_{t-1}$ (which is not possible due to time order),
\item does not intersect $O^2_{t-1}$ because else the subpath $\pi(O^2_{t-1}, O^1_t)$ would be a collider-free path such that all its non end-point vertices, if any, are unobserved. Consequently, there would need to be an edge between $O^2_{t-1}$ and $O^1_t$ in $\M^1(\D^\prime)$.
\end{enumerate}
Since the combination of points 6 and 7 in this enumeration contradicts point 4 of the enumeration, such a path $\pi$ cannot exist. Consequently, $O^1_{t-1}$ and $O^1_t$ are $d$-separated in $\G$. The second rule of the $do$-calculus, e.g.~\citet{Pearl2009}, thus gives that the interventional distribution $P(O^1_t ~\vert~ do(O^1_{t-1} = o^1_{t-1}))$ is expressed in terms of the observational distribution as $P(O^1_t ~\vert~ do(O^1_{t-1}= o^1_{t-1})) = P(O^1_t ~\vert~ O^1_{t-1}= o^1_{t-1})$.
\hfill $\square$

\subsection{Increasing the number of observed time steps}\label{secapp:increasing_taumax}
The main text consides ts-DMAGs and ts-DPAGs on observed time windows $[t-\taumax, t]$, where $\taumax \geq 0$ is arbitrary but fixed. In Sec.~\ref{secapp:different_time_windows} we first compare ts-DMAGs and ts-DPAGs on time windows of different length. We show that, as expected, the ts-DMAGs and ts-DPAGs on the longer time window can never contain less but may contain more information about the underlying ts-DAG than the ts-DMAGs and ts-DPAGs on the shorter time window. In Sec.~\ref{secapp:limiting} we then define the notions of limiting ts-DMAGs and ts-DPAGs by allowing conditioning sets from the entire past. All proofs are given in Sec.~\ref{secapp:proofs_for_increasing}.

\subsubsection{Comparison of ts-DMAGs and ts-DPAGs on different observed time windows}\label{secapp:different_time_windows}
Since the reference time step $t$ is arbitrary and only time \emph{differences} are relevant, we need only compare ts-DMAGs and ts-DPAGs on $[t-\taumax, t]$ and $[t-\taumaxtilde, t]$ with $\taumaxtilde > \taumax$. To this end we use the following notation.

\begin{mydef}[Subgraph of a ts-DMAG / ts-DPAG induced on time window]\label{defapp:sub_ts_DMAG/DPAG}
Let $\D$ be a ts-DAG, let $\taumaxtilde \geq \taumax \geq 0$, and let $t-\taumaxtilde \leq t_1 \leq t_2 \leq t$. The induced subgraph of $\Mtaumaxtilde(\D)$ (subgraph of $\PAGtaumaxtilde(\D)$) on its subset of vertices within $[t_1, t_2]$ is denoted as $\M^{\taumaxtilde, [t_1, t_2]}(\D)$ (denoted as $\PAG^{\taumaxtilde, [t_1, t_2]}(\D)$).
\end{mydef}

The additionally observed vertices in $[t-\taumaxtilde,t-\taumax-1]$ enlarge the set of potential conditions and thus may lead to more $d$-separations among the originally observed vertices. We thus get the following result.

\begin{mylemma}\label{lemmaapp:DMAGs_different_taumax_2}
Let $\D$ be a ts-DAG and let $\taumaxtilde > \taumax \geq 0$. Then, up to relabeling vertices:
\begin{enumerate}
\item For all $0 \leq \Delta t < \taumaxtilde - \taumax$: $\M^{\taumaxtilde, [t-\taumax-\Delta t, t - \Delta t]}(\D)$ is a subgraph of $\Mtaumax(\D)$.
\item $\M^{\taumaxtilde, [t-\taumaxtilde, t - \taumaxtilde + \taumax]}(\D)$ equals $\Mtaumax(\D)$.
\item There are cases in which $\M^{\taumaxtilde, [t-\taumax, t]}(\D)$ is a proper subgraph of $\Mtaumax(\D)$.
\end{enumerate}
\end{mylemma}

Moving to a semantic level, we confirm the intuition that $\Mtaumaxtilde(\D)$ cannot contain less but may contain more information about $\D$ than $\Mtaumax(\D)$.

\begin{mylemma}\label{lemmaapp:DMAGs_different_taumax_inference}
Let $\D_1$ and $\D_2$ be ts-DAGs and let $\taumaxtilde > \taumax \geq 0$. Then:
\begin{enumerate}
\item If $\Mtaumaxtilde(\D_1) = \Mtaumaxtilde(\D_2)$, then $\Mtaumax(\D_1) = \Mtaumax(\D_2)$.
\item There are cases in which $\Mtaumaxtilde(\D_1) \neq \Mtaumaxtilde(\D_2)$ and $\Mtaumax(\D_1) = \Mtaumax(\D_2)$.
\end{enumerate}
\end{mylemma}

In other words: Every inference about $\D$ that can be drawn from $\Mtaumax(\D)$ can also be drawn from $\Mtaumaxtilde(\D)$, whereas the converse need not be true.

\begin{myexample}\label{myexampleapp:different_taumax_DMAG}
Figure~\ref{figapp:DMAG_larger_taumax} shows the ts-DMAGs $\M^1(\D)$ and $\M^2(\D)$. These graphs conform with parts 1 and 2 of Lemma~\ref{lemmaapp:DMAGs_different_taumax_2} and prove its part 3. Further, the edge $O^1_{t-2} \tailhead O^2_t$ in $\M^2(\D)$ shows that $O^1_{t-2}$ is an ancestor of $O^2_t$ in $\D$, which is a conclusion that cannot be drawn from $\M^1(\D)$.
\end{myexample}

\renewcommand{\thefigure}{B}
\begin{figure}[tb]
\centering
\includegraphics[width=0.8\linewidth, page = 1]{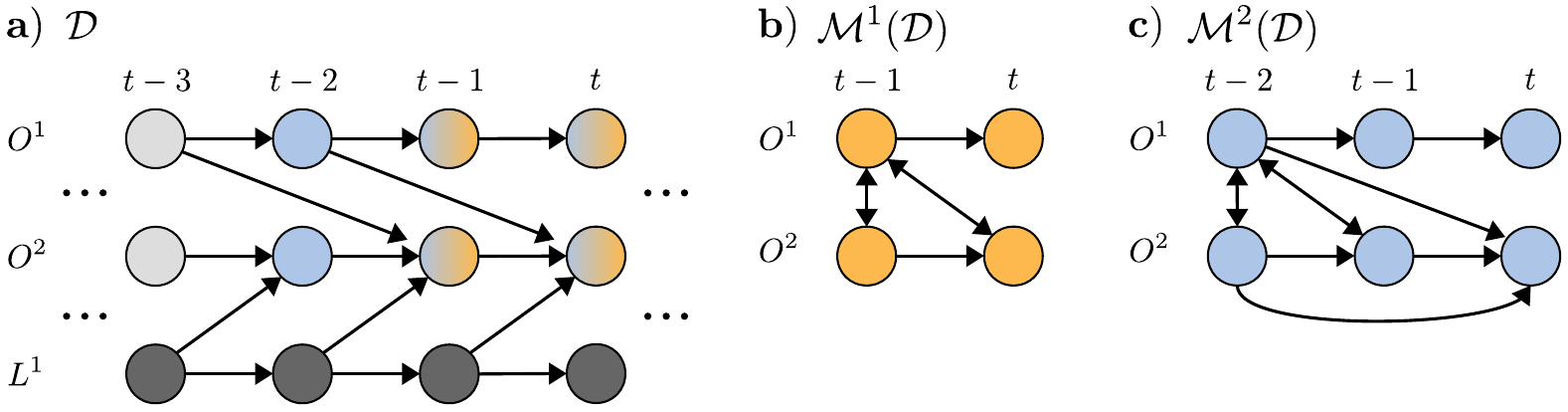}
\caption{Illustration of ts-DMAGs $\Mtaumax(\D)$ of the same ts-DAG $\D$ for different $\taumax$, see also the discussion in Example~\ref{myexampleapp:different_taumax_DMAG}. The component time series $L^1$ is unobservable.
}
\label{figapp:DMAG_larger_taumax}
\end{figure}

Since ts-DPAGs $\PAGtaumax(\D)$ by definition have the same adjacencies as the corresponding ts-DMAGs $\Mtaumax(\D)$, the effect of increasing $\taumax$ on their adjacencies is the same as for ts-DMAGs. Regarding edge orientations, Lemma~\ref{lemmaapp:DMAGs_different_taumax_inference} raises the expectation that all unambiguous edge marks in $\PAGtaumax(\D)$ should also be in $\PAGtaumaxtilde(\D)$. This is expectation is indeed correct.

\begin{mylemma}\label{lemmaapp:DPAGs_different_taumax}
Let $\D$ be a ts-DAG and let $\taumaxtilde > \taumax \geq 0$. Let $(i, t_i)$ and $(j, t_j)$ be adjacent in both $\PAGtaumax(\D)$ and $\PAGtaumaxtilde(\D)$. Then:
\begin{enumerate} 
  \item If there is a non-circle mark on $(i, t_i) \astast (j, t_j)$ in $\PAGtaumax(\D)$, then the same non-circle mark is also on $(i, t_i) \astast (j, t_j)$ in $\PAGtaumaxtilde(\D)$.
  \item There are cases in which there is a non-circle mark on $(i, t_i) \astast (j, t_j)$ in $\PAGtaumaxtilde(\D)$ that is not on  $(i, t_i) \astast (j, t_j)$ in $\PAGtaumax(\D)$.
\end{enumerate}
\end{mylemma}

\begin{mylemma}\label{lemmaapp:DPAGs_different_taumax_2}
Let $\D$ be a ts-DAG and let $\taumaxtilde > \taumax \geq 0$. Then:
\begin{enumerate}
\item Every circle edge mark in $\PAG^{\taumaxtilde, [t-\taumax, t]}(\D)$ is also in $\PAGtaumax(\D)$.
\item There are cases in which there is a non-circle edge mark in $\PAG^{\taumaxtilde, [t-\taumax, t]}(\D)$ that is not also in $\PAGtaumax(\D)$.
\end{enumerate}
\end{mylemma}

\begin{myexample}\label{myexampleapp:DPAG_larger_taumax}
Figure~\ref{fig:DPAG_larger_taumax} shows the ts-DPAGs $\PAG^1(\D)$ and $\PAG^2(\D)$ for a ts-DAG $\D$. These graphs conform with parts 1~of Lemmas~\ref{lemmaapp:DPAGs_different_taumax} and \ref{lemmaapp:DPAGs_different_taumax_2} and prove parts 2~of both these lemmas. For example, in $\PAG^1(\D)$ there is $O^1_{t-1} \ohead O^1_t$ while in $\PAG^2(\D)$ there is $O^1_{t-1} \tailhead O^1_t$ instead.
\end{myexample}

\renewcommand{\thefigure}{C}
\begin{figure}[b]
\centering
\includegraphics[width=0.8\linewidth, page = 1]{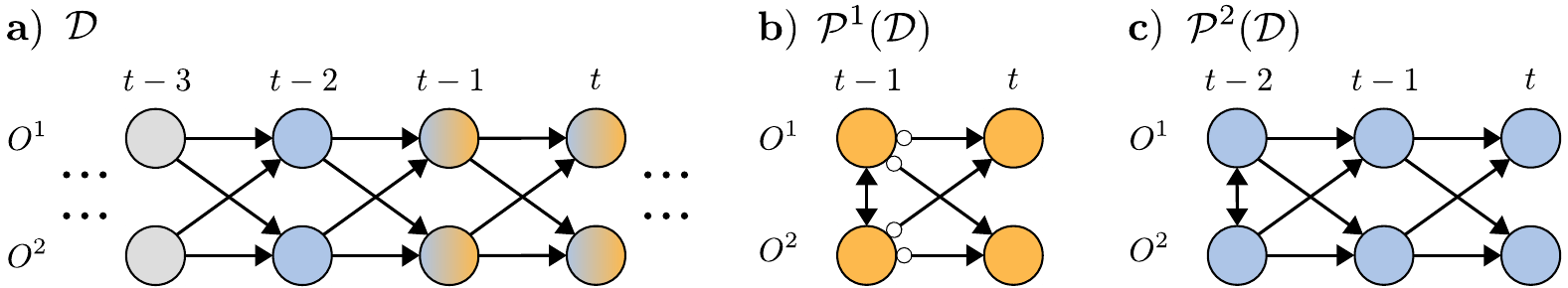}
\caption{Illustration of ts-DPAGs $\PAGtaumax(\D)$ of the same ts-DAG $\D$ for different $\taumax$.
}
\label{fig:DPAG_larger_taumax}
\end{figure}

\subsubsection{Limiting ts-DMAGs and limiting ts-DPAGs}\label{secapp:limiting}
Lemmas~\ref{lemmaapp:DMAGs_different_taumax_2} and \ref{lemmaapp:DPAGs_different_taumax} imply the following behaviour when $\taumax$ is kept fixed while $\taumaxtilde \geq \taumax$ increases beyond any bound.

\begin{mylemma}\label{lemmaapp:existence_limiting}
Let $\D$ be a ts-DAG and $\taumax \geq 0$. Then:
\begin{enumerate}
\item There is $\taumaxtilde \geq \taumax$ with $\M^{\taumaxtilde^\prime,[t-\taumax,t]}(\D) = \M^{\taumaxtilde ,[t-\taumax,t]}(\D)$ for all $ \taumaxtilde^\prime \geq \taumaxtilde$.
\item There is $\taumaxtilde \geq \taumax$ with $\PAG^{\taumaxtilde^\prime,[t-\taumax,t]}(\D) = \PAG^{\taumaxtilde ,[t-\taumax,t]}(\D)$ for all $ \taumaxtilde^\prime \geq \taumaxtilde$.
\end{enumerate}
\end{mylemma}

Lemma~\ref{lemmaapp:existence_limiting} implies that the sequence $\Delta \taumax \mapsto \M^{\taumax + \Delta \taumax, [t-\taumax,t]}(\D)$ as well as the sequence $\Delta \taumax \mapsto \PAG^{\taumax + \Delta \taumax, [t-\taumax,t]}(\D)$ convergence with respect to the discrete metric\footnote{The discrete metric $d(\cdot, \cdot)$ is defined by $d(x, y) = 1$ if $x=y$ and $d(x, y) = 0$ else.} on the space of ts-DMAGs, respectively space of ts-DPAGs.

\begin{mydef}[Limiting ts-DMAG / ts-DPAG]\label{defapp:limiting_ts_DMAG}
Let $\D$ be a ts-DAG and let $\taumax \geq 0$. The \emph{limiting ts-DMAG} $\Mtaumaxlim(\D)$, respectively \emph{limiting ts-DPAG} $\PAGtaumaxlim(\D)$, is the limit of the sequence $\Delta \taumax \mapsto \M^{\taumax + \Delta \taumax, [t-\taumax,t]}(\D)$, respectively $\Delta \taumax \mapsto \PAG^{\taumax + \Delta \taumax, [t-\taumax,t]}(\D)$, with respect to the discrete metric on the space of ts-DMAGs, respectively space of ts-DPAGs.
\end{mydef}

\renewcommand{\thefigure}{D}
\begin{figure}[tb]
\centering
\includegraphics[width=0.85\linewidth, page = 1]{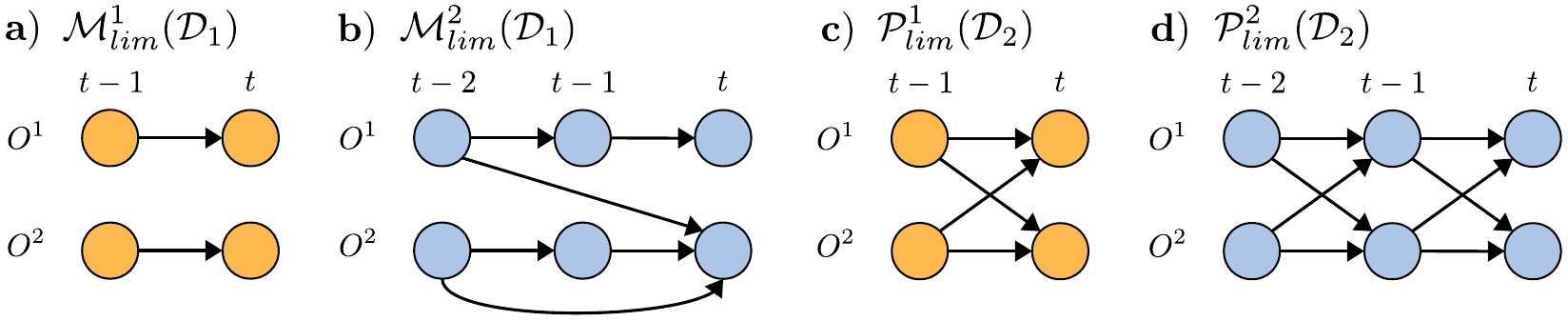}
\caption{Illustration of limiting ts-DMAGs and limiting ts-DPAGs. The underlying ts-DAGs $\D_1$ and $\D_2$ are, respectively, those shown in parts a) of Fig.~\ref{figapp:DMAG_larger_taumax} and Fig.~\ref{fig:DPAG_larger_taumax}.
}
\label{fig:limiting_objects}
\end{figure}

See Fig.~\ref{fig:limiting_objects} for examples. Similar to stationarified ts-DMAGs, limiting ts-DMAGs $\Mtaumaxlim(\D)$ are not in general DMAGs for the underlying ts-DAGs $\D$ and carry different meaning. Namely, vertices $(i, t_i)$ and $(j, t_j)$ with $t_i \leq t_j \leq t$ are adjacent in $\Mtaumaxlim(\D)$ if and only if there is no finite set of observable variables within $(-\infty , t]$ that $d$-separates $(i, t_i)$ and $(j, t_j)$ in $\D$. The same statement applies to limiting ts-DPAGs.

\begin{mylemma}\label{lemmaapp:limiting_DMAGs_properties}
\begin{enumerate}
\item $\Mtaumaxlim(\D)$ has repeating edges.
\item $\Mtaumaxlim(\D)$ is a subgraph of $\Mtaumaxstat(\D)$.
\item $\Mtaumaxlim(\D)$ is a DMAG.
\item $\PAGtaumaxlim(\D)$ has repeating edges.
\item $\PAGtaumaxlim(\D)$ is a DPAG for $\Mtaumaxlim(\D)$.
\end{enumerate}
\end{mylemma}

Unlike the examples shown in parts c) and d) of Fig.~\ref{fig:limiting_objects}, in general there may be circle marks in a limiting ts-DPAG. Lastly, given that $\Mtaumaxlim(\D)$ and $\PAGtaumaxlim(\D)$ have repeating edges, one might hope to give meaning to sending $\taumax$ to infinity in $\Mtaumaxlim(\D)$ and $\PAGtaumaxlim(\D)$ by restricting attention to edges that involve a vertex at time $t$. However, as the following example shows, such a construction is not possible in general.

\begin{myexample}
Consider the ts-DAG $\D$ in part a) of Fig.~\ref{figapp:DMAG_larger_taumax}. Since $L^1$ is unobservable and autocorrelated, for all $\taumax$ there is $O^2_{t-\taumax} \tailhead O^2_t$ in $\M^{\taumax}_{\text{lim}}(\D)$. In \citet{malinsky2018causal} this effect is discussed under the names of ``auto-lag confounders'' and ``infinite-lag associations''.
\end{myexample}

\section{Proofs for Sec.~3 and for Lemma~\ref{lemmaapp:no_unobservables} and Lemma~\ref{lemmaapp:different_DMAGs_not_both_implied}}

\subsection{Proofs for Sec.~3.3}

\begin{proof}[Proof of Lemma~3.5]
See the explanations in Secs.~3.2 and 3.3 of the main text. Formally: The set of random variables involved in the structural process defined in Sec.~3.1 is $\{V^i_t ~|~ 1 \leq i \leq n_V, \, t \in \mathbb{Z}\}$, i.e., is indexed by the set $\Iindex \times \mathbb{Z}$ where $\Iindex = \{1, \dots, n_V\}$. This form shows that the causal graph $\D$ has time series structure with time index set $\mathbb{Z}$, where vertex $(i, t) \in \Iindex \times \mathbb{Z}$ corresponds to random variable $V^i_t$. Further, $\pa((i, t), \D) = P\!A^i_t$ by definition of causal graphs and $P\!A^{i}_t \subseteq \{V^k_{t-\tau} ~|~ 1 \leq k \leq n_V, \, 0 \leq \tau \leq \porder\} \setminus \{V^i_t\}$ by definition of the data generating process. Hence, $\D$ is time ordered. The repeating edges property follows because the data generating process by definition is causally stationary, i.e., because $V^k_{t-\tau} \in P\!A^i_{t}$ if and only if $V^k_{t-\tau-\Delta t} \in P\!A^i_{t-\Delta t}$. Lastly, acyclicity of $\D$ is definitional for the data generating process.
\end{proof}

\subsection{Proofs for Sec.~3.4}

\begin{mylemma}\label{lemmaapp:dag_and_mag_same_ancestral}
Let $\D$ be a DAG with vertex set $\V = \Ovar \cup \La$. Then, for $i, j \in \Ovar$, $i \in \an(j, \D)$ if and only if $j \in \an(i, \M_{\Ovar}(\D))$.
\end{mylemma}

\begin{myremark}[on Lemma~\ref{lemmaapp:dag_and_mag_same_ancestral}]
This claim is a well-known result, see for example \citet{zhang2008causal}, \citet{Zhang2008}, which straightforwardly follows from the definition of the MAG latent projection procedure in \citet{zhang2008causal} as well as from the definitions in \citet{richardson2002}. However, since we did not find a formal proof spelled out in the literature, we here include a proof for completeness.
\end{myremark}

\begin{proof}[Proof of Lemma~\ref{lemmaapp:dag_and_mag_same_ancestral}]
\textbf{Only if.} Assume $i \in \an(j, \D)$. This assumption means that in $\D$ there is a directed path $\pi$ from $i$ to $j$. Let $(k_1, \dots, k_n)$ with $k_1 = i$ and $k_n = j$ be the ordered sequence of nodes on $\pi$ that are in $\Ovar$. Consequently, for all $1 \leq m \leq n-1$ the vertices $k_m$ and $k_{m+1}$ can in $\D$ not be $d$-separated by any set of observed variables, and hence there are the edges $k_m \tailhead k_{m+1}$ in $\M_{\Ovar}(\D)$ for all $1 \leq m \leq n-1$. These edges give a directed path from $k_1 = i$ to $k_n = j$ in $\M_{\Ovar}(\D)$, and hence $i \in \an(j, \M_{\Ovar}(\D))$.

\textbf{If.} Assume $i \in \an(j, \M_{\Ovar}(\D))$. This assumption means that in $\M_{\Ovar}(\D)$ there is a directed path $\pi$ from $i$ to $j. $Let $(k_1, \dots, k_{n^\prime})$ with $k_1 = i$ and $k_{n^\prime} = j$ be the ordered sequence of nodes on $\pi$. By definition of edge orientations in $\M$ we thus get $k_m \in \an(k_{m+1}, \D)$ for all $1 \leq m \leq n^\prime-1$, and hence $i \in \an(j, \D)$ by transitivity of ancestorship.
\end{proof}

\begin{proof}[Proof of Lemma~3.7]
\textbf{1.}
The vertex set of $\M_{\IindexO \times \TindexO}(\D)$ is $\IindexO \times \TindexO$. This decomposition defines the time series structure of $\M_{\IindexO \times \TindexO}(\D)$, namely: $\IindexO$ is its variable index set and $\TindexO$ is its time index set.

\textbf{2.}
Assume $\M_{\IindexO \times \TindexO}(\D)$ is not time ordered, i.e., assume there is $(j, t_j) \tailhead (i, t_i)$ in $\M_{\IindexO \times \TindexO}(\D)$ with $t_j > t_i$. This assumptions means $(j, t_j) \in \an((i, t_i), \M_{\IindexO \times \TindexO}(\D))$ and thus, by Lemma~\ref{lemmaapp:dag_and_mag_same_ancestral}, $(j, t_j) \in \an((i, t_i), \D)$. The latter in turn implies that in $\D$ there is directed path $\pi$ from $(j, t_j)$ to $(i, t_i)$. This path must at least contain one edge $(k, t_k) \tailhead (l, t_l)$ with $t_k > t_l$, which contradicts time order of $\D$.

\textbf{3.}
See part b) of Fig.~2 in the main text for an example.
\end{proof}

\section{Proofs for Sec.~4}

\subsection{Proofs for Sec.~4.2}

\begin{proof}[Proof of Lemma~4.1]
\textbf{1.}
The desired ts-DAG $\D^\prime$ is constructed by treating the vertices at all time steps other than $t - m \cdot n$ with $m \in \mathbb{Z}$ as members of unobservable time series in $\D^\prime$, by shifting all vertices of $\D$ within a time window $[t - m \cdot n - (n-1), t - m \cdot n]$ to time $t - m \cdot n$ in $\D^\prime$ for all $m \in \mathbb Z$, and by then relabeling the time steps according to $t - m \cdot n \mapsto t - m$. Formally: Let $\Iindex$ with $\IindexO \subseteq \Iindex$ denote the variable index set of $\D$. Define $\Iindex^\prime = \Iindex \cup \mathbf{K}$ with $\mathbf{K} = \Iindex \times \{1, \dots, n-1\}$ and consider the following map:
\begin{align*}
\phi: \qquad \Iindex \times \mathbb{Z} &\to \Iindex^\prime \times \mathbb{Z} \\
\left(i, t - \Delta t\right) &\mapsto \begin{cases}
\left(i, t - \frac{\Delta t}{n}\right) &\text{for $\Delta t\, \operatorname{mod}\, n = 0$}\\
\left((i, \Delta t\, \operatorname{mod}\, n), t - \left\lfloor \frac{\Delta t}{n} \right\rfloor\right) &\text{for $\Delta t\, \operatorname{mod}\, n \neq 0$} \ .
\end{cases}
\end{align*}
Here, $(\Delta t \, \operatorname{mod}\, n)$ with negative $\Delta t$ is defined as the smallest non-negative integer $\Delta t + n \cdot m$ with $m \in \mathbb{Z}$. Note that $\phi$ is bijective and, hence, invertible. We define $\D^\prime$ as the directed graph over the vertex set $\V^\prime = \Iindex^\prime \times \mathbb{Z}$ such that for vertices $a, b \in \V^\prime$ there is an edge $a \tailhead b$ if and only if $\phi^{-1}(a) \tailhead \phi^{-1}(b)$ in $\D$. See parts a) and b) of Fig.~\ref{fig:both_sampling_equivalent} for illustration.

This construction is such that $\D$ and $\D^\prime$ are as graphs equal up to relabeling their vertices according to $\phi$. As a consequence, $\D^\prime$ is acyclic and its $d$-separations are the same as those of $\D$. Moreover, $\D^\prime$ is indeed a ts-DAG: First, its time series structure is given by the decomposition of $\V^\prime$ into $\V^\prime = \Iindex^\prime \times \mathbb{Z}$, i.e., $\Iindex^\prime$ is its variable index set. Second, time order follows because if $b \in \V^\prime$ is after $a \in \V^\prime$, then $\phi^{-1}(b)$ is after $\phi^{-1}(a)$ together with the fact that $\D$ is time ordered. Third, if for $a = (i^\prime, t^\prime) \in \V^\prime$ we write $\phi^{-1}(a) = (i, t)$, then $\phi^{-1}((i^\prime, t^\prime - \Delta t^\prime)) = (i, t - n \cdot \Delta t^\prime)$. This observation implies that $\D^\prime$ has repeating edges. Hence, $\D^\prime$ is a ts-DAG and the statement follows since $\phi(\IindexO \times \TindexO^n) = \IindexO \times \TindexO^1$.

\textbf{2.}
The desired ts-DAG $\D^\prime$ is constructed by stretching all edges of $\D$ by a factor of $n$ in time and adding $(n-1)$ further copies of this stretched version of $\D$ to $\D^\prime$, respectively shifted by $1$ up to $(n-1)$ time steps with respect to the first copy, without any edges between the $n$ copies. Formally: $\D^\prime$ is the ts-DAG over the vertex set $\V^\prime = \Iindex \times \mathbb{Z}$, where $\Iindex$ is the variable index set of $\D$, such that $(i, t^\prime - \Delta t^\prime) \tailhead (j, t^\prime)$ in $\D^\prime$ if and only if $(\Delta t^\prime \, \operatorname{mod}\, n ) = 0$ and $(i, t^\prime - \Delta t^\prime/n) \tailhead (j, t^\prime)$ in $\D$. See parts c) and d) of Fig.~\ref{fig:both_sampling_equivalent} for illustration. The statement is apparent from this construction.
\end{proof}

\renewcommand{\thefigure}{E}
\begin{figure}[tb]
\centering
\includegraphics[width=0.99\linewidth, page = 1]{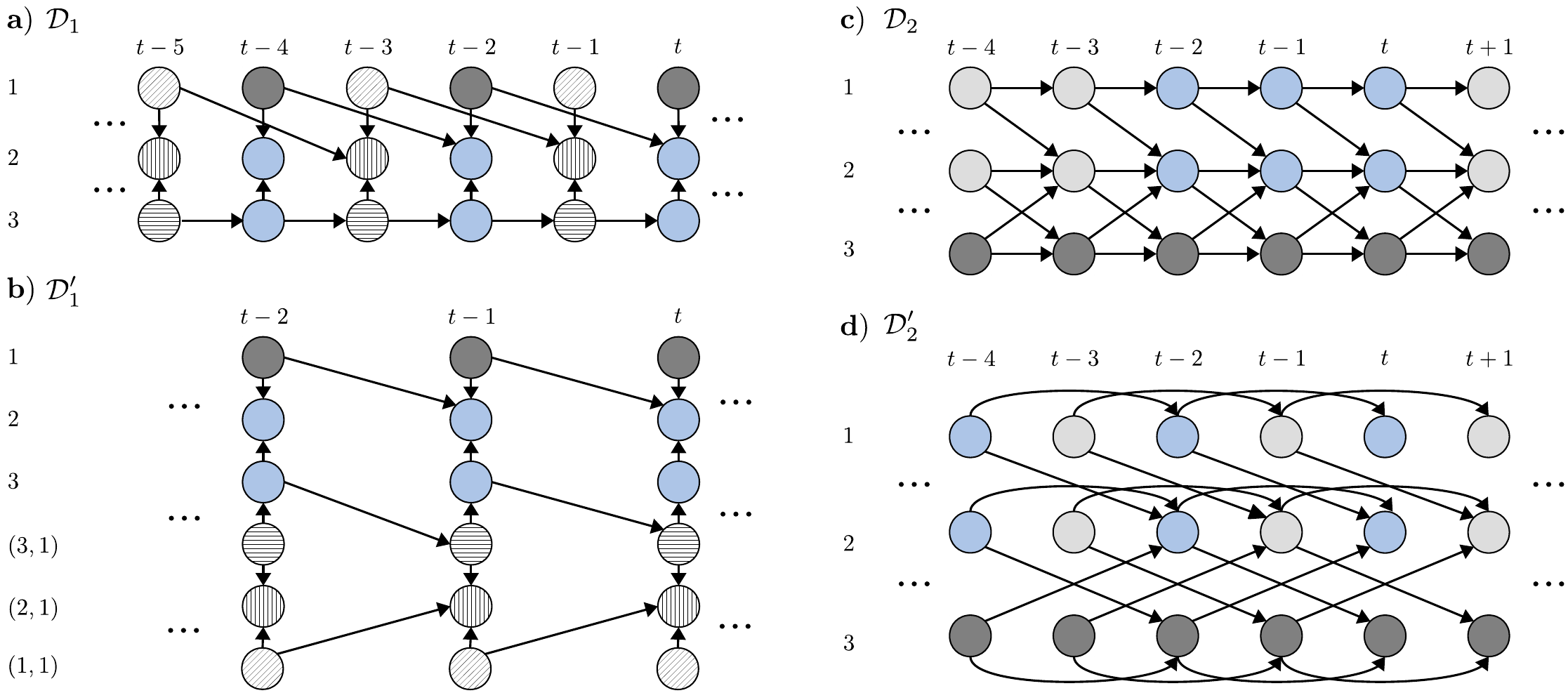}
\caption{Illustration of the constructions involved in the proof of Lemma~4.1 for $n = 2$ and $n_{steps} = 3$. The vertically arranged numbers to the left of the four ts-DAGs are their respective variable indices. \textbf{a)} The same ts-DAG as in part c) of Fig.~2, here denoted $\D_1$. Here, $\IindexO = \{2,\, 3\} \subseteq \Iindex = \{1, \, 2, \,3\}$ and $\TindexO = \{t-4, \,t-2, \,t\}$. The corresponding implied ts-DMAG $\M_{\IindexO \times \TindexO}(\D_1)$ is shown in part d) of Fig.~2. \textbf{b)} The ts-DAG $\D_1^\prime$ constructed from $\D_1$ as defined in the proof of part 1~of Lemma~4.1. Here, $\IindexO^\prime = \IindexO = \{2,\, 3\} \subseteq \Iindex^\prime = \Iindex \cup \mathbf{K}$ with $\mathbf{K} = \Iindex \times \{1\}$ and $\TindexO^\prime = \{t-2, \,t-1, \,t\}$. Note that while in $\D_1$ the hatched vertices are temporally unobserved, in $\D_1^\prime$ they are unobservable. The corresponding implied ts-DMAG $\M_{\IindexO^\prime \times \TindexO^\prime}(\D_1^\prime)$ is the same as $\M_{\IindexO \times \TindexO}(\D_1)$ up to relabeling vertices. \textbf{c)} The same ts-DAG as in part a) of Fig.~2, here denoted $\D_2$. Here, $\IindexO = \{1,\, 2\}\subseteq \Iindex = \{1, \, 2, \,3\}$ and $\TindexO = \{t-2, \,t-1, \,t\}$. The corresponding implied ts-DMAG $\M_{\IindexO \times \TindexO}(\D_2)$ is shown in part b) of Fig.~2. \textbf{d)} The ts-DAG $\D_2^\prime$ constructed from $\D_2$ as defined in the proof of part 2~of Lemma~4.1. Here, $\IindexO^\prime = \IindexO = \{1,\, 2\} \subseteq \Iindex^\prime = \Iindex$ and $\TindexO^\prime = \{t-4, \,t-2, \,t\}$. The corresponding implied ts-DMAG $\M_{\IindexO^\prime \times \TindexO^\prime}(\D_2^\prime)$ is the same as $\M_{\IindexO \times \TindexO}(\D_2)$ up to relabeling vertices.
}
\label{fig:both_sampling_equivalent}
\end{figure}

\subsection{Proofs for Sec.~4.3}

\begin{proof}[Proof of Lemma~4.3]
In all statements that involve the repeating ancestral relationships or repeating separating sets property, we implicitly assume the graph to be a DMAG (because else these properties would be undefined).

\textbf{1. \& 2.}
These statements immediately follow from the definitions of the involved properties.

\textbf{3.}
This statement follows because the ancestral relationships between an adjacent pair of vertices uniquely specifies the type of the edge between this pair of vertices.

\textbf{4.}
This statement follows because for $\Tindex = \mathbb{Z}$ repeating edges implies that the graph is invariant under time shifts, i.e., invariant under the mapping $\phi_{\Delta t}: \Iindex \times \Tindex \to \Iindex \times \Tindex$ with $\phi_{\Delta t}((i, t_i)) = (i, t_i + \Delta t)$ for all $\Delta t \in \mathbb{Z}$.
\end{proof}

\begin{proof}[Proof of Lemma~4.4]
\textbf{1.}
This statement follows because $\Mtaumax(\D)$ and $\D$ have the same ancestral relationships between vertices in $\Mtaumax(\D)$ (according to Lemma~\ref{lemmaapp:dag_and_mag_same_ancestral}) in combination with the fact that $\D$ has repeating ancestral relationships (as implied by part 4 of Lemma~4.3).

\textbf{2.}
Combine part 1 of Lemma~4.4 with part 3 of Lemma~4.3.

\textbf{3.}
Theorem 4.18 in \citet{richardson2002} implies that for sets $\mathbf{S}$ of vertices in $\Mtaumax(\D)$ the $m$-separation $(i, t_i) \ci (j, t_j) ~|~ \mathbf{S}$ holds in $\Mtaumax(\D)$ if and only if the $d$-separation $(i, t_i) \ci (j, t_j) ~|~ \mathbf{S}$ holds in $\D$. The statement now follows because $\D$ has repeating separating sets as implied by part 4 of Lemma~4.3.

\textbf{4.}
Let $(i, t_i)$ and $(j, t_j)$ be non-adjacent in $\Mtaumax(\D)$ and without loss of generality assume $t_i \leq t_j$. Consequently, there is a set of vertices $\mathbf{S}$ in $\Mtaumax(\D)$ with $\mathbf{S} \cap \{(i, t_i), (j, t_j)\} = \emptyset$ such that $(i, t_i) \ci (j, t_j) ~|~ \mathbf{S}$ in $\Mtaumax(\D)$. Due to time order of $\D$, no $(k, t_k)$ with $t_j < t_k$ can be an ancestor of $(i, t_i)$ or of $(j, t_j)$. Lemma S5 in the supplementary material of \citet{LPCMCI} then asserts that $(i, t_i) \ci (j, t_j) ~|~ \mathbf{S}^\prime$ in $\Mtaumax(\D)$ with $\mathbf{S}^\prime = \mathbf{S} \cap \{(l, t_l) ~|~ t_l \leq t_j\}$. Now take any $\Delta t \in \mathbb{Z}$ with $0 \leq \Delta t \leq t - t_j$ and let $\mathbf{S}^\prime_{\Delta t}$ be obtained by shifting all vertices in $\mathbf{S}$ forward in time by $\Delta t$ time steps. By construction of $\mathbf{S}^\prime$ all nodes in $\mathbf{S}^\prime_{\Delta t}$ are within $[t-\taumax, t]$. The repeating separating sets property of $\Mtaumax(\D)$, as asserted by part 2 of Lemma~4.4. and already proven, then implies $(i, t_i + \Delta t) \ci (j, t_j + \Delta t) ~|~ \mathbf{S}^\prime_{\Delta t}$. This fact proves the contraposition of the statement.

\textbf{5.}
Part 1 of Lemma~4.4 implies that $\Mtaumax(\D)$ has repeating orientations. Thus, part 1 of Lemma~4.3 shows that $\Mtaumax(\D)$ would also necessarily have repeating edges if $\Mtaumax(\D)$ necessarily had repeating adjacencies, thereby contradicting part 3 of Lemma~3.7.
\end{proof}

\subsection{Proofs for Sec.~4.4 and of Lemma~\ref{lemmaapp:no_unobservables}}~\label{secapp:contains_proof_for_why_no_unobservable_special}

\begin{mylemma}\label{lemmaapp:stat_of_implied_DMAG}
Let $\G$ be a directed partial mixed graph with time series structure that has repeating orientations and past-repeating adjacencies. Then, $\stat(\G)$ is the unique subgraph $\G$ in which $(i, t_i)$ and $(j, t_j)$ with $\tau = t_j - t_i \geq 0$ are adjacent if and only if $(i, t - \tau)$ and $(j, t)$ are adjacent in $\G$.
\end{mylemma}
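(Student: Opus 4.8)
The plan is to show that $\stat(\G)$ satisfies the stated adjacency characterization and then that this characterization pins down a subgraph of $\G$ uniquely. Throughout, write $\V = \Iindex \times \Tindex$ with $\Tindex = \{t - \tau ~|~ 0 \leq \tau \leq \taumax\}$, so that $t$ is the latest time step. The observation driving everything is that for a pair $(i, t_i), (j, t_j)$ of lag $\tau = t_j - t_i \geq 0$, the pair $(i, t - \tau), (j, t)$ is the \emph{latest} time-shifted placement: writing a general placement as $(i, t_i + \Delta t), (j, t_j + \Delta t)$, the requirement $\{(i, t_i + \Delta t), (j, t_j + \Delta t)\} \subset \V$ forces $t_j + \Delta t \leq t$, so every valid placement is obtained from $(i, t - \tau), (j, t)$ by a further shift $\delta \leq 0$ into the past.

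For the forward direction I would suppose $(i, t_i)$ and $(j, t_j)$ are adjacent in $\stat(\G)$, say by an edge of type $\bullet$. By the second clause of Def.~\ref{def:stationarification} this edge is present with the same type at every valid time shift; choosing $\Delta t = t - t_j$ places it at $(i, t - \tau) \astast (j, t)$, so these are adjacent in $\G$. For the converse, suppose $(i, t - \tau)$ and $(j, t)$ are adjacent in $\G$ by an edge of type $\bullet$. By past-repeating adjacencies (Def.~\ref{def:time_shift_persistent}) this adjacency persists under every shift $\delta \leq 0$ that keeps both vertices in $\V$, and by repeating orientations each such shifted edge again has type $\bullet$. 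Since by the first paragraph these downward shifts exhaust exactly the valid placements of the lag-$\tau$ pair, we obtain $((i, t_i + \Delta t), (j, t_j + \Delta t)) \in \E_{\bullet}$ for every valid $\Delta t$, which is precisely the condition in Def.~\ref{def:stationarification} for $(i, t_i) \astast (j, t_j)$ to lie in $\stat(\G)$. Hence $\stat(\G)$ realizes the characterization.

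For uniqueness, I would note that the biconditional, quantified over all pairs of vertices in $\V$, determines the entire skeleton of any subgraph satisfying it: a pair is adjacent exactly when its latest placement is adjacent in $\G$, so the vertex set must be $\V$ and the adjacency set is fixed. Since a subgraph of $\G$ shares the edge type of $\G$ on every common adjacency (by the definition of subgraph in Sec.~\ref{sec:appendix}), fixing the skeleton fixes all orientations as well. Thus there is exactly one such subgraph, and by the previous paragraph it equals $\stat(\G)$.

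The main obstacle is the converse direction, where a single adjacency at the latest placement must be upgraded to the full ``same edge, same type, at every shift'' demanded by Def.~\ref{def:stationarification}. This is exactly where both hypotheses enter and must be combined correctly: past-repeating adjacencies supplies the adjacencies at the earlier placements, while repeating orientations is needed separately to guarantee that these shifted edges carry the identical mark configuration $\bullet$—adjacency alone does not suffice, since Def.~\ref{def:stationarification} requires a fixed edge type across all shifts. The restriction $\tau \geq 0$ together with $t$ being the maximal time is what makes all relevant shifts point into the past, so that only the past-directed hypotheses are called upon.
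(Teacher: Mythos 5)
Your proof is correct and follows essentially the same route as the paper's: the forward direction reads off the defining property of $\stat(\G)$ (the paper equivalently invokes the repeating-edges property of $\stat(\G)$ together with $\stat(\G) \subset \G$), the converse combines past-repeating adjacencies with repeating orientations to verify the condition of Def.~\ref{def:stationarification}, and uniqueness follows because the biconditional fixes the skeleton while the subgraph relation to $\G$ fixes all edge marks. Your explicit observation that every valid placement of a lag-$\tau$ pair is a past-shift of its latest placement $(i, t-\tau), (j, t)$ is exactly the implicit justification the paper relies on.
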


\begin{proof}[Proof of Lemma~\ref{lemmaapp:stat_of_implied_DMAG}.]
The statement uniquely determines $\stat(\G)$ for the following reason: First, as is immediate from Def.~4.6 and asserted by the statement, $\stat(\G)$ is a subgraph of $\G$. Second, the statement specifies the set of edges that are in $\G$ but not in $\stat(\G)$. Consequently, $\stat(\G)$ is obtained by deleting a specified set of edges from $\G$.

It remains to be shown that $\stat(\G)$ has the asserted property: First, consider two vertices $(i, t_i)$ and $(j, t_j)$ with $\tau = t_j - t_i \geq 0$ that are adjacent in $\stat(\G)$. Since $\stat(\G)$ has repeating edges we thus $(i, t - \tau)$ and $(j, t)$ are adjacent in $\stat(\G)$, which in turn gives that $(i, t - \tau)$ and $(j, t)$ are adjacent in $\G$ because it is a supergraph of $\stat(\G)$. Second, consider two vertices $(i, t - \tau)$ and $(j, t)$ that are adjacent in $\G$. The past-repeating adjacencies property of $\G$ then implies that $(i, t_i + \Delta t)$ and $(j, t_j + \Delta t)$ are adjacent in $\Mtaumax(\D)$ for all $\Delta t$ with $(i, t_i + \Delta t), (j, t_j + \Delta t) \in \V$. Moreover, since $\G$ has repeating orientations, all these edges have the same orientation. By the second point in Def.~4.6 we thus get that $(i, t_i)$ and $(j, t_j)$ are adjacent in $\stat(\G)$.
\end{proof}

\begin{proof}[Proof of Lemma~4.7]
Apply Lemma~\ref{lemmaapp:stat_of_implied_DMAG} to $\G = \Mtaumax(\D)$.
\end{proof}

\begin{proof}[Proof of Lemma~\ref{lemmaapp:no_unobservables}]
Denote the subgraph of $\D$ induced on $\Iindex \times \TindexO$ as $\D_{[t-\taumax, t]}$. This graph clearly has the same set of vertices and the same time series structure as $\mathcal{M}_{\IindexO \times \TindexO}(\D)$ because $\IindexO = \Iindex$ by assumption.

First, we show that $\D_{[t-\taumax, t]}$ and $\stat(\mathcal{M}_{\IindexO \times \TindexO}(\D))$ have the same adjacencies: Let $(i, t - \tau)$ and $(j, t)$ with $0 \leq \tau = t_j - t_i$ be distinct vertices in $\D$, where without loss of generality $(j, t) \notin \an((i, t-\tau), \D)$. If $(i, t - \tau)$ and $(j, t)$ are adjacent in $\D$, then there is no set $\mathbf{S}$ with $\mathbf{S} \cap \{(i, t - \tau), \,(j, t)\} = \emptyset$ that $d$-separates them. If $(i, t - \tau)$ and $(j, t)$ are non-adjacent in $\D$, then $(i,t-\tau) \ci (j, t) ~|~ \mathbf{S}$ with $\mathbf{S} = \pa((j, t), \D) \setminus \{(i, t-\tau)\}$. By time order of $\D$ and the definition of $\porder$, all vertices in $\pa((j, t), \D) \setminus \{(i, t-\tau)\}$ are within $[t-\porder, t]$. Since $\porder \leq \taumax$ by assumption we thus get: $(i, t - \tau)$ and $(j, t)$ can be $d$-separated in $\D$ by a set of vertices in $\IindexO \times \TindexO$ if and only if they are non-adjacent in $\D$. In combination with repeating edges of $\D$ and Lemma~4.7 the desired claim follows.

Second, $\D_{[t-\taumax, t]}$ and $\mathcal{M}_{\IindexO \times \TindexO}(\D)$ also have the same edge orientations because they have the same ancestral relationships according Lemma~\ref{lemmaapp:dag_and_mag_same_ancestral}.
\end{proof}

\begin{mylemma}\label{lemmaapp:stat_of_implied_DMAG_is_DMAG}
Let $\M$ be a DMAG with time series structure that is time ordered and has repeating orientations and past-repeating adjacencies. Then, $\stat(\M)$ is a DMAG.
\end{mylemma}

\begin{proof}[Proof of Lemma~\ref{lemmaapp:stat_of_implied_DMAG_is_DMAG}]
We have to show that $\stat(\M)$ does not have directed cycles, does not have almost directed cycles, and is maximal.

\emph{No (almost) directed cycles:} Assume that $\stat(\M)$ has a directed or an almost direct cycle. Then, since $\stat(\M)$ is a subgraph of $\M$, also $\M$ has the same directed or almost directed cycle. But then $\M$ is not a DMAG. Contradiction.

\emph{Maximality:} Assume the opposite, i.e., assume in $\stat(\M)$ there are non-adjacent vertices $(i, t_i)$ and $(j, t_j)$, where without loss of generality $\tau = t_j - t_i \geq 0$, between which there is an inducing path $\pi$. We note that $\stat(\M)$ is time ordered because it is a subgraph of the time ordered graph $\M$. Since by definition of inducing paths all vertices on $\pi$ are ancestors of $(i, t_i)$ or $(j, t_j)$, we get that all vertices on $\pi$ are within the time window $[t-\taumax, t_j]$. The repeating edges property of $\stat(\M)$ now shows that $\pi_{t - t_j}$, defined as the ordered sequence of vertices obtained by shifting all vertices on $\pi$ forward in time by $t - t_j$ time steps, is a path in $\stat(\M)$ whose edges are orientated in the same way as the corresponding edges of $\pi$. Moreover, by combining part 1 of Lemma~4.3 with part 1 of Lemma~\ref{lemmaapp:properties_stationarification} we see that the stationarification $\stat(\M)$ has repeating ancestral relationships. Hence, $\pi_{t - t_j}$ is an inducing path between $(i, t - \tau)$ and $(j, t)$ in $\stat(\M)$. Since $\stat(\M) \subseteq \M$, $\pi_{t - t_j}$ is also in $\M$ an inducing path between $(i, t - \tau)$ and $(j, t)$. Maximality of $\M$ thus requires $(i, t - \tau)$ and $(j, t)$ to be adjacent in $\M$. According to Lemma~\ref{lemmaapp:stat_of_implied_DMAG} we then obtain that $(i, t - \tau)$ and $(j, t)$ are adjacent in $\stat(\M)$. Since $(i, t_i)$ and $(j, t_j)$ are non-adjacent in $\stat(\M)$ by assumption, this observation contradicts repeating edges of $\stat(\M)$.
\end{proof}

\begin{proof}[Proof of Lemma~4.8]
Apply Lemma~\ref{lemmaapp:stat_of_implied_DMAG_is_DMAG} to $\M = \Mtaumax(\D)$.
\end{proof}

\begin{proof}[Proof of Lemma~4.10]
Assume $(i, t_i) \in \an((j,t_j), \Mtaumaxstat(\D))$. This assumption means that in $\Mtaumaxstat(\D)$ there is a directed path $\pi$ from $(i, t_i)$ to $(j, t_j)$. Since $\Mtaumaxstat(\D)$ is a subgraph of $\Mtaumax(\D)$, this path $\pi$ is also in $\Mtaumax(\D)$. Hence, $(i, t_i) \in \an((j,t_j), \Mtaumax(\D))$.

Assume $(i, t_i) \in \an((j,t_j), \Mtaumax(\D))$. This assumption by Lemma~\ref{lemmaapp:dag_and_mag_same_ancestral} implies that $(i, t_i) \in \an((j,t_j), \D)$, and hence there is a directed path $\pi$ from $(i, t_i)$ to $(j, t_j)$ in $\D$. Since $\D$ is time ordered, all vertices on $\pi$ are within $[t - \taumax, t]$. Let $((k_1, t_1), \dots , (k_n, t_n))$ with $(k_1, t_1) = (i, t_i)$ and $(k_n, t_n) = (j, t_j)$ be the ordered sequence of observed vertices on $\pi$. For all $1 \leq m \leq n-1$ let $\pi_m$ be the ordered sequence of vertices obtained by shifting $\pi((k_m, t_m), (k_{m+1}, t_{m+1}))$ by $t - t_{m+1}$ time steps forward in time. These paths $\pi_m$ are directed paths from $(k_m, t - (t_{m+1} - t_m))$ to $(k_{m+1}, t)$ in $\D$ and all their non end-point vertices unobservable. Hence, the paths $\pi_m$ cannot be blocked by any set of observable variables, which implies that in $\Mtaumax(\D)$ there are the edges $(k_m, t - (t_{m+1} - t_m)) \tailhead (k_{m+1}, t)$. According to Lemma~4.7, we thus get that $(k_m, t - (t_{m+1} - t_m)) \tailhead (k_{m+1}, t)$ are in $\Mtaumaxstat(\D)$, which due to repeating edges of $\Mtaumaxstat(\D)$ in turn gives $(k_m, t_{m}) \tailhead (k_{m+1}, t_{m+1})$ in $\Mtaumaxstat(\D)$. These edges combine to a directed path from $(k_1, t_1) = (i, t_i)$ to $(k_n, t_n) = (j, t_j)$ in $\Mtaumaxstat(\D)$, hence $(i, t_i) \in \an((j,t_j), \Mtaumaxstat(\D))$.
\end{proof}

\begin{mylemma}\label{lemmaapp:stat_of_rep_edges}
Let $\G$ be a directed partial mixed graph with time structure that has repeating edges. Then, $\stat(\G) = \G$.
\end{mylemma}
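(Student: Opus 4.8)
The plan is to establish $\stat(\G) = \G$ by proving the two edge-set inclusions $\stat(\G) \subset \G$ and $\G \subset \stat(\G)$; since the vertex sets of the two graphs coincide by the first point of Def.~\ref{def:stationarification}, these inclusions suffice for equality.

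The inclusion $\stat(\G) \subset \G$ is immediate and does not use the hypothesis. If $((i, t_i), (j, t_j)) \in \E_{\bullet}$ in $\stat(\G)$, then the defining condition in the second point of Def.~\ref{def:stationarification} holds for all admissible $\Delta t$, in particular for $\Delta t = 0$, which directly yields $((i, t_i), (j, t_j)) \in \E_{\bullet}$ in $\G$ of the same type. Hence every edge of $\stat(\G)$ is an edge of $\G$ with the same orientation.

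The reverse inclusion $\G \subset \stat(\G)$ is the only step that invokes the assumption, and it is where the argument lives. Given an edge $((i, t_i), (j, t_j)) \in \E_{\bullet}$ in $\G$ with $\bullet \in \{\tailhead, \headhead, \ohead, \oo\}$, the repeating-edges property of Def.~\ref{def:repeating_edges} asserts precisely that $((i, t_i + \Delta t), (j, t_j + \Delta t)) \in \E_{\bullet}$ in $\G$ for every $\Delta t$ with $\{(i, t_i + \Delta t), (j, t_j + \Delta t)\} \subset \V$. This is exactly the condition in the second point of Def.~\ref{def:stationarification} for the edge $((i, t_i), (j, t_j))$ of type $\bullet$ to belong to $\stat(\G)$, so every edge of $\G$ is an edge of $\stat(\G)$ of the same type. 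Combining the two inclusions gives $\stat(\G) = \G$.

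The only thing to watch is that the universal quantifier over shifts $\Delta t$ in Def.~\ref{def:stationarification} matches the hypothesis of Def.~\ref{def:repeating_edges} verbatim, including the side condition that both shifted endpoints lie in $\V$, and that edge types are tracked consistently through both definitions; there is no genuine obstacle beyond this bookkeeping. Alternatively, the statement follows in one line from the second part of Lemma~\ref{lemmaapp:properties_stationarification} applied with $\G^\prime = \G$: since $\G$ is trivially a subgraph of itself and has repeating edges by assumption, it is a subgraph of $\stat(\G)$, which together with the elementary inclusion $\stat(\G) \subset \G$ gives equality. I would nonetheless present the direct two-inclusion argument, since it is fully self-contained and does not presuppose Lemma~\ref{lemmaapp:properties_stationarification}.
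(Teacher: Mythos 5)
Your proposal is correct and, in essence, coincides with the paper's proof: the paper establishes $\G \subset \stat(\G)$ by citing part 2.~of Lemma~\ref{lemmaapp:properties_stationarification} with $\G^\prime = \G$ and combines this with the trivial inclusion $\stat(\G) \subset \G$ from Def.~\ref{def:stationarification} --- exactly the one-line alternative you mention at the end. Your primary two-inclusion argument merely inlines the content of that lemma, observing (correctly) that the repeating-edges condition of Def.~\ref{def:repeating_edges} matches the universal quantifier in Def.~\ref{def:stationarification} verbatim, so it is a valid and fully self-contained presentation of the same idea.
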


\begin{proof}[Proof of Lemma~\ref{lemmaapp:stat_of_rep_edges}.]
Apply part 2 of Lemma~\ref{lemmaapp:properties_stationarification} for $(\G^\prime, \G) = (\G, \G)$ to see that $\G$ is a subgraph of $\stat(\G)$. Since $\stat(\G)$ is a subgraph of $\G$, as immediately implied by the second point in Def.~4.6, this observation shows $\stat(\G) = \G$.
\end{proof}

\begin{mylemma}\label{lemmaapp:stat_of_stat}
Let $\G$ be a directed partial mixed graph with time structure. Then $\stat(\stat(\G)) = \stat(\G)$.
\end{mylemma}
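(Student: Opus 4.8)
The plan is to exploit the two preceding auxiliary lemmas, which between them reduce the claim to a single step. First I would observe that $\stat(\G)$ is again a directed partial mixed graph with time series structure: stationarification leaves the vertex set $\V$ unchanged and only deletes edges (as is immediate from the second point of Def.~\ref{def:stationarification}), so whatever decomposition $\V = \Iindex \times \Tindex$ witnesses the time series structure of $\G$ equally witnesses that of $\stat(\G)$. Hence $\stat(\G)$ is an admissible input to both Lemma~\ref{lemmaapp:properties_stationarification} and Lemma~\ref{lemmaapp:stat_of_rep_edges}.

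Next I would invoke part 1 of Lemma~\ref{lemmaapp:properties_stationarification}, which asserts that $\stat(\G)$ has repeating edges. Then I would apply Lemma~\ref{lemmaapp:stat_of_rep_edges} with its input graph taken to be $\stat(\G)$: since $\stat(\G)$ is a directed partial mixed graph with time series structure that has repeating edges, that lemma yields $\stat(\stat(\G)) = \stat(\G)$, which is exactly the assertion.

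There is essentially no obstacle here, since the substantive content already resides in the two cited lemmas. The only point requiring (trivial) care is the verification that $\stat(\G)$ satisfies the hypotheses of Lemma~\ref{lemmaapp:stat_of_rep_edges}, namely that it carries a time series structure and has repeating edges; both follow immediately, the former from the definition of stationarification and the latter from part 1 of Lemma~\ref{lemmaapp:properties_stationarification}. Conceptually, this lemma merely records that $\stat$ is a projection onto the class of graphs with repeating edges: the first application lands in that class, and on that class $\stat$ acts as the identity, so a second application changes nothing.
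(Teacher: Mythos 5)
Your proposal is correct and follows exactly the paper's own argument: combine part 1 of Lemma~\ref{lemmaapp:properties_stationarification} (that $\stat(\G)$ has repeating edges) with Lemma~\ref{lemmaapp:stat_of_rep_edges} applied to $\stat(\G)$. Your additional check that $\stat(\G)$ inherits the time series structure is a sensible, if routine, detail that the paper leaves implicit.
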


\begin{proof}[Proof of Lemma~\ref{lemmaapp:stat_of_stat}]
Combine part 1 of Lemma~\ref{lemmaapp:properties_stationarification} with Lemma~\ref{lemmaapp:stat_of_rep_edges}.
\end{proof}

\subsection{Proofs for Sec.~4.5 other than Lemma~4.14}

\begin{proof}[Proof of Lemma~4.12]
This statement is implied by Theorem 6.4 in \citet{richardson2002}.
\end{proof}

\subsection{Proof of Lemma~4.14}
We split the proof into three parts that are respectively given in Secs.~\ref{secapp:proof_part_1}, \ref{secapp:proof_part_2} and \ref{secapp:proof_part_3}. Moreover, we collect several auxiliary results and definitions in \ref{secapp:proof_prelim}. For ease of notation, in Sec.~\ref{secapp:proof_part_3} we do not always denote vertices by the tuples of their variable and time indices but sometimes just with a single character, for example $v$ instead of $(i, t)$. 

\subsubsection{Auxiliary results and definitions}\label{secapp:proof_prelim}

\begin{mylemma}\label{lemmaapp:meaning_of_edges_in_DMAG}
Let $\D$ be a DAG over vertices $\V$ and let $\Ovar \subseteq \V$ be the subset of observed vertices. Then:
\begin{enumerate}
  \item If $i \tailhead j$ in $\M_{\Ovar}(\D)$, then for every subset $\mathbf{S} \subseteq \Ovar$ that does not contain $i$ or $j$ there is path $\pi$ between $i$ and $j$ in $\D$ that is active given $\mathbf{S}$ and into $j$.
  \item If $i \headhead j$ in $\M_{\Ovar}(\D)$, then for every subset $\mathbf{S} \subseteq \Ovar$ that does not contain $i$ or $j$ there is path $\pi$ between $i$ and $j$ in $\D$ that is active given $\mathbf{S}$ and into both $i$ and $j$.
\end{enumerate}
\end{mylemma}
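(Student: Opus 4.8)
The plan is to reduce the statement to two standard features of the marginalization of \cite{richardson2002}: the correspondence between adjacency in $\M$ and inducing paths in $\D$, and the correspondence between endpoint marks in $\M$ and ancestral relations in $\D$. First I would record the orientation dictionary. By Prop.~\ref{propapp:dag_and_mag_same_ancestral} ancestral relations among vertices of $\Ovar$ agree in $\D$ and $\M$, and since $\M$ is a DMAG it has neither directed nor almost directed cycles. Hence a head at an endpoint $v$ of the edge between $i$ and $j$ forces $v \notin \an(w, \D)$, where $w$ is the other endpoint: were $v \in \an(w, \D)$, then $v \in \an(w, \M)$, which together with the edge would produce a directed cycle (if the edge is $w \tailhead v$) or an almost directed cycle (if it is $w \headhead v$). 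In particular $i \tailhead j$ gives $j \notin \an(i, \D)$, while $i \headhead j$ gives both $i \notin \an(j, \D)$ and $j \notin \an(i, \D)$.

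Next I would invoke that $i$ and $j$ are adjacent in $\M$ if and only if there is an inducing path between them in $\D$ relative to $\La$, that is, a path $\pi$ all of whose non-endpoint vertices are colliders on $\pi$ or lie in $\La$, and all of whose colliders are ancestors of $i$ or $j$ in $\D$. I would then show that whenever $j \notin \an(i, \D)$ every such $\pi$ is into $j$: assuming the edge of $\pi$ at $j$ were out of $j$, the non-collider vertices along $\pi$ (which lie in $\La$) form a directed chain leaving $j$; this chain either reaches the endpoint $i$, giving $j \in \an(i, \D)$, or first meets a collider $v$, which by the inducing-path property lies in $\an(\{i,j\}, \D)$ while being a descendant of $j$ — the former alternative forces $j \in \an(i, \D)$ and the latter a directed cycle, both impossible. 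The same argument applied at $i$ (using $i \notin \an(j, \D)$) shows that in the bidirected case $\pi$ is additionally into $i$.

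Finally, I would fix $\mathbf{S} \subseteq \Ovar \setminus \{i,j\}$ and upgrade the inducing path to an active path with the same endpoint orientation. On $\pi$ the non-colliders lie in $\La$ and hence never belong to $\mathbf{S} \subseteq \Ovar$, so the only possible blocking occurs at colliders $c \in \an(\{i,j\}, \D) \setminus \an(\mathbf{S}, \D)$. For each such $c$ there is a directed path from $c$ to $i$ or to $j$ that avoids $\mathbf{S}$ (a vertex of $\mathbf{S}$ on it would put $c \in \an(\mathbf{S}, \D)$); rerouting along it turns $c$ into a non-collider, and $c \notin \mathbf{S}$ because $c \notin \an(\mathbf{S}, \D)$, so $c$ is then unblocked, while the rerouting subpath enters $i$ or $j$ through an arrowhead and thus preserves the endpoint orientations established above. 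Iterating over the finitely many blocking colliders — always rerouting the one closest to the relevant endpoint and trimming at the first reintersection to retain a genuine path — produces the desired path into $j$ (and, in the bidirected case, into $i$ as well).

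The hard part will be this last step: arranging the rerouting so that it simultaneously (i) yields a bona fide path rather than a walk, (ii) is active given $\mathbf{S}$, and (iii) is orientation-preserving at the endpoints. I expect to handle it by induction on the number of blocking colliders, which is essentially the standard ``inducing path implies $d$-connection for every conditioning set'' argument of \cite{richardson2002}, specialized so as to keep track of the endpoint marks.
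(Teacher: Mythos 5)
Your proposal is correct and follows essentially the same route as the paper: head marks in $\M$ force non-ancestorship in $\D$, adjacency yields an inducing path, the closest-collider/directed-chain contradiction shows the inducing path is into $j$ (and into $i$ in the bidirected case), and the inducing path is then converted into a $d$-connecting path given $\mathbf{S}$ that preserves the endpoint marks. The only difference is that the paper handles your ``hard part'' by directly citing Lemmas~6.1.1 and 6.1.2 of \cite{Spirtes2000}, which are exactly the rerouting-by-induction argument you sketch, so no new proof of that step is needed.
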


\begin{myremark}[on Lemma~\ref{lemmaapp:meaning_of_edges_in_DMAG}]
This result might have appeared in the literature before. Also note that the presence of $i \tailhead j$ in $\M$ does \emph{not} imply that for all $\mathbf{S}$ as above there is path between $i$ and $j$ in $\D$ that is active given $\mathbf{S}$ and out of $i$. As an example, consider the DAG over $\V = \{i, j, k, l\}$ constituted by $i \tailhead j \tailhead k$ together with $i \headtail l \tailhead k$ and choose $\Ovar = \{i, j, k\}$: Although $i \tailhead k$ in $\M$, for $\mathbf{S} = \{j\}$ the only active path in $\D$ is $i \headtail l \tailhead k$.
\end{myremark}

\begin{proof}[Proof of Lemma~\ref{lemmaapp:meaning_of_edges_in_DMAG}]
\textbf{1.}
The fact that $i \astast j$ is in $\M_{\Ovar}(\D)$ is by Theorem 4.2 in \citet{richardson2002} equivalent to the existence of an inducing path $\rho$ relative to $\Ovar$ in $\D$ between $i$ and $j$. Assume $\rho$ is out of $j$. Then, because $j$ is not an ancestor of $i$ according to $i \asthead j$ in $\M_{\Ovar}(\D)$, there is at least one collider on $\rho$. By definition of inducing paths, all colliders on $\rho$ are ancestors of $i$ or $j$. Let $k$ be the collider on $\rho$ that is closest to $j$ on $\rho$. Because $\rho$ is out of $j$, $j$ is an ancestor of $k$. Transitivity of ancestorship thus implies that $j$ is an ancestor of $i$ or $j$. Both options are a contradiction because there are no directed cycles and because $i \asthead j$ is in $\M_{\Ovar}(\D)$. Hence, $\rho$ is into $j$. The statement now follows by combining Lemmas~6.1.1. and 6.1.2 in \citet{Spirtes2000}.

\textbf{2.}
Arguments similar to those in the proof of part 1 of Lemma~\ref{lemmaapp:meaning_of_edges_in_DMAG} show that there is an inducing path $\rho$ relative to $\Ovar$ in $\D$ between $i$ and $j$ that is into both $i$ and $j$. The statement for $i \headhead j$ in $\M$ follows by Lemma~6.1.2 in \citet{Spirtes2000}.
\end{proof}

\begin{mylemma}\label{lemmaapp:meaning_of_edges_in_DMAG_2}
Let $\D$ be a DAG over vertices $\V$ and let $\Ovar \subseteq \V$ be the subset of observed vertices. Let $i \tailhead j$ be an edge in $\M_{\Ovar}(\D)$ and $\mathbf{S} \subseteq \Ovar \setminus \{i, j\}$. Then: If in $\D$ there is no path between $i$ and $j$ that is active given $\mathbf{S}$ and out of $i$, then $i$ is an ancestor of $\mathbf{S}$ in $\D$.
\end{mylemma}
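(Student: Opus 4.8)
The plan is to read off the ancestral content of the tail mark and then confront it with the blocking hypothesis. Since the edge $i \tailhead j$ in $\M$ is itself a directed path, we have $i \in \an(j, \M)$, and Prop.~\ref{propapp:dag_and_mag_same_ancestral} then gives $i \in \an(j, \D)$. Thus there exists a directed path $\rho$ from $i$ to $j$ in $\D$, and since $\rho$ begins with an edge out of $i$ it is a path between $i$ and $j$ that is out of $i$. The whole argument then turns on comparing $\rho$ with the hypothesis that no such path is active given $\mathbf{S}$.

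I would phrase this as a contrapositive. Assume $i$ is \emph{not} an ancestor of $\mathbf{S}$ in $\D$, i.e.\ no element of $\mathbf{S}$ is a descendant of $i$. Every non-endpoint vertex $v$ of $\rho$ lies on the directed subpath $\rho(i, v)$ and is therefore a descendant of $i$; by assumption $v \notin \mathbf{S}$. Because $\rho$ is a directed path it contains no colliders, so all its non-endpoint vertices are non-colliders, and as none of them lie in $\mathbf{S}$ the path $\rho$ is active given $\mathbf{S}$. But then $\rho$ is a path between $i$ and $j$ that is active given $\mathbf{S}$ and out of $i$, contradicting the hypothesis. Hence $i$ must be an ancestor of $\mathbf{S}$ in $\D$.

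I do not anticipate a real obstacle; the proof is short once the right object is chosen. The only point needing care is that one should reason with the directed path from $i$ to $j$ (which is guaranteed out of $i$), rather than with the generic active path furnished by Lemma~\ref{lemmaapp:meaning_of_edges_in_DMAG}, whose orientation at $i$ is uncontrolled and which is only known to be into $j$. The contrapositive framing also cleanly absorbs the degenerate case in which $\rho$ is the single edge $i \tailhead j$ with no intermediate vertices, since such a path is vacuously active given $\mathbf{S}$ and already yields the contradiction.
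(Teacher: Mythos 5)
Your proof is correct and matches the paper's argument: both extract the directed path from $i$ to $j$ in $\D$ guaranteed by the tail mark at $i$, note that such a path is out of $i$ and collider-free, and conclude that blocking it forces a vertex of the path—hence a descendant of $i$—into $\mathbf{S}$. The only difference is cosmetic: the paper argues directly (the path must be blocked, so $\mathbf{S}$ meets it), while you take the contrapositive; the substance is identical.
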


\begin{proof}[Proof of Lemma~\ref{lemmaapp:meaning_of_edges_in_DMAG_2}]
We know that $i$ is an ancestor of $j$ in $\D$ because $i \tailhead j$ in $\M_{\Ovar}(\D)$. Hence, in $\D$ there is a directed path $\pi$ from $i$ to $j$. Assuming that in $\D$ there is no path between $i$ and $j$ that is active given $\mathbf{S}$ and out of $i$, $\pi$ must be blocked by $\mathbf{S}$. Consequently, $\mathbf{S}$ contains a vertex of $\pi$ and thus a descendant of $i$.
\end{proof}

\begin{mydef}[Observable vertices within a time window]
The set of observable vertices within a time window $[t_1, t_2]$, where $t_1 \leq t_2$, are denoted by $\mathbf{O}(t_1, t_2)$.
\end{mydef}

\begin{mydef}[Observable vertices within a time window not on a given path]
$\mathbf{O}(t_1, t_2)[\pi]$ is the set of all vertices in $\mathbf{O}(t_1, t_2)$ less the non end-point vertices of the path $\pi$.
\end{mydef}

\begin{mydef}[Almost adjacent]\label{defapp:almost_adjacent}
Two distinct observable vertices $(i, t_i)$ and $(j, t_j)$ in $\Dc(\Mtaumax(\D))$ are \emph{almost adjacent} if there is an unobservable vertex $(k, t_k)$ such that $(i, t_i) \headtail (k, t_k) \tailhead (j, t_j)$ in $\Dc(\Mtaumax(\D))$.
\end{mydef}

\begin{mylemma}\label{lemmaapp:almost_adjacent_unique}
If $(i, t_i)$ and $(j, t_j)$ are almost adjacent in $\Dc(\Mtaumax(\D))$, then there is a unique unobservable vertex $(k, t_k)$ such that $(i, t_i) \headtail (k, t_k) \tailhead (j, t_j)$ in $\Dc(\Mtaumax(\D))$.
\end{mylemma}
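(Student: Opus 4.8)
The plan is to read the local structure around each unobservable vertex straight off Definition~\ref{def:canonical_ts-DAG} and then argue that the entire datum specifying a common unobservable parent is forced by its two children. The key structural fact I would record first is this: for every index $(a,b,\tau) \in \mathbf{J}$ and every $s \in \mathbb{Z}$, the unobservable vertex $((a,b,\tau), s)$ has no incoming edges and exactly two outgoing edges, namely $((a,b,\tau), s) \tailhead (a, s)$ and $((a,b,\tau), s) \tailhead (b, s+\tau)$. These are obtained by setting $\Delta t = s-t$ in the second bullet and $\Delta t = s - t + \tau$ in the third bullet of Definition~\ref{def:canonical_ts-DAG}, and no other edge of $\Dc(\Mtaumax(\D))$ is incident to an unobservable vertex. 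Hence every unobservable vertex has exactly two (distinct) children, and it is completely determined by the ordered datum $(a, b, \tau, s)$.

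Next I would take an unobservable vertex $(k, t_k)$ with $k = (a,b,\tau)$, $t_k = s$ realizing $(i, t_i) \headtail (k, t_k) \tailhead (j, t_j)$. By the fact above its two children are exactly $(a,s)$ and $(b,s+\tau)$, so being a common parent of $(i,t_i)$ and $(j,t_j)$ is equivalent to $\{(a,s),(b,s+\tau)\} = \{(i,t_i),(j,t_j)\}$. I would then split on the timing, taking without loss of generality $t_i \le t_j$. The relevant restriction is that every element of $\mathbf{J}$ has $\tau \ge 0$: the underlying edge $(a, t-\tau) \headhead (b,t)$ can lie in $\Mtaumaxstat(\D)$ only if $t-\tau$ is in the time window $\{t-\tau' \mid 0 \le \tau' \le \taumax\}$, equivalently because $\Dc(\Mtaumax(\D))$ is time ordered (Prop.~\ref{prop:canonical_ts-DAG}). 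In the lagged case $t_i < t_j$, the child at the strictly earlier time $s$ must match the earlier observable vertex, so $a=i$, $s=t_i$, $b=j$, $\tau=t_j-t_i$, which pins down $(k,t_k)$ completely; crucially $\tau\ge 0$ is what excludes the mirror datum $a=j$, $b=i$, $\tau=t_i-t_j<0$.

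In the contemporaneous case $t_i = t_j$ we necessarily have $\tau = 0$, both children sit at time $s=t_i$, and $\{a,b\}=\{i,j\}$. Here $\tau \ge 0$ no longer distinguishes the two orderings, so uniqueness must instead follow from each (symmetric) bidirected edge of $\Mtaumaxstat(\D)$ contributing a single representative to $\E^{\stat}_{\headhead}$ and hence a single index to $\mathbf{J}$; this is exactly the convention that a directed partial mixed graph stores at most one edge per unordered pair of vertices. Under that convention only one of $(i,j,0)$, $(j,i,0)$ lies in $\mathbf{J}$, so again the common parent is unique.

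I expect this last point—the contemporaneous case—to be the only delicate one: a priori both $(i,j,0)$ and $(j,i,0)$ satisfy the defining condition of $\mathbf{J}$, and if both were admitted they would produce two distinct common unobservable parents, falsifying the claim. So the statement genuinely rests on the single-representative bookkeeping for bidirected edges rather than on any deeper graphical argument, while the lagged case is immediate from time order. I would therefore state the convention explicitly at the start of the proof and then let the two short case analyses above conclude.
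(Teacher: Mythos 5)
Your proposal is correct and follows essentially the same route as the paper, whose entire proof is a one-line appeal to Def.~\ref{def:canonical_ts-DAG}: you simply unpack what that definition forces, namely that an unobservable vertex $((a,b,\tau),s)$ has exactly the two children $(a,s)$ and $(b,s+\tau)$ and no other incident edges, so the datum $(a,b,\tau,s)$ is pinned down by the pair $\{(i,t_i),(j,t_j)\}$ once $\tau\geq 0$ is invoked. Your treatment of the contemporaneous case is the one genuinely delicate point, and your resolution is the right reading of the paper's formalism: since $\E^{\stat}_{\headhead}$ contains exactly one ordered representative per bidirected edge (directed partial mixed graphs have at most one edge per unordered pair), exactly one of $(i,j,0)$ and $(j,i,0)$ lies in $\mathbf{J}$ -- a subtlety the paper's terse proof leaves entirely implicit.
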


\begin{proof}[Proof of Lemma~\ref{lemmaapp:almost_adjacent_unique}]
Existence follows because $(i, t_i)$ and $(j, t_j)$ are almost adjacent, uniqueness follows in combination with the definition of canonical ts-DAGs (see Def.~4.13 in the main text).
\end{proof}

\begin{mylemma}\label{lemmaapp:at_most_almost_adjacent}
Let $(i, t_i)$ and $(j, t_j)$ with $t_i \leq t_j$ be distinct observable vertices in $\Dc(\Mtaumax(\D))$. Then, $(i, t_i)$ and $(j, t_j)$ are adjacent or almost adjacent in $\Dc(\Mtaumax(\D))$ if and only if $(i, t - (t_j - t_i))$ and $(j, t)$ are adjacent in $\Mtaumax(\D)$.
\end{mylemma}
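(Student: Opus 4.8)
The plan is to reduce both of the notions compared in the statement---``adjacent or almost adjacent in $\Dc(\Mtaumax(\D))$'' and ``adjacent in $\Mtaumax(\D)$''---to a common middle term, namely adjacency in the stationarification $\stat(\Mtaumax(\D))$, and then to invoke Lemma~\ref{lemma:stat_of_implied_DMAG}. First I would exploit that, by Prop.~\ref{prop:canonical_ts-DAG}, the graph $\Dc(\Mtaumax(\D))$ is a ts-DAG and hence has repeating edges. Consequently the relations ``adjacent'' and ``almost adjacent'' between two observable vertices are invariant under a simultaneous time shift of both endpoints (and, in the almost-adjacency case, of the intermediate unobservable vertex). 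Shifting $(i, t_i)$ and $(j, t_j)$ by $\Delta t = t - t_j$ maps them to the reference pair $(i, t - \tau)$ and $(j, t)$ with $\tau = t_j - t_i \geq 0$, so it suffices to prove the equivalence for this reference pair.

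Next I would unwind Def.~\ref{def:canonical_ts-DAG}. Since the only edges of $\Dc(\Mtaumax(\D))$ between two observable vertices are generated by the first bullet of the edge list, $(i, t-\tau)$ and $(j, t)$ are adjacent in $\Dc(\Mtaumax(\D))$ if and only if there is a directed edge between them in $\stat(\Mtaumax(\D))$ (in whichever orientation is recorded; for $\tau > 0$ time order forces $(i, t-\tau) \tailhead (j, t)$). For almost adjacency I would read off from the second and third bullets that an unobservable vertex $((i', j', \tau'), s)$ has exactly the two observable children $(i', s)$ and $(j', s + \tau')$: taking $t+\Delta t = s$ in the second bullet gives $((i',j',\tau'),s) \tailhead (i', s)$, and taking $t+\Delta t = s + \tau'$ in the third gives $((i',j',\tau'),s) \tailhead (j', s+\tau')$. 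Matching these two children (the earlier one being $(i', s)$) to the reference pair forces $(i',j',\tau') = (i,j,\tau)$, so $(i, t-\tau)$ and $(j, t)$ are almost adjacent if and only if $(i, j, \tau) \in \mathbf{J}$, which by the definition of $\mathbf{J}$ is equivalent to $(i, t-\tau) \headhead (j, t)$ in $\stat(\Mtaumax(\D))$; uniqueness of the matching unobservable vertex is exactly Lemma~\ref{lemmaapp:almost_adjacent_unique}. Because $\Mtaumax(\D)$, and therefore its subgraph $\stat(\Mtaumax(\D))$, is a DMAG and so carries only directed and bidirected edges, combining the two cases yields the middle equivalence: $(i, t-\tau)$ and $(j, t)$ are adjacent or almost adjacent in $\Dc(\Mtaumax(\D))$ if and only if they are adjacent in $\stat(\Mtaumax(\D))$.

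Finally I would apply Lemma~\ref{lemma:stat_of_implied_DMAG} with $t_j = t$, which states precisely that $(i, t - \tau)$ and $(j, t)$ are adjacent in $\stat(\Mtaumax(\D))$ if and only if they are adjacent in $\Mtaumax(\D)$. Chaining the time-shift reduction, the middle equivalence, and this last equivalence completes the proof.

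The main obstacle I anticipate is the index bookkeeping in the almost-adjacency case: one has to confirm that each inserted unobservable vertex points to exactly the intended ordered pair of observable vertices, and to reconcile the convention $\tau \geq 0$ with the symmetry of the bidirected edge---in particular the contemporaneous case $\tau = 0$, where the two possible representations $(i, j, 0)$ and $(j, i, 0)$ in $\mathbf{J}$ must be checked to produce the same almost adjacency. Everything else is a direct reading of the three bullet points of Def.~\ref{def:canonical_ts-DAG} together with the time-shift invariance supplied by repeating edges.
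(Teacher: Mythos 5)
Your proof is correct and follows essentially the same route as the paper's: both arguments use the stationarification $\stat(\Mtaumax(\D))$ as the bridge, reading adjacency and almost adjacency in $\Dc(\Mtaumax(\D))$ directly off Definition~\ref{def:canonical_ts-DAG} and relating $\stat(\Mtaumax(\D))$ to $\Mtaumax(\D)$ via past-repeating adjacencies and repeating orientations, which is exactly the content of Lemma~\ref{lemma:stat_of_implied_DMAG} that you invoke. The only cosmetic difference is where the time shift happens: you shift to the reference pair first using repeating edges of the canonical ts-DAG, whereas the paper works at the reference pair and obtains all shifted copies from the ``for all $\Delta t$'' clause of Definition~\ref{def:canonical_ts-DAG}.
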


\begin{proof}[Proof of Lemma~\ref{lemmaapp:at_most_almost_adjacent}]
\textbf{If.}
The premise is that $(i, t - (t_j - t_i)) \astast (j, t)$ in $\Mtaumax(\D)$, where $\astast$ is $\tailhead$ or $\headtail$ or $\headhead$. Past-repeating adjacencies and repeating orientations of $\Mtaumax(\D)$ then imply $(i, t - (t_j - t_i) - \Delta t) \astast (j, t - \Delta t)$ for all $0 \leq \Delta t \leq \taumax - (t_j - t_i)$, where $\astast$ is the same edge type as between $(i, t - (t_j - t_i))$ and $(j, t)$. Hence, all these edges are also in $\stat(\Mtaumax(\D))$. If $\astast$ is $\tailhead$ or $\headtail$, then the definition of canonical ts-DAGs implies $(i, t - (t_j - t_i) - \Delta t^\prime) \astast (j, t - \Delta t^\prime)$ in $\Dc(\Mtaumax(\D))$ for all $\Delta t^\prime \in \mathbb{Z}$. In particular, $(i, t_i)$ and $(j, t_j)$ are adjacent in $\Dc(\Mtaumax(\D))$. If $\astast$ is $\headhead$, then the definition of canonical ts-DAGs implies that $(i, t - (t_j - t_i) - \Delta t^\prime) \headtail ((i, j, t_j - t_i), t - \Delta t^\prime - (t_j - t_i)) \tailhead (j, t - \Delta t^\prime)$ in $\Dc(\Mtaumax(\D))$ or $t_i = t_j$ and $(i, t - \Delta t^\prime) \headtail ((j, i, 0), t - \Delta t^\prime) \tailhead (j, t - \Delta t^\prime)$ in $\Dc(\Mtaumax(\D))$ for all $\Delta t^\prime \in \mathbb{Z}$. In particular, $(i, t_i)$ and $(j, t_j)$ are almost adjacent in $\Dc(\Mtaumax(\D))$.

\textbf{Only if.}
Since the vertices $(i, t - (t_j - t_i))$ and $(j, t)$ are non-adjacent in $\Mtaumax(\D)$ they are also non-adjacent in $\Mtaumaxstat(\D)$. The statement now follows with the definition of canonical ts-DAGs.
\end{proof}

\begin{mylemma}\label{lemmaapp:at_most_almost_adjacent_2}
Let $(i, t_i)$ and $(j, t_j)$ be distinct observable vertices that are adjacent or almost adjacent in $\Dc(\Mtaumax(\D))$. Then:
\begin{enumerate}
\item $(i, t_i) \tailhead (j, t_j)$ in $\Dc(\Mtaumax(\D))$ if and only if $(i, t_i) \in \an((j, t_j), \D)$.
\item $(i, t_i) \headtail (j, t_j)$ in $\Dc(\Mtaumax(\D))$ if and only if $(j, t_j) \in \an((i, t_i), \D)$.
\item $(i, t_i)$ and $(j, t_j)$ are almost adjacent in $\Dc(\Mtaumax(\D))$ if and only if $(i, t_i) \notin \an((j, t_j), \D)$ and $(j, t_j) \notin \an((i, t_i), \D)$.
\end{enumerate}
\end{mylemma}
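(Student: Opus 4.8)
The plan is to reduce every claim to the single edge between the two reference-window vertices $(i, t-\tau)$ and $(j, t)$, where without loss of generality $t_i \le t_j$ and $\tau = t_j - t_i \ge 0$, and then to read the ancestral relationships in $\D$ directly off the orientation of that edge in $\Mtaumaxstat(\D) = \stat(\Mtaumax(\D))$. All three parts then reduce to the clean DMAG edge semantics plus the repeating-ancestral-relationships property of $\D$.

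First I would fix the correspondence between $\Dc(\Mtaumax(\D))$ and $\Mtaumaxstat(\D)$. Because $\Dc(\Mtaumax(\D))$ is a ts-DAG (Prop.~\ref{prop:canonical_ts-DAG}) and hence has repeating edges with time index set $\mathbb{Z}$, any directed edge $(i,t_i)\tailhead(j,t_j)$ or $(i,t_i)\headtail(j,t_j)$ between observable vertices can be shifted so that the later endpoint sits at time $t$. By the first bullet of Def.~\ref{def:canonical_ts-DAG} the only directed edges between observable vertices in $\Dc(\Mtaumax(\D))$ are the shifts of edges in $\E^{\stat}_{\tailhead}$, so such an edge is present (with a given orientation) if and only if the correspondingly oriented directed edge between $(i,t-\tau)$ and $(j,t)$ lies in $\Mtaumaxstat(\D)$. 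Likewise, $(i,t_i)$ and $(j,t_j)$ are almost adjacent, through a \emph{unique} latent vertex by Lemma~\ref{lemmaapp:almost_adjacent_unique}, if and only if $(i,t-\tau)\headhead(j,t)$ lies in $\Mtaumaxstat(\D)$. Since $\Mtaumaxstat(\D)$ carries at most one edge between this pair, adjacency and almost-adjacency in $\Dc(\Mtaumax(\D))$ are mutually exclusive, and by the hypothesis exactly one of them holds (this is also the content of Lemma~\ref{lemmaapp:at_most_almost_adjacent}).

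Next I would translate orientations in $\Mtaumaxstat(\D)$ into ancestral relationships in $\D$. As $\stat(\cdot)$ only deletes edges, every edge of $\Mtaumaxstat(\D)$ is an edge of $\Mtaumax(\D)$ with the same orientation, and $\Mtaumax(\D)$ is a genuine DMAG of $\D$; the standard DMAG semantics (a tail at an endpoint means it is an ancestor of the other endpoint, a head means it is not) combined with Prop.~\ref{propapp:dag_and_mag_same_ancestral} then reads the marks as ancestral relationships in $\D$. Using that $\D$ has repeating ancestral relationships (part~4 of Lemma~\ref{lemma:interrelation_repeating_properties}) to move from the pair $(i,t-\tau),(j,t)$ to the pair $(i,t_i),(j,t_j)$, I obtain for part~1 the chain: $(i,t_i)\tailhead(j,t_j)$ in $\Dc(\Mtaumax(\D))$ iff $(i,t-\tau)\tailhead(j,t)$ in $\Mtaumaxstat(\D)$ iff $(i,t-\tau)\in\an((j,t),\D)$ iff $(i,t_i)\in\an((j,t_j),\D)$. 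For the converse direction I must additionally exclude that the (existing) reference edge is bidirected or oriented the other way; both of those place a head at $(i,t-\tau)$ and hence contradict $(i,t-\tau)\in\an((j,t),\D)$, so the edge must be $(i,t-\tau)\tailhead(j,t)$. Part~2 is then fully symmetric under exchanging the two vertices.

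Finally, part~3 follows by elimination: the pair is adjacent or almost adjacent by hypothesis, these are mutually exclusive, and in the DAG $\Dc(\Mtaumax(\D))$ adjacency between two observable vertices means exactly one of $(i,t_i)\tailhead(j,t_j)$ or $(i,t_i)\headtail(j,t_j)$; by parts~1 and~2 this is equivalent to $(i,t_i)\in\an((j,t_j),\D)$ or $(j,t_j)\in\an((i,t_i),\D)$, so almost-adjacency is equivalent to the negation of both. The step I expect to require the most care is the bookkeeping in the first paragraph, namely matching the orientation of the $\Dc(\Mtaumax(\D))$ edge to that of the reference edge in $\Mtaumaxstat(\D)$ under the time shift and confirming the adjacent/almost-adjacent dichotomy, since the ancestral translation afterwards is routine once the DMAG semantics and repeating ancestral relationships are in place.
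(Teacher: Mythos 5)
Your proof is correct and follows essentially the same route as the paper's: reduce everything to the reference-time edge, translate its orientation into ancestral relationships in $\D$ via the DMAG edge semantics (Prop.~\ref{propapp:dag_and_mag_same_ancestral}) together with repeating ancestral relationships of $\D$, and match edge types to structures in $\Dc(\Mtaumax(\D))$. The only cosmetic difference is that you read the correspondence between $\Dc(\Mtaumax(\D))$ and $\Mtaumaxstat(\D)$ directly off Def.~\ref{def:canonical_ts-DAG}, whereas the paper invokes Lemma~\ref{lemmaapp:at_most_almost_adjacent}, whose proof carries out the same bookkeeping through $\stat(\Mtaumax(\D))$.
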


\begin{proof}[Proof of Lemma~\ref{lemmaapp:at_most_almost_adjacent_2}]
Assume without loss of generality that $t_i \leq t_j$, else exchange $(i, t_i)$ and $(j, t_j)$. From Lemma~\ref{lemmaapp:at_most_almost_adjacent} it then follows that $(i, t - (t_j - t_i))$ and $(j, t)$ are adjacent in $\Mtaumax(\D)$. The definition of edges in DMAGs in combination with repeating ancestral relationships of $\D$ further implies that
\begin{itemize}
\item $(i, t - (t_j - t_i)) \tailhead (j, t)$ in $\Mtaumax(\D)$ if and only if $(i, t_i) \in \an((j, t_j), \D)$,
\item $(i, t - (t_j - t_i)) \headtail (j, t)$ in $\Mtaumax(\D)$ if and only if $(j, t_j) \notin \an((i, t_i), \D)$,
\item $(i, t - (t_j - t_i)) \headhead (j, t)$ in $\Mtaumax(\D)$ if and only if $(i, t_i) \notin \an((j, t_j), \D)$ and $(j, t_j) \notin \an((i, t_i), \D)$.
\end{itemize}
Now proceed as in the proof of the \emph{if} part of Lemma~\ref{lemmaapp:at_most_almost_adjacent}.
\end{proof}

\subsubsection{Part 1: $\Mtaumax(\D)$ and $\Mtaumax(\Dc(\Mtaumax(\D)))$ have the same ancestral relationships}\label{secapp:proof_part_1}
As the first part of the proof of Lemma~4.14 we here show that $\Mtaumax(\D)$ and $\Mtaumax(\Dc(\Mtaumax(\D)))$ have the same ancestral relationships.

\begin{mylemma}\label{lemmaapp:canonical_ts-DAG_of_stat}
$\Dc(\G) = \Dc(\stat(\G))$.
\end{mylemma}

\begin{proof}[Proof of Lemma~\ref{lemmaapp:canonical_ts-DAG_of_stat}]
An inspection of Defs.~4.6 and 4.13 in the main text reveals that $\Dc(\G)$ is uniquely determined by $\stat(\G)$. The statement thus follows because $\stat(\G) = \stat(\stat(\G))$ according to Lemma~\ref{lemmaapp:stat_of_stat}.
\end{proof}

\begin{mylemma}\label{lemmaapp:stattsDMAG_same_ancestral_Dc}
The stationarified ts-DMAG $\Mtaumaxstat(\D)$ has the same ancestral relationships among vertices in $\Mtaumaxstat(\D)$ as the canonical ts-DAG $\Dc(\Mtaumax(\D))$.
\end{mylemma}

\begin{proof}[Proof of Lemma~\ref{lemmaapp:stattsDMAG_same_ancestral_Dc}]
Assume $(i, t_i) \in \an((j, t_j), \Mtaumaxstat(\D))$. This assumption means that in $\Mtaumaxstat(\D)$ there is a directed path $\pi = ((k_1, t_1), \dots, (k_n, t_n))$ from $(k_1, t_1) = (i, t_i)$ to $(k_n, t_n) = (j, t_j)$. Since $\Mtaumaxstat(\D)$ has repeating edges and is time ordered, the fact that $(k_m, t_m) \tailhead (k_{m+1}, t_{m+1})$ is in $\Mtaumaxstat(\D)$ implies $(k_m, t_m) \tailhead (k_{m+1}, t_{m+1})$ in $\Dc(\Mtaumaxstat(\D))$. Consequently, $\pi$ is also in $\Dc(\Mtaumaxstat(\D))$ and we find that $(i, t_i) \in \an((j, t_j), \Dc(\Mtaumaxstat(\D)))$.

Let $(i, t_i), (j, t_j)$ be vertices in $\Mtaumaxstat(\D)$ and assume $(i, t_i) \in \an((j, t_j), \Dc(\Mtaumaxstat(\D)))$. This assumption means in $\Dc(\Mtaumaxstat(\D))$ there is a directed path $\pi = ((k_1, t_1), \dots, (k_n, t_n))$ from $(k_1, t_1) = (i, t_i)$ to $(k_n, t_n) = (j, t_j)$. Since by definition of canonical ts-DAGs there are no edges into unobservable vertices, all vertices on $\pi$ are observed and thus also in $\Mtaumaxstat(\D)$. Moreover, again by definition of canonical ts-DAGs, any edge of $\Dc(\Mtaumaxstat(\D))$ that is between vertices in $\Mtaumaxstat(\D)$ is also in $\Mtaumaxstat(\D)$. Hence, $\pi$ is also in $\Mtaumaxstat(\D)$ and we find $(i, t_i) \in \an((j, t_j),\Mtaumaxstat(\D))$.

These considerations show that $\Mtaumaxstat(\D)$ and $\Dc(\Mtaumaxstat(\D))$ have the same ancestral relationships among vertices in $\Mtaumaxstat(\D)$. The statement follows because $\Dc(\Mtaumaxstat(\D)) = \Dc(\Mtaumax(\D))$ according to Lemma~\ref{lemmaapp:canonical_ts-DAG_of_stat}.
\end{proof}

\begin{mylemma}\label{lemmaapp:D_same_ancestral_mtaumax_as_Dc(Mtaumax(D))}
Consider a ts-DAG $\D$ and the canonical ts-DAG $\Dc(\Mtaumax(\D))$. Then:
\begin{enumerate}
  \item If $(i, t_i) \in \an((j, t_j), \D)$ and $t_j - t_i \leq \taumax$, then $(i, t_i) \in \an((j, t_j), \Dc(\Mtaumax(\D)))$.
  \item If $(i, t_i) \in \an((j, t_j), \Dc(\Mtaumax(\D)))$, then $(i, t_i) \in \an((j, t_j), \D)$.
\end{enumerate}
\end{mylemma}

\begin{proof}[Proof of Lemma~\ref{lemmaapp:D_same_ancestral_mtaumax_as_Dc(Mtaumax(D))}]
\textbf{1.}
Let $(i, t_i) \in \an((j, t_j), \D)$ with $\tau = t_j - t_i \leq \taumax$, where $\tau \geq 0$ due to time order of $\D$. The repeating ancestral relationships property of $\D$ then gives $(i, t - \tau) \in \an((j, t), \D)$, which implies $(i, t - \tau) \in \an((j, t), \Mtaumax(\D))$ by Lemma~\ref{lemmaapp:dag_and_mag_same_ancestral} and thus $(i, t - \tau) \in \an((j, t), \Mtaumaxstat(\D))$ by Lemma~4.10 and finally $(i, t - \tau) \in \an((j, t), \Dc(\Mtaumax(\D)))$ by Lemma~\ref{lemmaapp:stattsDMAG_same_ancestral_Dc}.

\textbf{2.}
Let $(i, t_i) \in \an((j, t_j), \Dc(\Mtaumax(\D)))$. This premise means that in $\Dc(\Mtaumax(\D))$ there is a directed path $\pi = ((k_1, t_1), \dots , (k_n, t_n))$ from $(k_1, t_1) = (i, t_i)$ to $(k_n, t_n) = (j, t_j)$, where $t_{m} \leq t_{m+1}$ due to time order of $\Dc(\Mtaumax(\D))$. Using repeating ancestral relationships of $\Dc(\Mtaumax(\D))$, we thus get that $(k_m, t - (t_{m+1} - t_m)) \in \an((k_{m+1}, t), \Dc(\Mtaumax(\D)))$ for all $1 \leq m \leq n-1$.  Since by definition of canonical ts-DAGs there are no edges into unobservable vertices, all vertices on $\pi$ are observable. Moreover, again due to definition of canonical ts-DAGs, $\Dc(\Mtaumax(\D))$ cannot contain edges with a lag larger than $\taumax$. These observations require $0 \leq |t_{m+1} - t_m| = t_{m+1} - t_m \leq \taumax$ and thus shows that both $(k_m, t - (t_{m+1} - t_m))$ and $(k_{m+1}, t)$ are vertices in $\Mtaumaxstat(\D)$. Using Lemma~\ref{lemmaapp:stattsDMAG_same_ancestral_Dc} we therefore get $(k_m, t - (t_{m+1} - t_m)) \in \an((k_{m+1}, t), \Mtaumaxstat(\D))$, which in turn gives $(k_m, t - (t_{m+1} - t_m)) \in \an((k_{m+1}, t), \Mtaumax(\D))$ by Lemma~4.10 and thus $(k_m, t - (t_{m+1} - t_m)) \in \an((k_{m+1}, t), \D)$ by Lemma~\ref{lemmaapp:dag_and_mag_same_ancestral} and thus $(k_m, t_m) \in \an((k_{m+1}, t_{m+1}), \D)$ by repeating ancestral relationships of $\D$. The statement now follows from transitivity of ancestorship.
\end{proof}

\begin{mylemma}\label{lemmaapp:same_ancestral_main_proposition}
$\Mtaumax(\D)$ and $\Mtaumax(\Dc(\Mtaumax(\D)))$ have the same ancestral relationships.
\end{mylemma}

\begin{proof}[Proof of Lemma~\ref{lemmaapp:same_ancestral_main_proposition}]
Combine Lemma~\ref{lemmaapp:dag_and_mag_same_ancestral} with Lemma~\ref{lemmaapp:D_same_ancestral_mtaumax_as_Dc(Mtaumax(D))}.
\end{proof}

\subsubsection{Part 2: Any adjacency in $\Mtaumax(\Dc(\Mtaumax(\D)))$ is also in $\Mtaumax(\D)$}\label{secapp:proof_part_2}
As the second part of the proof of Lemma~4.14 we here show that any adjacency in $\Mtaumax(\Dc(\Mtaumax(\D)))$ is also in $\Mtaumax(\D)$. Together with the fact that both these graphs have the same ancestral relationships, as already proven in Sec.~\ref{secapp:proof_part_1}, we then get that $\Mtaumax(\Dc(\Mtaumax(\D)))$ is a subgraph of $\Mtaumax(\D)$.

\begin{mylemma}\label{lemmaapp:MGc_subgraph_of_MG_reason}
Let $(i, t_i)$ and $(j, t_j)$ with $t - \taumax \leq t_i, t_j \leq t$ be distinct observable vertices in $\Dc(\Mtaumax(\D))$ and let $\mathbf{S} \subseteq \mathbf{O}(t-\taumax, t) \setminus\{(i, t_i), (j, t_j)\}$. Then: If $(i, t_i)$ and $(j, t_j)$ are $d$-connected given $\mathbf{S}$ in $\Dc(\Mtaumax(\D))$, then $(i, t_i)$ and $(j, t_j)$ are $d$-connected given $\mathbf{S}$ in $\D$.
\end{mylemma}

\begin{myremark}[on Lemma~\ref{lemmaapp:MGc_subgraph_of_MG_reason}]
The statement makes sense because $\D$ and $\Dc(\Mtaumax(\D))$ have the same observable time series.
\end{myremark}

\begin{proof}[Proof of Lemma~\ref{lemmaapp:MGc_subgraph_of_MG_reason}]
Let $(i, t_i)$ and $(j, t_j)$ be $d$-connected given $\mathbf{S}\subseteq \mathbf{O}(t-\taumax, t) \setminus\{(i, t_i), (j, t_j)\}$ in $\Dc(\Mtaumax(\D))$. Then, in $\Dc(\Mtaumax(\D))$ there is path $\pi$ between $(i, t_i)$ and $(j, t_j)$ that is active given $\mathbf{S}$. Since no node in $\mathbf{S}$ is after $t$, no node on $\pi$ is after $t$ because else due to time order of $\Dc(\Mtaumax(\D))$ there would be a collider on $\pi$ after $t$ that, again due to time order, cannot be unblocked by $\mathbf{S}$. Let $((k_1, t_1), \dots, (k_n, t_n))$ with $(k_1, t_1) = (i, t_i)$ and $(k_n, t_n) = (j, t_j)$ be the ordered sequence of observable vertices on $\pi$ (not necessarily temporally observed, so some of these vertices may be before $t-\taumax$). Since in $\Dc(\Mtaumax(\D))$ there are no edges into unobservable vertices and no edges with a lag larger than $\taumax$, the subpaths $\pi_m = \pi((k_m, t_m), (k_{m+1}, t_{m+1}))$ with $1 \leq m \leq n -1$ are either of the form $(k_m, t_m) \tailhead (k_{m+1}, t_{m+1})$ or $(k_m, t_m) \headtail (k_{m+1}, t_{m+1})$ or $(k_m, t_m) \headtail (l_m, t_{l_m}) \tailhead (k_{m+1}, t_{m+1})$ with $(l_m, t_{l_m})$ unobservable. In all cases $|t_m - t_{m+1}| \leq \taumax$. Now associate to each $\pi_m$ a path $\rho_m$ in $\D$ between $(k_m, t_m)$ and $(k_{m+1}, t_{m+1})$ in the following way:

\textit{Case 1:} If $\pi_m$ is $(k_m, t_m) \tailhead (k_{m+1}, t_{m+1})$, then $(k_m, t - (t_{m+1} - t_m)) \tailhead (k_{m+1}, t)$ in $\Dc(\Mtaumax(\D))$ by repeating edges of $\Dc(\Mtaumax(\D))$ and hence $(k_m, t - (t_{m+1} - t_m)) \tailhead (k_{m+1}, t)$ in $\Mtaumax(\D)$ by Lemmas~\ref{lemmaapp:at_most_almost_adjacent} and \ref{lemmaapp:at_most_almost_adjacent_2}. According to Lemma~\ref{lemmaapp:meaning_of_edges_in_DMAG} there thus is path between $(k_m, t - (t_{m+1} - t_m))$ and $(k_{m+1}, t)$ in $\D$ that is into $(k_{m+1}, t)$ and active given $\mathbf{S}_{m, t - t_{m+1}}$, where $\mathbf{S}_{m, t - t_{m+1}}$ is obtained by shifting $\mathbf{S}_m = \mathbf{S} \setminus \{(t_m, k_m), (t_{m+1}, k_{m+1})\}$ forward in time by $t - t_{m+1}$ time steps. Let $\mathbf{p}_m$ be the set of all such paths. If any path in $\mathbf{p}_m$ is out of $(k_m, t - (t_{m+1} - t_m))$, then let $\rho_{m, t - t_{m+1}}$ be any such path and let $\rho_m$ the path obtained by shifting $\rho_{m, t - t_{m+1}}$ backwards in time by $t - t_{m+1}$ time steps. If no path $\mathbf{p}_m$ is out of $(k_m, t - (t_{m+1} - t_m))$, then let $\rho_{m, t - t_{m+1}}$ be any path in $\mathbf{p}_m$ and let $\rho_m$ the path obtained by shifting $\rho_{m, t - t_{m+1}}$ backwards in time by $t - t_{m+1}$ time steps. In this latter case $(k_m, t - (t_{m+1} - t_m))$ is an ancestor of $\mathbf{S}_{m, t-t_{m+1}}$ in $\D$ according to Lemma~\ref{lemmaapp:meaning_of_edges_in_DMAG_2}. By repeating ancestral relationships of $\D$ the vertex $(k_m, t_m)$ is then an ancestor of $\mathbf{S}$.

\textit{Case 2:} If $\pi_m$ is $(k_m, t_m) \headtail (k_{m+1}, t_{m+1})$, do the same as for case 1 with the roles of $(k_m, t_m)$ and $(k_{m+1}, t_{m+1})$ exchanged.

\textit{Case 3:} If $\pi_m$ is $(k_m, t_m) \headtail (l_m, t_{l_m}) \tailhead (k_{m+1}, t_{m+1})$ and $t_m \leq t_{m+1}$, then $(k_m, t- (t_{m+1} -t_m)) \headtail (l_m, t- (t_{m+1} - t_{l_m})) \tailhead (k_{m+1}, t)$ in $\Dc(\Mtaumax(\D))$ and hence $(k_m, t - (t_{m+1} - t_m)) \headhead (k_{m+1}, t)$ in $\Mtaumax(\D)$. According to Lemma~\ref{lemmaapp:meaning_of_edges_in_DMAG} there thus is path between $(k_m, t - (t_{m+1} - t_m))$ and $(k_{m+1}, t)$ in $\D$ that is into both $(k_m, t - (t_{m+1} - t_m))$ and $(k_{m+1}, t)$ and active given $\mathbf{S}_{m, t - t_{m+1}}$. Let $\rho_{m, t - t_{m+1}}$ be any such path and let $\rho_m$ the path obtained by shifting $\rho_{m, t - t_{m+1}}$ backwards in time by $t - t_{m+1}$ time steps.

\textit{Case 4:} If $\pi_m$ is $(k_m, t_m) \headtail (l_m, t_{l_m}) \tailhead (k_{m+1}, t_{m+1})$ and $t_m > t_{m+1}$, do the same as for case 3 with the roles of $(k_m, t_m)$ and $(k_{m+1}, t_{m+1})$ exchanged.

The paths $\rho_m$ exist due to repeating adjacencies of $\D$ and they are active given $\mathbf{S}_m$ due to repeating separating sets of $\D$. Moreover, due to repeating orientations of $\D$ all edges on $\rho_m$ are oriented in the same way as the corresponding edges on $\rho_{m, t - t_{m+1}}$. Consequently: If $(k_{m+1}, t_{m+1})$ is a collider on $\pi$ and thus $\pi_m$ and $\pi_{m+1}$ meet head-to-head at $(k_{m+1}, t_{m+1})$, then, first, $(k_{m+1}, t_{m+1})$ is an ancestor of $\mathbf{S}$ because $\pi$ is active given $\mathbf{S}$ and, second, $\rho_m$ and $\rho_{m+1}$ meet head-to-head at $(k_{m+1}, t_{m+1})$. Moreover, if $(k_{m+1}, t_{m+1})$ is a non-collider on $\pi$ and thus $\pi_m$ and $\pi_{m+1}$ do not meet head-to-head at $(k_{m+1}, t_{m+1})$, then, first, $(k_{m+1}, t_{m+1})$ is not in $\mathbf{S}$ because $\pi$ is active given $\mathbf{S}$ and, second, $\rho_m$ and $\rho_{m+1}$ may or may not meet head-to-head at $(k_{m+1}, t_{m+1})$. Importantly, if they do meet head-to-head, then $(k_{m+1}, t_{m+1})$ is an ancestor of $\mathbf{S}$. By applying Lemma 3.3.1 in \citet{Spirtes2000} to the ordered sequence of paths $(\rho_1, \dots, \rho_n)$ we thus obtain a path between $(k_1, t_1) = (i, t_i)$ and $(k_n, t_n) = (j, t_j)$ in $\D$ that is active given $\mathbf{S}$, and hence $(i, t_i)$ and $(j, t_j)$ are $d$-connected given $\mathbf{S}$ in $\D$.
\end{proof}

\begin{mylemma}\label{lemmaapp:MtaumaxDc_subgraph_MtaumaxD}
$\Mtaumax(\Dc(\Mtaumax(\D)))$ is a subgraph of $\Mtaumax(\D)$.
\end{mylemma}

\begin{proof}[Proof of Lemma~\ref{lemmaapp:MtaumaxDc_subgraph_MtaumaxD}]
As an immediate consequence of Lemma~\ref{lemmaapp:MGc_subgraph_of_MG_reason}, every adjacency in $\Mtaumax(\Dc(\Mtaumax(\D)))$ is also in $\Mtaumax(\D)$. The statement then follows with Lemma~\ref{lemmaapp:same_ancestral_main_proposition} because the orientation of edges in $\Mtaumax(\D)$ and $\Mtaumax(\Dc(\Mtaumax(\D)))$ are uniquely determined by the ancestral relationships.
\end{proof}

\subsubsection{Part 3: Any adjacency in $\Mtaumax(\D)$ is also in $\Mtaumax(\Dc(\Mtaumax(\D)))$}\label{secapp:proof_part_3}
As the third and final part of the proof of Lemma~4.14 we here show that any adjacency in $\Mtaumax(\D)$ is also in $\Mtaumax(\Dc(\Mtaumax(\D)))$. We note that the proof of Lemma~\ref{lemmaapp:MGc_subgraph_of_MG_reason} crucially relies on the particular form of $\Dc(\Mtaumax(\D))$ due to which two subsequent observable vertices on a path in $\Dc(\Mtaumax(\D))$ are at most $\taumax$ time steps apart and adjacent or almost adjacent. A general ts-DAG $\D$ does, however, not necessarily have these properties, which is why this part of the proof becomes more complicated. We begin by proving the converse of Lemma~\ref{lemmaapp:MGc_subgraph_of_MG_reason} restricted to collider-free paths in $\D$.

\begin{mylemma}\label{lemmaapp:collider_free_path_in_D_foundational_sequence}
Let $\pi$ be a collider-free path in $\D$ between distinct observable vertices $(i, t_i)$ and $(j, t_j)$ with $t - \taumax \leq t_i, t_j \leq t$. Then, the ordered sequence $((k_1, t_{1}), \dots, (k_n, t_{n}))$ of observable vertices on $\pi$ with $(k_1, t_1) = (i, t_i)$ and $(k_n, t_n) = (j, t_j)$ has a unique subsequence $((l_1, s_{1}), \dots, (l_m, s_{m}))$ with the following properties:
\begin{enumerate}
    \item $(l_1, s_{1}) = (i, t_i)$ and $(l_m, s_{m}) = (j, t_j)$,
    \item $|s_{\alpha} - s_{\alpha+1}| \leq \taumax$ for all $1 \leq \alpha \leq m-1$,
    \item all non end-point vertices of $\pi((l_\alpha, s_{\alpha}), (l_{\alpha+1}, s_{\alpha+1}))$ are unobservable or are before $\operatorname{max}(s_\alpha, s_{\alpha+1}) - \taumax$ for all $1 \leq \alpha \leq m-1$.
\end{enumerate}
\end{mylemma}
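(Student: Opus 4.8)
The plan is to first fix the geometry of a collider-free path in the time-ordered DAG $\D$, and then to \emph{construct} the required subsequence by a deterministic closure rule, proving along the way that it is the only admissible one. Since $\D$ is a DAG, a collider-free path $\pi$ has a unique \emph{source} $w$: the orientations along $\pi$ can never pass from a head to a tail without creating a collider, so $\pi$ reads $(i,t_i)\headtail\cdots\headtail w\tailhead\cdots\tailhead(j,t_j)$, i.e.\ it is the concatenation of two directed arms out of $w$. By time order each arm is non-decreasing in time away from $w$, so $w$ carries the minimal time while the two endpoints carry the two largest times; reading from $(i,t_i)$ to $(j,t_j)$ the times (and hence the times $t_1,\dots,t_n$ of the observable vertices) first weakly decrease and then weakly increase. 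Together with $t-\taumax\le t_i,t_j\le t$ this yields the key bound $|t_i-t_j|\le\taumax$, and every vertex on $\pi$ has time $\le\max(t_i,t_j)\le t$.

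For existence I would build the subsequence as a closure: start from $\{(i,t_i),(j,t_j)\}$, which already satisfies condition~2 by the bound above, and repeatedly retain any observable vertex strictly between two currently consecutive retained vertices $(l_p,s_p),(l_{p+1},s_{p+1})$ that is \emph{not} before $\max(s_p,s_{p+1})-\taumax$. The decisive observation is that, by the V-shape, any vertex lying on $\pi$ between $(l_p,s_p)$ and $(l_{p+1},s_{p+1})$ has time $\le\max(s_p,s_{p+1})$, so a newly retained vertex $v$ has time in $[\max(s_p,s_{p+1})-\taumax,\ \max(s_p,s_{p+1})]$; this forces each of the two shorter segments to obey condition~2 automatically. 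Hence the closure is monotone, terminates, and produces a subsequence satisfying conditions 1--3. I would also record the structural corollary that two consecutive retained vertices on one monotone arm can skip nothing (a skipped vertex would have time $\ge$ the smaller endpoint time $\ge\max-\taumax$), so all skipping is concentrated at the single valley-straddling pair.

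For uniqueness, the forced (closure) set lies in every admissible subsequence by construction, so it remains to exclude a strictly larger admissible $R'$. If $R'$ retained a non-forced $v$, its enclosing closure pair $(a,b)$ satisfies $t_v<\max(t_a,t_b)-\taumax$, and \emph{all} vertices of $R'$ strictly between $a$ and $b$ share this property. Tracing $v$'s retained neighbours in $R'$ toward whichever of $a,b$ attains $\max(t_a,t_b)$, each neighbour has time $<\max(t_a,t_b)-\taumax$ until the last one, whose step to that endpoint then exceeds $\taumax$, contradicting condition~2. I expect this last chain argument to be the main obstacle, precisely because at the valley-straddling pair $\max(s_p,s_{p+1})$ may be attained on either arm (so the descent need not be monotone); isolating the bound $t_v\in[\max-\taumax,\max]$ for retained $v$ is what reconciles the competing conditions~2 and~3, and handling ties in the times $t_1,\dots,t_n$ will require only minor care with the strict/weak inequalities.
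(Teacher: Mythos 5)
Your proposal is correct, but it takes a genuinely different route from the paper's proof. The paper proves existence by induction on the number of observable vertices on $\pi$: after fixing $t_i \leq t_j$ it peels an endpoint off in a three-case analysis (according to whether the last observable gap at the $(j,t_j)$-end is $\leq \taumax$, the first gap at the $(i,t_i)$-end is $\leq \taumax$, or neither, in which case $((i,t_i),(j,t_j))$ itself already works), and it proves uniqueness by a second induction along the index $p$ of two competing admissible subsequences, playing condition 3 of one sequence against condition 2 of the other. You instead characterize the subsequence extremally: you build it as a closure of $\{(i,t_i),(j,t_j)\}$ under the rule ``retain any observable vertex between consecutive retained vertices that is not before $\max - \taumax$,'' using the unimodal (V-shaped) time profile of a collider-free, time-ordered path --- a fact the paper also uses, but only implicitly via its ``root node'' remark --- to show that condition 2 propagates through the closure; uniqueness then follows because the closure set is contained in every admissible subsequence, while any strictly larger admissible set breaks condition 2 at the pair adjacent to the endpoint attaining $\max(t_a,t_b)$. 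Your route buys an explicit description of the answer (all skipping is concentrated at the single valley-straddling pair, which your structural corollary makes precise) and replaces the paper's somewhat delicate double induction by two applications of the same bound; the paper's route avoids reasoning about arbitrary admissible supersets and needs no well-definedness discussion for a closure process.

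One step you should spell out: the containment of the closure set in every admissible subsequence is not literally ``by construction'' --- it needs the V-shape bound once more. Concretely, if $v$ is retained from a consecutive pair $x,y$ of closure vertices already known to lie in an admissible subsequence $R$, and $v \notin R$, let $a,b$ be the consecutive-in-$R$ pair enclosing $v$; since $a$ and $b$ lie on the segment of $\pi$ between $x$ and $y$, unimodality gives $\max(t_a,t_b) \leq \max(t_x,t_y)$, and condition 3 for $R$ forces $t_v < \max(t_a,t_b) - \taumax \leq \max(t_x,t_y) - \taumax$, contradicting the retention rule. With that argument inserted (and the induction over closure steps made explicit), your proof is complete.
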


\begin{myremark}[on Lemma~\ref{lemmaapp:collider_free_path_in_D_foundational_sequence}]
While the uniqueness of $((l_1, s_{1}), \dots , (l_m, s_{m}))$ is not needed for the subsequent proofs, we have included it to be able to refer to \emph{the} subsequence $((l_1, s_{1}), \dots , (l_m, s_{m}))$ instead of \emph{a} such subsequence.
\end{myremark}

\begin{proof}[Proof of Lemma~\ref{lemmaapp:collider_free_path_in_D_foundational_sequence}]
\underline{Existence:}\\
Assume without loss of generality that $t_i \leq t_j$, else exchange $(i, t_i)$ and $(j, t_j)$. Since $\pi$ is collider-free, time order of $\D$ thus implies that no vertex on $\pi$ is after $t_j$. We now prove the statement by induction over $n$, where $n$ is the number of observable vertices on $\pi$:

\textit{Induction base case: $n = 2$}\\
In this case $(i, t_i)$ and $(j, t_j)$ are the only observable vertices on $\pi$. Clearly, the sequence $((i, t_i), (j, t_j))$ has the desired properties.

\textit{Induction step: $n \mapsto n + 1$}\\
In this case $\pi$ has $n + 1 \geq 3$ observable vertices and the statement has already been proven for paths that have at most $n$ observable vertices. Let $\pi_1$ be the subpath of $\pi$ from $(i, t_i)$ to $(k_{n}, t_{n})$ (the observable vertex on $\pi$ other than $(j, t_j)$ itself that is closest to $(j, t_j)$ on $\pi$) and let $\pi_2$ be the subpath of $\pi$ from $(k_2, t_2)$ (the observable vertex on $\pi$ other than $(i, t_i)$ itself that is closest to $(i, t_i)$ on $\pi$) to $(j, t_j)$. We distinguish three collectively exhaustive cases:
\begin{itemize}
  \item Case $|t_j - t_{n}| = t_j - t_{n} \leq \taumax$:\\
  This premise implies $|t_i - t_{n}| \leq \taumax$. Hence, by assumption of induction the statement applies to $\pi_1$. The desired sequence is obtained by appending $(j, t_j)$ to the sequence obtained by applying the statement to $\pi_1$.
  \item Case $|t_j - t_2| = t_j - t_2 \leq \taumax$:\\
  By assumption of induction the statement then applies to $\pi_2$. Moreover, $|t_i - t_2| \leq \taumax$. The desired sequence is obtained by prepending $(i, t_i)$ to the sequence obtained by applying the statement to $\pi_2$.
  \item Case $|t_j - t_{n}| = t_j - t_{n} > \taumax$ and $|t_j - t_2| = t_j - t_2 > \taumax$:\\
  Since $\pi$ is collider-free and $\D$ is time ordered, this premise implies that all observable non end-point vertices on $\pi$ are before $t_j-\taumax$. Hence, the sequence $((i, t_i), (j, t_j))$ has the desired properties.
\end{itemize}

\underline{Uniqueness:}\\
Let $((l_1, s_{1}), \dots , (l_m, s_{m}))$ and $((l^\prime_1, s^\prime_{1}), \dots , (l^\prime_{m^\prime}, s_{m^\prime}))$ be two such subsequences. We proof their equality by induction over $\alpha$, where $\alpha$ is the index of the subsequences.

\textit{Induction base case: $\alpha = 1$}\\
The equality $(l_1, s_{1}) = (l^\prime_1, s^\prime_{1})$ follows due to the first property demanded in Lemma~\ref{lemmaapp:collider_free_path_in_D_foundational_sequence} applied to both sequences.

\textit{Induction step: $\alpha \mapsto \alpha+1 \leq \operatorname{min}(m, m^\prime)$}\\
By the assumption of induction $(l_q, s_{q}) = (l^\prime_q, s^\prime_{q})$ for all $1 \leq q \leq \alpha$ is given and we have to show $(l_{\alpha+1}, s_{\alpha+1}) = (l^\prime_{\alpha+1}, s^\prime_{\alpha+1})$. Assume the opposite, i.e., assume $(l_{\alpha+1}, s_{\alpha+1}) \neq (l^\prime_{\alpha+1}, s^\prime_{\alpha+1})$. Without loss of generality further assume that $(l_{\alpha+1}, s_{\alpha+1})$ is on the subpath $\pi((l^\prime_{\alpha+1}, s^\prime_{\alpha+1}), (j, t_j))$, else exchange the two subsequences. Let $r$ with $\alpha < r < m^\prime$ be such that the vertices $(l^\prime_{q}, s^\prime_{q})$ with $\alpha < q \leq r$ are non end-point vertices on $\pi((l_{\alpha}, s_{\alpha}), (l_{\alpha+1}, s_{\alpha+1}))$ and $(l^\prime_{r+1}, s^\prime_{r+1})$ is on $\pi((l_{\alpha+1}, s_{\alpha+1}), (j, t_j))$. Such an $r$ exists because both $((l_1, s_{1}), \dots , (l_m, s_{m}))$ and $((l^\prime_1, s^\prime_{1}), \dots , (l^\prime_{m^\prime}, s_{m^\prime}))$ are subsequences of $((k_1, t_{1}), \dots, (k_n, t_{n}))$. Note that the $(l^\prime_{q}, s^\prime_{q})$ with $\alpha < q \leq r$ are observable because all vertices of the subsequence $((l^\prime_1, s^\prime_{1}), \dots , (l^\prime_{m^\prime}, s_{m^\prime}))$ are observable. The third property demanded in Lemma~\ref{lemmaapp:collider_free_path_in_D_foundational_sequence} applied to the sequence $((l_1, s_{1}), \dots , (l_m, s_{m}))$ thus requires $s^\prime_{q} < \operatorname{max}(s_\alpha, s_{\alpha+1}) - \taumax$ for all $\alpha < q \leq r$. We distinguish two cases:
\begin{itemize}
\item Suppose $s_{\alpha+1} \leq s_\alpha$. Then $s^\prime_q < s_\alpha - \taumax$ for all $\alpha < q \leq r$. For $q = \alpha+1$ we thus get $s^\prime_{\alpha+1} < s_\alpha - \taumax$, which contradicts the second property demanded in Lemma~\ref{lemmaapp:collider_free_path_in_D_foundational_sequence} applied to $((l^\prime_1, s^\prime_{1}), \dots , (l^\prime_{m^\prime}, s_{m^\prime}))$.
\item Suppose $s_{\alpha+1} > s_\alpha$. Then $s^\prime_q < s_{\alpha+1} - \taumax$ for all $\alpha < q \leq r$. By time order of $\D$ in combination with the facts that $\pi$ is collider-free and that $(l_\alpha, s_\alpha)$ is on $\pi((i, t_i), (l_{\alpha+1}, s_{\alpha+1}))$, the premise $s_{\alpha+1} > s_\alpha$ implies that $(l_{\alpha+1}, s_{\alpha+1})$ is on $\pi$ between the root node of $\pi$ and $(j, t_j)$. Consequently, all vertices on $\pi((l_{\alpha+1}, s_{\alpha+1}), (j, t_j))$ are not before $s_{\alpha+1}$. Since $(l^\prime_{r+1}, s^\prime_{r+1})$ is on $\pi((l_{\alpha+1}, s_{\alpha+1}), (j, t_j))$ we thus find $s_{\alpha+1} \leq s^\prime_{r+1}$, which implies $s^\prime_q < s^\prime_{r+1} - \taumax$ for all $\alpha < q \leq r$. For $q = r$ we thus get $s_r^\prime < s^\prime_{r+1} - \taumax$, which contradicts the second property demanded in Lemma~\ref{lemmaapp:collider_free_path_in_D_foundational_sequence} applied to $((l^\prime_1, s^\prime_{1}), \dots , (l^\prime_{m^\prime}, s_{m^\prime}))$.
\end{itemize}
Consequently, the assumption $(l_{\alpha+1}, s_{\alpha+1}) \neq (l^\prime_{\alpha+1}, s^\prime_{\alpha+1})$ leads to a contradiction and $(l_{\alpha+1}, s_{\alpha+1}) = (l^\prime_{\alpha+1}, s^\prime_{\alpha+1})$ must be true.

This induction terminates when $\alpha+1 = \operatorname{min}(m, m^\prime)$. The first property demanded in Lemma~\ref{lemmaapp:collider_free_path_in_D_foundational_sequence} applied to both sequences then requires $m = m^\prime$ and $(l_m, s_{m}) = (l^\prime_m, s^\prime_{m}) = (j, t_j)$, which completes the proof
\end{proof}

\begin{myremark}[on the proof of Lemma~\ref{lemmaapp:collider_free_path_in_D_foundational_sequence}]
For the special case in which $\pi$ is a directed path we can also immediately see that the sequence $((k_1, t_{1}), \dots, (k_n, t_{n}))$ of all observable nodes on $\pi$ has the desired properties. We need the above proof to also cover the case in which $\pi$ is into both $(i, t_i)$ and $(j, t_j)$.
\end{myremark}

\begin{mylemma}\label{lemmaapp:collider_free_path_in_D}
Let $\pi$ be a collider-free path in $\D$ between distinct observable vertices $(i, t_i)$ and $(j, t_j)$ with $t - \taumax \leq t_i, t_j \leq t$ and let $((l_1, s_{1}), \dots, (l_m, s_{m}))$ be as in Lemma~\ref{lemmaapp:collider_free_path_in_D_foundational_sequence} applied to $\pi$. Then, $(l_\alpha, s_\alpha)$ and $(l_{\alpha+1}, s_{\alpha+1})$ are adjacent or almost adjacent in $\Dc(\Mtaumax(\D))$ for all $1 \leq \alpha \leq m-1$.
\end{mylemma}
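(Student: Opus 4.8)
The plan is to reduce the claim to an adjacency statement in $\Mtaumax(\D)$ by means of Lemma~\ref{lemmaapp:at_most_almost_adjacent}, and then to exhibit a path in $\D$ that witnesses this adjacency by being active given \emph{every} admissible conditioning set. Fix $1 \leq p \leq m-1$ and, without loss of generality, assume $s_p \leq s_{p+1}$ (otherwise exchange the two vertices, which is harmless since both adjacency and almost-adjacency are symmetric). Set $\tau = s_{p+1} - s_p$, so that $0 \leq \tau \leq \taumax$ by the second property in Lemma~\ref{lemmaapp:collider_free_path_in_D_foundational_sequence}. Applying Lemma~\ref{lemmaapp:at_most_almost_adjacent} to $(l_p, s_p)$ and $(l_{p+1}, s_{p+1})$, it then suffices to prove that $(l_p, t - \tau)$ and $(l_{p+1}, t)$ are adjacent in $\Mtaumax(\D)$, i.e.\ that they cannot be $d$-separated in $\D$ by any $\mathbf{S} \subset \mathbf{O}(t-\taumax, t) \setminus \{(l_p, t-\tau), (l_{p+1}, t)\}$.

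To produce the desired witness I would take the subpath $\rho = \pi((l_p, s_p), (l_{p+1}, s_{p+1}))$ and shift every vertex on it forward in time by $t - s_{p+1}$ steps to obtain a sequence $\rho'$. Because $\D$ has repeating edges and time index set $\Tindex = \mathbb{Z}$, the shift lands entirely inside $\V$ and reproduces every edge with its original orientation, so $\rho'$ is again a path in $\D$, now running between $(l_p, t-\tau)$ and $(l_{p+1}, t)$. Moreover $\rho'$ is collider-free, being a time shift of a subpath of the collider-free path $\pi$ (collider status at an internal vertex depends only on the orientations of its two incident edges, which are preserved by repeating edges).

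The key point is what this particular shift does to the internal vertices. By the third property in Lemma~\ref{lemmaapp:collider_free_path_in_D_foundational_sequence}, every non end-point vertex of $\rho$ is either unobservable or strictly before $\max(s_p, s_{p+1}) - \taumax = s_{p+1} - \taumax$. Shifting by $t - s_{p+1}$ leaves unobservable vertices unobservable and carries each vertex before $s_{p+1} - \taumax$ to a vertex strictly before $t - \taumax$. Hence no non end-point vertex of $\rho'$ lies in $\mathbf{O}(t-\taumax, t)$, so none of them can belong to any admissible $\mathbf{S}$. Since $\rho'$ is collider-free, a path of this form is active given $\mathbf{S}$ exactly when none of its non end-point vertices is in $\mathbf{S}$; therefore $\rho'$ is active given every admissible $\mathbf{S}$, which shows $(l_p, t-\tau)$ and $(l_{p+1}, t)$ are $d$-connected given every such $\mathbf{S}$ and hence adjacent in $\Mtaumax(\D)$.

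I do not expect a serious obstacle here: once the reduction via Lemma~\ref{lemmaapp:at_most_almost_adjacent} is in place, the argument is essentially the bookkeeping that the single time shift by $t - s_{p+1}$ simultaneously moves $\rho$ into the reference window $[t-\taumax, t]$ \emph{and} pushes all of its potentially blocking internal vertices either out of that window or onto unobservable time series---which is precisely what the three properties of the foundational subsequence were engineered to guarantee. The only steps that require care are verifying that collider-freeness survives the time shift (which follows from repeating edges) and invoking the standard fact that a collider-free path is active iff none of its internal vertices is conditioned on.
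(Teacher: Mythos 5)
Your proof is correct and follows essentially the same route as the paper's: shift the subpath $\pi((l_p,s_p),(l_{p+1},s_{p+1}))$ forward by $t-\max(s_p,s_{p+1})$ time steps, use property 3 of Lemma~\ref{lemmaapp:collider_free_path_in_D_foundational_sequence} to see that all internal vertices of the shifted path are unobservable or before $t-\taumax$, conclude that the collider-free shifted path cannot be blocked by any admissible conditioning set, and finish with Lemma~\ref{lemmaapp:at_most_almost_adjacent}. The only cosmetic difference is that you invoke the reduction via Lemma~\ref{lemmaapp:at_most_almost_adjacent} at the outset rather than at the end, and handle the time ordering by an explicit WLOG instead of writing $\max(s_p,s_{p+1})$.
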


\begin{proof}[Proof of Lemma~\ref{lemmaapp:collider_free_path_in_D}]
Let $\pi_\alpha$ be the path in $\D$ obtained by shifting the subpath $\pi((l_\alpha, s_{\alpha}), (l_{\alpha+1}, s_{\alpha+1}))$ forward in time by $t - \operatorname{max}(s_\alpha, s_{\alpha+1}) \geq 0$ time steps. This $\pi_\alpha$ is a path between the vertices $v_1 = (l_\alpha, t - (\operatorname{max}(s_\alpha, s_{\alpha+1})- s_\alpha))$ and $v_2 = (l_\alpha, t - (\operatorname{max}(s_\alpha, s_{\alpha+1})- s_{\alpha+1}))$, both of which are observable because all vertices in the sequence $((l_1, s_{1}), \dots, (l_m, s_{m}))$ are observable and within $[t-\taumax, t]$  because of part 2 of Lemma~\ref{lemmaapp:collider_free_path_in_D_foundational_sequence}. Moreover, $\pi_\alpha$ is collider-free because $\pi$ is collider-free. All non end-point vertices on $\pi_\alpha$ are unobserved, i.e., unobservable or temporally unobserved due to part 3 of Lemma~\ref{lemmaapp:collider_free_path_in_D_foundational_sequence}. Consequently, $\pi_\alpha$ cannot be blocked by any set of observed variables, which is why $v_1$ and $v_2$ are adjacent in $\Mtaumax(\D)$. The statement then follows from Lemma~\ref{lemmaapp:at_most_almost_adjacent}.
\end{proof}

\begin{mydef}[Canonically induced path]\label{mydef:canonical_path}
Let $\pi$ be a collider-free path in $\D$ between distinct observable vertices $(i, t_i)$ and $(j, t_j)$ with $t - \taumax \leq t_i, t_j \leq t$ and let $((l_1, s_{1}), \dots, (l_m, s_{m}))$ be as in Lemma~\ref{lemmaapp:collider_free_path_in_D_foundational_sequence} applied to $\pi$. The \emph{canonical path induced by $\pi$}, denoted $\pi_{ci}$, is the (unique) path in $\Dc(\Mtaumax(\D))$ between $(i, t_i)$ and $(j, t_j)$ with the following properties:
\begin{enumerate}
\item For all $1 \leq \alpha \leq m$ the vertex $(l_\alpha , s_\alpha )$ is on $\pi_{ci}$.
\item For all $1 \leq \alpha  \leq m -1 $ the subpath $\pi_{ci}((l_\alpha , s_\alpha )), (l_{\alpha +1}, s_{\alpha +1}))$ is
\begin{enumerate}
\item $(l_\alpha , s_\alpha) \tailhead (l_{\alpha+1}, s_{\alpha+1})$ if and only if $(l_\alpha, s_\alpha) \in \an((l_{\alpha+1}, s_{\alpha+1}), \D)$,
\item $(l_\alpha , s_\alpha) \headtail (l_{\alpha+1}, s_{\alpha+1})$ if and only if $(l_{\alpha+1}, s_{\alpha+1}) \in \an((l_{\alpha}, s_{\alpha}), \D)$,
\item $(l_\alpha , s_\alpha) \headtail (l^\prime_\alpha, s^\prime_\alpha) \tailhead (l_{\alpha+1}, s_{\alpha+1})$ with $(l^\prime_\alpha, s^\prime_\alpha)$ unobservable if and only if $(l_\alpha, s_\alpha) \notin \an((l_{\alpha+1}, s_{\alpha+1}), \D)$ and $(l_{\alpha+1}, s_{\alpha+1}) \notin \an((l_{\alpha}, s_{\alpha}), \D)$.
\end{enumerate}
\end{enumerate}
\end{mydef}

\begin{myremark}[on Def.~\ref{mydef:canonical_path}]
Existence follows from Lemma~\ref{lemmaapp:at_most_almost_adjacent_2} because according to Lemma~\ref{lemmaapp:collider_free_path_in_D} the vertices $(l_\alpha, s_\alpha)$ and $(l_{\alpha+1}, s_{\alpha+1})$ are adjacent or almost adjacent for all $1 \leq \alpha \leq m-1$. Uniqueness follows from Lemma~\ref{lemmaapp:almost_adjacent_unique} because in $\Dc(\Mtaumax(\D))$ there is at most one edge between any pair of vertices.
\end{myremark}

\begin{mylemma}\label{lemmaapp:canonical_path_properties}
Let $\pi$ be a collider-free path in $\D$ between distinct observable vertices $(i, t_i)$ and $(j, t_j)$ with $t - \taumax \leq t_i, t_j \leq t$ and let $\pi_{ci}$ be the canonical path induced by $\pi$. Then:
\begin{enumerate}
  \item All observable vertices on $\pi_{ci}$ are also on $\pi$.
  \item If $(k_1, t_1)$, $(k_2, t_2)$ and $(k_3, t_3)$ are distinct observable vertices on $\pi_{ci}$ and $(k_2, t_2)$ is on $\pi_{ci}((k_1, t_1), (k_3, t_3))$, then $(k_2, t_2)$ is on $\pi((k_1, t_1), (k_3, t_3))$.
  \item $\pi_{ci}$ is a collider-free.
  \item If $(k_1, t_1)$ and $(k_2, t_2)$ are distinct observable vertices on $\pi_{ci}$, then $\pi((k_1, t_1), (k_2, t_2))$ is an inducing path relative to $\mathbf{O}(\operatorname{max}(t_1, t_2) - \taumax, t)[\pi_{ci}((k_1, t_1), (k_2, t_2))]$.
  \item If $\pi$ is active given $\mathbf{S}$ in $\D$, then $\pi_{ci}$ is active given $\mathbf{S}$ in $\Dc(\Mtaumax(\D))$.
\end{enumerate}
\end{mylemma}

\begin{proof}[Proof of Lemma~\ref{lemmaapp:canonical_path_properties}]
Let $((l_1, s_{1}), \dots, (l_m, s_{m}))$ be as in Lemma~\ref{lemmaapp:collider_free_path_in_D_foundational_sequence} applied to $\pi$.

\textbf{1.}
The definition of canonically induced paths is such that $((l_1, s_{1}), \dots , (l_m, s_{m}))$ is the sequence of all observable vertices on $\pi_{ci}$. Moreover, all of $(l_1, s_{1}), \ldots , (l_m, s_{m})$ are on $\pi$ by construction, see Lemma~\ref{lemmaapp:collider_free_path_in_D_foundational_sequence}.

\textbf{2.}
Since $(k_2, t_2)$ is on $\pi_{ci}((k_3, t_3), (k_1, t_1))$ if it is on $\pi_{ci}((k_1, t_1), (k_3, t_3))$, we can without loss of generality assume that $(k_1, t_1)$ is closer to $(i, t_i)$ on $\pi_{ci}$ than $(k_3, t_3)$ is to $(i, t_i)$ on $\pi_{ci}$. Hence, there are $1 \leq \alpha_1 < \alpha_2 < \alpha_3 \leq m$ such that $(k_1, t_1) = (l_{\alpha_1}, s_{\alpha_1})$, $(k_2, t_2) = (l_{\alpha_2}, s_{\alpha_2})$ and $(k_3, t_3) = (l_{\alpha_3}, s_{\alpha_3})$. Moreover, by definition of $((l_1, s_{1}), \dots , (l_m, s_{m}))$ and canonically induced paths, $(l_{\alpha}, s_\alpha)$ is closer to $(i, t_i)$ on $\pi$ than $(l_q, s_q)$ with $\alpha < q$ is to $(i,t_i)$ on $\pi$.

\textbf{3.}
Assume there is a collider on $\pi_{ci}$. Since in $\Dc(\Mtaumax(\D))$ there are no edges into unobservable vertices, all colliders on $\pi_{ci}$ are observable. Hence, there must be $1 < \alpha < m$ such that $(l_{\alpha}, s_{\alpha})$ is a collider on $\pi_{ci}$, i.e., such that both $\pi_{ci}((l_{\alpha-1}, s_{\alpha-1}), (l_{\alpha}, s_{\alpha}))$ and $\pi_{ci}((l_{\alpha}, s_{\alpha}), (l_{\alpha+1}, s_{\alpha+1}))$ are into $(l_{\alpha}, s_{\alpha})$. In particular, $\pi_{ci}((l_{\alpha-1}, s_{\alpha-1}), (l_{\alpha}, s_{\alpha}))$ is not $(l_{\alpha-1}, s_{\alpha-1}) \headtail (l_{\alpha}, s_{\alpha})$ and $\pi_{ci}((l_{\alpha}, s_{\alpha}), (l_{\alpha+1}, s_{\alpha+1}))$ is not $(l_{\alpha}, s_{\alpha}) \tailhead (l_{\alpha+1}, s_{\alpha+1})$.

Assume $\pi((l_{\alpha-1}, s_{\alpha-1}), (l_{\alpha}, s_{\alpha}))$ is out of $(l_{\alpha}, s_{\alpha})$. Since $\pi$ is collider-free, it then follows that $\pi((l_{\alpha-1}, s_{\alpha-1}), (l_{\alpha}, s_{\alpha}))$ is directed from $(l_{\alpha}, s_{\alpha})$ to $(l_{\alpha-1}, s_{\alpha-1})$ and hence $(l_{\alpha}, s_{\alpha}) \in \an((l_{\alpha-1}, s_{\alpha-1}), \D)$. According to Def.~\ref{mydef:canonical_path} this ancestral relationships requires $\pi_{ci}((l_{\alpha-1}, s_{\alpha-1}), (l_{\alpha}, s_{\alpha}))$ to be $(l_{\alpha-1}, s_{\alpha-1}) \headtail (l_{\alpha}, s_{\alpha})$, which is a contradiction. Hence $\pi((l_{\alpha-1}, s_{\alpha-1}), (l_{\alpha}, s_{\alpha}))$ is into $(l_{\alpha}, s_{\alpha})$.

We similarly we find that $\pi((l_{\alpha}, s_{\alpha}), (l_{\alpha+1}, s_{\alpha+1}))$ is into $(l_{\alpha}, s_{\alpha})$ and thus that $(l_{\alpha}, s_{\alpha})$ is a collider on $\pi$, a contradiction.

\textbf{4.}
We may without loss of generality assume that $(k_1, t_1)$ is closer to $(i,t_i)$ on $\pi$ than $(k_2, t_2)$ is to $(i,t_i)$ on $\pi$. Write $t_{12} = \operatorname{max}(t_1, t_2)$.

Since $\pi$ is collider-free, also its subpath $\pi((k_1, t_1), (k_2, t_2))$ is collider-free. For showing that $\pi((k_1, t_1), (k_2, t_2))$ is an inducing path relative to $\mathbf{O}(t_{12} - \taumax, t)[\pi_{ci}((k_1, t_1), (k_2, t_2))]$ it is thus sufficient to show that none of its non end-point vertices is an element of the set $\mathbf{O}(t_{12} - \taumax, t)[\pi_{ci}((k_1, t_1), (k_2, t_2))]$. To this end, let $(k_3, t_3)$ be a non end-point vertex on $\pi((k_1, t_1), (k_2, t_2))$.

Since $((l_1, s_{1}), \dots , (l_m, s_{m}))$ is the sequence of all observable vertices on $\pi_{ci}$, there are $\alpha_1$ and $\alpha_2$ with $1 \leq \alpha_1 < \alpha_2 \leq m$ such that $(k_1, t_1) = (l_{\alpha_1}, s_{\alpha_1})$ and $(k_2, t_2) = (l_{\alpha_2}, s_{\alpha_2})$. Therefore, either $(k_3, t_3)$ equals $(l_{\alpha_3}, s_{\alpha_3})$ for some $\alpha_1 < \alpha_3 < \alpha_2$ or $(k_3, t_3)$ is a non end-point vertex on $\pi((l_{\alpha_3}, s_{\alpha_3}), (l_{\alpha_3+1}, s_{\alpha_3+1}))$ for some $\alpha_3$ with $\alpha_1 \leq \alpha_3 < \alpha_2$. In the former case, $(k_3, t_3)$ is not in $\mathbf{O}(t_{12} - \taumax, t_{12})[\pi_{ci}((k_1, t_1), (k_2, t_2))]$ because it is a non end-point vertex on $\pi_{ci}((k_1, t_1), (k_2, t_2))$. In the latter case, according to part 3 of Lemma~\ref{lemmaapp:collider_free_path_in_D_foundational_sequence}, $(k_3, t_3)$ is unobservable or before $\operatorname{max}(s_{\alpha_3}, s_{\alpha_3+1}) - \taumax$. Because $\pi$ is collider-free and both $(l_{\alpha_3}, s_{\alpha_3})$ and $(l_{\alpha_3+1}, s_{\alpha_3+1})$ are on $\pi((k_1, t_1), (k_2, t_2))$, time order of $\D$ implies $\operatorname{max}(s_{\alpha_3}, s_{\alpha_3+1}) \leq t_{12}$. Thus, $(k_3, t_3)$ is not in $\mathbf{O}(t_{12} - \taumax, t)[\pi_{ci}((k_1, t_1), (k_2, t_2))]$.

\textbf{5.}
This claim follows from parts 1 and 2 of Lemma~\ref{lemmaapp:canonical_path_properties}: Since $\pi$ is collider-free and active given $\mathbf{S}$, no vertex on $\pi$ is in $\mathbf{S}$. Thus, since all observable vertices on $\pi_{ci}$ are also on $\pi$, no vertex on $\pi_{ci}$ is in $\mathbf{S}$. Since $\pi_{ci}$ is collider-free this observation shows that $\pi_{ci}$ is active given $\mathbf{S}$.
\end{proof}

\begin{mylemma}\label{lemmaapp:MG_subgraph_of_MGc_reason_part_1}
Let $(i, t_i)$ and $(j, t_j)$ with $t - \taumax \leq t_i, t_j \leq t$ be distinct observable vertices in $\D$ and let $\mathbf{S} \subseteq \mathbf{O}(t-\taumax, t) \setminus\{(i, t_i), (j, t_j)\}$. Then: If in $\D$ there is a collider-free path $\pi$ between $(i, t_i)$ and $(j, t_j)$ that is active given $\mathbf{S}$, then $(i, t_i)$ and $(j, t_j)$ are $d$-connected given $\mathbf{S}$ in $\Dc(\Mtaumax(\D))$.
\end{mylemma}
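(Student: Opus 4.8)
The plan is to exploit the canonically induced path $\pi_{ci}$ introduced in Def.~\ref{mydef:canonical_path}, which was constructed for exactly this situation. The hypotheses of the present lemma---namely that $\pi$ is a collider-free path in $\D$ between distinct observable vertices $(i, t_i)$ and $(j, t_j)$ with $t - \taumax \leq t_i, t_j \leq t$---coincide precisely with the requirements under which $\pi_{ci}$ is defined. Hence, as guaranteed by the remark following Def.~\ref{mydef:canonical_path} (which rests on Lemma~\ref{lemmaapp:collider_free_path_in_D} and Lemma~\ref{lemmaapp:at_most_almost_adjacent_2}), the canonically induced path $\pi_{ci}$ exists and is a well-defined path in $\Dc(\Mtaumax(\D))$ between $(i, t_i)$ and $(j, t_j)$.

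The crucial step is then immediate. By assumption $\pi$ is active given $\mathbf{S}$ in $\D$, and since $\mathbf{S} \subset \mathbf{O}(t-\taumax, t) \setminus \{(i, t_i), (j, t_j)\}$ consists of observable vertices within the relevant time window, part 5.~of Prop.~\ref{propapp:canonical_path_properties} directly yields that $\pi_{ci}$ is active given $\mathbf{S}$ in $\Dc(\Mtaumax(\D))$. The existence of such an active path between $(i, t_i)$ and $(j, t_j)$ is, by definition of $d$-connection, exactly the assertion that these two vertices are $d$-connected given $\mathbf{S}$ in $\Dc(\Mtaumax(\D))$, which is the claim.

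Accordingly, I do not expect any genuine obstacle within this lemma itself: the entire difficulty has been front-loaded into the construction of $\pi_{ci}$ and the verification of its properties in Prop.~\ref{propapp:canonical_path_properties}---in particular the control over the distinguished subsequence $((l_1, s_1), \dots, (l_m, s_m))$ from Lemma~\ref{lemmaapp:collider_free_path_in_D_foundational_sequence} and the fact that consecutive subsequence vertices are adjacent or almost adjacent in $\Dc(\Mtaumax(\D))$ (Lemma~\ref{lemmaapp:collider_free_path_in_D}). Given all that machinery, the present statement is a direct corollary, essentially a repackaging of part 5.~of Prop.~\ref{propapp:canonical_path_properties} in the language of $d$-connection. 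The only point worth stating explicitly in the write-up is that the collider-freeness hypothesis on $\pi$ is precisely what makes $\pi_{ci}$ available; the general (non-collider-free) case will need to be reduced to this one separately.
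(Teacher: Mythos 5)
Your proposal is correct and is essentially identical to the paper's own proof, which likewise disposes of the lemma in one line by invoking part 5.~of Prop.~\ref{propapp:canonical_path_properties}: the canonically induced path $\pi_{ci}$ of $\pi$ is active given $\mathbf{S}$ in $\Dc(\Mtaumax(\D))$, hence $(i, t_i)$ and $(j, t_j)$ are $d$-connected given $\mathbf{S}$ there. Your closing observation---that the real work lives in the construction of $\pi_{ci}$ and that the non-collider-free case must be handled separately---matches exactly how the paper structures the remainder of the argument (via collider extension structures and Lemma~\ref{lemmaapp:glueing_induced_paths}).
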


\begin{proof}[Proof of Lemma~\ref{lemmaapp:MG_subgraph_of_MGc_reason_part_1}]
According to part 5 of Lemma~\ref{lemmaapp:canonical_path_properties} the canonically induced path $\pi_{ci}$ of $\pi$ $d$-connects $(i, t_i)$ and $(j, t_j)$ given $\mathbf{S}$ in $\Dc(\Mtaumax(\D))$.
\end{proof}

In order to show that any adjacency in $\Mtaumax(\D)$ is also in $\Mtaumax(\Dc(\Mtaumax(\D)))$---and thus to finish the proof of Lemma~4.14---it remains to prove a statement that extends Lemma~\ref{lemmaapp:MG_subgraph_of_MGc_reason_part_1} to the case in which $\pi$, the $d$-connecting path in $\D$, has $n_c \geq 1$ colliders $c_1, \dots, c_{n_c}$ (ordered starting with the collider closest to $(i, t_i)$). One might think that such a generalization follows readily now, namely by cutting $\pi$ into $n_c + 1$ collider-free paths $\pi^{a, a+1} = \pi(c_{a}, c_{a+1})$ with $0 \leq a \leq n_c$, where we let $c_0 = (i, t_i)$ and $c_{n_c+1} = (j, t_j)$, and then applying Lemma 3.3.1 in \citet{Spirtes2000} to the canonically induced paths $\pi^{a, a+1}_{ci}$ of $\pi^{a, a+1}$ with $0 \leq a \leq n_c$. While we do use a similar approach, the proof is complicated by two facts: First, the canonically induced path $\pi^{a, a+1}_{ci}$ of $\pi^{a, a+1}$ only exists if $\pi^{a, a+1}$ is between \emph{observable} vertices that are \emph{at most $\taumax$ time steps apart}, see Def.~\ref{mydef:canonical_path}. However, some of the colliders on $\pi$, i.e., some of the $c_a$ with $a \leq 1 \leq n_c$ might be unobservable and/or more than $\taumax$ time steps apart from the neighboring colliders or end-point vertices of $\pi$, i.e., from $c_{a-1}$ or $c_{a+1}$. Second, even if $\pi^{a, a+1}_{ci}$ exists, it may be \emph{out of} one of its end-point vertices although $\pi^{a, a+1}$ is into this vertex. In case this vertex is a collider on $\pi$ and an element of $\mathbf{S}$, Lemma 3.3.1 in \citet{Spirtes2000} does not apply. We will address the first of these complications by noting that, in order for $\pi$ to be active given $\mathbf{S}$, every collider on $\pi$ must be an ancestor of $\mathbf{S}$ and thus an ancestor of an observed vertex within the time window $[t-\taumax, t]$. Hence, in $\D$ there are directed paths from the $c_a$ with $a \leq 1 \leq n_c$ to some observed vertices. By joining these directed paths with the $\pi^{a, a+1}$ we get collider-free paths $\tilde{\pi}^{a, a+1}$ in $\D$ between observed vertices, the canonically induced paths $\tilde{\pi}^{a, a+1}_{ci}$ of which exist.

\begin{mydef}[Collider extension structure]\label{defapp:collider_extension}
Let $\pi$ be a (non collider-free) path between the distinct observable vertices $(i, t_i)$ and $(j, t_j)$ with $t - \taumax \leq t_i, t_j \leq t$ in $\D$ that is active given $\mathbf{S} \subseteq \mathbf{O}(t-\taumax, t) \setminus\{(i, t_i), (j, t_j)\}$. Let $c_1, \dots , c_{n_c}$ with $n_c \geq 1$ be the collider on $\pi$, ordered starting with the collider closest to $(i, t_i)$. A \emph{collider extension structure of $\pi$ with respect to $\mathbf{S}$ and $\mathbf{O}(t -\taumax, t)$} is a collection of paths $\rho^{1}, \dots , \rho^{n_c}$ such that for all $1 \leq a \leq n_c$ and $1 \leq b \leq n_c$ with $a \neq b$ all of the following holds:
\begin{enumerate}
\item One of these options holds:
\begin{enumerate}
  \item $\rho^a$ is the trivial path consisting of $c_a = d_a$ only and $c_a = d_a \in \mathbf{O}(t -\taumax, t)$.
  \item $\rho^a$ is a non-trivial directed path from $c_a$ to some vertex $d_a \in \mathbf{O}(t -\taumax, t)$.
\end{enumerate}
\item If $v$ is on $\rho^a$ and in $\mathbf{S}$, then $v = d_a$.
\item $d_a$ is an ancestor of $\mathbf{S}$.
\item $\rho^a$ intersects with $\pi$ at $c_a$ only.
\item $\rho^a$ and $\rho^b$ do not intersect.
\end{enumerate}
\end{mydef}

\begin{mylemma}\label{lemmaapp:extension_by_rho^a_paths}
Given the assumptions and notation of Def.~\ref{defapp:collider_extension}, one of the following statements holds:
\begin{enumerate}
\item There is a collider extension structure of $\pi$ with respect to $\mathbf{S}$ and $\mathbf{O}(t -\taumax, t)$.
\item In $\D$ there is a path $\pi^\prime$ between $(i, t_i)$ and $(j, t_j)$ with at most $n_c - 1$ colliders that is active given $\mathbf{S}$.
\end{enumerate}
\end{mylemma}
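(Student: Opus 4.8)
The plan is to fix one explicit choice of candidate paths $\rho^1, \dots, \rho^{n_c}$ and then argue a dichotomy: either this choice already meets all five requirements of Def.~\ref{defapp:collider_extension}, in which case the first alternative holds, or two of the chosen paths intersect (or one of them meets $\pi$ at a vertex other than its collider), in which case I use such an intersection to splice a strictly shorter active path, giving the second alternative. Since the lemma is a disjunction, it suffices to exhibit one of the two outcomes for the fixed choice; I do not need the reduction to be forced for every conceivable choice.

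Concretely, because $\pi$ is active given $\mathbf{S}$, every collider $c_a$ has a descendant in $\mathbf{S}$, i.e.\ $c_a$ is an ancestor of $\mathbf{S}$ in $\D$. First I would take $\rho^a$ to be a shortest directed path from $c_a$ into $\mathbf{S}$: the trivial path if $c_a \in \mathbf{S}$, and otherwise a directed path ending at the first vertex $d_a \in \mathbf{S}$ that it reaches. Minimality yields requirements 1--3 at once, since $d_a \in \mathbf{S} \subset \mathbf{O}(t-\taumax, t)$ lies in the window and is an ancestor of $\mathbf{S}$, and no interior vertex of $\rho^a$ lies in $\mathbf{S}$ (otherwise $\rho^a$ could be truncated). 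Requirements 4 and 5, the disjointness of each $\rho^a$ from $\pi$ except at $c_a$ and from one another, are precisely the conditions that may fail, and their failure is what I would exploit.

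If both hold, the first alternative is proved. If some $\rho^a$ meets $\pi$ at a vertex $w \neq c_a$, let $w$ be the first such vertex along $\rho^a$; then $\rho^a(c_a, w)$ is a directed path meeting $\pi$ only at its endpoints, and (taking without loss of generality $w$ on the $(j,t_j)$-side of $c_a$) I splice $\pi((i,t_i), c_a) \oplus \rho^a(c_a, w) \oplus \pi(w, (j,t_j))$. This is a genuine path, and at $c_a$ it is now a non-collider (into from $\pi$, out along $\rho^a$), so $c_a$ and every collider strictly between $c_a$ and $w$ on $\pi$ are eliminated. Requirement 2 ensures no interior vertex of $\rho^a(c_a,w)$ lies in $\mathbf{S}$, the surviving $\pi$-colliders remain ancestors of $\mathbf{S}$, and $w$, if it becomes a collider, is an ancestor of $d_a \in \mathbf{S}$; hence the new path is again active given $\mathbf{S}$. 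Note that $c_a \notin \mathbf{S}$ in this case because $\rho^a$ is nontrivial, so demoting $c_a$ to a non-collider does not block the path. For an intersection of two paths $\rho^a$ and $\rho^b$ at a vertex $x$, the analogous splice $\pi((i,t_i), c_a) \oplus \rho^a(c_a, x) \oplus \rho^b(x, c_b) \oplus \pi(c_b, (j,t_j))$ removes both $c_a$ and $c_b$ while creating at most the single new collider $x$, which is again an ancestor of $\mathbf{S}$; this is a net decrease of at least one.

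The main obstacle is the orientation bookkeeping at $w$ in the single-path case: the splice may turn $w$ into a new collider, and if the removed segment $\pi(c_a, w)$ contained no collider, the count would only break even. The observation that rescues the argument is acyclicity of $\D$. In exactly this bad case, $w$ is a non-collider on $\pi$ with the $\pi$-edge toward $c_a$ pointing out of $w$, while the first edge of $\pi(c_a,w)$ points into $c_a$; a collider-free path with these endpoint orientations is necessarily directed from $w$ to $c_a$, which together with the directed path $\rho^a(c_a,w)$ from $c_a$ to $w$ would close a directed cycle. Hence $\pi(c_a,w)$ must contain an interior collider, so removing it and $c_a$ against the one newly created collider $w$ still gives a strict decrease. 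I would flag this orientation-plus-acyclicity step as the one genuinely delicate point; the remaining checks, that the spliced vertex sequences are paths and that collider and non-collider statuses are preserved where claimed, are routine.
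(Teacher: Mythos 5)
Your proof is correct and follows essentially the same strategy as the paper's: construct the candidate paths $\rho^a$ by truncating directed collider-to-$\mathbf{S}$ paths so that properties 1--3 hold automatically, and when property 4 or 5 fails, splice $\pi$ with segments of the $\rho$-paths to obtain an active path with strictly fewer colliders, invoking acyclicity of $\D$ to rule out the break-even case in which the junction vertex would become a newly created collider (the paper's argument via the root node $r$ and your directed-cycle argument are the same use of acyclicity). The only cosmetic differences are that you truncate $\rho^a$ at the first vertex in $\mathbf{S}$ rather than at the first vertex in $\mathbf{O}(t-\taumax, t)$ as the paper does (both yield valid collider extension structures), and that you verify activity of the spliced paths directly rather than by citing Lemma 3.3.1 of Spirtes et al.
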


\begin{proof}[Proof of Lemma~\ref{lemmaapp:extension_by_rho^a_paths}]
We divide the proof into three steps.

\underline{Step 1: The first, second and third property of collider extension structures holds}\\
Let $1 \leq a \leq n_c$. Since $c_a$ is a collider on $\pi$ and $\pi$ is active given $\mathbf{S}$, there is $S_a \in \mathbf{S}$ (which may be equal to $c_a$) such that $c_a \in \an(S_a, \D)$. Hence, there is a (possibly trivial, namely if and only if $c_a = S_a$) directed path $\lambda^a$ from $c_a$ to $S_a$. On $\lambda^a$ let $d_a$ be the vertex closest to $c_a$ that is in $\mathbf{O}(t -\taumax, t)$ (which may be $c_a$ itself). This vertex $d_a$ exists because $S_a \in \mathbf{S} \subseteq \mathbf{O}(t -\taumax, t)$, such that $d_a$ is an ancestor of $\mathbf{S}$ by means of the subpath $\lambda(d_a, S_a)$. Let $\rho^a$ be the subpath $\lambda^a(c_a, d_a)$, which is the trivial path consisting of the single vertex $c_a = d_a \in \mathbf{O}(t -\taumax, t)$ or is a non-trivial directed path from $c_a$ to $d_a \in \mathbf{O}(t -\taumax, t)$. Moreover, by definition of $d_a$ together with the fact that $\mathbf{S} \subseteq \mathbf{O}(t -\taumax, t)$, no vertex on $\rho^a$ other than $d_a$ is in $\mathbf{S}$. The collection of paths $\rho^{1}, \dots , \rho^{n_c}$ thus fulfills the first three conditions of a collider extension structure of $\pi$ with respect to $\mathbf{S}$ and $\mathbf{O}(t -\taumax, t)$.

The first two of these conditions have two immediate implications that will be important later in this proof: First, if $\rho^a$ is non-trivial then $c_a \notin \mathbf{S}$. Second, if $\rho^a$ is non-trivial then it is active given $\mathbf{S} \setminus \{d_a\}$.

\underline{Step 2: The fourth property of collider extension structures or existence of $\pi^\prime$}\\
Assume there is $1 \leq a \leq n_c$ such that $\pi$ and $\rho^a$ do not intersect at $c_a$ only. Then, $\rho^a$ must be non-trivial and hence we get $c_a \notin \mathbf{S}$. Let $e_a$ be the vertex on $\rho^a$ closest to $c_a$ other than $c_a$ itself that is also on $\pi$. If $e_a$ is on $\pi((i, t_i), c_a)$, then let $v_1 = (i, t_i)$ and $v_2 = (j, t_j)$, else let $v_1 = (j, t_j)$ and $v_2 = (i, t_i)$. Let $\pi^\prime$ be the concatenation $\pi(v_1, e_a) \oplus \rho^a(e_a, c_a) \oplus \pi(c_a, v_2)$. By definition of $e_a$, $\pi^\prime$ is a path (rather than a walk) in $\D$ between $(i, t_i)$ and $(j, t_j)$. We now show that $\pi^\prime$ is active given $\mathbf{S}$ and has at most $n_c - 1$ colliders.

\textit{All colliders on $\pi^\prime$ are ancestors of $\mathbf{S}$:}
Since $\rho^{a}(e_a, c_a)$ is a non-trivial directed path from $c_a$ to $e_a$, every collider on $\pi^\prime$ is a collider on $\pi(v_1, e_a)$ or a collider on $\pi(c_a, v_2)$ or equals $e_a$. Every collider on $\pi(v_1, e_a)$ or $\pi(c_a, v_2)$ is a collider on $\pi$ and hence, because $\pi$ is active given $\mathbf{S}$, an ancestor of $\mathbf{S}$. Lastly, $e_a$ is an ancestor of $S_a \in \mathbf{S}$ by means of the path $\lambda^a(e_a, S_a)$.

\textit{No non end-point non-collider on $\pi^\prime$ is in $\mathbf{S}$:}
All non end-point non-colliders on $\pi(v_1, e_a)$ or $\pi(c_a, v_2)$ are non end-point non-colliders on $\pi$ and hence, because $\pi$ is active given $\mathbf{S}$, not in $\mathbf{S}$. All vertices on $\rho^a(e_a, c_a)$ other than, perhaps, $e_a$ are not in $\mathbf{S}$ because, as shown in step 1 of this proof, all vertices on $\rho^a$ other than $d_a$ are not in $\mathbf{S}$. Lastly, assume that $e_a$ is in $\mathbf{S}$ and a non end-point non-collider on $\pi^\prime$. Because $\rho^a(e_a, c_a)$ is into $e_a$, this assumption requires that $\pi(v_1, e_a)$ is a non-trivial path out of $e_a$. Consequently, $e_a$ is a non end-point non-collider on $\pi$, which is a contradiction because $e_a \in \mathbf{S}$ and $\pi$ is active given $\mathbf{S}$.

\textit{Number of colliders:}
There are no colliders on $\rho^a(e_a, c_a)$ because it is a directed path. If $v_1 = (i, t_i)$, then there are at most $a - 1$ colliders on $\pi(v_1, e_a)$ and exactly $n_c - a$ colliders on $\pi(c_a, v_2)$. If $v_1 = (j, t_j)$, then there are at most $n_c - a$ colliders on $\pi(v_1, e_a)$ and exactly $a - 1$ colliders on $\pi(c_a, v_2)$. The junction point $c_a$ is a non-collider on $\pi^\prime$ because $\rho^a(e_a, c_a)$ is out of $c_a$. Regarding $e_a$, there are two cases:
\begin{enumerate}
\item First, assume $e_a$ is a non-collider on $\pi^\prime$. Then, there are at most $(a-1)+(n_c - a) = n_c - 1$ colliders on $\pi^\prime$.
\item Second, assume $e_a$ is a collider on $\pi^\prime$. This assumption requires $\pi(v_1, e_a)$ to be into $e_a$. Let $r$ be that particular root node on $\pi(v_1, c_a)$ which is closest to $c_a$ on $\pi$. Then, $\pi(r, c_a)$ is non-trivial (because $c_a$ is a collider on $\pi$ and hence not a root on $\pi$) and directed from $r$ to $c_a$ (by combining the facts that $c_a$ is a collider on $\pi$, that $r$ is a root on $\pi$, and that no other root node on $\pi$ is between $r$ and $c_a$). Assume that $e_a$ is on $\pi(r, c_a)$. Then, $\pi(e_a, c_a)$ would be a non-trivial (because $e_a \neq c_a$) directed path from $e_a$ to $c_a$, which contradicts acyclicity because $c_a$ is an ancestor $e_a$ by means of $\rho^{a}(c_a, e_a)$. Hence, $e_a$ is on $\pi(v_1, r)$. Since $\pi(v_1, e_a)$ is into $e_a$, we thus see that $e_a$ is on $\pi(v_1, c_{a-1})$ if $v_1 = (i, t_i)$ and that $e_a$ is on $\pi(c_{a+1}, v_1)$ if $v_1 = (j, t_j)$. Consequently, there are at most $a - 2$ colliders on $\pi(v_1, e_a)$ if $v_1 = (i, t_i)$ and there are most $n_c - (a+1)$ colliders on $\pi(v_1, e_a)$ if $v_1 = (j, t_j)$ . In summary, there are most $(a-2) + (n_c - a) + 1= (n_c - (a+1)) + (a -1) + 1 = n_c - 1$ colliders on $\pi^\prime$. 
\end{enumerate}

Thus, if the collection of paths $\rho^{1}, \dots , \rho^{n_c}$ does not fulfill the fourth condition of a collider extension structure of $\pi$ with respect to $\mathbf{S}$ and $\mathbf{O}(t -\taumax, t)$, then there is path $\pi^\prime$ as in point~2 of this lemma. To complete this proof it is therefore sufficient to show the following statement: If the collection of paths $\rho^{1}, \dots , \rho^{n_c}$ fulfills the first four conditions of a collider extension structure of $\pi$ with respect to $\mathbf{S}$ and $\mathbf{O}(t -\taumax, t)$, then this collection of paths also fulfills the fifth condition of a collider extension structure (and hence is a collider extension structure) or there is path $\pi^\prime$ as in point 2 of the lemma.

\underline{Step 3: The fifth property of collider extension structures holds or existence of $\pi^\prime$}\\
Assume there are $1 \leq a, b \leq n_c$ with $a < b$ such that $\rho^a$ and $\rho^b$ intersect. Then, at least one of these paths must be non-trivial because $c_a \neq c_b$. If one of them is trivial and the other one is non-trivial, say $\rho^a$ is trivial and $\rho^b$ is non-trivial, then $\rho^b$ must contain $c_a$ and hence intersects with $\pi$ at a vertex other than $c_a$, namely $c_b$. Since this conclusion violates the fourth condition of a collider extension structure of $\pi$ with respect to $\mathbf{S}$ and $\mathbf{O}(t -\taumax, t)$, we do not need to consider this situation as explained in the last paragraph of step 2. Consequently, we can assume that both $\rho^a$ and $\rho^b$ are non-trivial and intersect $\pi$ at, respectively, $c_a$ and $c_b$ only.

The fact that both $\rho^a$ and $\rho^b$ are non-trivial implies $\mathbf{S} = \mathbf{S} \setminus \{c_a, c_b\}$, see step 1 of this proof. Let $f_{ab}$ be the vertex closest to $c_a$ on $\rho^a$ that is also on $\rho^b$. Because $\rho^a$ and $\rho^b$ respectively intersect $\pi$ at $c_a$ and $c_b$ only, $f_{ab}$ is neither $c_a$ nor $c_b$ and both $\rho^a(c_a, f_{ab})$ and $\rho^b(f_{ab}, c_b)$ are non-trivial paths. Moreover, $f_{ab}$ is an ancestor of $\mathbf{S}$ by means of the directed path $\lambda^a(f_{ab}, S_a)$ from $f_{ab}$ to $S_a \in \mathbf{S}$. Let $\pi^\prime$ be the concatenation $\pi((i, t_i), c_a) \oplus \rho^a(c_a, f_{ab}) \oplus \rho^b(f_{ab}, c_b) \oplus \pi(c_b, (j, t_j))$, which by the assumptions on $\rho^a$ and $\rho^b$ as well as the definition of $f_{ab}$ is a path (rather than a walk) in $\D$ between $(i, t_i)$ and $(j, t_j)$. Consider the four constituting subpaths of $\pi^\prime$:
\begin{enumerate}
\item First, $\pi((i, t_i), c_a)$ is active given $\mathbf{S} \setminus \{(i, t_i), c_a\} = \mathbf{S}$ because $\pi$ is active given $\mathbf{S}$.
\item Second and similar to the first point, $\pi(c_b, (j, t_j))$ is active given $\mathbf{S} \setminus \{c_b, (j, t_j)\}$.
\item Third, $\rho^a(c_a, f_{ab})$ is active given $\mathbf{S} \setminus \{d_a\}$ since $\rho^a$ is active given $\mathbf{S} \setminus \{d_a\}$ (see the last paragraph in part~1 of this proof). There are two cases:
\begin{enumerate}
\item If $f_{ab} = d_a$, then $\rho^a(c_a, f_{ab})$ is active given $\mathbf{S} \setminus \{c_a, f_{ab}\} = \mathbf{S} \setminus \{f_{ab}\} = \mathbf{S} \setminus \{d_{a}\}$.
\item If $f_{ab} \neq d_a$, then $f_{ab} \notin \mathbf{S}$ (because no vertex on $\rho^a$ other than $d_a$ is in $\mathbf{S}$) and $d_a$ is not on $\rho^a(c_a, f_{ab})$. Because $\rho^a(c_a, f_{ab})$ is collider-free, we thus see that $\rho^a(c_a, f_{ab})$ is active given $\mathbf{S} \setminus \{c_a, f_{ab}\} = \mathbf{S} = \{d_a\} \cup \left(\mathbf{S} \setminus \{d_a\}\right)$.
\end{enumerate}
Thus, $\rho^a(c_a, f_{ab})$ is active given $\mathbf{S} \setminus \{c_a, f_{ab}\}$.
\item Fourth and similar to the third point, $\rho^a(f_{ab}, c_{b})$ is active given $\mathbf{S} \setminus \{f_{ab}, c_b\}$.
\end{enumerate}
Since the junction points $c_a$ and $c_b$ are non-colliders on $\pi^\prime$ and not in $\mathbf{S}$, whereas the third junction point $f_{ab}$ is a collider on $\pi^\prime$ and an ancestor of $\mathbf{S}$, Lemma 3.3.1 in \citet{Spirtes2000} asserts that $\pi^\prime$ is active given $\mathbf{S}$. Lastly, there are exactly $(a-1) + 1 + (n_c - b) = n_c - (b - a) \leq n_c - 1$ colliders on $\pi^\prime$.

Thus, if $\rho^{1}, \dots , \rho^{n_c}$ fulfills the first four conditions of a collider extension structure of $\pi$ with respect to $\mathbf{S}$ and $\mathbf{O}(t -\taumax, t)$, then it also fulfills the fifth condition or there is a path $\pi^\prime$ as in point 2 of the lemma.
\end{proof}

By induction over the number of colliders $n_c$, using Lemma~\ref{lemmaapp:MG_subgraph_of_MGc_reason_part_1} as the induction base case and Lemma~\ref{lemmaapp:extension_by_rho^a_paths} for the induction step, we thus arrive at the following conclusion: For the purpose of proving Lemma~4.14 it is thus sufficient to consider $d$-connecting paths $\pi$ in $\D$ for which there is a collider extension structure of $\pi$ with respect to $\mathbf{S}$ and $\mathbf{O}(t -\taumax, t)$. This reasoning allows to overcome the first complication mentioned above in the following way (see Lemma~\ref{lemmaapp:glueing_induced_paths}).

\begin{mydef}[Notation for remaining parts of the proof]\label{defapp:collider_extended_paths}
Given the assumptions and notation of Def.~\ref{defapp:collider_extension}, let $\rho^{1}, \dots , \rho^{n_c}$ be a collider extension structure of $\pi$ with respect to $\mathbf{S}$ and $\mathbf{O}(t -\taumax, t)$. We let $\rho^0$ and $\rho^{n_c+1}$ be the trivial paths that, respectively, consist of $c_0 = d_0 = (i, t_i)$ and $c_{n_c+1} = d_{n_c+1} = (j, t_j)$ only. Moreover, for all $0 \leq a_1 < a_2 \leq n_c + 1$ we let $\pi^{a_1, a_2} = \pi(c_{a_1}, c_{a_2})$ and $\tilde{\pi}^{a_1, a_2} = \rho^{a_1}(d_{a_1}, c_{a_1}) \oplus \pi^{a_1, a_2} \oplus \rho^{a_2}$.
\end{mydef}

\begin{myremark}[on Def.~\ref{defapp:collider_extended_paths}]
The concatenation $\tilde{\pi}^{a_1, a_2}$ is a path (rather than a walk) in $\D$ according to the fourth and fifth property of collider extension structures, and both end-points of $\tilde{\pi}^{a_1, a_2}$ are in $\mathbf{O}(t-\taumax, t)$ (according to the first property of collider extension structures).
\end{myremark}

\begin{mylemma}\label{lemmaapp:collider_extended_paths_active}
Given the assumptions and notation of Def.~\ref{defapp:collider_extended_paths}, for all $0 \leq a \leq n_c$:
\begin{enumerate}
\item $\tilde{\pi}^{a, a+1}$ is active given $\mathbf{S} \setminus \{d_a, d_{a+1}\}$.
\item $\tilde{\pi}^{a, a+1}_{ci}$, the canonical induced path of $\tilde{\pi}^{a, a+1}$, is active given $\mathbf{S} \setminus \{d_a, d_{a+1}\}$.
\end{enumerate}
\end{mylemma}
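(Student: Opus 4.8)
The plan is to reduce both parts to the single structural observation that each path $\tilde{\pi}^{a, a+1}$ is \emph{collider-free}; everything then follows from the activeness bookkeeping already developed for $\pi$ together with part 5.~of Prop.~\ref{propapp:canonical_path_properties}. To see that $\tilde{\pi}^{a, a+1} = \rho^{a}(d_a, c_a) \oplus \pi^{a, a+1} \oplus \rho^{a+1}$ is collider-free, I would inspect its three constituent pieces and two junction vertices. The two $\rho$-segments are directed paths and hence contain no collider. The middle piece $\pi^{a, a+1} = \pi(c_a, c_{a+1})$ runs between $c_a$ and $c_{a+1}$, which are consecutive in the collider ordering of $\pi$ (with the end-points of $\pi$ as boundaries), so none of its interior vertices is a collider on $\pi$, hence none on $\tilde{\pi}^{a, a+1}$. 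At the junction $c_a$ (when $1 \leq a$ and $\rho^a$ is non-trivial) the incident edge coming from $\rho^a$ is $c_a \tailhead \cdots$, i.e.~out of $c_a$, while the incident edge coming from $\pi$ is into $c_a$ because $c_a$ is a collider on $\pi$; thus $c_a$ is a non-collider on $\tilde{\pi}^{a, a+1}$. The same argument, with $\rho^{a+1}$ directed out of $c_{a+1}$, handles the junction $c_{a+1}$. When $a = 0$, $a + 1 = n_c + 1$, or a $\rho$-segment is trivial, the corresponding junction vertex is simply an end-point of $\tilde{\pi}^{a, a+1}$ and needs no treatment. Hence $\tilde{\pi}^{a, a+1}$ is collider-free.

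For part 1 it then suffices to verify that no \emph{interior} vertex of $\tilde{\pi}^{a, a+1}$ lies in $\mathbf{S}$, since a collider-free path is active given a set exactly when it avoids that set in its interior. I would check the three vertex types: first, interior vertices of a $\rho$-segment are not in $\mathbf{S}$ because by property 2 of collider extension structures (Def.~\ref{defapp:collider_extension}) the only vertex of $\rho^a$ in $\mathbf{S}$ is $d_a$, which is an end-point of $\tilde{\pi}^{a, a+1}$; second, a junction vertex $c_a$, when its $\rho$-segment is non-trivial, satisfies $c_a \neq d_a$, so by property 2 again $c_a \notin \mathbf{S}$ (and when the segment is trivial the vertex is an end-point); third, interior vertices of $\pi^{a, a+1}$ are non-colliders on $\pi$ and hence, as $\pi$ is active given $\mathbf{S}$, not in $\mathbf{S}$. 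Consequently no interior vertex is in $\mathbf{S}$, a fortiori none in $\mathbf{S} \setminus \{d_a, d_{a+1}\}$, and $\tilde{\pi}^{a, a+1}$ is active given $\mathbf{S} \setminus \{d_a, d_{a+1}\}$.

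For part 2 I would first note that $\tilde{\pi}^{a, a+1}$ meets the hypotheses of Def.~\ref{mydef:canonical_path}: it is a collider-free path between the \emph{distinct} observable vertices $d_a$ and $d_{a+1}$, both of which lie in $\mathbf{O}(t - \taumax, t)$ by property 1 of collider extension structures. Distinctness follows since $d_a$ and $d_{a+1}$ are either the distinct colliders $c_a \neq c_{a+1}$ or lie on the non-intersecting paths $\rho^a$ and $\rho^{a+1}$ (property 5), while the mixed trivial/non-trivial cases are excluded because a non-trivial $\rho$ meets $\pi$ only at its own collider. Thus the canonically induced path $\tilde{\pi}^{a, a+1}_{ci}$ is well-defined, and applying part 5.~of Prop.~\ref{propapp:canonical_path_properties} with conditioning set $\mathbf{S} \setminus \{d_a, d_{a+1}\}$ to the path $\tilde{\pi}^{a, a+1}$---which by part 1 is active given that set---yields that $\tilde{\pi}^{a, a+1}_{ci}$ is active given $\mathbf{S} \setminus \{d_a, d_{a+1}\}$ in $\Dc(\Mtaumax(\D))$.

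The only delicate point, and the step I expect to require the most care, is the collider-freeness verification at the junctions $c_a$ and $c_{a+1}$: one must correctly track that the $\rho$-segments are \emph{directed out of} the colliders while $\pi$ enters them as colliders, and separately dispose of the boundary and trivial-$\rho$ cases so that these vertices are genuinely non-colliders or end-points rather than colliders. Once this is pinned down the remaining bookkeeping is routine.
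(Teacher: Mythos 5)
Your proposal is correct and follows essentially the same route as the paper's proof: collider-freeness of $\tilde{\pi}^{a,a+1}$ from the directedness of the $\rho$-segments out of the colliders plus collider-freeness of $\pi(c_a,c_{a+1})$, activeness from property 2 of collider extension structures together with activeness of $\pi$, and part 2 via existence of the canonically induced path and part 5.~of Prop.~\ref{propapp:canonical_path_properties}. The only differences are cosmetic: the paper argues activeness by contradiction where you verify it directly, and you spell out the distinctness of $d_a$ and $d_{a+1}$, which the paper leaves implicit.
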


\begin{myremark}[on Lemma~\ref{lemmaapp:collider_extended_paths_active}]
These two statements concern different graphs: The first statement is with respect to $\D$, the second statement is with respect to $\Dc(\Mtaumax(\D))$.
\end{myremark}

\begin{proof}[Proof of Lemma~\ref{lemmaapp:collider_extended_paths_active}]
\textbf{1.}
Since $\rho^a$ and $\rho^{a+1}$ are trivial paths or non-trivial paths out of, respectively, $c_{a}$ and $c_{a+1}$ and since $\pi^{a, a+1}$ is collider-free, $\tilde{\pi}^{a, a+1}$ is collider-free. Thus, assuming that $\tilde{\pi}^{a, a+1}$ is blocked given $\mathbf{S} \setminus \{d_a, d_{a+1}\}$, there is a non end-point non-collider $v$ on $\tilde{\pi}^{a, a+1}$ with $v \in \mathbf{S} \setminus \{d_a, d_{a+1}\}$. This vertex cannot be on $\rho^{a}(d_{a}, c_{a})$ because according to the definition of collider extension structures the opposite requires $v = d_a$. Since $v$ can similarly not be on $\rho^{a+1}$, $v$ must be a non end-point vertex on $\pi^{a_1, a_2}$. However, then $v$ must be a non end-point non-collider on $\pi$, which is a contradiction because $v \in \mathbf{S}$ and $\pi$ is active given $\mathbf{S}$.

\textbf{2.}
Since $\tilde{\pi}^{a, a+1}$ is collider-free, as shown in the proof of part~1 of Lemma~\ref{lemmaapp:collider_extended_paths_active}, and since both $d_{a}$ and $d_{a+1}$ are by definition in $\mathbf{O}(t-\taumax, t)$, the canonical induced path $\tilde{\pi}^{a, a+1}_{ci}$ exists. The statement follows by combining part~1 of Lemma~\ref{lemmaapp:collider_extended_paths_active} with part 5 of Lemma~\ref{lemmaapp:canonical_path_properties}.
\end{proof}

At this point we face the second complication mentioned above: We can now \emph{not} straight away apply Lemma 3.3.1 in \citet{Spirtes2000} to the ordered sequence of paths $\tilde{\pi}^{0, 1}_{ci}, \dots , \tilde{\pi}^{n_c, n_c+1}_{ci}$ in order to infer the existence of a path between $d_0 = (i, t_i)$ and $d_{n_c+1} = (j, t_j)$ in $\Dc(\Mtaumax(\D))$ that is active given $\mathbf{S}$. To recall, the reason is that $\tilde{\pi}^{a, a+1}_{ci}$ may be out of one of its end-point vertices although $\tilde{\pi}^{a, a+1}$ is into this vertex. To resolve this complication, we now show that such a situation requires the existence of a certain inducing path in $\D$ and hence the existence of an additional edge in $\Dc(\Mtaumax(\D))$ which can be used to bypass that vertex.

\begin{mydef}[Canonical paths]\label{defapp:canonical_paths}
Let $\pi$ be a path in $\D$ between distinct observable vertices $(i, t_i)$ and $(j, t_j)$ with $t-\taumax \leq t_i, t_j \leq \taumax$. A path $\pi_c$ between $(i, t_i)$ and $(j, t_j)$ in $\Dc(\Mtaumax(\D))$ is \emph{canonical with respect to $\pi$} if all of the following holds:
\begin{enumerate}
\item All observable vertices on $\pi_c$ are also on $\pi$.
\item If $(k_1, t_1)$, $(k_2, t_2)$ and $(k_3, t_3)$ are distinct observable vertices on $\pi_c$ and $(k_2, t_2)$ is on $\pi_c((k_1, t_1), (k_3, t_3))$, then $(k_2, t_2)$ is on $\pi((k_1, t_1), (k_3, t_3))$.
\item $\pi_c$ is collider-free.
\item If $(k_1, t_1)$ and $(k_2, t_2)$ are distinct observable vertices on $\pi_c$, then $\pi((k_1, t_1), (k_2, t_2))$ is an inducing path relative to $\mathbf{O}(\operatorname{max}(t_1, t_2) - \taumax, t)[\pi_c((k_1, t_1), (k_2, t_2))]$.
\end{enumerate}
\end{mydef}

\begin{myremark}[on Def.~\ref{defapp:canonical_paths}]
First, Lemma~\ref{lemmaapp:canonical_path_properties} implies that the canonically induced path $\pi_{ci}$ of a collider-free path $\pi$ between distinct observable vertices $(i, t_i)$ and $(j, t_j)$ with $t-\taumax \leq t_i, t_j \leq \taumax$ is canonical with respect to $\pi$. Second, while the first property in Def.~\ref{defapp:canonical_paths} is implied by the second property and would thus not be needed, we have included it in the definition for clarity.
\end{myremark}

\begin{mylemma}\label{lemmaapp:glueing_induced_paths}
Given the assumptions and notation of Def.~\ref{defapp:collider_extended_paths}, let $0 \leq a_1 < a_2 < a_3 \leq n_c+1$. Assume that $\tilde{\pi}^{a_1, a_2}_c$ is canonical with respect to $\tilde{\pi}^{a_1, a_2}$ and active given $\mathbf{S} \setminus \{d_{a_1}, d_{a_2}\}$, and that $\tilde{\pi}^{a_2, a_3}_c$ is canonical with respect to $\tilde{\pi}^{a_2, a_3}$ and active given $\mathbf{S} \setminus \{d_{a_2}, d_{a_3}\}$. Then: If at least one of $\tilde{\pi}^{a_1, a_2}_c$ and $\tilde{\pi}^{a_2, a_3}_c$ is out of $d_{a_2}$, then there is a path $\tilde{\pi}^{a_1, a_3}_c$ that is canonical with respect to $\tilde{\pi}^{a_1, a_3}$ and active given $\mathbf{S} \setminus \{d_{a_1}, d_{a_3}\}$.
\end{mylemma}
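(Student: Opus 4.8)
The plan is to exploit the rigidity forced by ``out of $d_{a_2}$'': it makes one of the two pieces a \emph{directed} path, which lets me bypass the detour $\rho^{a_2}$ (whose interior vertices, notably $d_{a_2}$, do not lie on $\tilde\pi^{a_1,a_3}$) and splice the two given canonical paths into a single one along $\tilde\pi^{a_1,a_3}$.

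\emph{Reduction.} By the symmetry of Defs.~\ref{defapp:collider_extension} and \ref{defapp:collider_extended_paths} under reversing $\pi$ (equivalently, swapping $a_1$ and $a_3$), I may assume $\tilde\pi^{a_1,a_2}_c$ is out of $d_{a_2}$. Since a canonical path is collider-free (Def.~\ref{defapp:canonical_paths}), being out of the endpoint $d_{a_2}$ forces every edge to point toward $d_{a_1}$, so $\tilde\pi^{a_1,a_2}_c$ is a directed path $d_{a_2}\tailhead\cdots\tailhead d_{a_1}$ of direct edges between observable vertices (an almost-adjacency would introduce an interior source). Writing its observable vertices as $d_{a_2}=u_p,\dots,u_0=d_{a_1}$, Lemma~\ref{lemmaapp:at_most_almost_adjacent_2} together with the agreement of ancestral relationships in $\D$ and $\Dc(\Mtaumax(\D))$ gives $u_{k+1}\in\an(u_k,\D)$, hence $d_{a_2}\in\an(u_k,\D)$ for all $k$; composing with the directed path $\rho^{a_2}$ from $c_{a_2}$ to $d_{a_2}$ also yields $c_{a_2}\in\an(u_k,\D)$.

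\emph{Splice construction.} Let $v$ be the last observable vertex of $\tilde\pi^{a_1,a_2}_c$ (ordered from $d_{a_1}$) that lies on $\tilde\pi^{a_1,a_3}$; since $\tilde\pi^{a_1,a_3}$ and $\tilde\pi^{a_1,a_2}$ share the prefix $\rho^{a_1}(d_{a_1},c_{a_1})\oplus\pi(c_{a_1},c_{a_2})$, the vertex $v$ lies weakly before $c_{a_2}$ and all chain vertices up to $v$ lie on $\tilde\pi^{a_1,a_3}$. Dually, let $w$ be the first observable vertex of $\tilde\pi^{a_2,a_3}_c$ (ordered toward $d_{a_3}$) lying on $\tilde\pi^{a_1,a_3}$; it lies weakly after $c_{a_2}$, and its predecessor $w^{-}$ on $\tilde\pi^{a_2,a_3}_c$ lies strictly inside $\rho^{a_2}$, so $w^{-}\in\an(d_{a_2},\D)$. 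I propose
\[
\tilde\pi^{a_1,a_3}_c \;=\; \tilde\pi^{a_1,a_2}_c(d_{a_1},v)\;\oplus\;e\;\oplus\;\tilde\pi^{a_2,a_3}_c(w,d_{a_3}),
\]
where $e$ is an edge or almost-edge between $v$ and $w$ in $\Dc(\Mtaumax(\D))$, oriented according to the ancestral relations of $v$ and $w$ in $\D$ by Lemma~\ref{lemmaapp:at_most_almost_adjacent_2} (the degenerate cases $v=d_{a_1}$ or $w=d_{a_3}$ are subsumed).

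\emph{The crux.} To justify $e$ I must show $v$ and $w$ are adjacent or almost adjacent in $\Dc(\Mtaumax(\D))$, which by Lemma~\ref{lemmaapp:at_most_almost_adjacent} reduces to exhibiting an inducing path between them in $\D$ relative to the appropriate observable window; the natural candidate is $\tilde\pi^{a_1,a_3}(v,w)$, running from $v$ through $c_{a_2}$ to $w$. Its observable non-endpoint vertices are colliders: those between $v$ and $c_{a_2}$ via property~4 of $\tilde\pi^{a_1,a_2}_c$ applied to the consecutive pair $(v,u')$ with $u'$ on $\rho^{a_2}$, those between $c_{a_2}$ and $w$ via property~4 of $\tilde\pi^{a_2,a_3}_c$ applied to $(w^{-},w)$, and $c_{a_2}$ is a collider on $\pi$. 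Crucially, each such collider is an ancestor of $v$: the colliders up to and including $c_{a_2}$ are, by the Reduction; a collider past $c_{a_2}$ is an ancestor of $w$ or of $w^{-}$, and $w^{-}\in\an(d_{a_2})\subseteq\an(v)$. This is exactly where ``out of $d_{a_2}$'' is used—it makes $d_{a_2}$, and hence $w^{-}$ and all these colliders, ancestors of the \emph{new} endpoint $v$. I expect the hardest part to be the time-window bookkeeping in property~4, namely verifying that the relativizing sets of the two pieces combine to the set $\mathbf O(\max(t_v,t_w)-\taumax,t)[\,\cdot\,]$ required for the pair $(v,w)$; this should be routine given time order, but it is the most delicate step.

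\emph{Verification.} The glued path is collider-free: at $v$ the left piece leaves with a tail (directedness), so $v$ is never a collider; at $w$, if the right piece points into $w$ toward $d_{a_3}$, then collider-freeness of $\tilde\pi^{a_2,a_3}_c$ makes its other edge leave $w$ with a tail, giving $w\in\an(w^{-})\subseteq\an(v)$ and hence, by Lemma~\ref{lemmaapp:at_most_almost_adjacent_2}, a tail of $e$ at $w$, so $w$ is not a collider either. It is active given $\mathbf S\setminus\{d_{a_1},d_{a_3}\}$ because it is collider-free and no non-endpoint vertex lies in $\mathbf S$: the interior vertices of the two pieces avoid $\mathbf S$ by their respective activeness (they differ from the excluded endpoints), and any intermediate vertex of $e$ is unobservable. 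Finally it is canonical with respect to $\tilde\pi^{a_1,a_3}$: properties~1–2 of Def.~\ref{defapp:canonical_paths} hold by the choice of $v,w$ and order preservation of the pieces, property~3 is the collider-freeness just shown, and property~4 follows from property~4 of the two pieces together with the inducing path established in the previous step for the single new pair $(v,w)$.
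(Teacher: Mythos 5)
Your overall strategy---splice the two canonical paths at chosen observable vertices and justify the connecting edge via an inducing path in $\D$ plus Lemma~\ref{lemmaapp:at_most_almost_adjacent}---is the same as the paper's, your symmetry reduction to ``$\tilde{\pi}^{a_1,a_2}_c$ out of $d_{a_2}$'' is legitimate, and your left splice vertex $v$ in fact coincides with the paper's (mirrored) choice: by the order-preservation property the vertices of the directed path lying on the shared prefix form an initial segment, and an acyclicity argument shows the neighbor of $d_{a_2}$ is among them, so $v$ is exactly that neighbor. The genuine gap is your right splice vertex $w$ and the step you defer as ``routine'': nothing in your construction guarantees $|t(v)-t(w)|\leq\taumax$, and without that bound the edge $e$ cannot exist, no matter which inducing path you exhibit. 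In $\Dc(\Mtaumax(\D))$ every edge has lag at most $\taumax$, and an almost-adjacency $v \headtail u \tailhead w$ with $u$ unobservable also spans at most $\taumax$ time steps (the inserted vertex $((i,j,\tau),s)$ has its two children at times $s$ and $s+\tau$ with $\tau \leq \taumax$); equivalently, Lemma~\ref{lemmaapp:at_most_almost_adjacent} needs both shifted copies to lie inside $[t-\taumax,t]$. But your $w$, the first observable vertex of $\tilde{\pi}^{a_2,a_3}_c$ lying on $\tilde{\pi}^{a_1,a_3}$, is constrained in time only through its predecessor on $\tilde{\pi}^{a_2,a_3}_c$, and that predecessor lies on $\rho^{a_2}$ or is $d_{a_2}$ (or is unobservable---your claim that it lies \emph{strictly} inside $\rho^{a_2}$ is also not forced), so it is bounded below only by $t(c_{a_2})$, which can be arbitrarily far before $t-\taumax$. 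Concretely: let $\tilde{\pi}^{a_1,a_2}_c$ be the single edge $d_{a_2} \tailhead d_{a_1}$ with $t(d_{a_1})=t$ and $t(d_{a_2})=t-\taumax$, so $v=d_{a_1}$, and let $\tilde{\pi}^{a_2,a_3}_c$ be into $d_{a_2}$ with first edge $w \tailhead d_{a_2}$ where $w$ is on the shared suffix at time $t-2\taumax$; then $t(v)-t(w)=2\taumax>\taumax$ and $e$ does not exist, so your glued object is not even a path in $\Dc(\Mtaumax(\D))$.

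This failure mode is precisely what the paper's construction is designed around: the splice vertex on the non-directed side ($h_{a_1,a_2}$ in the paper's notation) is \emph{not} defined by membership in $\tilde{\pi}^{a_1,a_3}$ but by an explicit time condition, namely as the vertex on the non-directed canonical path closest to $d_{a_2}$ (other than $d_{a_2}$) with $t(g_{a_2,a_3})-\taumax \leq t(h_{a_1,a_2})$. One must then prove (the paper's steps 4--6) that every vertex skipped by this choice---including vertices of your $w$-type that lie on the shared subpath but are too old---sits before $t(g_{a_2,a_3})-\taumax$, so that the concatenation $\psi$ through $c_{a_2}$ is an inducing path relative to the correctly time-restricted observable set, and only then does Lemma~\ref{lemmaapp:at_most_almost_adjacent} deliver the connecting edge $\kappa^{a_2}$. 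So the ``time-window bookkeeping'' is not a routine verification of your construction: it is false for your choice of $(v,w)$, and repairing it forces you to move the splice point itself, which is the substantive content of the paper's proof.
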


\begin{proof}[Proof of Lemma~\ref{lemmaapp:glueing_induced_paths}]
We here assume that $\tilde{\pi}^{a_2, a_3}_c$ is out of $d_{a_2}$, the case in which $\tilde{\pi}^{a_2, a_3}_c$ is into $d_{a_2}$ and $\tilde{\pi}^{a_1, a_2}_c$ out of $d_{a_2}$ follows equivalently. To simplify notation we write $t(v)$ for the time step of a vertex $v$, i.e., $v = (\cdot, t(v))$. We divide the proof into 14 steps.

\underline{Step 1: No vertex on $\tilde{\pi}^{a_1, a_2}$ or $\tilde{\pi}^{a_2, a_3}$ is after $t$}\\
If there would be a vertex on $\tilde{\pi}^{a_1, a_2}$ (on $\tilde{\pi}^{a_2, a_3}$) that is after $t$, then this path would have a collider after $t$ because of time order of $\D$ and because both $d_{a_1}$ and $d_{a_2}$ (both $d_{a_2}$ and $d_{a_3}$) are not after $t$. Again using time order, this collider could not be unblocked by $\mathbf{S} \setminus \{d_{a_1}, d_{a_2}\}$ (by $\mathbf{S} \setminus \{d_{a_2}, d_{a_3}\}$) because by definition of $\mathbf{S}$ all vertices in $\mathbf{S}$ are not after $t$ (see Def.~\ref{defapp:collider_extension}). This conclusion contradicts the assumption that $\tilde{\pi}^{a_1, a_2}$ (on $\tilde{\pi}^{a_2, a_3}$) is active given $\mathbf{S} \setminus \{d_{a_1}, d_{a_2}\}$ (given $\mathbf{S} \setminus \{d_{a_2}, d_{a_3}\}$).

\underline{Step 2: $c_{a_2}$ and $d_{a_2}$ are in $\D$ ancestors of $d_{a_3}$}\\
Because $\tilde{\pi}^{a_2, a_3}_c$ is out of $d_{a_2}$ and collider-free (the latter by means of being canonical with respect to $\tilde{\pi}^{a_2, a_3}$), the path $\tilde{\pi}^{a_2, a_3}_c$ is directed from $d_{a_2}$ to $d_{a_3}$. Hence, all vertices on $\tilde{\pi}^{a_2, a_3}_c$ are ancestors of $d_{a_3}$ and descendants of $d_{a_2}$ in $\Dc(\Mtaumax(\D))$ and thus, by part 2 of Lemma~\ref{lemmaapp:D_same_ancestral_mtaumax_as_Dc(Mtaumax(D))}, also in $\D$. Because $c_{a_2}$ is an ancestor of $d_{a_2}$ in $\D$ by means of $\rho^{a_2}$, we thus see that $c_{a_2}$ is in $\D$ an ancestor of every vertex on $\tilde{\pi}^{a_2, a_3}_c$.

\underline{Step 3: Definition and properties of $g_{a_2, a_3}$ and properties of $\tilde{\pi}^{a_2, a_3}(c_{a_2}, g_{a_2, a_3})$}\\
Let $g_{a_2, a_3}$ be the vertex next to $d_{a_2}$ on $\tilde{\pi}^{a_2, a_3}_c$ (this may be $d_{a_3}$). Since $\tilde{\pi}^{a_2, a_3}_c$ is directed from $d_{a_2}$ to $d_{a_3}$, the path $\tilde{\pi}^{a_2, a_3}_c$ is into $g_{a_2, a_3}$. Because in $\Dc(\Mtaumax(\D))$ there are no edges into unobservable vertices, we see that $g_{a_2, a_3}$ is observable. Moreover, using time order of $\Dc(\Mtaumax(\D))$ and that $\tilde{\pi}^{a_2, a_3}_c$ is directed from $d_{a_2}$ to $d_{a_3}$, we see that $g_{a_2, a_3}$ is not before $d_{a_2}$ and not after $d_{a_3}$. Hence, $g_{a_2, a_3}$ is within the observed time window $[t-\taumax, t]$.

Since $\tilde{\pi}^{a_2, a_3}_c$ is canonical with respect to $\tilde{\pi}^{a_2, a_3}$ and $g_{a_2, a_3}$ is on $\tilde{\pi}^{a_2, a_3}_c$, the vertex $g_{a_2, a_3}$ is on $\tilde{\pi}^{a_2, a_3} = \rho^{a_2}(d_{a_2}, c_{a_2}) \oplus \pi^{a_2, a_3} \oplus \rho^{a_3}$. If $g_{a_2, a_3}$ were on $\rho^{a_2}$, then $g_{a_2, a_3}$ would in $\D$ be an ancestor of $d_{a_2}$ by means of $\rho^{a_2}(g_{a_2, a_3}, d_{a_2})$. This ancestorship would violate acyclicity of $\D$ because $g_{a_2, a_3}$ is a descendant of $d_{a_2}$ according to step 2. Hence, $g_{a_2, a_3}$ is on $\tilde{\pi}^{a_2, a_3}(c_{a_2}, d_{a_3}) = \pi^{a_2, a_3} \oplus \rho^{a_3}$ excluding $c_{a_2}$. Moreover, $\tilde{\pi}^{a_2, a_3}(c_{a_2}, g_{a_2, a_3})$ is a non-trivial subpath of $\tilde{\pi}^{a_2, a_3}(c_{a_2}, d_{a_3})$ and of $\tilde{\pi}^{a_2, a_3}(d_{a_2}, g_{a_2, a_3})$. Lastly, $\tilde{\pi}^{a_2, a_3}(c_{a_2}, g_{a_2, a_3})$ is into $c_{a_2}$ because $\pi$ is into $c_{a_2}$.

Since $\tilde{\pi}^{a_2, a_3}_c$ is canonical with respect to $\tilde{\pi}^{a_2, a_3}$ and $t(d_{a_2}) \leq t(g_{a_1, a_2})$ by time order, $\tilde{\pi}^{a_2, a_3}(d_{a_2}, g_{a_2, a_3})$ is an inducing path relative to $\mathbf{O}(t(g_{a_2, a_3})-\taumax, t)[\tilde{\pi}^{a_2, a_3}_c(d_{a_2}, g_{a_2, a_3})]$. Here, the simplification $\mathbf{O}(t(g_{a_2, a_3})-\taumax, t)[\tilde{\pi}^{a_2, a_3}_c(d_{a_2}, g_{a_2, a_3})] = \mathbf{O}(t(g_{a_2, a_3})-\taumax, t)$ applies because $\tilde{\pi}^{a_2, a_3}_c(d_{a_2}, g_{a_2, a_3})$ by definition of $g_{a_2, a_3}$ consists of its end point vertices $d_{a_2}$ and $g_{a_2, a_3}$ only. Using the defining properties of inducing paths, we thus see that the path $\tilde{\pi}^{a_2, a_3}(d_{a_2}, g_{a_2, a_3})$ has the following two properties:
\begin{enumerate}
\item First, if $v$ is an observable non end-point non-collider on $\tilde{\pi}^{a_2, a_3}(d_{a_2}, g_{a_2, a_3})$, then $t(v) < t(g_{a_2, a_3})-\taumax$ or $t(v) > t$. Because step 1 excludes $t(v) > t$, in fact $t(v) < t(g_{a_2, a_3})-\taumax$.
\item Second, if $v$ is a collider on $\tilde{\pi}^{a_2, a_3}(d_{a_2}, g_{a_2, a_3})$, then $v$ is in $\D$ an ancestor of $d_{a_2}$ or $g_{a_2, a_3}$. Since $d_{a_2}$ is in $\D$ an ancestor of $g_{a_2, a_3}$, the vertex $v$ is, in fact, an ancestor of $g_{a_2, a_3}$.
\end{enumerate}
Both of these statements are also true for $\tilde{\pi}^{a_2, a_3}(c_{a_2}, g_{a_2, a_3})$ because it is a subpath of $\tilde{\pi}^{a_2, a_3}(d_{a_2}, g_{a_2, a_3})$.

\underline{Step 4: All observable vertices on $\rho^{a_2}$ other than $d_{a_2}$ are before $t(g_{a_2, a_3}) - \taumax$}\\
Let $v \neq d_{a_2}$ be an observable vertex on $\rho^{a_2}$. Since $g_{a_2, a_3}$ is not on $\rho^{a_2}$, see step 2, $v$ is then a non end-point vertex on $\tilde{\pi}^{a_2, a_3}(d_{a_2}, g_{a_2, a_3})$. Moreover, since $\rho^{a_2}$ is directed from $c_{a_2}$ to $d_{a_2}$, the vertex $v$ is a non-collider on $\tilde{\pi}^{a_2, a_3}(d_{a_2}, g_{a_2, a_3})$. From step 3 we then get that $t(v) < t(g_{a_2, a_3}) - \taumax$ or $t(v) > t$, and step 1 further excludes the case $t(v) > t$.

\underline{Step 5: Definition and properties of $h_{a_1, a_2}$ and properties of $\tilde{\pi}^{a_1, a_2}(h_{a_1, a_2}, c_{a_2})$}\\
Let $h_{a_1, a_2}$ be the observable vertex on $\tilde{\pi}^{a_1, a_2}_c$ closest to $d_{a_2}$ other than $d_{a_2}$ itself that is not more than $\taumax$ time steps before $g_{a_2, a_3}$, i.e., for which $t(g_{a_1, a_2}) - \taumax \leq t(h_{a_1, a_2})$ (note that $h_{a_1,a_2}$ may be $d_{a_1}$).

Since $\tilde{\pi}^{a_1, a_2}_c$ is canonical with respect to $\tilde{\pi}^{a_1, a_2}$, the vertex $h_{a_1, a_2}$ is on $\tilde{\pi}^{a_1, a_2} = \rho^{a_1}(d_{a_1}, c_{a_1}) \oplus \pi^{a_1, a_2} \oplus \rho^{a_2}$. Due to step 4 and $t(g_{a_1, a_2}) - \taumax \leq t(h_{a_1, a_2})$, the vertex $h_{a_1, a_2}$ cannot be on $\rho^{a_2}$ unless $h_{a_1, a_2} = d_{a_2}$, which is, however, excluded by definition. Hence, $h_{a_1, a_2}$ is on $\tilde{\pi}^{a_1, a_2}(d_{a_1}, c_{a_2}) = \rho^{a_1}(d_{a_1}, c_{a_1}) \oplus \pi^{a_1, a_2}$ excluding $c_{a_2}$ (because $c_{a_2}$ is on $\rho_{a_2}$). Moreover, $\tilde{\pi}^{a_1, a_2}(h_{a_1, a_2}, c_{a_2})$ is a non-trivial subpath of $\tilde{\pi}^{a_1, a_2}(d_{a_1}, c_{a_2})$ and of $\tilde{\pi}^{a_2, a_3}(h_{a_1, a_2}, d_{a_2})$. Lastly, $\tilde{\pi}^{a_1, a_2}(h_{a_1, a_2}, c_{a_2})$ is into $c_{a_2}$ because $\pi^{a_1, a_2}$ is into $c_{a_2}$.

Write $t_{hd} = \operatorname{max}(t(h_{a_1, a_2}), t(d_{a_2}))$ and $t_{hg} = \operatorname{max}(t(h_{a_1, a_2}), t(g_{a_2, a_3}))$. Since $\tilde{\pi}^{a_1, a_2}_c$ is canonical with respect to $\tilde{\pi}^{a_1, a_2}$, the path $\tilde{\pi}^{a_1, a_2}(h_{a_1, a_2}, d_{a_2})$ is an inducing path relative to $\mathbf{O}(t_{hd}-\taumax, t)[\tilde{\pi}^{a_1, a_2}_c(h_{a_1, a_2}, d_{a_2})]$. In particular, $\tilde{\pi}^{a_1, a_2}(h_{a_1, a_2}, d_{a_2})$ has the following two properties:
\begin{enumerate}
\item First, if $v$ is an observable non end-point non-collider on $\tilde{\pi}^{a_1, a_2}(h_{a_1, a_2}, d_{a_2})$, then $t(v) < t_{hd} - \taumax$ or $t(v) > t$ or $v$ is on $\tilde{\pi}^{a_1, a_2}_c(h_{a_1, a_2}, d_{a_2})$. The case $t(v) > t$ is excluded by step 1. If $v$ is on $\tilde{\pi}^{a_1, a_2}_c(h_{a_1, a_2}, d_{a_2})$, then $t(v) < t(g_{a_1, a_2}) - \taumax$ by definition of $h_{a_1, a_2}$. Note that $t_{hd} - \taumax \leq t_{hg} - \taumax$ and $t(g_{a_1, a_2}) - \taumax \leq t_{hg} - \taumax$. Hence, in any case, $t(v) < t_{hg} - \taumax$.
\item Second, if $v$ is a collider on $\tilde{\pi}^{a_1, a_2}(h_{a_1, a_2}, d_{a_2})$, then $v$ is in $\D$ an ancestor of $h_{a_1, a_2}$ or $d_{a_2}$. Because $d_{a_2}$ is in $\D$ an ancestor of $g_{a_2, a_3}$, the vertex $v$ is, in fact, in $\D$ an ancestor of $h_{a_1, a_2}$ or $g_{a_2, a_3}$.
\end{enumerate}
Both of these statements are also true for $\tilde{\pi}^{a_1, a_2}(h_{a_1, a_2}, c_{a_2})$ because $\tilde{\pi}^{a_1, a_2}(h_{a_1, a_2}, c_{a_2})$ is a subpath of $\tilde{\pi}^{a_1, a_2}(h_{a_1, a_2}, d_{a_2})$.

\underline{Step 6: $g_{a_2, a_3}$ and $h_{a_1, a_2}$ are adjacent or almost adjacent in $\Dc(\Mtaumax(\D))$}\\
Consider the concatentation $\psi = \tilde{\pi}^{a_1, a_2}(h_{a_1, a_2}, c_{a_2}) \oplus \tilde{\pi}^{a_2, a_3}(c_{a_2}, g_{a_2, a_3})$. This concatenation is a path (rather than a walk) in $\D$ because $\tilde{\pi}^{a_2, a_3}(c_{a_2}, g_{a_2, a_3})$ is a subpath of $\pi^{a_2, a_3} \oplus \rho^{a_3}$ and because different $\rho^a$ do not intersect (by definition of collider extension structures). The junction point $c_{a_2}$ is a collider on $\psi$ because both $\tilde{\pi}^{a_1, a_2}(h_{a_1, a_2}, c_{a_2})$ and $\tilde{\pi}^{a_2, a_3}(c_{a_2}, g_{a_2, a_3})$ are into $c_{a_2}$, see steps 5 and 3. We now show that $\psi$ is an inducing path relative to $\mathbf{O}(t_{hg} - \taumax, t_{hg})$. To this end, we separately look at the colliders and non end-point non-colliers on $\psi$.

\textit{Colliders:} According to steps 3 and 5, every collider on $\tilde{\pi}^{a_1, a_2}(h_{a_1, a_2}, c_{a_2})$ and on $\tilde{\pi}^{a_2, a_3}(c_{a_2}, g_{a_2, a_3})$ is in $\D$ an ancestor of $h_{a_1, a_2}$ or $g_{a_2, a_3}$. The junction point $c_{a_2}$ is in $\D$ an ancestor of $g_{a_2, a_3}$ according to step 1.

\textit{Non end-point non-colliders:} Let $v$ be a non end-point non-collider on $\psi$. Since $c_{a_2}$ is a collider on $\psi$, the vertex $v$ is then a non end-point non-collider on $\tilde{\pi}^{a_1, a_2}(h_{a_1, a_2}, c_{a_2})$ or a non end-point non-collider on $\tilde{\pi}^{a_2, a_3}(c_{a_2}, g_{a_2, a_3})$. With steps 3 and 5 we then get $t(v) < t_{hg} - \taumax$.

Consequently, $\psi$ is an inducing path relative to the set of observable vertices within $\mathbf{O}(t_{hg} - \taumax, t_{hg})$. By shifting this structure forward in time by $t - t_{hg}$ time steps, we see that the forward shifted copies of $g_{a_2, a_3}$ and $h_{a_1, a_2}$ are adjacent in $\Mtaumax(\D)$. Hence, $g_{a_2, a_3}$ and $h_{a_1, a_2}$ are adjacent or almost adjacent in $\Dc(\Mtaumax(\D))$ according to Lemma~\ref{lemmaapp:at_most_almost_adjacent}.

For reference further below we note that $\psi$ is also on inducing path relative to $\mathbf{O}(t_{hg} - \taumax, t)$. This statement follows because, as shown, if $v$ is a non end-point non-collider on $\psi$, then $t(v) < t_{hg} - \taumax$ and thus $v$ is not in $\mathbf{O}(t_{hg} - \taumax, t) \setminus \mathbf{O}(t_{hg} - \taumax, t_{hg})$.

\underline{Step 7: Properties of $\tilde{\pi}^{a_2, a_3}_c(g_{a_2, a_3}, d_{a_3})$}\\
Since $\tilde{\pi}^{a_2, a_3}_c$ is directed from $d_{a_2}$ to $d_{a_3}$, see step 2, $\tilde{\pi}^{a_2, a_3}_c(g_{a_2, a_3}, d_{a_3})$ is the trivial path consisting of the single vertex $g_{a_2, a_3} = d_{a_3}$ or a non-trivial directed path from $g_{a_2, a_3}$ to $d_{a_3}$ and, hence, out of $g_{a_2, a_3}$ and into $d_{a_3}$. In particular, $\tilde{\pi}^{a_2, a_3}_c(g_{a_2, a_3}, d_{a_3})$ is collider-free.

Consider any vertex $v$ on $\tilde{\pi}^{a_2, a_3}_c(g_{a_2, a_3}, d_{a_3})$. Because in $\Dc(\Mtaumax(\D))$ there are no edges into unobservable vertices and because $g_{a_2, a_3}$ is observable, $v$ is observable. Moreover, because $\tilde{\pi}^{a_2, a_3}_c$ is canonical with respect to $\tilde{\pi}^{a_2, a_3}$, the vertex $v$ is on $\tilde{\pi}^{a_2, a_3}(g_{a_2, a_3}, d_{a_3})$. Since $g_{a_2, a_3}$ is on $\pi^{a_2, a_3} \oplus \rho^{a_3}$ excluding $c_{a_2}$, see step 3, also $v$ is on $\pi^{a_2, a_3} \oplus \rho^{a_3}$ excluding $c_{a_2}$. In particular, $v$ is not on $\rho^{a_2}$.

\underline{Step 8: Properties of $\tilde{\pi}^{a_1, a_2}_c(d_{a_1}, h_{a_1, a_2})$}\\
Because $\tilde{\pi}^{a_1, a_2}_c$ is canonical with respect to $\tilde{\pi}^{a_1, a_2}$, the path $\tilde{\pi}^{a_1, a_2}_c$ is collider-free. Consequently, also $\tilde{\pi}^{a_1, a_2}_c(d_{a_1}, h_{a_1, a_2})$ is collider-free.

Consider any observable vertex $v$ on $\tilde{\pi}^{a_1, a_2}_c(d_{a_1}, h_{a_1, a_2})$. Because $\tilde{\pi}^{a_1, a_2}_c$ is canonical with respect to $\tilde{\pi}^{a_1, a_2}$, the vertex $v$ is on $\tilde{\pi}^{a_1, a_2}(d_{a_1}, h_{a_1, a_2})$. Since $h_{a_1, a_2}$ is on $\rho^{a_1}(d_{a_1}, c_{a_1}) \oplus \pi^{a_1, a_2}$ excluding $c_{a_2}$, see step 5, also $v$ is on $\rho^{a_1}(d_{a_1}, c_{a_1}) \oplus \pi^{a_1, a_2}$ excluding $c_{a_2}$. In particular, we conclude that $v$ is not on $\rho^{a_2}$.

We now show that $\tilde{\pi}^{a_1, a_2}_c(d_{a_1}, h_{a_1, a_2})$ and $\tilde{\pi}^{a_2, a_3}_c(g_{a_2, a_3}, d_{a_3})$ do not intersect. Assume the opposite, i.e., let $w$ be on both $\tilde{\pi}^{a_1, a_2}_c(d_{a_1}, h_{a_1, a_2})$ and $\tilde{\pi}^{a_2, a_3}_c(g_{a_2, a_3}, d_{a_3})$. There are two cases:
\begin{enumerate}
\item First, assume $w$ is observable. Then, according to step 7 and the previous discussion in the current step, $w$ is on $\rho^{a_1}(d_{a_1}, c_{a_1}) \oplus \pi^{a_1, a_2}$ excluding $c_{a_2}$ and on $\pi^{a_2, a_3} \oplus \rho^{a_3}$ excluding $c_{a_2}$. These observations contradict each other because $\rho^{a_1}(d_{a_1}, c_{a_1}) \oplus \pi^{a_1, a_2}$ and $\pi^{a_2, a_3} \oplus \rho^{a_3}$ intersect at $c_{a_2}$ only.
\item Second, assume $w$ is unobservable. Then, $w$ is a non end-point vertex of $\tilde{\pi}^{a_1, a_2}_c(d_{a_1}, h_{a_1, a_2})$ and of non end-point vertex of $\tilde{\pi}^{a_2, a_3}_c(g_{a_2, a_3}, d_{a_3})$. Moreover, as follows immediately from the definition of canonical ts-DAGs, every unobservable vertex in $\Dc(\Mtaumax(\D))$ is adjacent to exactly two vertices, both of which are observable. We thus find that $\tilde{\pi}^{a_1, a_2}_c(d_{a_1}, h_{a_1, a_2})$ and $\tilde{\pi}^{a_2, a_3}_c(g_{a_2, a_3}, d_{a_3})$ also intersect at an observable vertex, which has already been ruled out in the previous case and thus is a contradiction.
\end{enumerate}

\underline{Step 9: Construction of $\tilde{\pi}^{a_1, a_3}_c$}\\
The fact that $g_{a_2, a_3}$ and $h_{a_1, a_2}$ are adjacent or almost adjacent in $\Dc(\Mtaumax(\D))$, see step 6, means that in $\Dc(\Mtaumax(\D))$ there is a path $\kappa^{a_2} = h_{a_1, a_2} \tailhead g_{a_2, a_3}$ or $\kappa^{a_2} = h_{a_1, a_2} \headtail g_{a_2, a_3}$ or $\kappa^{a_2} = h_{a_1, a_2} \headtail u_{a_2} \tailhead g_{a_2, a_3}$ with $u_{a_2}$ unobservable. Let $\tilde{\pi}^{a_1, a_3}_c$ be the concatenation $\tilde{\pi}^{a_1, a_2}_c(d_{a_1}, h_{a_1, a_2}) \oplus \kappa^{a_2} \oplus \tilde{\pi}^{a_2, a_3}_c(g_{a_2, a_3}, d_{a_3})$. We now show that $\tilde{\pi}^{a_1, a_3}_c$ is a path.

In step 8 we have already shown that $\tilde{\pi}^{a_1, a_2}_c(d_{a_1}, h_{a_1, a_2})$ and $\tilde{\pi}^{a_2, a_3}_c(g_{a_2, a_3}, d_{a_3})$ do not intersect. Thus, if $\kappa^{a_2} = h_{a_1, a_2} \tailhead g_{a_2, a_3}$ or $\kappa^{a_2} = h_{a_1, a_2} \headtail g_{a_2, a_3}$, then $\tilde{\pi}^{a_1, a_3}_c$ contains every vertex at most once and hence is a path. Now assume that $\kappa^{a_2} = h_{a_1, a_2} \headtail u_{a_2} \tailhead g_{a_2, a_3}$. In this case, we show that $u_{a_2}$ is neither on $\tilde{\pi}^{a_1, a_2}_c(d_{a_1}, h_{a_1, a_2})$ nor $\tilde{\pi}^{a_2, a_3}_c(g_{a_2, a_3}, d_{a_3})$:
\begin{enumerate}
\item First, because as shown in step 7 all vertices on $\tilde{\pi}^{a_2, a_3}_c(g_{a_2, a_3}, d_{a_3})$ are observable, $u_{a_2}$ cannot be on $\tilde{\pi}^{a_2, a_3}_c(g_{a_2, a_3}, d_{a_3})$.
\item Second, if $u_{a_2}$ is on $\tilde{\pi}^{a_1, a_2}_c(d_{a_1}, h_{a_1, a_2})$, then it is a non end-point vertex of this path. Since every unobservable vertex in $\Dc(\Mtaumax(\D))$ is adjacent two exactly vertices, which for $u_{a_2}$ are $h_{a_1, a_2}$ and $g_{a_2, a_3}$, we thus find that $g_{a_2, a_3}$ is on $\tilde{\pi}^{a_1, a_2}_c(d_{a_1}, h_{a_1, a_2})$. This observation contradicts the fact that $\tilde{\pi}^{a_1, a_2}_c(d_{a_1}, h_{a_1, a_2})$ and $\tilde{\pi}^{a_2, a_3}_c(g_{a_2, a_3}, d_{a_3})$ have no common vertex.
\end{enumerate}

For reference below we note that by construction $\tilde{\pi}^{a_1, a_2}_c(d_{a_1}, h_{a_1, a_2}) = \tilde{\pi}^{a_1, a_3}_c(d_{a_1}, h_{a_1, a_2})$ and $\tilde{\pi}^{a_2, a_3}_c(g_{a_2, a_3}, d_{a_3}) = \tilde{\pi}^{a_1, a_3}_c(g_{a_2, a_3}, d_{a_3})$.

\underline{Step 10: $\tilde{\pi}^{a_1, a_3}_c$ is collider-free}\\
Since the three constituting subpaths $\tilde{\pi}^{a_1, a_2}_c(d_{a_1}, h_{a_1, a_2})$, $\kappa^{a_2}$ and $\tilde{\pi}^{a_2, a_3}_c(g_{a_2, a_3}, d_{a_3})$ are collider-free, only $h_{a_1, a_2}$ or $g_{a_2, a_3}$ can potentially be colliders on $\pi^{a_1, a_3}_c$.

First, because $\tilde{\pi}^{a_2, a_3}_c(g_{a_2, a_3}, d_{a_3})$ is a trivial path or a non-trivial path and out of $g_{a_2, a_3}$, see step 7, $g_{a_2, a_3}$ is a non-collider on $\tilde{\pi}^{a_1, a_3}_c$.

Second, assume that $h_{a_1, a_2}$ is a collider on $\tilde{\pi}^{a_1, a_3}_c$. This premise requires that $\kappa^{a_2}$ is $h_{a_1, a_2} \headtail g_{a_2, a_3}$ or $h_{a_1, a_2} \headtail u_{a_2} \tailhead g_{a_2, a_3}$ and that $\tilde{\pi}^{a_1, a_2}_c(d_{a_1}, h_{a_1, a_2})$ is non-trivial and into $h_{a_1, a_2}$. In combination with the facts that $\tilde{\pi}^{a_1, a_2}_c$ is collider-free and that $h_{a_1, a_2} \neq c_{a_2}$ this form of $\tilde{\pi}^{a_1, a_2}_c(d_{a_1}, h_{a_1, a_2})$ requires that $\tilde{\pi}^{a_1, a_2}_c(h_{a_1, a_2}, c_{a_2})$ is a non-trivial directed path from $h_{a_1, a_2}$ to $c_{a_2}$. Hence, $h_{a_1, a_2}$ is an ancestor of $c_{a_2}$ in $\Dc(\Mtaumax(\D))$ and, thus, also an ancestor in $\D$. Together with step 2 this ancestral relationship shows that $h_{a_1, a_2}$ is in $\D$ an ancestor of $g_{a_2, a_3}$. Lemma~\ref{lemmaapp:at_most_almost_adjacent_2} in combination with the definition of $\kappa^{a_2}$ then requires $\kappa^{a_2} = h_{a_1, a_2} \tailhead g_{a_2, a_3}$, a contradiction.

\underline{Step 11: All observable vertices on $\tilde{\pi}^{a_1, a_3}_c$ are also on $\tilde{\pi}^{a_1, a_3}$}\\
Recall from above that $\tilde{\pi}^{a_1, a_3}_c = \tilde{\pi}^{a_1, a_2}_c(d_{a_1}, h_{a_1, a_2}) \oplus \kappa^{a_2} \oplus \tilde{\pi}^{a_2, a_3}_c(g_{a_2, a_3}, d_{a_3})$ and $\tilde{\pi}^{a_1, a_3} = \rho^{a_1}(d_{a_1}, c_{a_1}) \oplus \pi^{a_1, a_2} \oplus \pi^{a_2, a_3} \oplus \rho^{a_3}$ (the latter because $\pi^{a_1, a_2} \oplus \pi^{a_2, a_3} = \pi^{a_1, a_3}$). According to step 7, every vertex on $\tilde{\pi}^{a_2, a_3}_c(g_{a_2, a_3}, d_{a_3})$ is on $\pi^{a_2, a_3} \oplus \rho^{a_3}$ and hence on $\tilde{\pi}^{a_1, a_3}$. According to step 8, every observable vertex on $\tilde{\pi}^{a_1, a_2}_c(d_{a_1}, h_{a_1, a_2})$ is on $ \rho^{a_1}(d_{a_1}, c_{a_1}) \oplus \pi^{a_1, a_2}$  and hence on $\tilde{\pi}^{a_1, a_3}$. Lastly, due to the three particular forms that the path $\kappa^{a_2}$ may have, see step 9, every observable vertex on $\kappa^{a_2}$ is on $\tilde{\pi}^{a_2, a_3}_c(g_{a_2, a_3}, d_{a_3})$ or $\tilde{\pi}^{a_1, a_2}_c(d_{a_1}, h_{a_1, a_2})$ and hence on $\tilde{\pi}^{a_1, a_3}$.

\underline{Step 12: $\tilde{\pi}^{a_1, a_3}_c$ fulfills point 2.~in Def.~\ref{defapp:canonical_paths}}\\
For reference below we note the following results:
\begin{enumerate}
\item If $w_1$ and $w_2$ are on $\tilde{\pi}^{a_1, a_2}(d_{a_1}, h_{a_1, a_2})$ or on $\tilde{\pi}^{a_1, a_3}(d_{a_1}, h_{a_1, a_2})$, then $\tilde{\pi}^{a_1, a_2}(w_1 , w_2) = \tilde{\pi}^{a_1, a_3}(w_1 , w_2)$. This equality follows because $h_{a_1, a_2}$ and thus also $w_1$ and $w_2$ are on $\rho^{a_1}(d_{a_1}, c_{a_1}) \oplus \pi^{a_1, a_2}$.
\item If $w_1$ and $w_2$ are observable vertices on $\tilde{\pi}^{a_1, a_2}_c(d_{a_1}, h_{a_1, a_2})$, then $\tilde{\pi}^{a_1, a_2}(w_1 , w_2) = \tilde{\pi}^{a_1, a_3}(w_1 , w_2)$. This equality follows from the previous result because $w_1$ and $w_2$ are on $\tilde{\pi}^{a_1, a_2}(d_{a_1}, h_{a_1, a_2})$ by means of $\tilde{\pi}^{a_1, a_2}_c$ being canonical with respect to $\tilde{\pi}^{a_1, a_2}$.
\item If $w_1$ and $w_2$ are on $\tilde{\pi}^{a_2, a_3}(g_{a_2, a_3}, d_{a_3})$ or on $\tilde{\pi}^{a_1, a_3}(g_{a_2, a_3}, d_{a_3})$, then $\tilde{\pi}^{a_2, a_3}(w_1 , w_2) = \tilde{\pi}^{a_1, a_3}(w_1 , w_2)$. This equality follows because $g_{a_2, a_3}$ and thus also $w_1$ and $w_2$ are on $\pi^{a_2, a_3} \oplus \rho^{a_3}$.
\item If $w_1$ and $w_2$ are observable vertices on $\tilde{\pi}^{a_2, a_3}_c(g_{a_1, a_2}, d_{a_3})$, then $\tilde{\pi}^{a_2, a_3}(w_1 , w_2) = \tilde{\pi}^{a_1, a_3}(w_1 , w_2)$. This equality follows from the previous result because $w_1$ and $w_2$ are on $\tilde{\pi}^{a_2, a_3}(g_{a_2, a_3}, d_{a_3})$ by means of $\tilde{\pi}^{a_2, a_3}_c$ being canonical with respect to $\tilde{\pi}^{a_2, a_3}$.
\end{enumerate}

Let $v_1$, $v_2$ and $v_3$ be distinct observable vertices on $\tilde{\pi}^{a_1, a_3}_c$ such that $v_2$ is on $\tilde{\pi}^{a_1, a_3}_c(v_1, v_3)$ and, without loss of generality, $v_1$ is closer to $d_{a_1}$ on $\tilde{\pi}^{a_1, a_3}_c$ than $v_3$ is to $d_{a_1}$ on $\tilde{\pi}^{a_1, a_3}_c$. We distinguish several collectively exhaustive cases:
\begin{enumerate}
\item First, assume $v_3$ is on $\tilde{\pi}^{a_1, a_3}_c(d_{a_1}, h_{a_1, a_2})$. This premise implies that also both $v_1$ and $v_2$ are on $\tilde{\pi}^{a_1, a_3}_c(d_{a_1}, h_{a_1, a_2}) = \tilde{\pi}^{a_1, a_2}_c(d_{a_1}, h_{a_1, a_2})$. Hence, $v_2$ is on $\tilde{\pi}^{a_1, a_2}_c(v_1, v_3) = \tilde{\pi}^{a_1, a_3}_c(v_1, v_3)$ and thus, using that $\tilde{\pi}^{a_1, a_2}_c$ is canonical with respect to $\tilde{\pi}^{a_1, a_2}$, also on $\tilde{\pi}^{a_1, a_2}(v_1, v_3)$. Moreover, using the second result at the beginning of this step, $\tilde{\pi}^{a_1, a_2}(v_1, v_3) = \tilde{\pi}^{a_1, a_3}(v_1, v_3)$. Hence, $v_2$ is on $\tilde{\pi}^{a_1, a_3}(v_1, v_3)$.
\item Second, assume that $v_2$ is on $\tilde{\pi}^{a_1, a_3}_c(d_{a_1}, h_{a_1, a_2})$ excluding $h_{a_1, a_2}$ and that $v_3$ is not on $\tilde{\pi}^{a_1, a_3}_c(d_{a_1}, h_{a_1, a_2})$. This premise implies that also $v_1$ is on $\tilde{\pi}^{a_1, a_3}_c(d_{a_1}, h_{a_1, a_2}) = \tilde{\pi}^{a_1, a_2}_c(d_{a_1}, h_{a_1, a_2})$. Following the same steps as in the previous case with $v_3$ replaced by $h_{a_1, a_2}$, we get that $v_2$ is on $\tilde{\pi}^{a_1, a_3}(v_1, h_{a_1, a_2}) = \tilde{\pi}^{a_1, a_2}(v_1, h_{a_1, a_2})$. Moreover, $v_3$ is on $\tilde{\pi}^{a_2, a_3}_c(g_{a_2, a_3}, d_{a_3}) = \tilde{\pi}^{a_1, a_3}_c(g_{a_2, a_3}, d_{a_3})$ and hence, using that $\tilde{\pi}^{a_2, a_3}_c$ is canonical with respect to $\tilde{\pi}^{a_2, a_3}$, on $\tilde{\pi}^{a_2, a_3}(g_{a_2, a_3}, d_{a_3})$. This observation shows that $\tilde{\pi}^{a_1, a_3}(v_1, h_{a_1, a_2})$ is a subpath of $\tilde{\pi}^{a_1, a_3}(v_1, v_3)$ and hence that $v_2$ is on $\tilde{\pi}^{a_1, a_3}(v_1, v_3)$.
\item Third, assume $v_1$ is on $\tilde{\pi}^{a_1, a_3}_c(g_{a_2, a_3}, d_{a_3})$. This premise implies that also both $v_2$ and $v_3$ are on $\tilde{\pi}^{a_1, a_3}_c(g_{a_2, a_3}, d_{a_3}) = \tilde{\pi}^{a_2, a_3}_c(g_{a_2, a_3}, d_{a_3})$. Hence, $v_2$ is on $\tilde{\pi}^{a_2, a_3}_c(v_1, v_3) = \tilde{\pi}^{a_1, a_3}_c(v_1, v_3)$ and thus, using that $\tilde{\pi}^{a_2, a_3}_c$ is canonical with respect to $\tilde{\pi}^{a_2, a_3}$, also on $\tilde{\pi}^{a_2, a_3}(v_1, v_3)$. Moreover, using the fourth result at the beginning of this step, $\tilde{\pi}^{a_2, a_3}(v_1, v_3) = \tilde{\pi}^{a_1, a_3}(v_1, v_3)$. Hence, $v_2$ is on $\tilde{\pi}^{a_1, a_3}(v_1, v_3)$.
\item Fourth, assume that $v_2$ is on $\tilde{\pi}^{a_1, a_3}_c(g_{a_2, a_3}, d_{a_3})$ excluding $g_{a_2, a_3}$ and that $v_1$ is not on $\tilde{\pi}^{a_1, a_3}_c(g_{a_2, a_3}, d_{a_3})$. This premise implies that also $v_3$ is on $\tilde{\pi}^{a_1, a_3}_c(g_{a_2, a_3}, d_{a_3}) = \tilde{\pi}^{a_1, a_2}_c(g_{a_2, a_3}, d_{a_3})$. Following the same steps as in the previous case with $v_1$ replaced by $g_{a_2, a_3}$, we get that $v_2$ is on $\tilde{\pi}^{a_1, a_3}(g_{a_2, a_3}, v_3) = \tilde{\pi}^{a_2, a_3}(g_{a_2, a_3}, v_3)$. Moreover, $v_1$ is on $\tilde{\pi}^{a_1, a_2}_c(d_{a_1}, h_{a_1, a_2}) = \tilde{\pi}^{a_1, a_3}_c(d_{a_1}, h_{a_1, a_2})$ and hence, using that $\tilde{\pi}^{a_1, a_2}_c$ is canonical with respect to $\tilde{\pi}^{a_1, a_2}$, on $\tilde{\pi}^{a_1, a_2}(d_{a_1}, h_{a_1, a_2})$. This observation shows that $\tilde{\pi}^{a_1, a_3}(g_{a_2, a_3}, v_3)$ is a subpath of $\tilde{\pi}^{a_1, a_3}(v_1, v_3)$ and hence that $v_2$ is on $\tilde{\pi}^{a_1, a_3}(v_1, v_3)$.
\item Fifth, assume that $v_2$ is $h_{a_1, a_2}$ or $g_{a_2, a_3}$. This premise implies that $v_1$ is on the path $\tilde{\pi}^{a_1, a_2}_c(d_{a_1}, h_{a_1, a_2}) = \tilde{\pi}^{a_1, a_3}_c(d_{a_1}, h_{a_1, a_2})$ and hence, because $\tilde{\pi}^{a_1, a_2}_c$ is canonical with respect to $\tilde{\pi}^{a_1, a_2}$, on $\tilde{\pi}^{a_1, a_3}(d_{a_1}, h_{a_1, a_2}) = \tilde{\pi}^{a_1, a_2}(d_{a_1}, h_{a_1, a_2})$, where the latter equality follows by the first result at the beginning of this step. Moreover, the premise implies that $v_3$ is on $\tilde{\pi}^{a_2, a_3}_c(g_{a_1, a_2}, d_{a_2}) = \tilde{\pi}^{a_1, a_3}_c(g_{a_1, a_2}, d_{a_2})$ and hence, because $\tilde{\pi}^{a_2, a_3}_c$ is canonical with respect to $\tilde{\pi}^{a_2, a_3}$, on $\tilde{\pi}^{a_1, a_3}(g_{a_1, a_2}, d_{a_2}) = \tilde{\pi}^{a_2, a_3}(g_{a_1, a_2}, d_{a_2})$, where the latter equality follows by the third result at the beginning of this step. These considerations show that $\tilde{\pi}^{a_1, a_3}(v_1, v_3)$ decomposes as $\tilde{\pi}^{a_1, a_3}(v_1, h_{a_1, a_2}) \oplus \tilde{\pi}^{a_1, a_3}(h_{a_1, a_2}, g_{a_2, a_3}) \oplus \tilde{\pi}^{a_1, a_3}(g_{a_2, a_3}, v_3)$. Hence, $v_2$ is on $\tilde{\pi}^{a_1, a_3}(v_1, v_3)$ irrespective of whether $v_2 = h_{a_1, a_2}$ or $v_2 = g_{a_2, a_3}$.
\end{enumerate}

\underline{Step 13: $\tilde{\pi}^{a_1, a_3}_c$ fulfills point 4.~in Def.~\ref{defapp:canonical_paths}}\\
Let $v_1$ and $v_2$ be two distinct observable vertices on $\tilde{\pi}^{a_1, a_3}_c$ such that, without loss of generality, $v_1$ is closer to $d_{a_1}$ on $\tilde{\pi}^{a_1, a_3}_c$ than $v_2$ is to $d_{a_1}$ on $\tilde{\pi}^{a_1, a_3}_c$. We distinguish three collectively exhaustive cases:
\begin{enumerate}
\item First, assume $v_2$ is on $\tilde{\pi}^{a_1, a_3}_c(d_{a_1}, h_{a_1, a_2})$. Then, both $v_1$ and $v_2$ are on $\tilde{\pi}^{a_1, a_2}_c(d_{a_1}, h_{a_1, a_2}) = \tilde{\pi}^{a_1, a_3}_c(d_{a_1}, h_{a_1, a_2})$ and hence $\tilde{\pi}^{a_1, a_2}_c(v_1, v_2) = \tilde{\pi}^{a_1, a_3}_c(v_1, v_2)$. Since $\tilde{\pi}^{a_1, a_2}_c$ is canonical with respect to $\tilde{\pi}^{a_1, a_2}$, we thus find that $\tilde{\pi}^{a_1, a_2}(v_1, v_2)$ is an inducing path relative to $\mathbf{O}(\operatorname{max}(t(v_1), t(v_2)) - \taumax, t)[\tilde{\pi}^{a_1, a_3}_c(v_1, v_2)]$. The desired results follows since $\tilde{\pi}^{a_1, a_2}(v_1, v_2) = \tilde{\pi}^{a_1, a_3}(v_1, v_2)$ according to the second result at the beginning of step 12.

\item Second, assume $v_1$ is on $\tilde{\pi}^{a_2, a_3}_c(g_{a_2, a_3}, d_{a_3})$. Then, both $v_1$ and $v_2$ are on $\tilde{\pi}^{a_2, a_3}_c(g_{a_2, a_3}, d_{a_3})$ and hence $\tilde{\pi}^{a_1, a_3}_c(v_1, v_2) = \tilde{\pi}^{a_2, a_3}_c(g_{a_2, a_3}, d_{a_3})$. Since $\tilde{\pi}^{a_2, a_3}_c$ is canonical with respect to $\tilde{\pi}^{a_2, a_3}$, we find that $\tilde{\pi}^{a_2, a_3}(v_1, v_2)$ is an inducing path relative to $\mathbf{O}(\operatorname{max}(t(v_1), t(v_2)) - \taumax, t)[\tilde{\pi}^{a_1, a_3}_c(v_1, v_2)]$. The desired results follows since $\tilde{\pi}^{a_2, a_3}(v_1, v_2) = \tilde{\pi}^{a_1, a_3}(v_1, v_2)$ according to the fourth result at the beginning of step 12.

\item Third, assume that neither of the previous two cases applies. Then, using that every observable vertex on $\kappa^{a_2}$ is on $\tilde{\pi}^{a_1, a_2}_c(d_{a_1}, h_{a_1, a_2})$ or $\tilde{\pi}^{a_2, a_3}_c(g_{a_2, a_3}, d_{a_3})$, the vertex $v_1$ is on $\tilde{\pi}^{a_1, a_3}_c(d_{a_1}, h_{a_1, a_2})$ and $v_2$ is on $\tilde{\pi}^{a_1, a_3}_c(g_{a_2, a_3}, d_{a_3})$. Thus, both $h_{a_1, a_2}$ and $g_{a_2, a_3}$ are on $\tilde{\pi}^{a_1, a_3}_c(v_1, v_2)$. Since $\tilde{\pi}^{a_1, a_3}_c$ is collider-free, we thus find that both $h_{a_1, a_2}$ and $g_{a_2, a_3}$ are ancestors of $v_1$ or $v_2$ in $\Dc(\Mtaumax(\D))$ and, thus, ancestors in $\D$. Moreover, since $\tilde{\pi}^{a_1, a_3}_c$ fulfills point~2 in Def.~\ref{defapp:canonical_paths}, $v_1$ is on $\tilde{\pi}^{a_1, a_3}(d_{a_1}, h_{a_1, a_2})$ and $v_2$ is on $\tilde{\pi}^{a_1, a_3}(g_{a_2, a_3}, d_{a_3})$. Consequently, the path of interest $\tilde{\pi}^{a_1, a_3}(v_1, v_2)$ decomposes as $\tilde{\pi}^{a_1, a_3}(v_1, h_{a_1, a_2}) \oplus \tilde{\pi}^{a_1, a_3}(h_{a_1, a_2}, g_{a_1, a_2}) \oplus \tilde{\pi}^{a_1, a_3}(g_{a_1, a_2}, v_2)$. We now individually look at the three constituting subpaths:
\begin{enumerate}
\item By following the same steps as in the first case of this enumeration with $v_2$ replaced by $h_{a_1, a_2}$, we get that $\tilde{\pi}^{a_1, a_3}(v_1, h_{a_1, a_2})$ is an inducing path relative to $\mathbf{O}(t_{v_1h}-\taumax, t)[\tilde{\pi}_c^{a_1, a_3}(v_1, h_{a_1, a_2})]$, where $t_{v_1h} = \operatorname{max}(t(v_1), t(h_{a_1, a_2}))$. Moreover, note that $\tilde{\pi}_c^{a_1, a_3}(v_1, h_{a_1, a_2})$ is a subpath of $\tilde{\pi}_c^{a_1, a_3}(v_1, v_2)$. Since an inducing path relative to some set $\mathbf{O}_1$ of observed vertices is also an inducing path relative to another set $\mathbf{O}_2$ of observed with $\mathbf{O}_2 \subseteq \mathbf{O}_1$, we get that $\tilde{\pi}^{a_1, a_3}(v_1, h_{a_1, a_2})$ is an inducing path relative to $\mathbf{O}(t_{v_1h}-\taumax, t)[\tilde{\pi}_c^{a_1, a_3}(v_1, v_2)]$.

\item Because $h_{a_1, a_2}$ is on $\rho^{a_1}(d_{a_1}, c_{a_1}) \oplus \pi^{a_1, a_2}$ and $g_{a_2, a_3}$ is on $\pi^{a_2, a_3} \oplus \rho^{a_3}$, see steps 5 and 3, the path $\tilde{\pi}^{a_1, a_3}(h_{a_1, a_2}, g_{a_1, a_2})$ decomposes as $\tilde{\pi}^{a_1, a_2}(h_{a_1, a_2}, c_{a_2}) \oplus \tilde{\pi}^{a_2, a_3}(c_{a_2}, g_{a_1, a_2})$. This decomposition shows that $\tilde{\pi}^{a_1, a_3}(h_{a_1, a_2}, g_{a_1, a_2}) = \psi$, with $\psi$ as considered in step 6. Hence, $\tilde{\pi}^{a_1, a_3}(h_{a_1, a_2}, g_{a_1, a_2})$ is an inducing path relative to $\mathbf{O}(t_{hg} - \taumax, t)$ and, thus, an inducing path relative to $\mathbf{O}(t_{hg} - \taumax, t)[\tilde{\pi}_c^{a_1, a_3}(v_1, v_2)]$.

\item By following the same steps as in the second case of this enumeration with $v_1$ replaced by $g_{a_1, a_2}$, we get that $\tilde{\pi}^{a_2, a_3}(g_{a_2, a_3}, v_2)$ is an inducing path relative to $\mathbf{O}(t_{gv_2}-\taumax, t)[\tilde{\pi}_c^{a_1, a_3}(g_{a_2, a_3}, v_2)]$, where $t_{gv_2} = \operatorname{max}(t(g_{a_1, a_2}), t(v))$. Moreover, since $\tilde{\pi}_c^{a_1, a_3}(g_{a_2, a_3}, v_3)$ is a subpath of $\tilde{\pi}_c^{a_1, a_3}(v_1, v_2)$, we get that $\tilde{\pi}^{a_2, a_3}(g_{a_2, a_3}, v_2)$ is inducing path relative $\mathbf{O}(t_{gv_2}-\taumax, t)[\tilde{\pi}_c^{a_1, a_3}(v_1, v_2)]$.
\end{enumerate}
To proof the desired inducing path property of $\tilde{\pi}^{a_1, a_3}(v_1, v_2)$, we now separately consider its colliders and observable non end-point non-collider:
\begin{enumerate}
\item First, let $v$ be a collider on $\tilde{\pi}^{a_1, a_3}(v_1, v_2)$. Then, $v$ is a collider on one of the three constituting subpaths or is $h_{a_1, a_2}$ or $g_{a_2, a_3}$. In all cases, using the inducing path properties of the constituting subpaths, $v$ is in $\D$ an ancestor of $v_1$ or $h_{a_1, a_2}$ or $g_{a_2, a_3}$ or $v_2$. Since, as shown above in this step, both $h_{a_1, a_2}$ and $g_{a_2, a_3}$ are in $\D$ ancestors of $v_1$ or $v_2$, we get that $v$ is in $\D$ an ancestor of $v_1$ or $v_2$.
\item Second, let $v$ be an observable non end-point non-collider on $\tilde{\pi}^{a_1, a_3}(v_1, v_2)$. Since $h_{a_1, a_2}$ and $g_{a_2, a_3}$ are in $\D$ ancestors of $v_1$ or $v_2$, time order of $\D$ guarantees that $t(h_{a_1, a_2}) \leq t_{v_1v_2}$ and $t(g_{a_2, a_3}) \leq t_{v_1v_2}$, where $t_{v_1v_2} = \operatorname{max}(t(v_1), t(v_2))$. These inequalities imply $t_{v_1h} \leq t_{v_1v_2}$, $t_{hg} \leq t_{v_1v_2}$, and $t_{gv_2} \leq t_{v_1v_2}$. Thus, using the inducing path properties of the constituting subpaths in combination with $t(v) \leq t$, see step 1, we find that $t(v) < t_{v_1v_2} - \taumax$ if $v$ is not a non end-point vertex of $\tilde{\pi}_c^{a_1, a_3}(v_1, v_2)$
\end{enumerate}
Thus, $\tilde{\pi}^{a_1, a_3}(v_1, v_2)$ is an inducing path relative to $\mathbf{O}(t_{v_1v_2}-\taumax, t)[\tilde{\pi}_c^{a_1, a_3}(v_1, v_2)]$.
\end{enumerate}
The combination of the four steps 11, 12, 10 and 13 shows that $\tilde{\pi}_c^{a_1, a_3}$ is canonical with respect to $\tilde{\pi}^{a_1, a_3}$.

\underline{Step 14: $\tilde{\pi}_c^{a_1, a_3}$ is active given $\mathbf{S} \setminus \{d_{a_1}, d_{a_3}\}$}\\
Recall that $\tilde{\pi}_c^{a_1, a_3} = \tilde{\pi}^{a_1, a_2}_c(d_{a_1}, h_{a_1, a_2}) \oplus \kappa^{a_2} \oplus \tilde{\pi}^{a_2, a_3}_c(g_{a_2, a_3}, d_{a_3})$ and assume $\tilde{\pi}_c^{a_1, a_3}$ is blocked given $\mathbf{S} \setminus \{d_{a_1}, d_{a_3}\}$. Since $\tilde{\pi}_c^{a_1, a_3}$ is collider-free, see step 10, this premise means there is a non end-point vertex $v$ on $\tilde{\pi}_c^{a_1, a_3}$ with $v \in \mathbf{S} \setminus \{d_{a_1}, d_{a_3}\}$. There are three collectively exhaustive cases:
\begin{enumerate}
\item First, assume $v$ is on $\tilde{\pi}^{a_1, a_2}_c(d_{a_1}, h_{a_1, a_2})$. Then, because $\tilde{\pi}^{a_1, a_2}_c$ is canonical with respect to $\tilde{\pi}^{a_1, a_2}$, the vertex $v$ is on $\tilde{\pi}^{a_1, a_2}(d_{a_1}, h_{a_1, a_2})$. Since $\tilde{\pi}^{a_1, a_2}(d_{a_1}, h_{a_1, a_2})$ is a subpath of $\rho^{a_1}(d_{a_1}, c_{a_1}) \oplus \pi^{a_1, a_2}$ excluding $c_{a_2}$, see step 5, $v$ is not on $\rho^{a_2}$. In particular, $v \neq d_{a_2}$ and thus $v \in \mathbf{S} \setminus \{d_{a_1}, d_{a_2}\}$. Moreover, since $v \neq d_{a_1}$ and $h_{a_1, a_2} \neq d_{a_2}$ by the respective definitions of $v$ and $h_{a_1, a_2}$, the vertex $v$ is a non end-point vertex on $\tilde{\pi}^{a_1, a_2}_c$. Since $\tilde{\pi}^{a_1, a_2}_c$ is collider-free by means of being canonical with respect to $\tilde{\pi}^{a_1, a_2}$ and since $v \in \mathbf{S} \setminus \{d_{a_1}, d_{a_2}\}$, we arrive at a contradiction to the assumption that $\tilde{\pi}^{a_1, a_2}_c$ is active given $\mathbf{S} \setminus \{d_{a_1}, d_{a_2}\}$.

\item Second, assume $v$ is $\tilde{\pi}^{a_2, a_3}_c(g_{a_2, a_3}, d_{a_3})$. Then, because $\tilde{\pi}^{a_2, a_3}_c$ is canonical with respect to $\tilde{\pi}^{a_2, a_3}$, the vertex $v$ is on $\tilde{\pi}^{a_2, a_3}(g_{a_2, a_3}, d_{a_3})$. Since $\tilde{\pi}^{a_2, a_3}(g_{a_2, a_3}, d_{a_3})$ is a subpath of $\pi^{a_2, a_3} \oplus \rho^{a_2}$ excluding $c_{a_2}$, see step 3, $v$ is not on $\rho^{a_2}$. In particular, $v \neq d_{a_2}$ and thus $v \in \mathbf{S} \setminus \{d_{a_2}, d_{a_3}\}$. Moreover, since $v \neq d_{a_3}$ and $g_{a_1, a_3} \neq d_{a_2}$ by the respective definitions of $v$ and $g_{a_2, a_3}$, the vertex $v$ is a non end-point vertex on $\tilde{\pi}^{a_2, a_3}_c$. Since $\tilde{\pi}^{a_2, a_3}_c$ is collider-free by means of being canonical with respect to $\tilde{\pi}^{a_2, a_3}$ and since $v \in \mathbf{S} \setminus \{d_{a_2}, d_{a_3}\}$, we arrive at a contradiction to the assumption that $\tilde{\pi}^{a_2, a_3}_c$ is active given $\mathbf{S} \setminus \{d_{a_2}, d_{a_3}\}$.

\item Third, assume $v$ is on $\kappa^{a_2}$. Then, since every element of $\mathbf{S}$ is observable, $v$ is $h_{a_1, a_2}$ or $g_{a_2, a_3}$ and thus on $\tilde{\pi}^{a_1, a_2}_c(d_{a_1}, h_{a_1, a_2})$ or $\tilde{\pi}^{a_2, a_3}_c(g_{a_2, a_3}, d_{a_3})$. These cases have already been covered by the previous two points.
\end{enumerate}
We have thus shown that $\tilde{\pi}_c^{a_1, a_3}$ is active given $\mathbf{S} \setminus \{d_{a_1}, d_{a_3}\}$, which completes the proof.
\end{proof}

In case at least one of $\tilde{\pi}^{a_1,a_2}_{c} = \tilde{\pi}^{a-1,a}_{ci}$ and $\tilde{\pi}^{a_2,a_3}_{c} = \tilde{\pi}^{a,a+1}_{ci}$ is out of $d_{a} = d_{a_2}$, we can thus collectively replace them by a path $\tilde{\pi}^{a_1,a_3}_{c}$ between $d_{a_1} = d_{a-1}$ and $d_{a_3} = d_{a+1}$ in $\Dc(\Mtaumax(\D))$ that is active given $\mathbf{S} \setminus \{d_{a_1}, d_{a_3}\}$. Moreover, since $\tilde{\pi}^{a_1,a_3}_{c}$ is canonical with respect to $\tilde{\pi}^{a_1,a_3}$ and since Lemma~\ref{lemmaapp:glueing_induced_paths} only used that $\tilde{\pi}^{a_1,a_2}_{c}$ and $\tilde{\pi}^{a_2,a_3}_{c}$ are, respectively, canonical with respect to $\tilde{\pi}^{a_1,a_2}$ and $\tilde{\pi}^{a_2,a_3}$ as well as , respectively, active given $\mathbf{S} \setminus \{d_{a_1}, d_{a_2}\}$ and $\mathbf{S} \setminus \{d_{a_2}, d_{a_3}\}$, this procedure can be repeated in case, for example, $\tilde{\pi}^{a_1,a_3}_{c}$ or $\tilde{\pi}^{a_3,a_4}_{ci}$ is out of $d_{a_3}$, and so on.

\begin{mylemma}\label{lemmaapp:MG_subgraph_of_MGc_reason_part_2}
Let $(i, t_i)$ and $(j, t_j)$ with $t - \taumax \leq t_i, t_j \leq t$ be distinct observable vertices in $\D$ and let $\mathbf{S} \subseteq \mathbf{O}(t-\taumax, t) \setminus\{(i, t_i), (j, t_j)\}$. Then: If $(i, t_i)$ and $(j, t_j)$ are $d$-connected given $\mathbf{S}$ in $\D$, then $(i, t_i)$ and $(j, t_j)$ are $d$-connected given $\mathbf{S}$ in $\Dc(\Mtaumax(\D))$.
\end{mylemma}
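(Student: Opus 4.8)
The plan is to prove the statement by induction on $n_c$, the number of colliders of a $d$-connecting path $\pi$ between $(i,t_i)$ and $(j,t_j)$ given $\mathbf{S}$ in $\D$. The base case $n_c = 0$ is exactly Lemma~\ref{lemmaapp:MG_subgraph_of_MGc_reason_part_1}, which already shows that a collider-free active path in $\D$ yields $d$-connection in $\Dc(\Mtaumax(\D))$ through its canonically induced path. For the inductive step I would invoke Lemma~\ref{lemmaapp:extension_by_rho^a_paths}: it guarantees that either there is an active path with at most $n_c - 1$ colliders, in which case the induction hypothesis applies immediately, or there is a collider extension structure $\rho^1, \dots, \rho^{n_c}$ of $\pi$ with respect to $\mathbf{S}$ and $\mathbf{O}(t-\taumax, t)$. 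Thus the only genuinely new situation to handle is the one in which a collider extension structure exists.

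In that situation I would adopt the notation of Def.~\ref{defapp:collider_extended_paths} and split $\pi$ into the extended collider-free segments $\tilde{\pi}^{a, a+1}$ for $0 \leq a \leq n_c$, whose endpoints $d_a$ and $d_{a+1}$ lie in $\mathbf{O}(t-\taumax, t)$. By Lemma~\ref{lemmaapp:collider_extended_paths_active} each induced path $\tilde{\pi}^{a, a+1}_{ci}$ is canonical with respect to $\tilde{\pi}^{a, a+1}$ and active given $\mathbf{S} \setminus \{d_a, d_{a+1}\}$ in $\Dc(\Mtaumax(\D))$. The objective is then to concatenate these segments into a single path between $d_0 = (i,t_i)$ and $d_{n_c+1} = (j,t_j)$ that is active given $\mathbf{S}$, from which the claimed $d$-connection follows at once.

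The key complication is that at an interior junction $d_a$ the two incident canonical segments may be oriented out of $d_a$ even though the underlying paths $\rho^a$ are directed into $d_a$; were such a $d_a$ to lie in $\mathbf{S}$, it would block the concatenation as a conditioned non-collider. My plan to resolve this is to apply Lemma~\ref{lemmaapp:glueing_induced_paths} repeatedly: whenever at a surviving junction $d_a$ at least one incident canonical segment is out of $d_a$, I would glue the two adjacent segments into a single canonical path bypassing $d_a$ that remains active given the appropriate set. Since each gluing strictly decreases the number of junctions and Lemma~\ref{lemmaapp:glueing_induced_paths} preserves both canonicity and activity, this process terminates. Once it halts, every remaining junction $d_a$ is a head-to-head meeting of two canonical segments and, by the third property of collider extension structures, is an ancestor of $\mathbf{S}$ in $\D$ and hence, by Prop.~\ref{propapp:D_same_ancestral_mtaumax_as_Dc(Mtaumax(D))}, also in $\Dc(\Mtaumax(\D))$.

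Finally I would apply Lemma 3.3.1 in \cite{Spirtes2000} to the resulting ordered sequence of canonical segments: the endpoints $(i,t_i)$ and $(j,t_j)$ are not in $\mathbf{S}$, every interior junction is a collider that is an ancestor of $\mathbf{S}$, and each segment is active given $\mathbf{S}$ minus its own endpoints. This produces a path between $(i,t_i)$ and $(j,t_j)$ in $\Dc(\Mtaumax(\D))$ that is active given $\mathbf{S}$, establishing the desired $d$-connection. I expect the main obstacle to be the orientation-mismatch bookkeeping at the junctions together with the verification that the repeated gluing terminates with all surviving junctions being ancestors of $\mathbf{S}$; the heavy technical burden of a single gluing is already discharged by Lemma~\ref{lemmaapp:glueing_induced_paths}, so the essential care here lies in orchestrating the induction and the termination argument.
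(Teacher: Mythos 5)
Your proposal is correct and follows essentially the same route as the paper's proof: induction on the number of colliders with Lemma~\ref{lemmaapp:MG_subgraph_of_MGc_reason_part_1} as base case, Lemma~\ref{lemmaapp:extension_by_rho^a_paths} to either reduce the collider count or obtain a collider extension structure, repeated application of Lemma~\ref{lemmaapp:glueing_induced_paths} to eliminate junctions with an outgoing canonical segment (the paper formalizes this as an explicit algorithmic procedure with a relabeling map $\sigma$), and a final appeal to Lemma 3.3.1 of \cite{Spirtes2000} on the surviving head-to-head junctions, which are ancestors of $\mathbf{S}$ by the third property of collider extension structures. Your explicit invocation of Prop.~\ref{propapp:D_same_ancestral_mtaumax_as_Dc(Mtaumax(D))} to transfer that ancestorship from $\D$ to $\Dc(\Mtaumax(\D))$ is a small point of added care that the paper leaves implicit.
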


\begin{proof}[Proof of Lemma~\ref{lemmaapp:MG_subgraph_of_MGc_reason_part_2}]
Let $(i, t_i)$ and $(j, t_j)$ be $d$-connected given $\mathbf{S} \subseteq \mathbf{O}(t-\taumax, t) \setminus\{(i, t_i), (j, t_j)\}$. Then, in $\D$ there is path $\pi$ between $(i, t_i)$ and $(j, t_j)$ that is active given $\mathbf{S}$. The proof is by induction over $n_c$, the number of colliders on $\pi$.

\textit{Induction base case: $n_c = 0$}\\
In this case, $(i, t_i)$ and $(j, t_j)$ are $d$-connected given $\mathbf{S}$ in $\Dc(\Mtaumax(\D))$ according to Lemma~\ref{lemmaapp:MG_subgraph_of_MGc_reason_part_1}.

\textit{Induction step: $n_c \to n_c+1$}\\
In this case, $\pi$ has $n_c + 1 \geq 1$ colliders and according to the assumption of induction we have already proven the statement for paths that have at most $n_c$ colliders.

We may without loss of generality assume that $\pi$ has a collider extension structure, because if not then according to Lemma~\ref{lemmaapp:extension_by_rho^a_paths} there is path $\pi^\prime$ between $(i, t_i)$ and $(j, t_j)$ in $\D$ with at most $n_c$ colliders that is active given $\mathbf{S}$ and hence, by assumption of induction, $(i, t_i)$ and $(j, t_j)$ are $d$-connected given $\mathbf{S}$ in $\Dc(\Mtaumax(\D))$. Therefore, the assumptions and notation of Def.~\ref{defapp:collider_extended_paths} apply.

Consider the following algorithmic procedure:
\begin{enumerate}
  \item For all $0 \leq a \leq n_c$ let $\tilde{\pi}^{a, a+1}_c$ be $\tilde{\pi}^{a, a+1}_{ci}$, that is, let $\tilde{\pi}^{a, a+1}_c$ be the canonically induced path of $\tilde{\pi}^{a, a+1}$.
  \item Set $m$ to $n_c$.
  \item Let $\sigma: \{0, 1, \dots , m+1\} \mapsto \{0, 1, \dots , n_c+1\}$ be the identity map.
  \item While there is an integer $a$ with $1 \leq a \leq m$ such that at least one of $\tilde{\pi}_c^{\sigma(a-1), \sigma(a)}$ and $\tilde{\pi}_c^{\sigma(a), \sigma(a+1)}$ is out of $d_{\sigma(a)}$:
  \begin{enumerate}
    \item Let $b$ be the smallest such $a$.
    \item Let $\tilde{\pi}_c^{\sigma(b-1), \sigma(b+1)} = \tilde{\pi}_c^{a_1, a_3}$ be a path as in Lemma~\ref{lemmaapp:glueing_induced_paths} applied to $\tilde{\pi}_c^{\sigma(b-1), \sigma(b)} = \tilde{\pi}_c^{a_1, a_2}$ and $\tilde{\pi}_c^{\sigma(b), \sigma(b+1)} = \tilde{\pi}_c^{a_2, a_3}$.
    \item If $\sigma(b - 1) = 0$ and $\sigma(b + 1) = n_c+1$, then return $\tilde{\pi}_c^{\sigma(b-1), \sigma(b+1)} = \tilde{\pi}_c^{0, n_c+1}$ and terminate.
    \item Decrease $m$ by one.
    \item Let $\sigma^\prime: \{0, 1, \dots , m+1\} \mapsto \{0, 1, \dots , n_c+1\}$ be such that $\sigma^\prime(a) = \sigma(a)$ for all $1 \leq a < b$ and $\sigma^\prime(a) = \sigma(a + 1)$ for all $b \leq a \leq m+1$.
    \item Replace $\sigma$ by $\sigma^\prime$.
  \end{enumerate}
  \item Return the ordered sequence of paths $\tilde{\pi}_c^{\sigma(0), \sigma(1)}, \dots , \tilde{\pi}_c^{\sigma(m), \sigma(m+1)}$ and terminate.
\end{enumerate}
According to Lemma~\ref{lemmaapp:collider_extended_paths_active} and Lemma~\ref{lemmaapp:canonical_path_properties}, the canonically induced paths $\tilde{\pi}^{a, a+1}_{ci}$ are for all $0 \leq a \leq n_c$ active given $\mathbf{S} \setminus \{d_{a}, d_{a+1}\}$ and canonical with respect to $\tilde{\pi}^{a, a+1}$. Lemma~\ref{lemmaapp:glueing_induced_paths} thus guarantees that all paths $\tilde{\pi}_c^{a_1, a_3}$ with $0 \leq a_1 < a_3 \leq n_c+1$ constructed in step 4(b) of the above procedure are active given $\mathbf{S} \setminus \{d_{a_1}, d_{a_3}\}$ and canonical with respect to $\tilde{\pi}^{a_1, a_3}$. Thus, if the algorithm terminates in step 4(c), it returns a path $\tilde{\pi}_c^{0, n_c+1}$ between $d_0 = (i, t_i)$ and $d_{n_c+1} = (j, t_j)$ in $\Dc(\Mtaumax(\D))$ that is active given $\mathbf{S} = \mathbf{S} \setminus \{d_{0}, d_{n_c+1}\}$. If the algorithm terminates in step 5, then the following is true:
\begin{enumerate}
\item $\tilde{\pi}_c^{\sigma(0), \sigma(1)}$ is a path in $\Dc(\Mtaumax(\D))$ between $d_{\sigma(0)} = d_0 = (i, t_i)$ and $d_{\sigma(1)}$ that is into $d_{\sigma(1)}$ and active given $\mathbf{S} \setminus \{d_{\sigma(0)}, d_{\sigma(1)}\}$.
\item For all $1 \leq a \leq m-1$ the path $\tilde{\pi}_c^{\sigma(a), \sigma(a+1)}$ is a path in $\Dc(\Mtaumax(\D))$ between $d_{\sigma(a)}$ and $d_{\sigma(a+1)}$ that is into both $d_{\sigma(a)}$ and $d_{\sigma(a+1)}$ and active given $\mathbf{S} \setminus \{d_{\sigma(a)}, d_{\sigma(a+1)}\}$.
\item $\tilde{\pi}_c^{\sigma(m), \sigma(m+1)}$ is a path in $\Dc(\Mtaumax(\D))$ between $d_{\sigma(m)}$ and $d_{\sigma(m+1)} = d_{n_c+1} = (j,t_j)$ that is into $d_{\sigma(m)}$ and active given $\mathbf{S} \setminus \{d_{\sigma(m)}, d_{\sigma(m+1)}\}$.
\end{enumerate}
Further, by definition of collider extension structures, $d_a$ is an ancestor of $\mathbf{S}$ for all $1 \leq a \leq n_c$. Lemma 3.3.1 in \citet{Spirtes2000} thus applies to the ordered sequence of paths $\tilde{\pi}_c^{\sigma(0), \sigma(1)}, \dots , \tilde{\pi}_c^{\sigma(m), \sigma(m+1)}$ and guarantees the existence of a path between $(i, t_i)$ and $(j, t_j)$ in $\Dc(\Mtaumax(\D))$ that is active given $\mathbf{S}$.

Hence, $(i, t_i)$ and $(j, t_j)$ are $d$-connected given $\mathbf{S}$ in $\Dc(\Mtaumax(\D))$.
\end{proof}

\begin{mylemma}\label{lemmaapp:MtaumaxD_subgraph_MtaumaxDc}
$\Mtaumax(\D)$ is a subgraph of $\Mtaumax(\Dc(\Mtaumax(\D)))$.
\end{mylemma}

\begin{proof}[Proof of Lemma~\ref{lemmaapp:MtaumaxD_subgraph_MtaumaxDc}]
As an immediate consequence of Lemma~\ref{lemmaapp:MG_subgraph_of_MGc_reason_part_2}, every adjacency in $\Mtaumax(\D)$ is also in $\Mtaumax(\Dc(\Mtaumax(\D)))$. The statement follows with Lemma~\ref{lemmaapp:same_ancestral_main_proposition} because the orientation of edges are uniquely determined by the ancestral relationships.
\end{proof}

\begin{proof}[Proof of Lemma~4.14]
First, $\Mtaumax(\Dc(\Mtaumax(\D)))$ is a subgraph of $\Mtaumax(\D)$ according to Lemma~\ref{lemmaapp:MtaumaxDc_subgraph_MtaumaxD}. Second, $\Mtaumax(\D))$ is a subgraph of $\Mtaumax(\Dc(\Mtaumax(\D)))$ according to Lemma~\ref{lemmaapp:MtaumaxD_subgraph_MtaumaxDc}. Hence, $\Mtaumax(\Dc(\Mtaumax(\D)))= \Mtaumax(\D)$.
\end{proof}

\subsection{Proofs for Sec.~4.6}

\begin{mylemma}\label{lemmaapp:canonical_ts-DAG}
The canonical ts-DAG $\Dc(\G)$ of an acyclic directed mixed graph $\G$ with time series structure is a ts-DAG.
\end{mylemma}

\begin{proof}[Proof of Lemma~\ref{lemmaapp:canonical_ts-DAG}]
The time series structure of $\Dc(\G)$ with $\Tindex = \mathbb{Z}$ is apparent from the first point of Def.~4.13, the repeating edges property is enforced explicitly in the second point of Def.~4.13, and time order is enforced explicitly in the second point of Def.~4.13 by only considering edges in $\E^{\stat}_{\tailhead}$ of the form $((i, t-\tau), (j, t))$ in the first and second point of Def.~4.13. Assume there is a directed cycle in $\Dc(\G)$. Because as apparent from the second point of Def.~4.13 there are no edges into unobservable vertices, all vertices on the directed cycle are observable. Moreover, due to time order all vertices on the directed cycle must be at a single time step. Due to repeating edges there thus is a directed cycle at time $t$ in $\Dc(\G)$. Since the second point of Def.~4.13 further shows that all edges between observable vertices at time $t$ in $\Dc(\G)$ are also in $\stat(\G)$, which is a subgraph of $\G$, we get a contradiction to the acyclicity of $\G$.
\end{proof}

\begin{proof}[Proof of Theorem~1]
\textbf{If.} The premise is $\M = \Mtaumax(\Dc(\M))$. Since according to Lemma~\ref{lemmaapp:canonical_ts-DAG} the canonical ts-DAG $\Dc(\M)$ is a ts-DAG, we can choose $\D = \Dc(\M)$ and get $\M = \Mtaumax(\D)$.

\textbf{Only if.} The premise is $\M = \Mtaumax(\D)$. Together with Lemma~4.14, which says $\Mtaumax(\D) = \Mtaumax(\Dc(\Mtaumax(\D)))$, then $\M  = \Mtaumax(\D) = \Mtaumax(\Dc(\Mtaumax(\D))) = \Mtaumax(\Dc(\M))$.
\end{proof}

\begin{proof}[Proof of Theorem~2]
\textbf{If.} The premise is $\G = \Mtaumax(\Dc(\G))$ with $\G$ acyclic. Since according to Lemma~\ref{lemmaapp:canonical_ts-DAG} the canonical ts-DAG $\Dc(\G)$ is a ts-DAG, we can choose $\D = \Dc(\G)$ and get $\G = \Mtaumax(\D)$.

\textbf{Only if.} The premise is $\G = \Mtaumax(\D)$. Together with Lemma~4.14, which says $\Mtaumax(\D) = \Mtaumax(\Dc(\Mtaumax(\D)))$, then $\G =  \Mtaumax(\D) = \Mtaumax(\Dc(\Mtaumax(\D))) = \Mtaumax(\Dc(\G))$. Moreover, $\G$ is acyclic because it equals the DMAG $\Mtaumax(\D)$ and DMAGs are acyclic.
\end{proof}

\subsection{Proofs for Sec.~4.7 and of Lemma~\ref{lemmaapp:different_DMAGs_not_both_implied}}\label{secapp:contains_proof_not_same_stationarification}

\begin{proof}[Proof of Lemma~4.20]
Combine Lemma~\ref{lemmaapp:canonical_ts-DAG_of_stat}, according to which $\Dc(\Mtaumaxstat(\D)) = \Dc(\Mtaumax(\D))$, with Lemma~4.14, according to which $\Mtaumax(\D) = \Mtaumax(\Dc(\Mtaumax(\D)))$.
\end{proof} 

\begin{proof}[Proof of Lemma~\ref{lemmaapp:different_DMAGs_not_both_implied}]
Assume that both $\M_1$ and $\M_2$ are ts-DMAGs, i.e., $\M_1 = \Mtaumax(\D_1)$ and $\M_2 = \Mtaumax(\D_2)$ for some ts-DAGs $\D_1$ and $\D_2$. Then, combining the premise $\stat(\M_1) = \stat(\M_2)$ with Lemma~4.20 leads to the contradiction $\M_1 = \M_2$.
\end{proof}

\begin{proof}[Proof of Lemma~4.21]
\textbf{If.} The premise is $\M = \Mtaumaxstat(\Dc(\M))$. Since according to Lemma~\ref{lemmaapp:canonical_ts-DAG} the canonical ts-DAG $\Dc(\M)$ is a ts-DAG, we can choose $\D = \Dc(\M)$ and get $\M = \Mtaumaxstat(\D)$.

\textbf{Only if.} The premise is $\M = \Mtaumaxstat(\D)$. We then get $\Mtaumax(\D) = \Mtaumax(\Dc(\M))$ according to Lemma~4.20, which by applying the operation of stationarification to both sides of this equality gives that $\M = \Mtaumaxstat(\Dc(\M))$.
\end{proof}

\begin{proof}[Proof of Lemma~4.22]
\textbf{If.} The premise is $\G = \Mtaumaxstat(\Dc(\G))$ with $\G$ acyclic. Since according to Lemma~\ref{lemmaapp:canonical_ts-DAG} the canonical ts-DAG $\Dc(\G)$ is a ts-DAG, we can choose $\D = \Dc(\G)$ and get $\G = \Mtaumaxstat(\G)$.

\textbf{Only if.} The premise is $\G = \Mtaumaxstat(\D)$. We then get $\Mtaumax(\D) = \Mtaumax(\Dc(\G))$ according to Lemma~4.20, which by applying the operation of stationarification to both sides of this equality gives that $\G = \Mtaumaxstat(\Dc(\G))$. Moreover, $\G$ is acyclic because it equals the DMAG $\Mtaumaxstat(\D)$ and DMAGs are acyclic.
\end{proof}

\section{Proofs for Sec.~5}

\subsection{Proofs for Sec.~5.3}

\begin{mylemma}\label{lemmaapp:ro_of_DPAG}
Let $\D$ be a ts-DAG and $\BR \in \{\BRtoro, \,\BRtora, \,\BRtsDAG\}$. Then, $\PAG(\Mtaumax(\D), \BR)$ has repeating orientations.
\end{mylemma}

\begin{proof}[Proof of Lemma~\ref{lemmaapp:ro_of_DPAG}]
Note that $\BRtsDAG$ is stronger than $\BRtora$ and that $\BRtora$ is stronger than $\BRtoro$. Thus, every graph consistent with $\BR$ has repeating orientations. The statement then follows from the definition of m.i.~DPAGs because every element in $[\M]_{\BR}$ has repeating orientations.
\end{proof}

\begin{proof}[Proof of Lemma~5.5]
From Lemma~\ref{lemmaapp:ro_of_DPAG} we know that $\PAG(\Mtaumax(\D), \BR)$ has repeating orientations. Moreover, $\PAG(\Mtaumax(\D), \BR)$ has past-repeating adjacencies because its by definition equals the skeleton of $\Mtaumax(\D)$.

Let $(i, t_i)$ and $(j, t_j)$ with $t-\taumax \leq t_i \leq t_j \leq t$ be distinct observable vertices. According to Lemma~\ref{lemmaapp:stat_of_implied_DMAG}, these vertices are adjacent in $\stat(\PAG(\Mtaumax(\D), \BR))$ if and only if $(i, t - (t_j - t_i))$ and $(j, t)$ are adjacent in $\PAG(\Mtaumax(\D), \BR)$. Because the skeletons of $\PAG(\Mtaumax(\D), \BR)$ and $\Mtaumax(\D)$ are the same, Lemma~4.7 gives that $(i, t - (t_j - t_i))$ and $(j, t)$ are adjacent in $\PAG(\Mtaumax(\D), \BR)$ if and only if $(i, t_i)$ and $(j, t_j)$ are adjacent in $\Mtaumaxstat(\D)$. Consequently, $\stat(\PAG(\Mtaumax(\D), \BR))$ and $\Mtaumaxstat(\D)$ have the same skeleton.

Moreover, consider an unambiguous edge mark in $\stat(\PAG(\Mtaumax(\D), \BR))$. This edge mark is also in $\PAG(\Mtaumax(\D), \BR)$ and therefore corresponds to an ancestral relationship in $\Mtaumax(\D)$. Because according to Lemma~4.10 the graphs $\Mtaumax(\D)$ and $\Mtaumaxstat(\D)$ have the same ancestral relationships, the same unambiguous edge mark is then also in $\Mtaumaxstat(\D)$.
\end{proof}

\begin{mylemma}\label{lemmaapp:stat_of_markov_equivalent_are_markov_equivalent_reason}
Let $\M$ be a DMAG with time series structure that has repeating orientations and past-repeating adjacencies and for part 2 in addition is time ordered. Then:
\begin{enumerate}
\item Let $(i, t_i) \astast (j, t_j) \astast (k, t_k)$ with $t_j \leq \max(t_i, t_k)$ be an unshielded triple in $\stat(\M)$ and let $\Delta t = t_j - \max(t_i, t_k)$. Then:
\begin{enumerate}
\item $(i, t_i + \Delta t) \astast (j, t_j + \Delta t) \astast (k, t_k + \Delta t)$ is an unshielded triple in $\M$.
\item $(i, t_i + \Delta t) \astast (j, t_j + \Delta t) \astast (k, t_k + \Delta t)$ is oriented as a collider in $\M$ if and only if $(i, t_i) \astast (j, t_j) \astast (k, t_k)$ is oriented as a collider in $\stat(\M)$.
\end{enumerate}
\item Let $\pi = (l, t_l) \dots  \asthead (i, t_i) \headast (j, t_j) \asthead (k, t_k)$ with $t_j \leq \max(t_l, t_k)$ be a discriminating path for $(j, t_j)$ in $\stat(\M)$ and let $\Delta t = t_j - \max(t_l, t_k)$. Then:
\begin{enumerate}
\item $\pi_{\Delta t}$, the copy of $\pi$ shifted forward in time by $\Delta t$ time steps, is a discriminating path for $(j, t_j + \Delta t)$ in $\M$.
\item $(i, t_i + \Delta t) \astast (j, t_j + \Delta t) \astast (k, t_k + \Delta t)$ is oriented as a collider in $\M$ if and only if $(i, t_i) \astast (j, t_j) \astast (k, t_k)$ is oriented as a collider in $\stat(\M)$.
\end{enumerate}
\end{enumerate}
\end{mylemma}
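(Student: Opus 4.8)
The plan is to exploit that $\stat(\M)$ has repeating edges while $\M$ need only satisfy the weaker time-shift persistence properties, and to transport the relevant structure from $\stat(\M)$ into $\M$ by a single rigid time shift that aligns the last vertex $(k,\cdot)$ of the triple (respectively path) with the reference time $t$. I would isolate two ingredients at the outset. First, by Lemma~\ref{lemmaapp:properties_stationarification} the graph $\stat(\M)$ has repeating edges, so any edge of $\stat(\M)$ together with all of its valid time-shifted copies is present in $\stat(\M)$ with one and the same orientation; since $\stat(\M) \subset \M$ and shared adjacencies carry identical marks, these shifted edges occur in $\M$ with the same orientation as well. Second, by Lemma~\ref{lemmaapp:stat_of_implied_DMAG}, two vertices at lag $\tau \geq 0$ are adjacent in $\stat(\M)$ if and only if the copies obtained by shifting the later vertex to time $t$ are adjacent in $\M$; contrapositively, a non-adjacency in $\stat(\M)$ transfers to a non-adjacency of the aligned copies in $\M$.

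For Part 1 I would shift the unshielded triple $(i,t_k-\tau_i)\astast(j,t_k-\tau_j)\astast(k,t_k)$ forward by $t-t_k\ge 0$. Since $\tau_i,\tau_j\ge 0$ all three vertices sit at times $\le t_k\le t$, and because they lie in $\V$ the in-graph requirement $t_k-\tau_i\ge t-\taumax$ (and similarly for $\tau_j$) combined with $t_k\le t$ forces $\tau_i,\tau_j\le\taumax$, so the shifted vertices $(i,t-\tau_i),(j,t-\tau_j),(k,t)$ all lie in $\V$. The first ingredient then places both edges of the triple into $\M$ with unchanged orientation, and the second ingredient, applied to the non-adjacent pair $(i,t_k-\tau_i),(k,t_k)$ at lag $\tau_i$, yields non-adjacency of $(i,t-\tau_i),(k,t)$ in $\M$; this proves 1(a). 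Part 1(b) is then immediate, since the collider status of either triple is determined solely by the two marks at the middle vertex $j$, and these marks are preserved by the first ingredient.

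For Part 2(a) the same shift by $t-t_k$ is applied to the discriminating path $\pi$. Here time order enters: every non-endpoint vertex strictly between $v_0$ and $v_{n-1}$ is a collider and a parent of $v_n=(k,t_k)$, hence an ancestor of $(k,t_k)$ and therefore at a time $\le t_k$, while $v_0=(l,t_k-\tau_l)$ is at time $\le t_k$ because $\tau_l\ge 0$; together with the in-graph lower bound and $t_k\le t$ this shows that every shifted vertex again lies in $\V$. I would then verify each defining clause of a discriminating path for $(j,t-\tau_j)$ in $\M$: the edges and their orientations, hence the collider-hood of the interior vertices, transfer by the first ingredient; the property ``parent of $v_n$'' transfers because the directed edges into $(k,t_k)$ shift to directed edges into $(k,t)$; and non-adjacency of the endpoints $(l,t-\tau_l),(k,t)$ follows from the second ingredient applied to the non-adjacent pair $v_0,v_n$ at lag $\tau_l$. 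Because shifting by a constant is injective on vertices, $\pi_{t-t_k}$ is genuinely a path. Finally 2(b) parallels 1(b): the collider status at $(j,\cdot)$ depends only on the marks at $j$ on the last two edges of the path, which are preserved under the shift.

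The routine parts are the bookkeeping that shifted vertices stay inside the observed window $[t-\taumax,t]$ and that each edge and its orientation survive the shift. The main obstacle is Part 2(a): one must check that \emph{all} structural requirements of a discriminating path simultaneously survive the passage to $\M$ — the interior vertices remaining colliders that are parents of the terminal vertex, and the two endpoints remaining non-adjacent — even though $\M$ may contain additional, non-repeating edges absent from $\stat(\M)$. This is exactly where the adjacency characterization and time order do the work, ruling out any spurious endpoint adjacency in $\M$ and confining the path to a window on which the rigid shift is well defined.
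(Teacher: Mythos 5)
Your proof is correct and takes essentially the same route as the paper's: shift the triple (respectively the discriminating path) to reference time $t$ using repeating edges of $\stat(\M)$, transfer the non-adjacency of the end-point pair to $\M$ via the adjacency characterization of Lemma~\ref{lemmaapp:stat_of_implied_DMAG}, transfer edges and orientations via the subgraph relation $\stat(\M) \subset \M$, and use time order (ancestorship of the interior vertices to $(k,t_k)$) to confine the path to a window on which the shift stays inside the vertex set. The only difference is expository: you spell out the window-containment bookkeeping that the paper leaves largely implicit.
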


\begin{proof}[Proof of Lemma~\ref{lemmaapp:stat_of_markov_equivalent_are_markov_equivalent_reason}]
\textbf{1(a)}
The repeating edges property of $\stat(\M)$ together with $t_j \leq \max(t_i, t_k)$ implies that $(i, t_i + \Delta t) \astast (j, t_j + \Delta t) \astast (k, t_k + \Delta t)$ is an unshielded triple in $\stat(\M)$. Lemma~\ref{lemmaapp:stat_of_implied_DMAG} then guarantees that $(i,t_i+\Delta t) = (i, t-(\max(t_i, t_k)-t_i))$ and $(k,t_k + \Delta t)= (k, t-(\max(t_i, t_k)-t_i))$ are non-adjacent in $\M$, because else they would be adjacent in $\stat(\M)$ too and hence $(i, t_i + \Delta t) \astast (j, t_j + \Delta t) \astast (k, t_k + \Delta t)$ would not be unshielded in $\stat(\M)$. Since $\stat(\M)$ is a subgraph of $\M$, we thus get that $(i, t_i + \Delta t) \astast (j, t_j + \Delta t) \astast (k, t_k + \Delta t)$ is an unshielded triple in $\M$.

\textbf{2(a)}
By the definition of discriminating paths, all vertices on $\pi$ other than, perhaps, $(j, t_j)$ and/or $(l, t_l)$ are ancestors of $(k, t_k)$. Time order of $\M$ together with $t_j \leq \max(t_l, t_k)$ thus guarantees that all vertices on $\pi$ are within $[t-\taumax, \max(t_l, t_k)]$. In combination with the repeating edges property of $\stat(\M)$ we thus see that $\pi_{\Delta t}$ is a discriminating path for $(j, t_j + \Delta t)$ in $\stat(\M)$. Lemma~\ref{lemmaapp:stat_of_implied_DMAG} then guarantees that $(l,t_l+\Delta t) = (l, t - (\max(t_l, t_k)-t_l))$ and $(k,t_k+\Delta t) = (k, t- (\max(t_l, t_k)-t_k))$ are non-adjacent in $\M$ because else they would be adjacent in $\stat(\M)$ too and hence $\pi_{\Delta t}$ would not be a discriminating path in $\stat(\M)$. Consequently, $\pi_{\Delta t}$ is a discriminating path in $\M$ because $\stat(\M)$ is a subgraph of $\M$.

\textbf{1(b) and 2(b)}
Because $\stat(\M)$ has repeating edges, the triple $(i, t_i) \astast (j, t_j) \astast (k, t_k)$ is oriented as a collider in $\stat(\M)$ if and only if $(i, t_i + \Delta t) \astast (j, t_j + \Delta t) \astast (k, t_k + \Delta t)$ is oriented as a collider in $\stat(\M)$. Moreover, since $\stat(\M)$ is a subgraph of $\M$ and $(i, t_i + \Delta t) \astast (j, t_j + \Delta j) \astast (k, t_k+\Delta t)$ is in $\M$, the triple $(i, t_i + \Delta t) \astast (j, t_j + \Delta j) \astast (k, t_k+\Delta t)$ is oriented as collider in $\stat(\M)$ if and only if $(i, t_i + \Delta t) \astast (j, t_j + \Delta j) \astast (k, t_k+\Delta t)$ is oriented as a collider in $\M$.
\end{proof}

\begin{mylemma}\label{lemmaapp:stat_of_markov_equivalent_are_markov_equivalent}
Let $\M_1$ and $\M_2$ be Markov equivalent DMAGs with time series structure that are time ordered and have repeating orientations and past-repeating adjacencies. Then, $\stat(\M_1)$ and $\stat(\M_2)$ are Markov equivalent DMAGs.
\end{mylemma}

\begin{proof}[Proof of Lemma~\ref{lemmaapp:stat_of_markov_equivalent_are_markov_equivalent}]
Both $\stat(\M_1)$ and $\stat(\M_2)$ are DMAGs according to Lemma~\ref{lemmaapp:stat_of_implied_DMAG_is_DMAG}. Next, we show that $\stat(\M_1)$ and $\stat(\M_2)$ are Markov equivalent. For this purpose, assume the opposite. Then, according to the characterizing of Markov equivalence of MAGs in \citet{MAGs_equivalence}, at least one of the following statements is true:
\begin{enumerate}
  \item The skeletons of $\stat(\M_1)$ and $\stat(\M_2)$ differ.
  \item There is an unshielded triple $(i, t_i) \astast (j, t_j) \astast (k, t_k)$ in both $\stat(\M_1)$ and $\stat(\M_1)$ that is oriented as a collider in $\stat(\M_a)$ with $a \in \{1, 2\}$ and oriented as a non-collider in $\stat(\M_{\bar{a}})$ with $\bar{a} = 3-a$.
  \item There is a path $\pi$ that is in both $\stat(\M_1)$ and $\stat(\M_1)$ a discriminating path for $(j, t_j)$ such that $(j, t_j)$ is a collider on $\pi$ in $\stat(\M_a)$ with $a \in \{1, 2\}$ and a non-collider in $\stat(\M_{\bar{a}})$ with $\bar{a} = 3-a$.
\end{enumerate}
We now show that $\stat(\M_1)$ and $\stat(\M_2)$ do have the same skeleton and that both the second and third statement contradict Markov equivalence of $\M_1$ and $\M_2$.

\textit{Case 1: Skeleton.} 
According to Lemma~\ref{lemmaapp:stat_of_implied_DMAG}, the skeletons of $\stat(\M_1)$ and $\stat(\M_2)$ are determined uniquely by, respectively, the skeletons of $\M_1$ and $\M_2$. Thus, since $\M_1$ and $\M_2$ have the same skeleton due to being Markov equivalent, also $\stat(\M_1)$ and $\stat(\M_2)$ have the same skeleton.

\textit{Case 2: Unshielded colliders}. Since $(i, t_i) \astast (j, t_j) \astast (k, t_k)$ is oriented as a non-collider in $\stat(\M_{\bar{a}})$, the vertex $(j, t_j)$ is in $\stat(\M_{\bar{a}})$ an ancestor (parent, in fact) of $(i, t_i)$ or $(k, t_k)$. Time order of $\stat(\M_{\bar{a}})$ thus implies $t_j \leq \operatorname{max}(t_i, t_k)$. Hence, we can apply part 1 of Lemma~\ref{lemmaapp:stat_of_markov_equivalent_are_markov_equivalent_reason} to both $\stat(\M_{\bar{a}})$ and $\stat(\M_{a})$, which gives that $(i, t_i + \Delta t) \astast (j, t_j + \Delta t) \astast (k, t_k + \Delta t)$ with $\Delta t_j = t - \operatorname{max}(t_i, t_k)$ is an unshielded collider in $\M_a$ and an unshielded non-collider in $\M_{\bar{a}}$. This observation contradicts the assumptiont that $\M_1$ and $\M_2$ are Markov equivalent.

\textit{Case 3: Discriminating paths}. By definition of discriminating paths, $\pi$ takes the form $\dots  \asthead (i, t_i) \headast (j, t_j) \asthead (k, t_k)$. Moreover, as follows from the definition of discriminating paths together with the absence of almost directed cycles, $(i, t_i) \headhead (j, t_j)$ if $(j, t_j) \headhead (k, t_k)$. In combination with the fact that $(j, t_j)$ is in $\stat(\M_{\bar{a}})$ an ancestor (parent, in fact) of $(i, t_i)$ or $(k, t_k)$ by means of $(j, t_j)$ being a non-collider on $\pi$ in $\stat(\M_{\bar{a}})$, we thus find that $(j, t_j)$ is in $\stat(\M_{\bar{a}})$ an ancestor (parent, in fact) of $(k, t_k)$. Time order of $\stat(\M_{\bar{a}})$ thus implies $t_j \leq t_k \leq \operatorname{max}(t_l, t_k)$. Hence, we can apply part 2 of Lemma~\ref{lemmaapp:stat_of_markov_equivalent_are_markov_equivalent_reason} to both $\stat(\M_{\bar{a}})$ and $\stat(\M_{a})$, which gives that $\pi_{\Delta t}$, the copy of $\pi$ that is shifted forward in time by $\Delta t = t - \operatorname{max}(t_l, t_k)$ time steps, is a discriminating path for $(j, t_j + \Delta t)$ in both $\M_a$ and $\M_{\bar{a}}$ and that $(j, t_j + \Delta t)$ is a collider on $\pi_{\Delta t}$ in $\M_a$ whereas $(j, t_j + \Delta t)$ is a non-collider on $\pi_{\Delta t}$ in $\M_{\bar{a}}$. This observation contradicts Markov equivalence of $\M_1$ and $\M_2$.
\end{proof}

\begin{proof}[Proof of Theorem~3]
\textbf{1.}
Note that $\stat(\PAG(\Mtaumax(\D), \BR))$ and $\PAG(\Mtaumaxstat(\D), \BRstat)$ have the same skeleton because both of them are DPAGs for $\Mtaumaxstat(\D)$, as follows from Lemma~5.5. We prove the statement by showing that $(i, t_i) \oast (j, t_j)$ in $\stat(\PAG(\Mtaumax(\D), \BR))$ implies $(i, t_i) \oast (j, t_j)$ in $\PAG(\Mtaumaxstat(\D), \BRstat)$.

Let the edge $(i, t_i) \oast (j, t_j)$ be in $\stat(\PAG(\Mtaumax(\D), \BR))$. Then, $(i, t_i) \oast (j, t_j)$ is also in $\PAG(\Mtaumax(\D), \BR)$ because $\PAG(\Mtaumax(\D), \BR)$ is a supergraph of $\stat(\PAG(\Mtaumax(\D), \BR))$. By definition of m.i.~DPAGs, there thus are DMAGs $\M_1$ and $\M_2$ in $[\Mtaumax(\D)]_{\BR}$ such that $(i, t_i) \tailhead (j, t_j)$ in $\M_1$ and $(i, t_i) \headast (j, t_j)$ in $\M_2$. Without loss of generality we may assume that $\M_1$ or $\M_2$ is $\Mtaumax(\D)$.

Since i) according to Lemma~\ref{lemmaapp:stat_of_markov_equivalent_are_markov_equivalent} $\stat(\M_1)$ and $\stat(\M_2)$ are Markov equivalent DMAGs and since ii) either $\M_1$ or $\M_2$ is $\Mtaumax(\D)$ and hence either $\stat(\M_1) = \Mtaumaxstat(\D)$ or $\stat(\M_2) = \Mtaumaxstat(\D)$, we thus get that both $\stat(\M_1)$ and $\stat(\M_2)$ are in $[\Mtaumaxstat(\D)]$. Recall that $\stat(\M)$ always has repeating ancestral relationships (and, hence, also repeating orientations), that $\stat(\M)$ is time ordered if $\M$ is time ordered, and that $\stat(\M)$ is a stationarified ts-DMAG if $\M$ is a ts-DMAG. Hence, both $\stat(\M_1)$ and $\stat(\M_2)$ are in the Markov equivalence class $[\Mtaumaxstat(\D)]_{\BRstat}$.

Since i) both $\stat(\M_1)$ and $\stat(\M_2)$ are in $[\Mtaumaxstat(\D)]_{\BRstat}$ and since ii) $(i, t_i)$ and $(j, t_j)$ are adjacent in $\Mtaumaxstat(\D)$, the vertices $(i, t_i)$ and $(j, t_j)$ are also adjacent in both $\stat(\M_1)$ and $\stat(\M_2)$. Since $\stat(\M_i)$ is a subgraph of $\M_i$ for $i = 1, 2$, we conclude that $(i, t_i) \tailhead (j, t_j)$ in $\stat(\M_1)$ and $(i, t_i) \headast (j, t_j)$ in $\stat(\M_2)$. Hence, $(i, t_i) \oast (j, t_j)$ in $\PAG(\Mtaumaxstat(\D), \BRstat)$.

\textbf{2.}
This claim immediately follows from part 1 of Theorem~3 because $\stat(\PAG(\Mtaumax(\D), \BR))$ is a subgraph of $\PAG(\Mtaumax(\D), \BR)$.

\textbf{3.}
The three graphs obtained by applying stationarification $\stat(\cdot)$ to the graph in parts c), d) and e) of Fig.~9 in the main text provide such examples, see also the discussion in Example~5.6 in the main text.

\textbf{4.}
This claim immediately follows from part 3 of Theorem~3 because $\stat(\PAG(\Mtaumax(\D), \BR))$ is a subgraph of $\PAG(\Mtaumax(\D), \BR)$ and because $\stat(\PAG(\Mtaumax(\D), \BR))$ and $\PAG(\Mtaumaxstat(\D), \BR)$ have the same skeleton. 
\end{proof}

\subsection{Proofs for Sec.~5.4}

\begin{proof}[Proof of Lemma~5.9]
The premise that $(i, t_i) \oast (j, t_j)$ is in the ts-DPAG $\PAGtaumax(\D)$ by definition of m.i.~DPAGs and the background knowledge $\BRtsDAG$ means: There are ts-DAGs $\D_1$ and $\D_2$ such that both $\Mtaumax(\D_1)$ and $\Mtaumax(\D_2)$ are Markov equivalent to $\Mtaumax(\D)$ and $(i, t_i) \tailhead (j, t_j)$ in $\Mtaumax(\D_1)$ and $(i, t_i) \headast (j, t_j)$ in $\Mtaumax(\D_2)$. Consequently, $(i, t_i) \in \an((j, t_j), \Mtaumax(\D_1))$ and $(i, t_i) \notin \an((j, t_j), \Mtaumax(\D_2))$ and thus, using Lemma~\ref{lemmaapp:dag_and_mag_same_ancestral}, $(i, t_i) \in \an((j, t_j), \D_1)$ and $(i, t_i) \notin \an((j, t_j), \D_2)$.
\end{proof}

\section{Proofs for Sec.~\ref{secapp:increasing_taumax}}\label{secapp:proofs_for_increasing}

\subsection{Proofs for Sec.~\ref{secapp:different_time_windows}}

\begin{mylemma}\label{lemmaapp:future_does_not_matter}
Let $(i, t_i)$ and $(j, t_j)$ with $t - \taumax \leq t_i, t_j \leq t$ be distinct observable vertices in a ts-DAG $\D$ and let $\Delta t > 0$. Then: There is $\mathbf{S} \subseteq \mathbf{O}(t-\taumax, t) \setminus \{(i, t_i), (j, t_j)\}$ such that $(i, t_i) \ci (j, t_j) ~|~ \mathbf{S}$ if and only if there is $\mathbf{S}^\prime \subseteq \mathbf{O}(t-\taumax, t + \Delta t) \setminus \{(i, t_i), (j, t_j)\}$ such that $(i, t_i) \ci (j, t_j) ~|~ \mathbf{S}^\prime$.
\end{mylemma}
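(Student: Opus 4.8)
The plan is to prove the two directions separately, with only one of them carrying content. For the direction in which a separating set $\mathbf{S} \subset \mathbf{O}(t-\taumax, t) \setminus \{(i, t_i), (j, t_j)\}$ is given, there is nothing to do: since $\mathbf{O}(t-\taumax, t) \subset \mathbf{O}(t-\taumax, t+\Delta t)$, the same set $\mathbf{S}^\prime = \mathbf{S}$ lies in the larger window and satisfies $(i, t_i) \bowtie (j, t_j) ~|~ \mathbf{S}^\prime$. The substance of the lemma is the converse, where a separating set $\mathbf{S}^\prime \subset \mathbf{O}(t-\taumax, t+\Delta t) \setminus \{(i, t_i), (j, t_j)\}$ from the enlarged window is given and one must produce a separating set confined to $[t-\taumax, t]$. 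My proposal is to take the obvious candidate $\mathbf{S} = \mathbf{S}^\prime \cap \mathbf{O}(t-\taumax, t)$, that is, to simply discard from $\mathbf{S}^\prime$ all vertices strictly after $t$, and to show that this discarding preserves the $d$-separation.

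The key structural input is time order of $\D$: as noted right after the definition of time order, $(k, t_k) \in \an((l, t_l), \D)$ forces $t_k \leq t_l$. Writing $\mathbf{W} = \mathbf{S}^\prime \setminus \mathbf{S}$ for the discarded vertices, every $(k, t_k) \in \mathbf{W}$ has $t_k > t$, whereas $(i, t_i)$, $(j, t_j)$ and all vertices of $\mathbf{S}$ are at times $\leq t$. Hence no vertex of $\mathbf{W}$ is an ancestor in $\D$ of $(i, t_i)$, of $(j, t_j)$, or of any vertex in $\mathbf{S} = \mathbf{S}^\prime \setminus \mathbf{W}$. This is exactly the hypothesis under which vertices may be deleted from a conditioning set without affecting $d$-separation; concretely, I would invoke Lemma S5 in the supplementary material of \cite{LPCMCI} (the same tool used in the proof of part~3 of Prop.~\ref{prop:necessary_properites_time_series_DMAG}) to conclude $(i, t_i) \bowtie (j, t_j) ~|~ \mathbf{S}$ in $\D$. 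Since $\mathbf{S} \subset \mathbf{O}(t-\taumax, t) \setminus \{(i, t_i), (j, t_j)\}$ by construction, this completes the nontrivial direction.

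Should a self-contained argument be preferred over the citation, the deletion principle can be proven directly by contradiction: if some path $\pi$ between $(i, t_i)$ and $(j, t_j)$ were active given $\mathbf{S}$ but blocked given $\mathbf{S}^\prime = \mathbf{S} \cup \mathbf{W}$, then, since enlarging a conditioning set can only block at non-colliders and open (never block) at colliders, $\pi$ must contain a non-collider $w \in \mathbf{W}$. Tracing the directed subpath leaving $w$ along $\pi$ until it first reaches an endpoint or a collider shows that $w$ is an ancestor either of $(i, t_i)$ or $(j, t_j)$, or of a collider on $\pi$ which, by activity of $\pi$ given $\mathbf{S}$, has a descendant in $\mathbf{S}$; in all cases $w$ is an ancestor of $\{(i, t_i), (j, t_j)\} \cup \mathbf{S}$, contradicting the time-order observation above. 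The only subtlety worth flagging is that the deleted vertices must be simultaneously non-ancestors of the endpoints and of the retained set $\mathbf{S}$; time order furnishes both conditions at once because every discarded vertex is strictly later than $t$ while everything retained is at time $\leq t$. This is the single point where the time series structure, rather than generic DAG reasoning, does the work.
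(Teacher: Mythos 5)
Your proof is correct, and at its core it is the same argument as the paper's: the direction producing $\mathbf{S}^\prime$ from $\mathbf{S}$ is the trivial inclusion $\mathbf{S}^\prime = \mathbf{S}$, and the substantive direction is settled by pruning $\mathbf{S}^\prime$ and combining Lemma S5 of \cite{LPCMCI} with time order. The only difference is which pruned set is used. The paper takes $\mathbf{S} = \mathbf{S}^\prime \cap \left(\an((i,t_i), \D) \cup \an((j,t_j), \D)\right)$, which is exactly the ancestor-restriction form of that lemma, and uses time order only to conclude that this set lies in $\mathbf{O}(t-\taumax, t)$. You instead take $\mathbf{S} = \mathbf{S}^\prime \cap \mathbf{O}(t-\taumax, t)$, which in general retains vertices that are \emph{not} ancestors of either endpoint, so the ancestor-restriction form alone does not license your pruning; you need the stronger deletion principle whose hypothesis is that the discarded vertices are non-ancestors of the endpoints \emph{and} of the retained set. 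You flag precisely this subtlety, time order supplies both conditions at once, and your self-contained tracing argument for the deletion principle is sound (in a DAG, a non-collider from the discarded set lying on a path that is active given the smaller set must be an ancestor of an endpoint or of a collider that is active given $\mathbf{S}$, hence an ancestor of $\{(i,t_i),(j,t_j)\} \cup \mathbf{S}$). The distinction is not pedantic: dropping non-ancestors of the endpoints alone can destroy a separation. For example, in the DAG $X \rightarrow c_1 \leftarrow W \rightarrow c_2 \leftarrow Y$ with $c_1 \rightarrow Z_1$ and $c_2 \rightarrow Z_2$, one has $X \bowtie Y \mid \{Z_1, Z_2, W\}$ but not $X \bowtie Y \mid \{Z_1, Z_2\}$, although $W$ is no ancestor of $X$ or $Y$; it is, however, an ancestor of the retained $Z_1, Z_2$, which is what your extra condition rules out. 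Note finally that your window-restriction variant is exactly how the paper itself applies Lemma S5 in the proof of part 3 of Prop.~\ref{prop:necessary_properites_time_series_DMAG}, so your citation is consistent with the paper's own usage of that result.
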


\begin{proof}[Prof of Lemma~\ref{lemmaapp:future_does_not_matter}]
\textbf{If.} The premise is $(i, t_i) \ci (j, t_j) ~|~ \mathbf{S}^\prime$ with $\mathbf{S}^\prime \subseteq \mathbf{O}(t-\taumax, t + \Delta t) \setminus \{(i, t_i), (j, t_j)\}$ for some $\Delta t > 0$. According to Lemma S5 in the supplementary material of \citet{LPCMCI} this premise implies $(i, t_i) \ci (j, t_j) ~|~ \mathbf{S}$, where $\mathbf{S}$ is the restriction of $\mathbf{S}^\prime$ to ancestors of $(i, t_i)$ and $(j, t_j)$, i.e., where $\mathbf{S} = \mathbf{S}^\prime \cap \left(\an((i, t_i), \D) \cup \an((j, t_j), \D)\right)$. By time order of $\D$, no element of $\an((i, t_i), \D) \cup \an((j, t_j), \D)$ is after $\max(t_i, t_j) \leq t$ and hence $\mathbf{S} \subseteq \mathbf{O}(t-\taumax, t) \setminus \{(i, t_i), (j, t_j)\}$.

\textbf{Only if.} Take $\mathbf{S} = \mathbf{S}^\prime$.
\end{proof}

\begin{proof}[Proof of Lemma~\ref{lemmaapp:DMAGs_different_taumax_2}]
\textbf{1.}
The combination of past-repeating adjacencies and repeating orientations implies that $\M^{\taumaxtilde, [t-\taumax-\Delta t, t - \Delta t]}(\D)$ with $0 \leq \Delta t < \taumaxtilde - \taumax$ is a subgraph of $\M^{\taumaxtilde, [t-\taumaxtilde, t - \taumaxtilde + \taumax]}(\D)$. The statement then follows because by part 2 of Lemma~\ref{lemmaapp:DMAGs_different_taumax_2} the graphs $\M^{\taumaxtilde, [t-\taumaxtilde, t - \taumaxtilde + \taumax]}(\D)$ and $\Mtaumax(\D)$ are equal up to relabeling vertices.

\textbf{2.}
We first show that $\Mtaumax(\D)$ and $\M^{\taumaxtilde, [t-\taumaxtilde, t - \taumaxtilde + \taumax]}(\D)$ have the same skeleton up to relabeling vertices. To this end, consider two distinct observable vertices $(i, t_i)$ and $(j, t_j)$ with $t-\taumax \leq t_i, t_j \leq t$ and let $\Delta t = \taumaxtilde - \taumax$. According to Lemma~\ref{lemmaapp:future_does_not_matter} there is $\mathbf{S}^\prime \subseteq \mathbf{O}(t-\taumaxtilde, t) \setminus \{(i, t_i -\Delta t), (j, t_j - \Delta t)\}$ such that $(i, t_i - \Delta t) \ci (j, t_j - \Delta t) ~|~ \mathbf{S}^\prime$ if and only if there is $\mathbf{S} \subseteq \mathbf{O}(t-\taumaxtilde, t - \Delta t) \setminus \{(i, t_i -\Delta t), (j, t_j - \Delta t)\}$ such that $(i, t_i - \Delta t) \ci (j, t_j - \Delta t) ~|~ \mathbf{S}$. By the repeating separating sets property of $\D$, the existence of $\mathbf{S} \subseteq \mathbf{O}(t-\taumaxtilde, t - \Delta t) \setminus \{(i, t_i -\Delta t), (j, t_j - \Delta t)\}$ such that $(i, t_i - \Delta t) \ci (j, t_j - \Delta t) ~|~ \mathbf{S}$ is in turn equivalent to the existence of $\mathbf{S}_{\Delta t} \subseteq \mathbf{O}(t-\taumax, t) \setminus \{(i, t_i), (j, t_j)\}$ such that $(i, t_i) \ci (j, t_j) ~|~ \mathbf{S}_{\Delta t}$. Hence, $(i, t_i)$ and $(j, t_j)$ are adjacent in $\Mtaumax(\D)$ if and only if $(i, t_i - \Delta t)$ and $(j, t_j - \Delta t)$ are adjacent in $\Mtaumaxtilde(\D)$. This equivalence shows that $\Mtaumax(\D)$ and $\M^{\taumaxtilde, [t-\taumaxtilde, t - \taumaxtilde + \taumax]}(\D)$ have the same skeleton up to relabeling vertices.

Next, let $(i, t_i - \Delta t) \astast (j, t_j - \Delta t)$ be an edge in $\M^{\taumaxtilde, [t-\taumaxtilde, t - \taumaxtilde + \taumax]}(\D)$. Then, this edge $(i, t_i - \Delta t) \astast (j, t_j - \Delta t)$ in $\M^{\taumaxtilde, [t-\taumaxtilde, t - \taumaxtilde + \taumax]}(\D)$ and the edge $(i, t_i) \astast (j, t_j)$ in $\Mtaumax(\D)$ have the same orientation because in both graphs the edge orientations signify ancestral relationships according to $\D$ and $\D$ has repeating ancestral relationships.

\textbf{3.}
See Example~\ref{myexampleapp:different_taumax_DMAG}.
\end{proof}

\begin{mylemma}\label{lemmaapp:DMAGs_different_taumax_canonical_tsDAG}
Let $\D$ be a ts-DAG and $\taumaxtilde > \taumax \geq 0$. Then:
\begin{enumerate}
\item $\Mtaumax(\D) = \Mtaumax(\Dc(\Mtaumaxtilde(\D)))$.
\item There are cases in which $\Mtaumaxtilde(\D) \neq \Mtaumaxtilde(\Dc(\Mtaumax(\D)))$.
\end{enumerate}
\end{mylemma}

\begin{proof}[Proof of Lemma~\ref{lemmaapp:DMAGs_different_taumax_canonical_tsDAG}]
\textbf{1.}
Let $\D_1 = \D$ and $\D_2 = \Dc(\Mtaumaxtilde(\D))$. We then get the equality $\Mtaumaxtilde(\D_1) = \Mtaumaxtilde(\D) = \Mtaumaxtilde(\Dc(\Mtaumaxtilde(\D))) = \Mtaumaxtilde(\D_2)$, where the second equality follows from Lemma~4.14. Thus, $\Mtaumax(\D_1) = \Mtaumax(\D_2)$ according to part 1 of Lemma~\ref{lemmaapp:DMAGs_different_taumax_inference}.

\textbf{2.}
Consider the ts-DAG $\D$ in part a) of Fig.~\ref{figapp:DMAG_larger_taumax}, which respectively implies the ts-DMAGs $\M^1(\D)$ and $\M^2(\D)$ in parts b) of c) of the same figure. Part a) of Fig.~\ref{fig:same_tsDMAGs_for_smaller_taumax} shows the canonical ts-DAG $\Dc(\M^1(\D))$ of $\M^1(\D)$, which in turn marginalizes to the ts-DMAG $\M^2(\Dc(\M^1(\D))) \neq \M^2(\D)$ shown in part b) of the same figure.
\end{proof}

\renewcommand{\thefigure}{F}
\begin{figure}[tb]
\centering
\includegraphics[width=0.65\linewidth, page = 1]{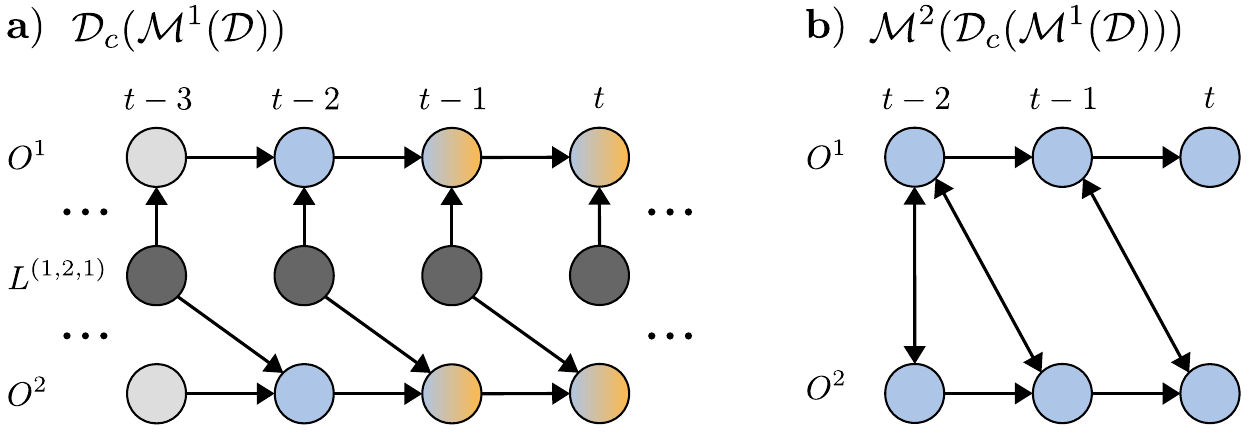}
\caption{Example for proving part 2 of Lemma~\ref{lemmaapp:DMAGs_different_taumax_canonical_tsDAG}. \textbf{a)} The canonical ts-DAG $\Dc(\M^1(\D))$ of the ts-DMAG $\M^1(\D)$ in part b) of Fig.~\ref{figapp:DMAG_larger_taumax} . \textbf{b)} The ts-DMAG $\M^2(\Dc(\M^1(\D)))$ implied by the canonical ts-DAG in part a).}
\label{fig:same_tsDMAGs_for_smaller_taumax}
\end{figure}

\begin{proof}[Proof of Lemma~\ref{lemmaapp:DMAGs_different_taumax_inference}]
\textbf{1.}
This claim follows from the commutativity of the marginalization process as stated by Theorem 4.20 in \citet{richardson2002}.

\textbf{2.}
According to part 2 of Lemma~\ref{lemmaapp:DMAGs_different_taumax_canonical_tsDAG}, there is a ts-DAG $\D$ and $\taumaxtilde > \taumax \geq 0$ such $\Mtaumaxtilde(\D) \neq \Mtaumaxtilde(\Dc(\Mtaumax(\D)))$.\footnote{Note that the proof of Lemma~\ref{lemmaapp:DMAGs_different_taumax_canonical_tsDAG} uses part~1 of Lemma~\ref{lemmaapp:DMAGs_different_taumax_inference} but not part~2 of Lemma~\ref{lemmaapp:DMAGs_different_taumax_inference}, such that the proofs of these two lemmas are \emph{not} circular.} Moreover, Lemma~4.14 implies the equality $\Mtaumax(\D) = \Mtaumax(\Dc(\Mtaumax(\D)))$. Take $\D_1 = \D$ and $\D_2 = \Dc(\Mtaumax(\D))$.
\end{proof}

\begin{proof}[Proof of Lemma~\ref{lemmaapp:DPAGs_different_taumax}]
\textbf{1.}
We prove the contraposition: Let there be a circle mark on $(i, t_i) \astast (j, t_j)$ in $\PAGtaumaxtilde(\D)$. Without loss of generality we may assume this edge to be of the form $(i, t_i) \oast (j, t_j)$. Thus, by Lemma~5.9, there are ts-DAGs $\D_1$ and $\D_2$---one of which without loss of generality is $\D$---such that the ts-DMAGs $\Mtaumaxtilde(\D_1)$ and $\Mtaumaxtilde(\D_2)$ are Markov equivalent and that $(i, t_i) \in \an((j, t_j), \D_1)$ and $(i, t_i) \notin \an((j, t_j), \D_2)$. According to commutativity of the marginalization process as stated in Theorem 4.20 in \citet{richardson2002}, the ts-DMAGs $\Mtaumax(\D_1)$ and $\Mtaumax(\D_2)$ are respectively obtained by marginalizing $\Mtaumaxtilde(\D_1)$ and $\Mtaumaxtilde(\D_2)$ over the vertices within $[t-\taumaxtilde, t-\taumax - 1]$. Hence, given that $\Mtaumaxtilde(\D_1)$ and $\Mtaumaxtilde(\D_2)$ are Markov equivalent, so are $\Mtaumax(\D_1)$ and $\Mtaumax(\D_2)$. These considerations show that $(i, t_i) \oast (j, t_j)$ in $\PAGtaumax(\D)$.

\textbf{2.}
See Example~\ref{myexampleapp:DPAG_larger_taumax}.
\end{proof}

\begin{proof}[Proof of Lemma~\ref{lemmaapp:DPAGs_different_taumax_2}]
\textbf{1.}
Let $(i, t_i) \oast (j, t_j)$ be an edge in $\PAG^{\taumaxtilde, [t-\taumax, t]}(\D)$. Because $\PAGtaumaxtilde(\D)$ is a supergraph of $\PAG^{\taumaxtilde, [t-\taumax, t]}(\D)$, the edge $(i, t_i) \oast (j, t_j)$ is then also in $\PAGtaumaxtilde(\D)$. Moreover, according to part 1 of Lemma~\ref{lemmaapp:DMAGs_different_taumax_2} (for $\Delta t = 0$) in combination with the definition of DPAGs, $(i, t_i)$  and $(j, t_j)$ are adjacent in $\PAGtaumax(\D)$. We conclude that the edge $(i, t_i) \oast (j, t_j)$ is $\PAGtaumax(\D)$ too because the opposite would contradict part 1 of Lemma~\ref{lemmaapp:DPAGs_different_taumax}.

\textbf{2.}
See Example~\ref{myexampleapp:DPAG_larger_taumax}.
\end{proof}

\subsection{Proofs for Sec.~\ref{secapp:limiting}}

\begin{proof}[Proof of Lemma~\ref{lemmaapp:existence_limiting}]
\textbf{1.}
Assume the opposite. Then, there is a strictly monotonically increasing sequence $a_n$ of positive integers such that $\M^{\taumax + a_n ,[t-\taumax,t]}(\D) \neq \M^{\taumax + a_{n+1} ,[t-\taumax,t]}(\D)$ for all $n \in \mathbb{N}$. Using part~1 of Lemma~\ref{lemmaapp:DMAGs_different_taumax_2} with $(\taumax,\taumaxtilde, \Delta) \mapsto (\taumax +a_n, \taumax+a_{n+1}, 0)$, we see that $\M^{\taumax + a_n}(\D)$ is a subgraph of $\M^{\taumax + a_{n+1} ,[t-(\taumax+a_n),t]}(\D)$ and hence $\M^{\taumax + a_n ,[t-\taumax,t]}(\D)$ is a subgraph of $\M^{\taumax + a_{n+1},[t-\taumax,t]}$. In combination with $\M^{\taumax + a_n ,[t-\taumax,t]}(\D) \neq \M^{\taumax + a_{n+1} ,[t-\taumax,t]}(\D)$ we thus find that $\M^{\taumax + a_{n+1} ,[t-\taumax,t]}(\D)$ is a \emph{proper} subgraph of $\M^{\taumax + a_{n} ,[t-\taumax,t]}(\D)$. Moreover, using part~1 of Lemma~\ref{lemmaapp:DMAGs_different_taumax_2} for $(\taumax,\taumaxtilde, \Delta) \mapsto (\taumax, \taumax+a_{n}, 0)$ and for $(\taumax,\taumaxtilde, \Delta) \mapsto (\taumax, \taumax+a_{n+1}, 0)$, we learn that both $\M^{\taumax + a_n ,[t-\taumax,t]}(\D)$ and $\M^{\taumax + a_{n+1} ,[t-\taumax,t]}(\D)$ are subgraphs of $\Mtaumax(\D)$. By combining these observations we arrive at a contradiction since there are only finitely many edges between the finitely many vertices of $\Mtaumax(\D)$.

\textbf{2.}
Assume the opposite. Then, means there is a strictly monotonically increasing sequence $a_n$ of positive integers such that $\PAG^{\taumax + a_n ,[t-\taumax,t]}(\D) \neq \PAG^{\taumax + a_{n+1} ,[t-\taumax,t]}(\D)$ for all $n \in \mathbb{N}$. Let $m$ be such that $\M^{\taumax + a_{m} ,[t-\taumax,t]}(\D) = \Mtaumaxlim(\D)$, which exists as a result of part~1 of Lemma~\ref{lemmaapp:existence_limiting}. Then, for all $n \geq m$ the skeletons of $\PAG^{\taumax + a_n ,[t-\taumax,t]}(\D)$ and $\PAG^{\taumax + a_{n+1} ,[t-\taumax,t]}(\D)$ are equal. Using part~1 of Lemma~\ref{lemmaapp:DPAGs_different_taumax} we thus learn that for all $n \geq m$ there is a non-circle mark in $\PAG^{\taumax + a_{n+1} ,[t-\taumax,t]}(\D)$ that is not in $\PAG^{\taumax + a_n ,[t-\taumax,t]}(\D)$. Since there are only finitely many circle marks on the finitely many edges in $\Mtaumaxlim(\D)$, this observation is a contradiction.
\end{proof}

\begin{mylemma}\label{lemmaapp:induced_sub_MAG}
Let $\mathcal{M}$ be a DMAG with vertex set $\mathbf{V}$ and let $\mathcal{M}[{\mathbf{O}}]$ be the induced subgraph of $\mathcal{M}$ on the subset of vertices $\mathbf{O} \subseteq \mathbf{V}$. Then, $\mathcal{M}[\mathbf{O}]$ is a DMAG.
\end{mylemma}

\begin{myremark}[on Lemma~\ref{lemmaapp:induced_sub_MAG}]
In particular, the graphs $\M^{\taumaxtilde, [t_1, t_2]}(\D)$ defined in Def.~\ref{defapp:sub_ts_DMAG/DPAG} are DMAGs.
\end{myremark}

\begin{proof}[Proof of Lemma~\ref{lemmaapp:induced_sub_MAG}]
We have to show that $\mathcal{M}[\mathbf{O}]$ does not have directed cycles, does not have almost directed cycles, and is maximal.

\emph{No (almost) directed cycles}:
Since $\mathcal{M}[\mathbf{O}]$ is a subgraph of $\mathcal{M}$ and $\mathcal{M}$ does neither have a directed nor an almost directed cycle, also $\mathcal{M}[\mathbf{O}]$ does neither have a directed nor an almost directed cycle.

\emph{Maximality:} Assume the opposite, i.e., assume in $\mathcal{M}[\mathbf{O}]$ there are non-adjacent vertices $i$ and $j$ between which there is an inducing path $\pi$. Since $\mathcal{M}[\mathbf{O}]$ is a subgraph of $\mathcal{M}$, the inducing path $\pi$ between $i$ and $j$ is also in $\mathcal{M}$. Maximality of $\mathcal{M}$ thus implies that $i$ and $j$ are adjacent in $\mathcal{M}$. By the definition of induced subgraphs the nodes $i$ and $j$ would then also be adjacent in $\mathcal{M}[\mathbf{O}]$, a contradiction.
\end{proof}

\begin{proof}[Proof of Lemma~\ref{lemmaapp:limiting_DMAGs_properties}]
\textbf{1.}
According to part~1 of Lemma~\ref{lemmaapp:existence_limiting}, there is $\Delta \taumax$ such that $\Mtaumaxlim(\D) = \M^{\taumax + \Delta \taumax^\prime, [t-\taumax, t]}(\D) = \M^{\taumax + \Delta \taumax, [t-\taumax, t]}(\D)$ for all $\Delta \taumax^\prime \geq \Delta \taumax$. Thus, since the ts-DMAG $\M^{\taumax + \Delta \taumax}(\D)$ has repeating orientations and past-repeating adjacencies, also $\Mtaumaxlim(\D) = \M^{\taumax + \Delta \taumax, [t-\taumax, t]}(\D)$ has both these properties.

To complete the proof, we need to show that $\Mtaumaxlim(\D)$ has repeating adjacencies. To this end, assume the opposite. Since $\Mtaumaxlim(\D)$ has past-repeating adjacencies, this assumption means in $\M^{\taumax + \Delta \taumax}(\D)$ there is an edge $(i, t_i - \Delta t) \astast (j, t_j - \Delta t)$ with $t-\taumax \leq t_i, t_j \leq t$ and $\Delta t > 0$ such that $(i, t_i)$ and $(j, t_j)$ are non-adjacent in $\M^{\taumax + \Delta \taumax}(\D)$. That $(i, t_i)$ and $(j, t_j)$ are non-adjacent in $\M^{\taumax + \Delta \taumax}(\D)$ shows the existence of $\mathbf{S} \subseteq \mathbf{O}(t-\taumax - \Delta \taumax, t) \setminus \{(i, t_i), (j, t_j)\}$ with $(i, t_i) \ci (j, t_j) ~|~ \mathbf{S}$ in $\D$. Due to the repeating separating sets property of $\D$, then $(i, t_i - \Delta t) \ci (j, t_j- \Delta t) ~|~ \mathbf{S}_{-\Delta t}$ where $\mathbf{S}_{-\Delta t}$ is obtained by shifting all vertices in $\mathbf{S}$ backward in time by $\Delta t$ steps. The vertices $(i, t_i - \Delta t)$ and $(j, t_j - \Delta t)$ are thus non-adjacent in $\M^{\taumax + \Delta \taumax + \Delta t}(\D)$ and hence also non-adjacent in $\Mtaumaxlim(\D)$. This observation is in contradiction to the equality $\Mtaumaxlim(\D) = \M^{\taumax + \Delta \taumax, [t-\taumax, t]}(\D)$.

\textbf{2.}
Let $(i, t_i)$ and $(j, t_j)$ with $\tau = t_j - t_i \geq 0$ be distinct non-adjacent vertices in $\Mtaumaxstat(\D)$. Then, the vertices $(i, t - \tau)$ and $(j, t)$ are non-adjacent in $\Mtaumaxstat(\D)$ due to the repeating edges property of $\Mtaumaxstat(\D)$ and thus, using Lemma~4.7, also non-adjacent in $\Mtaumax(\D)$. Hence, there is $\mathbf{S} \subseteq \mathbf{O}(t-\taumax, t) \setminus \{(i, t-\tau), (j, t)\}$ such that $(i, t-\tau) \ci (j, t) ~|~ \mathbf{S}$ in $\D$. Due to the repeating separating sets property of $\D$, we thus get that $(i, t_i)$ and $(j, t_j)$ are non-adjacent in $\M^{\taumax + (t-t_j)}(\D)$ and hence also non-adjacent in $\Mtaumaxlim(\D)$. Consequently, the skeleton of $\Mtaumaxlim(\D)$ is a subgraph of the skeleton of $\Mtaumaxstat(\D)$.

Next, let $(i, t_i) \astast (j, t_j)$ be an edge in $\Mtaumaxlim(\D)$. Then, since the skeleton of $\Mtaumaxlim(\D)$ is a subgraph of the skeleton of $\Mtaumaxstat(\D)$, the vertices $(i, t_i)$ and $(j, t_j)$ are also adjacent in $\Mtaumaxstat(\D)$. Note that, since $\Mtaumaxlim(\D) = \M^{\taumaxtilde, [t-\taumax, t]}(\D)$ for some $\taumaxtilde > \taumax$ according to part~1 of Lemma~\ref{lemmaapp:existence_limiting}, the orientation of $(i, t_i) \astast (j, t_j)$ in $\Mtaumaxlim(\D)$ conveys an ancestral relationships according to $\D$. Since also in $\Mtaumaxstat(\D)$ the orientations of edges convey ancestral relationships according to $\D$, we finally get that the edge $(i, t_i) \astast (j, t_j)$ in $\Mtaumaxstat(\D)$ has the same orientation as in $\Mtaumaxlim(\D)$.

\textbf{3.}
Take $n \geq 0$ such that $\Mtaumaxlim(\D) = \M^{\taumax + n, [t-\taumax, t]}(\D)$, which exists according to part~1 of Lemma~\ref{lemmaapp:existence_limiting}. The statement now follows by applying Lemma~\ref{lemmaapp:induced_sub_MAG} with $\M \mapsto \M^{\taumax + n}(\D)$ and $\mathbf{O}$ the set of observable vertices within $[t-\taumax, t]$.

\textbf{4.}
The limiting ts-DPAG $\PAGtaumaxlim(\D)$ has repeating adjacencies because according to part~3 of Lemma~\ref{lemmaapp:limiting_DMAGs_properties} it is a DPAG for the limiting ts-DMAG $\Mtaumaxlim(\D)$, which according to part~1 of Lemma~\ref{lemmaapp:limiting_DMAGs_properties} has repeating adjacencies. Let $\Delta \taumax$ be such that $\PAGtaumaxlim(\D) = \PAG^{\taumax + \Delta \taumax, [t-\taumax, t]}(\D)$, which exists according to part~2 of Lemma~\ref{lemmaapp:existence_limiting}. Because $\PAG^{\taumax + \Delta \taumax}(\D)$ has repeating orientations according to Lemma~\ref{lemmaapp:ro_of_DPAG}, also $\PAGtaumaxlim(\D)$ has repeating orientations. Now use part~1 of Lemma~4.3.

\textbf{5.}
Part~1 of Lemma~\ref{lemmaapp:existence_limiting} gives the existence of an integer $n \geq 0$ such that $\Mtaumaxlim(\D) = \M^{\taumax + n^\prime, [t-\taumax, t]}(\D)$ for all $n^\prime \geq n$, and part~2 of the same lemma gives the existence of an integer $m \geq 0$ such that $\PAGtaumaxlim(\D) = \PAG^{\taumax + m^\prime, [t-\taumax, t]}(\D)$ for all $m^\prime \geq m$. Thus, $\Mtaumaxlim(\D) = \M^{\taumax + k, [t-\taumax, t]}(\D)$ and $\PAGtaumaxlim(\D) = \PAG^{\taumax + k, [t-\taumax, t]}(\D)$ for $k = \max(n, m)$. The statement now follows because $\PAG^{\taumax+k}(\D)$ is a DPAG for $\M^{\taumax + k}(\D)$.
\end{proof}

%
%

%



\bibliographystyle{imsart-nameyear} 
\bibliography{library_annals}       
